\documentclass{article}

\usepackage{prelude}

\usepackage{subcaption}

\usepackage{tikz}
\usetikzlibrary{positioning,calc,decorations.pathreplacing}
\tikzstyle{qubit}=[circle,draw,fill,thick,
inner sep=0pt,minimum size=2mm]
\tikzstyle{mqubit}=[circle,draw,fill=white,
inner sep=0pt,minimum size=2mm]
\tikzstyle{rqubit}=[circle,draw=black,shade,ball color=red,
inner sep=0pt,minimum size=4mm]
\tikzstyle{bqubit}=[circle,draw=black,shade,ball color=blue,
inner sep=0pt,minimum size=4mm]
\tikzstyle{bqubits}=[circle,draw=black,shade,ball color=blue,
inner sep=0pt,minimum size=1.75mm]
\tikzstyle{heavy}=[ultra thick]

\crefname{condition}{condition}{conditions}
\crefname{part}{part}{parts}

\newcommand\nbd\nobreakdash

\newcommand{\cA}{\mathcal{A}}
\newcommand{\cM}{\mathcal{M}}
\newcommand{\cEs}{\cE_{\operatorname{state}}}
\newcommand{\cEm}{\cE_{\operatorname{meas}}}

\newcommand{\Herm}{\operatorname{Herm}}
\newcommand{\rest}{\operatorname{rest}}

\long\def\onlyshort#1{}
\newcommand\shortcite[1]{~\cite{#1}}
\newcommand\citesupp{}
\newcommand\textcitesupp{\cref{part:technical}}



\begin{document}


\title{Universal quantum Hamiltonians}

\author{Toby Cubitt$^1$, Ashley Montanaro$^2$ and Stephen Piddock$^2$\\
{\small $^1$ Department of Computer Science, University College London, UK\\
$^2$ School of Mathematics, University of Bristol, UK.}
}
\date{\today}

\maketitle

\begin{abstract}
  Quantum many-body systems exhibit an extremely diverse range of phases and physical phenomena.
However, we prove that the entire physics of any quantum many-body system can be replicated by certain simple, ``universal'' spin-lattice models.
We first characterise precisely what it means for one quantum system to simulate the entire physics of another.
We then fully classify the simulation power of all two-qubit interactions, thereby proving that certain simple models can simulate all others, hence are universal.
Our results put the practical field of analogue Hamiltonian simulation on a rigorous footing, and take a step towards justifying why error correction may not be required for this application of quantum information technology.


\end{abstract}

\clearpage

\renewcommand\baselinestretch{0.94}\normalfont
\tableofcontents
\clearpage




\part{Extended overview}
The properties of any physical system are captured in its Hamiltonian, which describes all the possible energy configurations of the system.
Amongst the workhorses of theoretical many-body and condensed matter physics are spin-lattice Hamiltonians, in which the degrees of freedom are quantum spins arranged on a lattice, and the overall Hamiltonian is built up from few-body interactions between these spins.
Although these are idealised, toy models of real materials, different spin-lattice Hamiltonians are able to model a wide variety of different quantum phases and many-body phenomena: phase transitions\shortcite{Sachdev_book}, frustration\shortcite{frustrated}, spontaneous symmetry-breaking\shortcite{symmetry-breaking}, gauge symmetries\shortcite{gauge-symmetry}, quantum magnetism\shortcite{magnetism}, spin liquids\shortcite{spin-liquids}, topological order\shortcite{Kitaev-model}, and more.
In this work, we prove that there exist particular, simple spin models that are universal: they can replicate to any desired accuracy the entire physics of any other quantum many-body system (including systems composed not only of spins, but also bosons and fermions).
This implies, in particular, that the ground state, full energy spectrum and associated excited states, all observables, correlation functions, thermal properties, time-evolution, and also any local noise processes are reproduced by the universal model.

Note that this is a very different notion of ``universality'' from that of universality classes in condensed matter and statistical physics\shortcite{Cardy_book}.
Universality classes capture the fact that, if we repeatedly ``zoom out'' or course-grain the microscopic degrees of freedom of a many-body system, models that are microscopically different become increasingly similar (converge to the same limit under this ``renormalisation group flow''), and their macroscopic properties turn out to fall into one of a small handful of possible classes.
The ``universality'' we are concerned with here\shortcite{Science} has a completely different and unrelated meaning.
It is closer to the notion of universality familiar from computing.
A universal computer can carry out any possible computation, including simulating completely different types of computer.
Universal models are able to produce any many-body physics phenomena, including reproducing the physics of completely different many-body models.

One might expect that universal models must be very complicated for their phase diagram to encompass all possible many-body physics.
In fact, some of the models we show to be universal are amongst the simplest possible.
Nearest-neighbour Heisenberg interactions on a square lattice give rise to 2D models with the simplest possible local degrees of freedom (qubits), short-range, two-body interactions, and the largest possible local symmetry (full $SU(2)$ invariance).
Yet our results prove that, if all the coupling strengths can be varied individually, this model is universal.
Thus it can replicate in a rigorous sense the full physics of models with higher spatial dimensions, long-range interactions, other symmetries, higher-dimensional spins, and even bosons and fermions.

In addition to the new relationships this establishes between apparently very different quantum many-body models, with implications for our fundamental understanding of quantum many-body physics, there are also potential practical applications of our results in the field of analogue quantum simulation.
There is substantial interest nowadays in using one quantum many-body system to simulate the physics of another, and one of the most important applications of quantum computers is anticipated to be the simulation of quantum systems\shortcite{georgescu14,cirac12}.

Two quite different notions of Hamiltonian simulation are studied in the literature.
The first concerns simulating the time-dynamics of a Hamiltonian on a quantum computer using an algorithm originally proposed by Lloyd\shortcite{Lloyd}, and refined and improved in the decades since\shortcite{Berry07,Berry14a,Berry15,low16}.
This is the quantum computing equivalent of running a numerical simulation on a classical computer.
However, it requires a scalable, fault-tolerant, digital quantum computer.
Except for small-scale proof-of-principle demonstrations, this is beyond the reach of current technology.
The second notion, called ``physical'' or ``analogue'' -- in the sense of ``analogous'' -- Hamiltonian simulation, involves directly engineering the Hamiltonian of interest and studying its properties experimentally.
(Akin to building a model of an aerofoil and studying it in a wind tunnel.)
This form of Hamiltonian simulation is already being performed in the laboratory using a variety of technologies, including optical lattices, ion traps, superconducting circuits and others\shortcite{somaroo99,NatPhys_simulation_review,georgescu14}.
Just as it is easier to study a scale model of an aerofoil in a wind tunnel than an entire aeroplane, the advantage of artificially engineering a Hamiltonian that models a material of interest, rather than studying that material directly, is that it is typically easier to measure and manipulate the artificially-engineered system.
It is possible to measure the state of a single atom in an optical lattice\shortcite{Bloch_single-site,Grainer_single-site,Chin_single-site}; it is substantially harder to measure e.g.\ the state of a single electron spin in a 2D layer within a cuprate superconductor.

Many important theoretical questions regarding analogue quantum simulation remain open, despite its practical significance and experimental success\shortcite{somaroo99,NatPhys_simulation_review,georgescu14}.
Which systems can simulate which others?
How can we characterise the effect of errors on an analogue quantum simulator?
(Highlighted in the 2012 review article\shortcite{cirac12} as one of the key questions in this field.)
On a basic level, what should the general definition of analogue quantum simulation itself be?
The notions of simulation and universality we develop here enable us to answer all these questions.

This computationally-inspired notion of physical universality has its origins in earlier work on ``completeness'' of the partition function of certain classical statistical mechanics models \shortcite{Va08,De09a,Ka12}.
Recent results by one of us and De las Cuevas built on those ideas to establish the more stringent notion of universality for classical spin systems\shortcite{Science}.
Related, more practically-focused notions have also been explored in recent work motivated by classical Hamiltonian engineering experiments\shortcite{LHZ15}.
Here we consider the richer and more complex setting of quantum Hamiltonians, which requires completely different techniques.

For our explicit constructions that establish the existence of universal Hamiltonians, we are able to draw on a long literature in the field of Hamiltonian complexity\shortcite{Kempe-Kitaev-Regev,Oliveira-Terhal,AGIK,Biamonte-Love,Schuch-Verstraete,Gottesman-Irani,Cubitt-Montanaro,Bravyi-Hastings}, studying the computational complexity of estimating ground state energies.
These results per se only concern the ground state energy, and moreover only the computational complexity of estimating it.
Nonetheless, the ``perturbative gadget'' techniques developed to prove Hamiltonian complexity results\shortcite{Kempe-Kitaev-Regev,Oliveira-Terhal} turn out to be highly useful in constructing the full physical simulations required for our results.
By combining our new and precise mathematical understanding of analogue Hamiltonian simulation with these Hamiltonian complexity techniques, we are able to design new ``gadgets'' that transform one many-body Hamiltonian into another whilst preserving its entire physics and local structure, as required to construct universal models.

\section{Hamiltonian simulation}
We start by establishing precisely what it means for one quantum many-body system to simulate another.
Any non-trivial simulation of one Hamiltonian $H$ with another $H'$ will involve encoding the first within the second in some way.
We want this encoding $H' = \cE(H)$ to ``replicate all the physics'' of the original $H$.
To reproduce all static, dynamic and thermodynamic properties, the encoding $\cE$ needs to fulfil a long list of operational requirements:

\begin{enumerate}[label=\arabic*.,ref=\arabic*]
\setlength{\itemsep}{0pt}
\item \label{short:Hermitian-preserving}%
  $\mathcal{E}(H)$ should be a valid Hamiltonian: $\cE(H) = \cE(H)^\dg$.
\item \label{short:spectrum-preserving}%
  $\mathcal{E}$ should reproduce the complete energy spectrum: $\spec(\cE(H)) = \spec(H)$.
  More generally, $\mathcal{E}(M)$ should preserve the outcomes (eigenvalues) of any measurement $M$: $\spec(\cE(M)) = \spec(M)$.
\item \label{short:local_interactions}%
  Individual interactions in the Hamiltonian should be encoded separately: $\cE(\sum_i\alpha_i h_i) = \sum_i\alpha_i\cE(h_i)$. Otherwise, encoding would require solving the full many-body Hamiltonian, obviating any need to simulate it.
\item \label{short:observables}
  There should exist a corresponding encoding of states, $\cE_{\operatorname{state}}$, such that measurements on states are simulated correctly: for any observable $A$, $\tr(\cE(A) \cE_{\operatorname{state}}(\rho)) = \tr(A \rho)$.
\item \label{short:partition-function}
  $\cE$ should preserve the partition function (potentially up to a physically unimportant constant rescaling): $Z_{H'}(\beta) = \tr(e^{-\beta\cE(H)}) = c \tr(e^{-\beta H}) = c\, Z_H(\beta)$.
\item \label{short:time-evolution}
  Time-evolution according to $\cE(H)$ should simulate time-evolution according to $H$: $e^{-iH't} \mathcal{E}_{\operatorname{state}}(\rho) e^{iH't} = \mathcal{E}_{\operatorname{state}}( e^{-iHt} \rho e^{iHt})$.
\item \label{short:errors}
  Any error or noise process on the $\cE(H)$ system should correspond to some error or noise process on the $H$ system: for any superoperator $\cN'$, there should exist a superoperator $\cN$ such that $\mathcal{N}'(\mathcal{E}_{\operatorname{state}}(\rho)) = \mathcal{E}_{\operatorname{state}}(\mathcal{N}(\rho))$.
\end{enumerate}

Using Jordan- and C*-algebra techniques, we prove (see \textcitesupp{}) that, remarkably, the very basic requirements \labelcref{short:Hermitian-preserving,short:spectrum-preserving,short:local_interactions} already imply that all other operational requirements are satisfied too.
Furthermore, any encoding map $\mathcal{E}$ that satisfies them must have a particularly simple mathematical form:
\begin{equation}
  \cE(H) = U (H^{\oplus p} \oplus \bar{H}^{\oplus q}) U^\dg
\end{equation}
for some unitary $U$ and non-negative integers $p$, $q$ such that $p+q \ge 1$. ($\bar{H}$ denotes complex conjugation of $H$.)

This characterisation of Hamiltonian encodings holds if the entire simulation is to \emph{exactly} replicate all the physics of the original.
But in practice no simulation will ever be exact.
What if the simulator Hamiltonian $H'$ only replicates the physics of the original Hamiltonian $H$ up to some approximation?
As long as this approximation can be made arbitrarily accurate, $H'$ will be able to replicate the entire physics of $H$ to any desired precision.

Moreover, it suffices if the physics of $H$ is replicated within some well-isolated subspace of $H'$, even if $H'$ behaves nothing like $H$ outside that subspace.
An important case is when the simulation occurs within the subspace of states with energy below some cut-off $\Delta$, especially if this energy cut-off can be made as large as desired (see \cref{short:fig:hamsim}).
Due to energy conservation, any initial state with energy less than $\Delta$ will be unaffected by the high-energy sector.
Indeed, as long as the cut-off is larger than the maximum eigenvalue of $H$, $H'$ will be able to simulate all possible states of $H$.
This also holds for all thermodynamic properties; any error in the partition function due to the high-energy sector is exponentially suppressed with increasing $\Delta$.
In practice, one is often interested only in low-temperature properties of a quantum many-body Hamiltonian, as these are the properties relevant to quantum phases and phase transitions.
In that case, the energy cut-off does not even need to be large, merely sufficiently above the lowest excitation energy.
Thus we need to generalise our characterisation to encompass approximate simulation of $H$ in the low-energy subspace of $H'$.

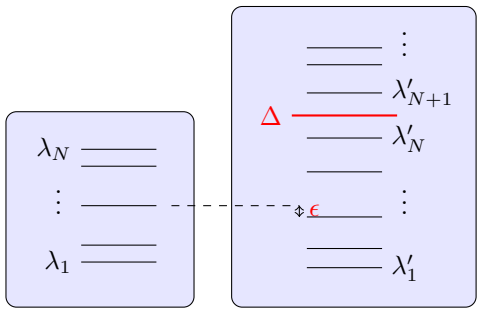
\begin{figure}
\begin{center}
\begin{tikzpicture}[yscale=0.75]

\filldraw[rounded corners,fill=blue!10] (-1,-0.6) rectangle (1.5,2);
\filldraw[rounded corners,fill=blue!10] (2,-0.6) rectangle (5.25,3.4);

\begin{scope}[yscale=1.5]
\draw (0,0) node[left] {$\lambda_1$} -- (1,0);
\draw (0,0.15) -- (1,0.15);
\draw (0,0.55) -- (1,0.55);
\node at (-0.3,0.5) {$\vdots$};
\draw (0,0.85) node[left] {$\lambda_N$} -- (1,0.85);

\draw[dashed] (1.2,0.55) -- (2.9,0.55);
\draw[<->] (2.9,0.4) -- (2.9,0.55);
\node[red] at (3.1,0.51) {$\epsilon$};
\end{scope}

\begin{scope}[xshift=3cm,yscale=1.5]
\draw (0,-0.05) -- (1,-0.05) node[right] {$\lambda'_1$};
\draw (0,0.12) -- (1,0.12);
\draw (0,0.4) -- (1,0.4);
\node at (1.3,0.5) {$\vdots$};
\draw (0,0.8) -- (1,0.8) node[right] {$\lambda'_N$};
\draw[thick,red] (-0.2,1.1) node[left] {$\Delta$} -- (1.2,1.1);
\draw (0,1.5) -- (1,1.5) node[right] {$\lambda'_{N+1}$};
\draw (0,1.75) -- (1,1.75);
\draw (0,1.9) -- (1,1.9);
\node at (1.3,2) {$\vdots$};
\end{scope}

\end{tikzpicture}
\end{center}
\caption{Simulating one Hamiltonian within the low-energy space of another. $H'$ (on right) simulates $H$ (on left) to precision $(\eta,\epsilon)$ below energy cut-off $\Delta$.}
\label{short:fig:hamsim}
\end{figure}

Finally, for a good simulation we would also like the encoding to be \emph{local}, in the sense that each subsystem of the original Hamiltonian corresponds to a distinct subset of particles in the simulator.
This will enable us to map local observables on the original system to local observables on the simulator system, as well as to efficiently prepare states of the simulator.

By making all the above mathematically precise, we show\citesupp{} that this leads to the following rigorous notion of Hamiltonian simulation, which encompasses both exact simulation (as a special case) and, more generally, approximate simulation within a low-energy subspace (also see \cref{short:fig:hamsim}):

\begin{definition}[Analogue Hamiltonian simulation]
  \label{short:def:simulation}\hfill\newline
  A many-body Hamiltonian $H'$ \emph{simulates} a Hamiltonian $H$ to precision $(\eta,\epsilon)$ below an energy cut-off $\Delta$ if there exists a local encoding $\mathcal{E}(H) = V (H\ox P + \bar{H} \ox Q) V^\dg$, where $V= \bigotimes_i V_i$ for some isometries $V_{i}$ acting on 0 or 1 qudits of the original system each, and $P$ and $Q$ are locally orthogonal projectors, such that:
  \begin{enumerate}
  \item \label{short:def:simulation:encoding}
    There exists an encoding $\widetilde{\cE}(H) = \widetilde{V}(H\ox P + \bar{H}\ox Q)\widetilde{V}^\dg$ such that $\widetilde{\cE}(\1) = P_{\le \Delta(H')}$ and \mbox{$\|\widetilde{V} - V\| \le \eta$};
  \item \label{short:def:simulation:Hamiltonian}
    $\| H'_{\le \Delta} - \widetilde{\mathcal{E}}(H)\| \le \epsilon$.
  \end{enumerate}
\end{definition}
Here, we write $H'_{\le \Delta} = P_{\le \Delta(H')} H'$ where $P_{\le \Delta(H')}$ denotes the projector onto the subspace spanned by eigenvectors of $H'$ with eigenvalues below $\Delta$.

The first requirement \labelcref{short:def:simulation:encoding} states that, to good approximation (i.e.\ within error $\eta$), the local encoding $\cE$ approximates an encoding $\widetilde{\cE}$ onto low-energy states of $H'$.
The second requirement \labelcref{short:def:simulation:Hamiltonian} says that the map $\widetilde{\cE}$ gives a good simulation of $H$ (i.e.\ within error $\epsilon$).
Note that if $\eta=\epsilon=0$ and $\Delta\to\infty$, the simulation is exact.
Increasing the accuracy of the simulation will typically require expending more ``effort'', e.g.\ by increasing the energy of the interactions.

\cref{short:def:simulation} requires the simulating subspace to be the low-energy sector.
All our simulations achieve this.
But it is worth noting that \cref{short:def:simulation} can readily be generalised to other types of subspace, by replacing $P_{\le \Delta(H')}$ by a projector onto the subspace of interest.
Physically relevant examples might include symmetric subspaces, superselection sectors, or invariant subspaces of another Hamiltonian.
Constructing interesting simulations in such subspaces is an interesting direction for future research.

Our definition of Hamiltonian simulation, which follows from physical requirements, turns out to be a refinement of a definition of simulation introduced in prior work\shortcite{Bravyi-Hastings} in the context of Hamiltonian complexity theory.
There are two important differences.
We allow the encoding map $\cE$ to be anything that satisfies the physical requirements \labelcref{short:Hermitian-preserving,short:spectrum-preserving,short:local_interactions} from above, which can be more complicated than a single isometry.
On the other hand, we restrict $\cE$ to be local, since we require simulations to preserve locality.
A notion of universal analogue quantum simulation was also discussed -- though not formally defined -- in~\cite{hauke12}, along with some requirements that a quantum simulator should satisfy.
Our requirements encompass these.

Our notion of Hamiltonian simulation is strong enough to imply that all our requirements \labelcref{short:Hermitian-preserving,short:spectrum-preserving,short:local_interactions,short:observables,short:time-evolution,short:partition-function,short:errors} are indeed satisfied: all static, dynamic and thermodynamic properties are preserved up to any desired precision (see the next section and \textcitesupp{} for rigorous statements).

We are usually interested in simulating entire quantum many-body models, rather than individual Hamiltonians.
By ``model'', we mean very generally here any family of Hamiltonians. E.g.\ the 2D Heisenberg model consists of all Hamiltonians with nearest-neighbour Heisenberg interactions on a 2D square lattice of some given size, with uniform coupling strengths. The 2D Heisenberg model with variable couplings is another, more general model, consisting of all 2D Heisenberg Hamiltonians with any values for the individual coupling strengths.

When we say that a model $A$ can simulate another model $B$, we mean it in the following strong sense: any Hamiltonian $H$ on $n$ qudits (i.e.\ $d$-dimensional spins) from model $B$ can be simulated by some Hamiltonian $H'$ on $m$ qudits from model $A$, and this simulation can be done to \emph{any} precision $\eta,\epsilon$ with as large an energy cut-off $\Delta$ as desired.
The simulation is \emph{efficient} if each qudit of the original system is encoded into a constant number of qudits in the simulator (i.e.\ each $V_i$ in \cref{short:def:simulation} maps to $O(1)$ qudits); $P\ket{\psi}=\ket{\psi}$ for some state $\ket{\psi}$ that can be constructed efficiently; $H'$ is efficiently computable from $H$, and the energy overhead and number of qubits of the simulation scales at most polynomially (i.e.\ $\|H'\| = \poly(n,1/\eta,1/\epsilon,\Delta)$ and $m = \poly(n,1/\eta,1/\epsilon,\Delta)$).

\section{Consequences of simulation}

We arrived at a rigorous notion of Hamiltonian simulation by requiring the simulation to approximate the entire physics to arbitrary accuracy.
This is clearly very strong.
Just as exact simulation preserves all physical properties perfectly, approximate simulation preserves all physical properties approximately.
First, all energy levels are preserved up to any desired precision $\epsilon$.
Second, by locality of $\cE$, for any local observable $A$ on the original system there is a local observable $A'$ on the simulator and a local map $\cE_{\operatorname{state}}(\rho)$ such that applying $A'$ to $\cE_{\operatorname{state}}(\rho)$ perfectly reproduces the effect of $A$ applied to $\rho$.
This applies to all local observables, all order parameters (including topological order), and all correlation functions.
Thus all these static properties of the original Hamiltonian are reproduced by the simulation.

Third, Gibbs states of the original system correspond to Gibbs states of the simulator, and the partition function of $H$ is reproduced by $H'$, up to a physically irrelevant constant rescaling and an error that can be exponentially suppressed by increasing the energy cut-off $\Delta$ and improving the precision $\epsilon$.
More precisely, if the original and simulator Hamiltonians have local dimension $d$, then\citesupp{}
\begin{equation*}
    \frac{|\mathcal{Z}_{H'}(\beta)-(p+q)\mathcal{Z}_H(\beta)|}{(p+q)\mathcal{Z}_H(\beta)} \le \frac{d^{m-n} e^{-\beta\Delta}}{(p+q) e^{-\beta \|H\|}}+(e^{\epsilon\beta}-1) .
\end{equation*}
Since it is able to reproduce the partition function to any desired precision, all thermodynamic properties of the original Hamiltonian are reproduced by the simulation.
Finally, all dynamical properties are also reproduced to any desired precision.
More precisely, the error in the simulated time-evolution grows only linearly in time (which is optimal without active error correction), and can be suppressed to any desired level by improving the approximation accuracy $\epsilon$ and $\eta$:
\begin{equation}\label{short:eq:time-evolution}
  \| e^{-iH't} \mathcal{E}_{\operatorname{state}}(\rho) e^{iH't} - \mathcal{E}_{\operatorname{state}}( e^{-iHt} \rho e^{iHt}) \|_1 = O(t\epsilon + \eta).
\end{equation}

We can also derive some important consequences for simulation errors.
A recurring criticism of analogue Hamiltonian simulation is that, because it does not implement any error-correction, errors will accumulate over time and swamp the simulation.
A common counter-argument is that any real physical system is itself always subject to noise and errors.
If the properties of its Hamiltonian are sensitive to noise, the behaviour of the real physical system will include the effects of this, so from a physical perspective it is in fact fine to simulate this noisy system rather than an artificial, perfect, error-corrected system.

There is truth to both sides.
In the absence of error-correction, errors \emph{will} accumulate over time, as \cref{short:eq:time-evolution} shows.
It is also true that the same will happen in the original physical system, so this may not matter for simulating physical properties.
But only if noise and errors in the simulation closely mimic the noise and errors experienced by the real physical system we are trying to simulate.

With our precise definition of Hamiltonian simulation in hand, we can take a first step towards a rigorous version of this argument.
Most natural noise models are local: physical errors tend to act on nearby particles, not across the entire system.
The definition of Hamiltonian simulation we arrived at immediately implies that local errors in the original system correspond to local errors in the simulator.
But we can go further.
We prove\citesupp{} that, under a reasonable physical assumption, a local error in the simulator approximates arbitrarily well the encoded version of some local error on the original system.
To make this precise, note that if we take the energy cut-off $\Delta$ to be large enough, errors on the simulator system are unlikely to take the simulated state out of the low-energy space of $H'$.
Assume that this happens with probability at most $\delta$, for some $\delta \le \eta$.
Then for any noise operation $\mathcal{N}'$ acting on $\ell$ qudits of the simulator, there is always some noise operation $\mathcal{N}$ on at most $\ell$ qudits of the original system (which we can easily write down\citesupp{}) such that, for any state $\rho$, the effect of $\mathcal{N'}$ on the simulator approximates (again, to any desired precision) the effect of $\mathcal{N}$ on the original system:
\begin{equation*}
  \mathcal{E}_{\operatorname{state}}(\mathcal{N}(\rho)) = \mathcal{N}'(\mathcal{E}_{\operatorname{state}}(\rho)) + O(\sqrt{\eta})
\end{equation*}
where $\cN$ and $\cN'$ are superoperators.
The fact we can prove the result this way around is crucial: it shows that any local noise and errors in our simulator just reproduce the effects of local noise and errors in the original physical system.
This is much stronger than merely showing that errors on the original system can be simulated.

This is as strong a result as one can hope for in a fully general, abstract description of Hamiltonian simulation.
But it still falls far short of a full justification of the lack of error-correction in analogue quantum simulation.
Fully justifying this would require characterising all the noise and error processes occurring in the particular Hamiltonian simulator implementation, then determining whether these faithfully reproduce the effects of the natural noise and error processes in the physical system it is being used to simulate.
Our results provide the mathematical framework required to carry out the latter; the former is an experimental physics challenge.
Even then, the validity of this argument rests on the validity of the noise characterisation and model.
Ultimately, determining whether or not a simulation is accurate always comes down to testing its predictions in the laboratory.

\section{Universal Hamiltonians}

The notion of Hamiltonian simulation we have arrived at is extremely demanding.
It is not a priori clear whether any interesting simulations exist at all.
In fact, not only do such simulations exist, we prove that there are even \emph{universal} quantum simulators.
A model is ``universal'' if it can simulate \emph{any} Hamiltonian whatsoever, in the strong sense of simulation discussed above.
Depending on the target Hamiltonian, this simulation may or may not be efficient.
Typically, the simulation will be efficient for target Hamiltonians with local interactions in the same (or lower) spatial dimension.
Whereas, whilst universal models can also simulate Hamiltonians in higher spatial dimensions with only modest (polynomial) system-size overhead, this comes at an exponential cost in energy.
More precisely, any interaction graph that is spatially sparse can be simulated efficiently by any of the universal models in 2D, whereas the complete graph can be simulated with polynomial space-overhead but exponential energy-overhead (see \textcitesupp{} for details).

Remarkably, even certain simple 2D quantum spin-lattice models are universal.
To show this, we in fact prove a still stronger result.
We completely classify all two-qubit interactions (i.e.\ nontrivial interactions between two spin-1/2 particles) according to their simulation ability\citesupp{}.
This classification tells us which two-qubit interactions are universal.
The universal class turns out to be identical to the class of QMA-complete two-qubit interactions from quantum complexity theory\shortcite{Cubitt-Montanaro}, where QMA is the quantum analogue of the complexity class NP\shortcite{Kitaev-Shen-Vyalyi}.

The classification also shows that there are two other classes of two-qubit interaction, with successively weaker simulation ability.
Combining our Hamiltonian simulation results with previous work\shortcite{Bravyi-Hastings}, we find that there is a class of two-qubit interactions that can simulate any stoquastic Hamiltonian, i.e.\ any Hamiltonian whose off-diagonal entries in the standard basis are non-positive.
This is the class of Hamiltonians believed not to suffer from the sign-problem in numerical Monte-Carlo calculations.
Another class is able, by previous work\shortcite{Science}, to simulate any \emph{classical} Hamiltonian, i.e.\ any Hamiltonian that is diagonal in the standard basis.

The 2D Heisenberg- and XY-models with variable coupling strengths are important examples which we show fall into the first category, hence are universal simulators.
The 2D (quantum) Ising model with transverse fields falls into the second category, so can simulate any other stoquastic Hamiltonian\shortcite{Bravyi-Hastings}.
The 2D classical Ising model with fields falls into the third category, so is an example of a universal classical Hamiltonian simulator\shortcite{Science}.

\section{Universality classification}

We now summarise the proof of the universality classification result (see\citesupp{} for full technical details).
This involves chaining together a number of steps, the most important of which are shown in \cref{short:fig:reductions}.
In fact, most of the technical difficulty lies in proving universality of the Heisenberg and XY interactions, as these have the most restrictive symmetries of all two-qubit interactions.
Once these are shown to be universal, recently developed techniques\shortcite{Cubitt-Montanaro,Piddock-Montanaro} show that any other Hamiltonian from the universal category can simulate one of these two (this step is omitted from the illustration in \cref{short:fig:reductions}).
Hence, by universality of the Heisenberg or XY interactions, such Hamiltonians can also simulate any other Hamiltonian.

\begin{figure}
\begin{center}
\begin{tikzpicture}[yscale=0.8,font=\sffamily,every node/.style={draw,fill=yellow!30,inner sep=2pt},label/.style={draw=none,fill=none,font={\sffamily,\scriptsize}},>=stealth]
\node (heisenberg) at (-2,-0.5) {Heisenberg interactions};
\node (xy) at (2,-0.5) {XY interactions};
\node (xxzz) at (0,-1.5) {2\nbd-local Pauli interactions with no $\sigma_y$'s};
\node (4local) at (0,-2.5) {Arbitrary $(2k+1)$\nbd-local terms with no $\sigma_y$'s};
\node (real) at (0,-3.5) {Arbitrary real $2k$\nbd-local qubit Hamiltonian};
\node (klocal) at (0,-4.5) {Arbitrary $k$\nbd-local qubit Hamiltonian};
\node (bosons) at (-2,-5.5) {Bosons};
\node (qudits) at (0.3,-5.5) {Qudits};
\node (fermions) at (2,-5.5) {Fermions};
\draw[->] (heisenberg.south) -- node[draw=none,fill=none,midway,anchor=east,left=10pt] {\footnotesize \color{blue} Step 1} (xxzz);
\draw[->] (xy.south) -- node[draw=none,fill=none,midway,anchor=west,right=10pt] {\footnotesize \color{blue} Step 1}  (xxzz);
\draw[->] (xxzz) -- node[draw=none,fill=none,midway,anchor=east] {\footnotesize \color{blue} Step 2} (4local);
\draw[->] (4local) -- node[draw=none,fill=none,midway,anchor=east] {\footnotesize \color{blue} Step 3} (real);
\draw[->] (real) -- node[draw=none,fill=none,midway,anchor=east] {\footnotesize \color{blue} Step 4} (klocal);
\draw[->] (klocal) -- node[draw=none,fill=none,midway,anchor=east,left=8pt] {\footnotesize \color{blue} Step 5} (bosons);
\draw[->] (klocal) -- node[draw=none,fill=none,midway,anchor=east] {\footnotesize \color{blue} Step 5} (qudits);
\draw[->] (klocal) -- node[draw=none,fill=none,midway,anchor=west,right=8pt] {\footnotesize \color{blue} Step 5} (fermions);
\end{tikzpicture}
\end{center}
\caption{Part of the sequence of simulations used in this work. An arrow from one box to another indicates that a Hamiltonian of the first type can simulate a Hamiltonian of the second type.}
\label{short:fig:reductions}
\end{figure}

\paragraph{Step 1}
The Heisenberg interaction $h_\mathrm{Heis} = \sigma_x\ox\sigma_x + \sigma_y\ox\sigma_y + \sigma_z\ox\sigma_z$ (where $\sigma_{x,y,z}$ are the Pauli matrices) has full local rotational symmetry.
Mathematically, this is equivalent to invariance under arbitrary simultaneous local unitary rotations $U\ox U$.
The XY interaction $h_{XY} = \sigma_x\ox\sigma_x + \sigma_y\ox\sigma_y$ is invariant under arbitrary rotations in the z-plane, i.e.\ $U\ox U$ with $U=e^{i \theta \sigma_z}$ for any angle $\theta$.
Any Hamiltonian composed of just one of these types of interaction inherits the corresponding symmetry.
Thus all its eigenspaces also necessarily have this symmetry.
Yet if it is to be universal, it must simulate Hamiltonians without this symmetry.

To overcome the symmetry restriction, we develop more complicated simulations based around the use of ``perturbative gadgets'' (a technique originally introduced to prove QMA-completeness results in Hamiltonian complexity theory\shortcite{Kempe-Kitaev-Regev,Oliveira-Terhal}).
In a perturbative gadget, a heavily weighted term $C H_0$ (for some large constant $C$) dominates the overall Hamiltonian $H'=C H_0 + H_1$ such that the low-energy part of $H'$ is approximately just the ground space of $H_0$.
Within this low-energy subspace, an effective Hamiltonian is generated by $H_1$ and can be calculated using a precise version of perturbation theory\shortcite{Bravyi-Hastings}, which accounts rigorously for the approximation errors resulting from neglecting the higher-order terms.
The first-order term in the perturbative expansion is given by $H_1$ projected into the ground space of $H_0$, as one might expect.
But if this term vanishes, then the more complicated form of higher order terms may be exploited to generate more interesting effective interactions.

In order to break the symmetry of the Heisenberg and XY interactions, it is necessary for the encoded Hamiltonian to act not on the physical qubits of the system, but on qubits encoded into a subspace of multiple physical qubits.
To achieve this, we design a four-qubit gadget where the strong $H_0$ term, consisting of equally weighted interactions across all pairs of qubits, has a two-fold degenerate ground space.
This two-dimensional space can be used to encode a qubit.
This gadget is used repeatedly to encode all qubits of the systems separately, as illustrated in \cref{short:fig:logphys}.
We then add less heavily weighted interactions acting between qubits in different gadgets, in order to generate effective interactions between the encoded qubits.
This allows us to generate any two-qubit interaction that does not involve any $\sigma_y$ terms.

\begin{figure}
\begin{center}
\begin{tikzpicture}[scale=0.73]

\foreach \x in {1,2,...,4} {
\draw (\x,3) -- (\x,5);
}
\foreach \x in {3,4,5} {
\draw (1,\x) -- (4,\x);
}
\foreach \x in {1,2,...,4} {
\foreach \y in {3,4,5} {
\node[rqubit] at (\x,\y) {};
}
}

\begin{scope}[xshift=0.4cm,xscale=0.9]
\filldraw[thick,left color=red!50,right color=blue!50] (5,3.6) -- (7,3.6) -- (7,3.25) -- (8,4) -- (7,4.75) -- (7,4.4) -- (5,4.4) -- cycle;
\end{scope}

\begin{scope}[xshift=7.5cm]

\begin{scope}[xshift=-0.25cm,yshift=-0.8cm,scale=1.21]
\foreach \x in {1,2,...,4} {
\draw (\x,3) -- (\x,5);
}
\foreach \x in {3,4,5} {
\draw (1,\x) -- (4,\x);
}
\end{scope}

\begin{scope}[xshift=-0.5cm,yshift=-1cm,scale=1.2]
\foreach \x in {1,2,...,4} {
\foreach \y in {3,4,5} {
\begin{scope}[xshift=\x cm,yshift=\y cm,scale=0.25]
\fill[white] (0,-0.4) rectangle (2,1.732+0.5);
\draw[thick] (0,0) -- (2,0) -- (1,1.732) -- (1,0.577) -- (0,0) -- (1,1.732); \draw[thick] (2,0) -- (1,0.577);
\end{scope}
}
}
\foreach \x in {1,2,...,4} {
\foreach \y in {3,4,5} {
\begin{scope}[xshift=\x cm,yshift=\y cm,scale=0.25]
\node[bqubits] at (0,0) {};
\node[bqubits] at (2,0) {};
\node[bqubits] at (1,1.732) {};
\node[bqubits] at (1,0.577) {};
\end{scope}
}
}

\end{scope}
\end{scope}

\end{tikzpicture}
\caption{Schematic illustrating simulation of one Hamiltonian with another. Each logical (red) qubit is encoded within 4 physical (blue) qubits, forced into their ground space by strong pairwise interactions. Interactions between the physical qubits implement effective interactions between the logical qubits. An error on a physical qubit only affects one logical qubit.}
\label{short:fig:logphys}
\end{center}
\end{figure}

\paragraph{Steps 2+3}
The next steps use simpler perturbation gadgets, in which $H_0$ is used to project a system of ancilla qubits into a fixed state, such that the effective Hamiltonian that this generates couples the remaining qubits.
This type of gadget is known in the Hamiltonian complexity literature as a \emph{mediator} qubit gadget\shortcite{Oliveira-Terhal}, because the ancilla qubits are seen to ``mediate'' an effective interaction between the other qubits in the system.
Previously known gadgets of this type\shortcite{Oliveira-Terhal} allow many-body interactions to be simulated using two-body interactions.
We combine these with a new mediator gadget\citesupp{} to show how two-qubit Hamiltonians without $\sigma_y$'s can simulate all real local Hamiltonians.

\paragraph{Step 4}
There is still a more basic obstacle to overcome.
All matrix elements of $h_\mathrm{Heis}$ or $h_{XY}$ are real numbers (in the standard basis).
Thus any Hamiltonian built out of these interactions is also real (hence the lack of $\sigma_y$'s so far).
Yet if it is to be universal, it must simulate Hamiltonians with complex matrix elements.

A simple encoding overcomes this restriction, by adding an additional qubit and encoding the real and imaginary parts of $H$ separately, controlled on the state of the ancilla qubit.
The Hamiltonian $H'=\operatorname{Re}(H)\oplus \operatorname{Im}(H)$ is clearly real and is easily seen to be an encoding of $H$, since $H'=H \otimes \proj{+_y} + \bar{H} \otimes \proj{-_y}$, where $\ket{\pm_y} = (\ket{0} \pm i\ket{1})/\sqrt{2}$.
To make this encoding local, it can be adjusted to a simulation where there is an ancilla qubit for each qubit of the system, but these ancillas are forced by additional strong local interactions to be in $\operatorname{span}\{\ket{+_y}^{\otimes n},\ket{-_y}^{\otimes n}\}$.

\paragraph{Step 5}
Finally, higher-dimensional spins (qudits) can be simulated by encoding each qudit into $\lceil\log_2 d\rceil$ qubits in the obvious way. And to simulate indistinguishable particles, one can verify that standard techniques for mapping fermions or bosons to spin systems give the required simulations\citesupp{}.

To show that Hamiltonians with arbitrary long-range interactions can be simulated with a 2D lattice model, there is a further step: embedding an arbitrary interaction pattern within a square lattice.
This can be achieved by effectively drawing the long-range interactions as lines on the lattice, and using further perturbative gadgets to remove crossings between lines\shortcite{Oliveira-Terhal}.
This step requires multiple rounds of perturbation theory, which can result in the final Hamiltonian containing local interaction strengths that scale exponentially in the number of particles.
Thus the final simulation, whilst efficient in terms of the number of particles and interactions, is not necessarily efficient in terms of energy cost for arbitrary Hamiltonians.
For example, we do not know how to construct an energy-efficient simulation of a 3D lattice Hamiltonian using a 2D lattice model, nor do we necessarily expect it to be possible.
However, full efficiency is recovered when the original Hamiltonian is spatially sparse\shortcite{Oliveira-Terhal} (a class which encompasses all 2D lattice Hamiltonians).

\section{Conclusions}
We close by highlighting some of the limitations of our results, and possible future directions.
First, whilst our strong notion of simulation preserves locality in the sense that a few-particle observable in the original system will correspond to a few-particle observable in the simulator, simulating e.g.\ a 3D system in a 2D system necessarily means that the corresponding observables in the simulation will not always be on nearby particles.
Also, to simulate higher-dimensional systems in 2D, our constructions require very large coupling strengths.

From the analogue Hamiltonian engineering perspective, our results show that surprisingly simple types of interactions suffice for building a universal Hamiltonian simulator.
Together with the ability to prepare simple initial states, these would even suffice to construct a universal quantum computer, or to perform universal adiabatic quantum computation\citesupp{}.
(However, error correction and fault-tolerance, which are essential for scalable quantum computation, would require additional active control.)
The converse point of view is that, as these apparently restrictive models turn out to be universal, simulating them on a quantum computer may be more difficult than previously thought.

Furthermore, our mathematical constructions require extremely precise control over the strengths of individual local interactions across many orders of magnitude.
Though some degree of control is possible in state-of-the-art experiments\shortcite{NatPhys_simulation_review,georgescu14}, the requirements of our current universal models are beyond what is currently feasible.
On the other hand, it is already possible to experimentally engineer more complex interactions than those we have shown to be universal.
Now we have shown that universal models exist, and need not be extremely complex, it may be possible to construct other universal models tailored to particular experimental setups.

From a fundamental physics perspective, an important limitation of our current results is that our universal models are not translationally invariant.
Although we show there are universal models in which all interactions have an identical form, our proofs rely heavily on the fact that the \emph{strengths} of these interactions can differ from site to site.
Classic results showing that local symmetries together with translational-invariance can restrict the possible physics\shortcite{Mermin-Wagner,Hohenberg} suggest breaking translational-invariance may be crucial for universality.
On the other hand, much of the intuition for our proofs comes from Hamiltonian complexity, where recent results have shown that translational-invariance is no obstacle to complexity\shortcite{Gottesman-Irani,wheelbarrow}.

In light of our results, determining the precise boundary between simplicity and universality in quantum many-body physics is now an important open question for future research.



\clearpage

\part{Technical content}\label{part:technical}
\section{Notation and terminology}

As usual, $\cB(\cH)$ denotes the set of linear operators acting on a Hilbert space $\cH$.
For conciseness, we sometimes also use the notation $\cM_n$ for the set of all $n\times n$ matrices with complex entries.
$\Herm_n$ denotes the subset of all $n\times n$ Hermitian matrices.
$\1$ denotes the identity matrix.
For integer $n$, $[n]$ denotes the set $\{1,\dots,n\}$.

If $R,R'$ are rings, a \keyword{ring homomorphism} $\phi:R\to R'$ is a map that is both additive and multiplicative: $\forall a,b\in R: \phi(ab) = \phi(a)\phi(b)$ and $\phi(a+b) = \phi(a) + \phi(b)$.
Similarly, a \keyword{ring anti-homomorphism} is an additive map that is anti-multiplicative: $\phi(ab) = \phi(b)\phi(a)$.
If $\phi(\1)=\1$, we say the map is \keyword{unital}.

For a ring $R$, the corresponding \keyword{Jordan ring} $R_j$ is the ring obtained from $R$ by replacing multiplication with Jordan multiplication $\{ab\} := ab + ba$.
A \keyword{Jordan homomorphism} $\phi$ on $R$ is an additive map such that $\forall a,b\in R: \phi(ab+ba) = \phi(a)\phi(b)+\phi(b)\phi(a)$.
If $R$ is not of characteristic 2, this is equivalent to the constraint that $\forall a \in R: \phi(a^2) = \phi(a)^2$.
Note that any ring homomorphism is a Jordan homomorphism, but the converse is not necessarily true.

$\spec(A)$ denotes the spectrum of $A\in\cM_n$, i.e.\ the set of values $\lambda\in\C$ such that $A - \lambda\1$ is not invertible.
(This of course coincides with the set of eigenvalues, ignoring multiplicities.)
We say that $\phi:\cM_n\to\cM_m$ is \keyword{invertibility-preserving} if $\phi(A)$ is invertible in $\cM_m$ for all invertible $A\in\cM_n$.
We say that $\phi$ is \keyword{spectrum-preserving} if $\spec(\phi(A)) = \spec(A)$ for all $A\in\cM_n$.

For an arbitrary Hamiltonian $H \in \mathcal{B}(\C^d)$, we let $P_{\le \Delta(H)}$ denote the orthogonal projector onto the subspace $S_{\le \Delta(H)} := \linspan \{ \ket{\psi} : H\ket{\psi}=\lambda\ket{\psi}, \lambda \le \Delta \}$.
We also let $H'|_{\le \Delta(H)}$ denote the restriction of some other arbitrary Hamiltonian $H'$ to $S_{\le \Delta(H)}$, and write $H|_{\le \Delta} := H|_{\le \Delta(H)}$ and $H_{\le \Delta} := H P_{\le \Delta(H)}$.

We say that a Hamiltonian $H \in \mathcal{B}((\C^d)^{\ox n})$ is \keyword{$k$-local} if it can be written as a sum of terms such that each $h_i$ acts non-trivially on at most $k$ subsystems of $(\C^d)^{\ox n}$.
That is, $h_i\in\cB((\C^d)^{\ox k})$ and $H = \sum_i h_i\ox\1$ where the identity in each term in the sum acts on the subsystems where that $h_i$ does not.
An operator on a composite Hilbert space ``acts trivially'' on the subsystems where it acts as identity, and ``acts non-trivially'' on the remaining subsystems.
We will often employ a standard abuse of notation, and implicitly exend operators on subsystems to the full Hilbert without explicitly writing the tensor product with identity, allowing us e.g.\ to write simply $H=\sum h_i$.
We say that $H$ is \keyword{local} if it is $k$-local for some $k$ that does not depend on $n$\footnote{Technically, this makes sense only for families of Hamiltonians $H$, where we consider $n$ to be growing.}.

We let $X$, $Y$, $Z$ denote the Pauli matrices and often follow the condensed-matter convention of writing $XX$ for $X \ox X$ etc.
For example, $XX+YY+ZZ$ is short for $X \ox X + Y \ox Y + Z \ox Z$ and is known as the Heisenberg (exchange) interaction.
The XY interaction is $XX + YY$.

Let $M$ be a $k$-qudit Hermitian matrix.
We say that $U \in SU(d)$ \emph{locally diagonalises} $M$ if $U^{\ox k} M (U^\dg)^{\ox k}$ is diagonal.
We say that a set $\mathcal{S}$ of Hermitian matrices is simultaneously locally diagonalisable if there exists $U \in SU(d)$ such that $U$ locally diagonalises $M$ for all $M \in \mathcal{S}$.
Note that matrices in $\mathcal{S}$ may act on different numbers of qudits, so can be of different sizes.

We will often be interested in families of Hamiltonians.
For a subset $\mathcal{S}$ of interactions (Hermitian matrices on a fixed number of qudits), we define the family of $\mathcal{S}$-Hamiltonians to be the set of Hamiltonians which can be written as a sum of interaction terms where each term is either picked from $\mathcal{S}$, with an arbitrary positive or negative real weight, or is an arbitrarily weighted identity term.
For example, $H$ is a $\{ZZ\}$-Hamiltonian if it can be written in the form $H = \alpha \1 + \sum_{i < j} \beta_{ij} Z_i Z_j$ for some $\alpha,\beta_{ij} \in \R$.
A \keyword{model} is a (possibly infinite) family of Hamiltonians.
Typically the Hamiltonians in a model will be related in some way, e.g.\ all Hamiltonians with nearest-neighbour Heisenberg interactions on an arbitrarily large 2D lattice (the ``2D Heisenberg model'').


\section{Hamiltonian encodings}
Any non-trivial simulation of one Hamiltonian with another will involve encoding the first within the second in some way.
Write $H' = \cE(H)$ for some ``encoding'' map $\cE$ that encodes a Hamiltonian $H$ into some Hamiltonian $H'$.
Any such encoding should fulfil at least the following basic requirements.
First, any observable on the original system should correspond to an observable on the simulator system.
Second, the set of possible values of any encoded observable should be the same as for the corresponding original observable.
In particular, the energy spectrum of the Hamiltonian should be preserved.
Third, the encoding of a probabilistic mixture of observables should be the same as a probabilistic mixture of the encodings of the observables.

To see why this last requirement holds, imagine that we are asked to encode observable $A$ with probability $p$, and observable $B$ with probability $1-p$.
Then, for any state $\rho$ on the simulator system, the expected value of the encoded observable acting on $\rho$ should be the same as the corresponding probabilistic mixture of the expected values of the encoded observables $A$ and $B$ acting on $\rho$.
In order for this to hold for all states $\rho$, we need the mixture of observables $pA + (1-p)B$ to be encoded as the corresponding probabilistic mixture of encodings of $A$ and $B$.

These operational requirements correspond to the following mathematical requirements on the encoding map $\mathcal{E}$:
\begin{enumerate}[label=\arabic{*}.,ref=\arabic{*}]
\item
  $\cE(A) = \cE(A)^\dg$ for all $A \in \Herm_n$.
\item
  $\spec(\cE(A)) = \spec(A)$ for all $A \in \Herm_n$.
\item
  $\cE(pA+(1-p)B)=p\cE(A)+(1-p)\cE(B)$ for all $A,B \in \Herm_n$ and all $p \in [0,1]$.
\end{enumerate}

Of course, there are many other desiderata that we would like $\mathcal{E}$ to satisfy, such as preserving the partition function, measurement outcomes, time-evolution, local errors, and others.
For the Hamiltonian itself, we almost certainly want $\cE$ to not only be convex, but also real-linear: $\cE(\sum_i\alpha_i h_i) = \sum_i\alpha_i\cE(h_i)$, so that a Hamiltonian expressed as a sum of terms can be encoded by encoding the terms separately.
However, we will see later that meeting just the above three basic requirements necessarily implies also meeting all these other operational requirements (which we will make precise).

It turns out there is a simple and elegant characterisation of what such encodings have to look like.
To prove this, we will need the following theorem concerning Jordan ring homomorphisms.

\begin{theorem}[follows from~\cite{Jacobson-Rickart}, Theorem 4 and~\cite{Martindale}, Theorem 2]
  \label{Jordan_homomorphisms}
  For any $n \ge 2$, any Jordan homomorphism of the Jordan ring $\Herm_n$ can be extended in one and only one way to a homomorphism of the matrix ring $\mathcal{M}_n$.
\end{theorem}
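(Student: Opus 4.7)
The plan is to derive the theorem by chaining the two cited structural results and then separately establishing uniqueness. First, I would view $\mathcal{M}_n$ as an involution ring under conjugate-transpose, whose symmetric part is precisely $\Herm_n$, and apply Martindale's Theorem~2: that result extends any Jordan homomorphism of the symmetric part of a suitable involution ring to a Jordan homomorphism of the enveloping ring. Applied in our setting, it promotes the given $\phi\colon \Herm_n \to \Herm_m$ to a Jordan homomorphism $\widetilde\phi\colon \mathcal{M}_n \to \mathcal{M}_m$. The hypotheses (primeness of $\mathcal{M}_n$, characteristic zero, and existence of enough orthogonal idempotents, which requires $n \ge 2$) are all easily verified for matrix rings.

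Next I would invoke Jacobson--Rickart's Theorem~4, which classifies Jordan homomorphisms of matrix rings: every such map decomposes as a direct sum of a ring homomorphism and a ring anti-homomorphism on complementary central idempotent summands. Since $\mathcal{M}_n$ is simple there is only one such summand, so $\widetilde\phi$ is either a ring homomorphism or a ring anti-homomorphism of $\mathcal{M}_n$. In the latter case, composing with the transpose anti-automorphism on the codomain produces a genuine ring homomorphism whose restriction to $\Herm_n$ still recovers $\phi$ (transpose fixes the Jordan structure on $\Herm_n$, so the Jordan homomorphism property is preserved). Either way, this delivers the desired ring-homomorphism extension of $\phi$ to $\mathcal{M}_n$.

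For uniqueness, the key observation is that $\mathcal{M}_n = \Herm_n + i\,\Herm_n$ and, for $n \ge 2$, the anti-Hermitian subspace $i\,\Herm_n$ is linearly spanned by commutators $[A,B]$ with $A,B \in \Herm_n$ (a direct check on a matrix-unit basis). If $\psi_1, \psi_2$ are two ring-homomorphism extensions of $\phi$, then for every Hermitian pair $A,B$,
\[
\psi_k([A,B]) \;=\; \psi_k(A)\psi_k(B) - \psi_k(B)\psi_k(A) \;=\; [\phi(A), \phi(B)],
\]
which is determined by $\phi$ alone. Hence $\psi_1$ and $\psi_2$ agree on all anti-Hermitian matrices and, by hypothesis, also on all Hermitian matrices, so they coincide on $\mathcal{M}_n$.

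The main obstacle, I expect, is not the high-level structure of the argument but the technical bookkeeping needed to invoke Martindale's and Jacobson--Rickart's theorems correctly: both are stated in considerable generality (prime rings with involution, arbitrary characteristic restrictions, conditions on idempotent decompositions of the codomain) and translating them to our concrete finite-dimensional complex setting requires checking that their hypotheses specialise appropriately. A secondary subtlety is handling the anti-homomorphism branch of Jacobson--Rickart so that the transpose trick produces a ring homomorphism whose restriction to $\Herm_n$ is exactly $\phi$, not some conjugate of it; this amounts to absorbing the transpose into the unitary freedom that is intrinsic to any such extension and confirming the combined map still satisfies the Jordan property on Hermitian inputs.
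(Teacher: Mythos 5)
Your proposal misreads both references and, as assembled, does not go through. In the paper the theorem is established by a direct two-case citation: Jacobson--Rickart's Theorem~4 (in their 1952 paper on Jordan homomorphisms of rings of self-adjoint elements) already states precisely what is needed here for $n \ge 3$ --- it extends a Jordan homomorphism of the symmetric elements of a unital ring with involution and a self-adjoint system of $n\ge 3$ matrix units, uniquely, to an associative homomorphism of the full ring --- and Martindale's Theorem~2 supplies the $n=2$ case directly with the same conclusion. Neither reference is used in the way you propose, namely to first produce a Jordan homomorphism of all of $\mathcal{M}_n$ and then reduce it to an associative homomorphism.

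Independently of attribution, three steps in your reduction are wrong. First, the claim that simplicity of $\mathcal{M}_n$ forces the extended Jordan homomorphism to be a pure homomorphism or anti-homomorphism confuses domain with codomain: the homomorphism/anti-homomorphism splitting is via central idempotents in (the subring generated by) the \emph{image}, not in the source. The map $A \mapsto A \oplus A^T$ on the simple ring $\mathcal{M}_n$ is a Jordan homomorphism that is neither a homomorphism nor an anti-homomorphism. Second, the transpose trick does not recover $\phi$: if $\widetilde\phi$ is an anti-homomorphism and $\tau$ is transpose on the codomain, then for Hermitian $A$ one gets $(\tau\circ\widetilde\phi)(A)=\phi(A)^T=\overline{\phi(A)}$, which is not $\phi(A)$ unless $\phi$ happens to take real values; transpose is indeed a Jordan automorphism of $\Herm_m$, but it acts there as complex conjugation, not as the identity. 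Third, your uniqueness argument is incomplete because commutators $[A,B]$ of Hermitian matrices are all traceless and so span only the traceless part of $i\,\Herm_n$, missing $i\1$ entirely. The correct observation --- the one exploited later in the proof of \cref{encoding} --- is that $\mathcal{M}_n$ is the enveloping associative ring of $\Herm_n$ for $n\ge 2$: every matrix, $i\1$ included, is a real-linear combination of \emph{products} of Hermitian matrices, and a ring homomorphism is determined by its values on such products.
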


\Cref{Jordan_homomorphisms} was shown by Jacobson and Rickart for $n \ge 3$~\cite{Jacobson-Rickart}, and by Martindale for $n = 2$~\cite{Martindale}, in each case in a far more general setting than we need here.

\begin{lemma}\label{invertibility-preserving}
  Any unital, invertibility-preserving, real-linear map $\phi:\Herm_n\to\Herm_m$ is a Jordan homomorphism.
\end{lemma}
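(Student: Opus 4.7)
The strategy is to upgrade the invertibility-preserving hypothesis to a statement about projectors, and then reduce the Jordan identity to the spectral decomposition. Since $\Herm_n$ is not of characteristic $2$, it suffices to show $\phi(a^2)=\phi(a)^2$ for every $a\in\Herm_n$.

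\emph{Step 1 (spectrum shrinking).} First I would show $\spec(\phi(A)) \subseteq \spec(A)$ for all $A\in\Herm_n$. Indeed, if $\mu\in\R$ is not in $\spec(A)$, then $A-\mu\1$ is invertible; by real-linearity and unitality, $\phi(A-\mu\1) = \phi(A)-\mu\1$, which is invertible by hypothesis. Hence $\mu\notin\spec(\phi(A))$. (Only real $\mu$ matter, since both $A$ and $\phi(A)$ are Hermitian.)

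\emph{Step 2 (projectors to projectors).} Let $P\in\Herm_n$ be an orthogonal projector. Then $\spec(P)\subseteq\{0,1\}$, so by Step~1, $\spec(\phi(P))\subseteq\{0,1\}$. Since $\phi(P)\in\Herm_m$ is Hermitian (hence diagonalisable with real spectrum), a Hermitian matrix with spectrum contained in $\{0,1\}$ is automatically a projector.

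\emph{Step 3 (orthogonality of images).} Let $P_1,P_2\in\Herm_n$ be projectors with $P_1P_2=P_2P_1=0$, so that $P_1+P_2$ is again a projector. By Step~2 and real-linearity, $\phi(P_1)+\phi(P_2)=\phi(P_1+P_2)$ is a projector, and expanding the idempotence equation (and using $\phi(P_i)^2=\phi(P_i)$ from Step~2) gives
\begin{equation}
\phi(P_1)\phi(P_2)+\phi(P_2)\phi(P_1)=0.
\end{equation}
Multiplying this on the left by $\phi(P_1)$ yields $\phi(P_1)\phi(P_2)+\phi(P_1)\phi(P_2)\phi(P_1)=0$, and multiplying on the right by $\phi(P_1)$ yields $\phi(P_2)\phi(P_1)+\phi(P_1)\phi(P_2)\phi(P_1)=0$; subtracting shows $\phi(P_1)\phi(P_2)=\phi(P_2)\phi(P_1)$, which combined with anticommutation forces $\phi(P_1)\phi(P_2)=0$.

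\emph{Step 4 (reduce to the spectral decomposition).} For arbitrary $A\in\Herm_n$, use the spectral theorem to write $A=\sum_i \lambda_i P_i$ with pairwise orthogonal projectors $P_i$ summing to $\1$. Then $A^2=\sum_i \lambda_i^2 P_i$. Applying real-linearity term by term,
\begin{equation}
\phi(A^2)=\sum_i \lambda_i^2\,\phi(P_i),\qquad \phi(A)^2=\sum_{i,j}\lambda_i\lambda_j\,\phi(P_i)\phi(P_j).
\end{equation}
By Step~3 the cross terms $\phi(P_i)\phi(P_j)$ vanish for $i\neq j$, and by Step~2 $\phi(P_i)^2=\phi(P_i)$. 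Hence $\phi(A)^2=\sum_i\lambda_i^2\,\phi(P_i)=\phi(A^2)$, proving the Jordan identity.

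\emph{Main obstacle.} The only non-obvious manipulation is Step~3: passing from ``the sum of two images is still a projector'' to ``the two images are actually multiplicatively orthogonal.'' The anticommutation relation drops out immediately from idempotence, but the implication ``two anticommuting projectors must commute and hence annihilate each other'' needs the short left/right-multiplication trick above. Once that is in hand, Steps~1, 2, and 4 are essentially bookkeeping around the spectral theorem.
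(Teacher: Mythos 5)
Your proposal is correct and follows essentially the same route as the paper: establish $\spec(\phi(A))\subseteq\spec(A)$ from invertibility-preservation and unitality, deduce that projectors map to projectors, extract a vanishing anticommutator from $\phi(P_1+P_2)$ being a projector, and conclude via the spectral decomposition. The one place you do more work than necessary is Step 3: the paper stops at the anticommutation relation $\phi(P_i)\phi(P_j)+\phi(P_j)\phi(P_i)=0$ and uses the symmetry of $\lambda_i\lambda_j$ so that the cross terms in $\sum_{i\neq j}\lambda_i\lambda_j\,\phi(P_i)\phi(P_j)$ cancel pairwise, whereas you upgrade to full multiplicative orthogonality $\phi(P_i)\phi(P_j)=0$ with the left/right-multiplication trick — this is a correct and slightly stronger conclusion, but it is not needed to finish the Jordan identity.
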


\begin{proof}
  The argument is standard (see e.g.~\cite{Hou-Semrl}).

  $\phi(H-\lambda\1) = \phi(H)-\lambda\1$, thus $\spec(\phi(H))\subseteq\spec(H)$ since $\phi$ is invertibility-preserving.
  In particular, $\spec(\phi(P))\in\{0,1\}$ for every projector $P$.
  Since $\phi(P)$ is also Hermitian, this implies $\phi(P)$ is a projector.

  By the spectral decomposition, any $H\in\Herm_n$ can be decomposed as $H = \sum_i\lambda_i P_i$ where $P_i$ are mutually orthogonal projectors and $\lambda_i\in\R$.
  For $i\neq j$, $P_i+P_j$ is a projector, thus $\phi(P_i+P_j)$ is a projector and $(\phi(P_i+P_j))^2 = \phi(P_i) + \phi(P_j)$, so that $\phi(P_i)\phi(P_j)+\phi(P_i)\phi(P_j) = 0$.
  Therefore, $\phi(H)^2 = \sum_i\lambda_i^2\phi(P_i)^2 + \sum_{i\neq j}\lambda_i\lambda_j\phi(P_i)\phi(P_j) = \sum_i\lambda_i^2\phi(P_i) = \phi(H^2)$.
\end{proof}

\begin{theorem}[Encodings]\label{encoding}
  For any map $\cE:\Herm_n\to\Herm_m$, the following are equivalent:
  \begin{enumerate}
  \item \label[part]{encoding:operational_subset}%
    For all $A,B \in \Herm_n$, and all $p \in [0,1]$:
    \begin{enumerate}[label=\arabic{*}.,ref=\ref{encoding:operational_subset}\arabic{*}]
    \item \label[condition]{Hermitian-preserving}%
      $\cE(A) = \cE(A)^\dg$
    \item \label[condition]{spectrum-preserving}%
      $\spec(\cE(A)) = \spec(A)$
    \item \label[condition]{local_interactions}%
      $\cE(pA+(1-p)B) = p\cE(A)+(1-p)\cE(B)$.
    \end{enumerate}

  \item \label[part]{encoding:mathematical}%
    There exists a unique extension $\cE': \cM_n \to \cM_m$ such that $\cE'(H) = \cE(H)$ for all $H \in \Herm_n$ and, for all $A,B\in\cM_n$ and $x\in\R$:
    \begin{enumerate}[label=\alph{*}.,ref=\ref{encoding:mathematical}\alph{*}]
    \item $\cE'(\1) = \1$ \label[condition]{encoding:mathematical:unital}
    \item $\cE'(A^\dg) = \cE'(A)^\dg$ \label[condition]{encoding:mathematical:dagger}
    \item $\cE'(A+B) = \cE'(A) + \cE'(B)$ \label[condition]{encoding:mathematical:sum}
    \item $\cE'(AB) = \cE'(A)\cE'(B)$ \label[condition]{encoding:mathematical:product}
    \item $\cE'(xA) = x\cE'(A)$. \label[condition]{encoding:mathematical:real-linear}
    \end{enumerate}

  \item \label[part]{encoding:characterisation}%
    There exists a unique extension $\cE': \cM_n \to \cM_m$ such that $\cE'(H) = \cE(H)$ for all $H \in \Herm_n$ with $\cE'$ of the form
    \begin{equation}
    \label{eq:characterisation}
      \cE'(M) = U \left(M^{\oplus p} \oplus \bar{M}^{\oplus q}\right) U^\dg
    \end{equation}
    for some non-negative integers $p$, $q$ and unitary $U\in\cM_m$, where $M^{\oplus p} := \bigoplus_{i=1}^p M$ and $\bar{M}$ denotes complex conjugation.
  \end{enumerate}
  We call a map $\cE$ satisfying \labelcref{encoding:operational_subset,encoding:mathematical,encoding:characterisation} an \keyword{encoding}.
\end{theorem}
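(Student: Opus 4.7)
I would prove the equivalence cyclically: \cref{encoding:operational_subset} $\Rightarrow$ \cref{encoding:mathematical} $\Rightarrow$ \cref{encoding:characterisation} $\Rightarrow$ \cref{encoding:operational_subset}. The closing arrow is a direct verification on the normal form \eqref{eq:characterisation}: $\bar H$ is Hermitian whenever $H$ is, direct summing with unitary conjugation preserves the spectrum as a set, and the whole map is manifestly $\R$-affine. The real work is in the two forward arrows, which use the Jordan-algebraic input of \cref{invertibility-preserving,Jordan_homomorphisms} and the simplicity of $\cM_n$ respectively.

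\textbf{From \cref{encoding:operational_subset} to \cref{encoding:mathematical}.} First I would upgrade the convex combination law \cref{local_interactions} to full $\R$-linearity on $\Herm_n$. Spectrum preservation forces $\cE(0)=0$ and $\cE(\1)=\1$ (each is the only Hermitian matrix with the corresponding singleton spectrum); taking $B=0$ in \cref{local_interactions} then gives $\cE(\alpha A)=\alpha\cE(A)$ for $\alpha\in[0,1]$, which rescales to all $\alpha\ge 0$, while $0=\cE(\tfrac12 A+\tfrac12(-A))$ yields $\cE(-A)=-\cE(A)$, and additivity follows from $\cE(A+B)=2\cE(\tfrac12(A+B))$. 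Since spectrum preservation trivially implies invertibility preservation, \cref{invertibility-preserving} certifies $\cE$ as a unital Jordan homomorphism on $\Herm_n$, and \cref{Jordan_homomorphisms} produces its unique extension $\cE':\cM_n\to\cM_m$. The extension then inherits \cref{encoding:mathematical:unital,encoding:mathematical:sum,encoding:mathematical:real-linear} directly; \cref{encoding:mathematical:dagger} follows from applying uniqueness to the map $A\mapsto\cE'(A^\dagger)^\dagger$, which agrees with $\cE'$ on $\Herm_n$; and \cref{encoding:mathematical:product} is supplied by \cref{Jordan_homomorphisms}.

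\textbf{From \cref{encoding:mathematical} to \cref{encoding:characterisation}.} The key is to extract all the structural information from the single element $J:=\cE'(i\1)\in\cM_m$. Properties \cref{encoding:mathematical:unital,encoding:mathematical:dagger,encoding:mathematical:product} give $J^2=-\1$ and $J^\dagger=-J$, so $J=i(P-Q)$ for complementary orthogonal projectors $P,Q$. Since $i\1$ is central in $\cM_n$, $J$ commutes with the image of $\cE'$, and every $\cE'(A)$ is therefore block-diagonal with respect to $P$ and $Q$. On the $P$-block, $\cE'(iA)=J\cE'(A)$ reduces to $i\cE'(A)|_P$, making the restriction $\C$-linear; on the $Q$-block it is conjugate-$\C$-linear. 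Each block is a unital ring homomorphism from the simple algebra $\cM_n$ into a matrix algebra, so Wedderburn/Skolem--Noether delivers the normal form $A\mapsto V A^{\oplus p}V^{-1}$ (respectively $A\mapsto W\bar A^{\oplus q}W^{-1}$). Condition \cref{encoding:mathematical:dagger} forces $V^\dagger V$ to lie in the commutant $\1_n\otimes\cM_p$, where it is positive and can be absorbed by a right multiplication to make $V$ unitary without changing $\cE'$; similarly for $W$. Combining the two blocks into a single unitary $U$ yields \eqref{eq:characterisation}.

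\textbf{Main obstacle.} The delicate step is ensuring that the extension supplied by \cref{Jordan_homomorphisms} actually satisfies the ring-multiplicativity \cref{encoding:mathematical:product} and not merely the weaker Jordan condition. Classical Jacobson--Rickart--Martindale theory produces the extension as a direct sum of a ring homomorphism and a ring anti-homomorphism; the anti-homomorphism piece, with $\C$-linear normal form $A\mapsto V A^\top V^{-1}$, must be rewritten as the conjugate-$\C$-linear ring homomorphism $A\mapsto V\bar A V^{-1}$, using that $A^\top=\bar A$ on Hermitian inputs (so that the two candidate extensions agree on $\Herm_n$ and uniqueness forces the ring-homomorphism form). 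This identification is what produces the $\bar M^{\oplus q}$ summand in \eqref{eq:characterisation} without introducing transposes, and reconciles the Jordan-algebra output of \cref{Jordan_homomorphisms} with the strict multiplicativity demanded by \cref{encoding:mathematical:product}.
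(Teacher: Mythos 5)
Your proposal is correct and follows essentially the paper's route: the cycle \labelcref{encoding:operational_subset}~$\Rightarrow$~\labelcref{encoding:mathematical}~$\Rightarrow$~\labelcref{encoding:characterisation}~$\Rightarrow$~\labelcref{encoding:operational_subset}, with \cref{invertibility-preserving} and \cref{Jordan_homomorphisms} supplying the ring-homomorphism extension and the complex structure $J=\cE'(i\1)$ then splitting it into a $\C$-linear block and a conjugate-linear block that are normalised by representation theory of $\cM_n$. Your derivation of \labelcref{encoding:mathematical:dagger} --- applying uniqueness to the competing extension $A\mapsto\cE'(A^\dagger)^\dagger$, which is a ring homomorphism agreeing with $\cE$ on $\Herm_n$ --- is slightly slicker than the paper's, which instead explicitly writes $i\1 = \sum_j A_j B_j C_j$ as a sum of triple products of Hermitian matrices in order to deduce $\cE'(i\1)^\dagger=-\cE'(i\1)$ directly.
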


Note that \labelcref{encoding:characterisation} is basis-independent, despite the occurrence of complex conjugation; taking the complex conjugation with respect to a different basis is equivalent to modifying $U$, which just gives another encoding.
Given that $\mathcal{E}'$ is unique, for the remainder of the paper we simply identify $\mathcal{E}'$ with $\mathcal{E}$.
In particular, this allows us to assume that $\cE$ is of the form specified in \cref{encoding:characterisation}.
The characterisation \cref{eq:characterisation} can equivalently be written as
\begin{equation}\label{eq:characterisationproj}
  \cE'(M) = U \left(M \ox P + \bar{M} \ox Q \right) U^\dg
\end{equation}
for some orthogonal projectors $P$ and $Q$ such that $P+Q=\1$; this alternative form will sometimes be useful below. We think of the system on which $P$ and $Q$ act as an ancilla, and often label this ``extra'' subsystem by the letter $E$.

\begin{proof}
  \labelcref{encoding:operational_subset}~$\Rightarrow$~\labelcref{encoding:mathematical}:\\
  We first show that $\mathcal{E}$ is a Jordan homomorphism.
  \Cref{Hermitian-preserving} states that $\cE$ preserves $\Herm_n$, and \cref{spectrum-preserving} implies that $\cE$ is unital and invertibility-preserving on $\Herm_n$, with $\cE(0)=0$.
  We next check that $\cE(0)=0$ together with \cref{local_interactions} are equivalent to real-linearity of $\cE$.
  For any $\lambda<0$, setting $p=\lambda/(\lambda-1)$, $B=pA/(p-1)$ and using \cref{local_interactions} gives
  \begin{equation}\label{eqn:homogeneous}
    0 = \cE(0)=p\cE(A)+(1-p)\cE(pA/(p-1)) \: \Leftrightarrow\: \lambda\cE(A)=\cE(\lambda A).
  \end{equation}
  Apply \cref{eqn:homogeneous} to $\lambda A$ to get $\cE(\lambda^2 A)=\lambda^2\cE(A)$, showing that $\cE$ is homogeneous for all real scalars.
  Additivity follows by combining \cref{local_interactions} and homogeneity: $\cE(A+B)=\cE(2A)/2+\cE(2B)/2=\cE(A)+\cE(B)$.
  Therefore $\cE$ is also real-linear so by \cref{invertibility-preserving} $\cE$ is a Jordan homomorphism.

  By \cref{Jordan_homomorphisms}, there exists a unique homomorphism $\mathcal{E}': \cM_n \to \cM_m$ such that $\cE'(H) = \cE(H)$ for all $H \in \Herm_n$.
  As $\cE'$ agrees with $\cE$ on $\Herm_n$, it satisfies \labelcref{encoding:mathematical:unital}.
  As $\cE'$ is a homomorphism, it satisfies \labelcref{encoding:mathematical:sum,encoding:mathematical:product} by definition; this also implies that $\cE'(xA) = \cE'(x\1)\cE'(A) = \cE(x\1)\cE'(A) = x \cE'(A)$ for any $x \in \R$, so \labelcref{encoding:mathematical:real-linear} holds.

  We finally prove \labelcref{encoding:mathematical:dagger}.
  It is sufficient to show that $\cE'(i \1)^\dg = -\cE'(i \1)$, because if this holds we can expand any matrix $A \in \mathcal{M}_n$ as $A = B + i C$ for some Hermitian matrices $B$ and $C$ to obtain
  \begin{align}
    \cE'(A^\dg)
      &= \cE'(B - i C) = \cE'(B) - \cE'(C) \cE'(i\1) = \cE'(B)^\dg + \cE'(C)^\dg \cE'(i \1)^\dg\\
      &= \cE'(B + iC)^\dg  = \cE'(A)^\dg.
  \end{align}
  To show $\cE'(i \1)^\dg = -\cE'(i \1)$, we first write $i\1$ as a linear combination of products of Hermitian matrices.
  That this can be done is an immediate consequence of the fact that $\mathcal{M}_n$ is the enveloping associative ring of $\Herm_n$.
  However, it can also be seen explicitly by writing
  \begin{equation}
    i \proj{j} = \proj{j} (i \ketbra{j}{k} - i \ketbra{k}{j})(\ketbra{j}{k} + \ketbra{k}{j})
  \end{equation}
  for any $j$, and some $k \neq j$; summing this product over $j$, we obtain $i\1$.
  Thus we can write $i\1 = \sum_j A_j B_j C_j$ for Hermitian matrices $A_j$, $B_j$, $C_j$.
  By taking adjoints on both sides, it follows that $-i\1 = \sum_j C_j B_j A_j$.
  So we have
  \begin{align}
    \cE'(i \1)^\dg
      &= \cE'\Big(\sum_j A_j B_j C_j\Big)^\dg = \Big(\sum_j \cE(A_j) \cE(B_j) \cE(C_j)\Big)^\dg\\
      &= \sum_j \cE(C_j) \cE(B_j) \cE(A_j) =  \cE'\Big(\sum_j C_j B_j A_j\Big)\\
      &= \cE'(-i\1) = -\cE'(i\1).
  \end{align}

  \noindent\labelcref{encoding:mathematical}~$\Rightarrow$~\labelcref{encoding:characterisation}:\\
  Existence and uniqueness of $\mathcal{E}'$ were already shown in the previous part. In the proof of the remaining claim, for readability we just use $\mathcal{E}$ to denote this unique extension. First define the complex structure $J:= \cE(i\1) \equiv \cE(i)$ (where the latter notation is a convenient shorthand). We have
  \begin{equation}
    J^2 = \cE(i)\cE(i) = \cE(i^2) = \cE(-1) = -\1,
  \end{equation}
  thus $J$  has eigenvalues $\pm i$.
  Furthermore,
  \begin{equation}
    J^\dg = \cE(i)^\dg = \cE(i^\dg) = -\cE(i) = -J,
  \end{equation}
  so $J$ is anti-Hermitian, hence diagonalisable by a unitary transformation.

  For any $A\in\Herm_n$, we have
  \begin{equation}
    J\cE(A) = \cE(i)\cE(A) = \cE(iA) = \cE(Ai) = \cE(A)J,
  \end{equation}
  so that $[\cE(A),J] = 0$.
  Thus $\cE(A)$ and $J$ are simultaneously diagonalisable for all $A$.
  $\HS = \HS_+\oplus \HS_-$ therefore decomposes into a direct sum of the $\pm i$ eigenspaces of $J$, on which $\cE(A) = A_+\oplus A_-$ acts invariantly.

  Now, restricting to either of these invariant subspaces,
  \begin{gather}
    \cE(A)|_\pm = A_\pm \\
    \cE(iA)|_\pm = J A_\pm = \pm i A_\pm \\
    \cE(AB)|_\pm   = \cE(A)\cE(B)|_\pm = A_\pm B_\pm \\
    \cE(A^\dg)|_\pm = \cE(A)^\dg|_\pm = A_\pm^\dg.
  \end{gather}
  Thus $\cE = \cE_+\oplus\cE_-$ decomposes into a direct sum of a *\nbd-representation $\cE_+(A) := \cE(A)|_+$ and an anti\nbd-*\nbd-representation\footnote{By ``anti\nbd-*\nbd-representation'' we mean an anti-linear algebra homomorphism, \emph{not} a *\nbd-antihomomorphism (which would be a linear map preserving adjoints that reverses the order of multiplication).}
  $\cE_-(A) := \cE(A)|_-$.
  Since for any vector $\ket{\psi}\in\C^m$, $\cE_\pm(\1)\ket{\psi} = \1\ket{\psi} = \ket{\psi}$, these (anti\nbd-)*\nbd-representations are necessarily non-degenerate.

  By a standard result on the representations of finite-dimensional C*\nbd-algebras \cite[Corollary~III.1.2]{Davidson_C*}, any non-degenerate *\nbd-representation of $\cM_n$ is unitarily equivalent to a direct sum of identity representations.
  If $\phi$ is an anti-*\nbd-homomorphism, let $\varphi(A) := \overline{\phi(A)}$.
  Then $\varphi(iA)= \overline{\phi(iA)} = \overline{-i\phi(A)} = i\varphi(A)$, $\varphi(A+B) = \varphi(A) + \varphi(B)$, $\varphi(A^\dg) = \varphi(A)^\dg$, and $\varphi(AB) = \varphi(A)\varphi(B)$.
  Thus $\phi(A) = \overline{\varphi(A)}$ where $\varphi$ is a *\nbd-homomorphism.
  Therefore, any non-degenerate anti-*\nbd-representation is unitarily equivalent to a direct sum of complex conjugates of identity representations, which completes the argument.

  \labelcref{encoding:characterisation}~$\Rightarrow$~\labelcref{encoding:operational_subset} can readily be verified directly.
\end{proof}

The above theorem characterises encodings of observables.
This immediately tells us how to encode physical systems themselves, expressed as Hamiltonians: since the Hamiltonian itself is an observable, the encoding map must have the same characterisation.

It is easy to see from the characterisation in \cref{encoding:characterisation} of the \namecref{encoding} that any encoding preserves all interesting physical properties of the original Hamiltonian.
For example, the set of eigenvalues is preserved, up to possibly duplicating each eigenvalue the same number of times, implying preservation of the partition function (up to an unimportant constant factor).
It is also easy to see that any encoding $\mathcal{E}$ properly encodes arbitrary quantum channels: if $\{E_k: \sum_k E_k^\dg E_k = \1\}$ are the Kraus operators of the channel, then
\begin{equation}
  \label{eq:channels}
  \sum_k\cE(E_k)^\dg\cE(E_k) = \1.
\end{equation}


\subsection{A map on states, $\cEs$}
We now show that, for any encoding $\mathcal{E}$, there exists a corresponding map $\mathcal{E}_{\operatorname{state}}$ that encodes quantum states $\rho$ such that encoded observables $\mathcal{E}(A)$ applied to encoded states $\mathcal{E}_{\operatorname{state}}(\rho)$ have correct expectation values.

First, note that for any observable $A$ and any state $\rho'$ on the simulator system, we have
\begin{align}
  \tr (\cE(A) \rho')
  &=\tr[U(A\ox P +\bar{A}\ox Q)U^\dg \rho']\\
  &=\tr[(A\ox \1)(\1\ox P)U^\dg\rho'U)]+\tr[(\bar{A}\ox \1)(\1\ox Q)U^\dg\rho'U]\\
  &=\tr[AF(\rho')]+\tr[\bar{A}\;\overline{B(\rho')})]=\tr(A\rho)
  \label{eqn:Fmeas}
\end{align}
where
\begin{gather} \label{eq:fandb}
  F(\rho')=\tr_E[(\1\ox P) U^\dg\rho'U],\qquad
  B(\rho')=\overline{\tr_E[(\1\ox Q) U^\dg\rho'U]},\\
  \rho=F(\rho')+B(\rho')
\end{gather}
and we label the second subsystem $E$ as discussed after \cref{eq:characterisationproj}. Note that $F(\rho')$ and $B(\rho')$ are both positive but not necessarily normalised, but $\rho$ is normalised.

Therefore any map $\mathcal{E}_{\operatorname{state}}(\rho)$ on states $\rho$ such that $\rho = F(\mathcal{E}_{\operatorname{state}}(\rho))+B(\mathcal{E}_{\operatorname{state}}(\rho))$ will preserve measurement outcomes appropriately.
One natural choice is
\begin{equation}
  \label{eq:estate}
  \mathcal{E}_{\operatorname{state}}(\rho) =
  \begin{cases}
    U(\rho \ox \sigma)U^\dg \text{ for some $\sigma$ such that $P\sigma = \sigma$} & \text{if $P \neq 0$}\\
    U(\bar{\rho} \ox \sigma)U^\dg \text{ for some $\sigma$ such that $Q\sigma = \sigma$}  & \text{otherwise.}
  \end{cases}
\end{equation}
Then in the former case $F(\mathcal{E}_{\operatorname{state}}(\rho)) = \rho$, $B(\mathcal{E}_{\operatorname{state}}(\rho)) = 0$; and in the latter case the roles of $F$ and $B$ are reversed.

We note that for $\cE_{\operatorname{meas}}$ to be practically implementable, we need both that the unitary $U$ is not too complex and that the state $\sigma$ is easy to prepare. This will be formalised when we introduce the notion of an efficient simulation (\Cref{dfn:sim}).

We now show that $\mathcal{E}_{\operatorname{state}}$ simulates time-evolution correctly too.
We have
\begin{align}
  F(e^{-i\cE(H)t}\rho'e^{i\cE(H)t}) &=e^{-iHt}F(\rho')e^{iHt}, \\
  B(e^{-i\cE(H)t}\rho'e^{i\cE(H)t}) &=e^{iHt}B(\rho')e^{-iHt}.
\end{align}
This is why they are labelled with the letters $F$ and $B$: the $F$ part evolves \emph{forwards} in time while the $B$ part evolves \emph{backwards} in time.
Taking $\rho' = \mathcal{E}_{\operatorname{state}}(\rho)$, we have proven the following result.

\begin{proposition}
  \label{prop:encodingswork}
  For any encoding $\mathcal{E}$, the corresponding map $\mathcal{E}_{\operatorname{state}}$ satisfies the following:
  \begin{enumerate}
  \item \label[requirement]{expectation_values}%
    $\tr\left(\cE(A)\cE_{\operatorname{state}}(\rho)\right) = \tr(A\rho)$
  \item \label[requirement]{time_evolution}%
    For any time $t$,
    \begin{equation}
      e^{-i \mathcal{E}(H) t} \cE_{\operatorname{state}}(\rho) e^{i \mathcal{E}(H) t} =
      \begin{cases}
        \cE_{\operatorname{state}}(e^{-iHt} \rho e^{iHt}) & \text{if $p \ge 1$}\\
        \cE_{\operatorname{state}}(e^{iHt} \rho e^{-iHt}) & \text{if $p = 0$.}
      \end{cases}
    \end{equation}
  \end{enumerate}
\end{proposition}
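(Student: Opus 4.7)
The plan is to read both claims directly off the closed-form characterisation $\cE(H)=U(H\ox P+\bar H\ox Q)U^\dg$ (with $P+Q=\1$ and $PQ=0$) from \cref{encoding} together with the concrete choice of $\cE_{\operatorname{state}}$ in \cref{eq:estate}. Part 1 is essentially already done in the preceding discussion: \cref{eqn:Fmeas} shows that $\tr(\cE(A)\rho')=\tr(A\rho)$ whenever $F(\rho')+B(\rho')=\rho$, so it suffices to verify that $F(\cE_{\operatorname{state}}(\rho))+B(\cE_{\operatorname{state}}(\rho))=\rho$ for the prescribed map. In the branch $P\neq 0$ the chosen $\sigma$ satisfies $P\sigma=\sigma$, giving $F(\cE_{\operatorname{state}}(\rho))=\rho\,\tr(P\sigma)=\rho$ and $B(\cE_{\operatorname{state}}(\rho))=\overline{\rho\,\tr(Q\sigma)}=0$ (using $Q\sigma=QP\sigma=0$); the case $P=0$ is symmetric with the roles of $F$ and $B$ swapped, and the resulting complex conjugation on $\rho$ is harmlessly absorbed when taking traces of Hermitian operators.

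For part 2 the preparatory step is to rewrite $e^{-i\cE(H)t}$ in a tractable form. Since $P$ and $Q$ are orthogonal projectors summing to $\1$, the operators $H\ox P$ and $\bar H\ox Q$ commute, and a term-by-term check of the exponential series yields
\begin{equation}
 e^{-i\cE(H)t}=U\bigl(e^{-iHt}\ox P+e^{-i\bar H t}\ox Q\bigr)U^\dg.
\end{equation}
In the branch $p\ge 1$, substituting $\cE_{\operatorname{state}}(\rho)=U(\rho\ox\sigma)U^\dg$ and expanding the four resulting terms, the identities $P\sigma=\sigma P=\sigma$ and $\sigma Q=0$ (which follow from $P\sigma=\sigma$ together with Hermiticity of $\sigma$) kill both cross terms, leaving exactly $U(e^{-iHt}\rho e^{iHt}\ox\sigma)U^\dg=\cE_{\operatorname{state}}(e^{-iHt}\rho e^{iHt})$. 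In the branch $p=0$ the same calculation runs with $\bar\rho$, $\bar H$, and $Q=\1$ replacing $\rho$, $H$, $P$; rewriting $e^{-i\bar H t}\bar\rho e^{i\bar H t}=\overline{e^{iHt}\rho e^{-iHt}}$ matches the argument of $\cE_{\operatorname{state}}$ in the second branch of \cref{eq:estate} and delivers the opposite sign on $t$.

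There is no substantive technical obstacle: every step is either an immediate consequence of \cref{encoding} or a short manipulation of orthogonal projectors. The only points warranting a moment's care are matching the two branches of \cref{eq:estate} to the two branches of the proposition, and tracking the complex conjugation in the $p=0$ case so that the time parameter genuinely flips sign in the final identity; both become routine once the commuting decomposition of $e^{-i\cE(H)t}$ above is in hand.
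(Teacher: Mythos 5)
Your argument is correct and follows essentially the same route as the paper: both rest on the closed form $\cE(M)=U(M\ox P+\bar{M}\ox Q)U^\dg$, the resulting block-diagonal exponential $e^{-i\cE(H)t}=U\bigl(e^{-iHt}\ox P+e^{-i\bar{H}t}\ox Q\bigr)U^\dg$, and the projector identities $P\sigma P=\sigma$, $P\sigma Q=Q\sigma P=Q\sigma Q=0$ forced by $P\sigma=\sigma$. The only difference is cosmetic: the paper records the covariance of the $F$ and $B$ maps under conjugation by $e^{-i\cE(H)t}$ and then sets $\rho'=\cEs(\rho)$, whereas you expand $e^{-i\cE(H)t}\cEs(\rho)e^{i\cE(H)t}$ directly.
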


It is worth highlighting the last point.
We see that if $p \ge 1$, evolving according to $\mathcal{E}(H)$ for time $t$ simulates evolving according to $H$ for time $t$, as we would expect; but that if $p = 0$, we simulate evolution according to $H$ for time $-t$.
That is, if our encoding only includes copies of $\bar{H}$, we simulate evolution backwards in time.
To avoid this issue, we define the concept of a \emph{standard} encoding as one where $p \ge 1$, and hence which is able to simulate evolution forward in time.

\begin{definition}[Standard encoding]
  An encoding $\mathcal{E}(M) = U (M^{\oplus p} \oplus \bar{M}^{\oplus q}) U^\dg$ is a \emph{standard encoding} if $p\ge 1$.
\end{definition}

\subsubsection{Gibbs-preserving state mappings}
The choice of $\cE_{\operatorname{state}}$ in \cref{eq:estate} is convenient, as it allows us to use the same mapping $\cE$ for both the Hamiltonian and for observables.
However, it does not map Gibbs states $e^{-\beta H}/\tr(e^{-\beta H})$ of the original system to Gibbs states $e^{-\beta' H'}/\tr(e^{-\beta' H'})$ of the simulator.
If we have limited ability to manipulate or prepare states of the simulator, it may be difficult to prepare a state of the form \cref{eq:estate}.
At equilibrium, the system will naturally be in a Gibbs state.
From this perspective, it would be more natural if the state mapping identified Gibbs states of the original system with Gibbs states of the simulator.

An alternative choice of $\cE_{\operatorname{state}}$ \emph{does} map Gibbs states to Gibbs states:
\begin{equation}\label{eq:estate_Gibbs}
  \cE_{\operatorname{state}}(\rho) =
 \frac{\cE(\rho)}{\tr[\cE(\rho)]}=\frac{1}{p+q}U(\rho \ox P +\bar{\rho} \ox Q)U^{\dag}
\end{equation}
where $p=\tr(P)$ and $q=\tr(Q)$.
However, to obtain the correct measurement outcome probabilities, we now need to choose a slightly different mapping for observables:\footnote{The Hamiltonian is of course also an observable.
  With this choice of state mapping, to construct the simulator Hamiltonian we must still use the mapping $H' = \cE(H)$.
  But if we want to \emph{measure} the Hamiltonian -- i.e.\ carry out the measurement on the simulator that corresponds to measuring the energy of the original system -- we must measure $\cE_{\operatorname{meas}}(H)$.}
\begin{equation}\label{eq:emeas_Gibbs}
  \cE_{\operatorname{meas}}(A) =
  \begin{cases}
    \frac{p+q}{p}U(A \ox P)U^\dg  \text{ if } P \neq 0 \\
    \frac{p+q}{q}U(\bar{A} \ox Q)U^\dg  \text{ otherwise.}
  \end{cases}
\end{equation}

For simplicity, in the remainder of the paper we will state and prove our results for the choice of state mapping $\cE_{\operatorname{state}}$ from \cref{eq:estate}, so that both Hamiltonians and observables are encoded by $\cE$.
However, our results also go through with the appropriate minor modifications for the choice of Gibbs-preserving $\cE_{\operatorname{state}}$ from \cref{eq:estate_Gibbs}, where the simulator Hamiltonian is still constructed using $\cE$ but observables are encoded by the $\cEm$ from \cref{eq:emeas_Gibbs}.

Note that $\cEm$ has been chosen so that measuring $\cEm(A)$ will only pick up the $F(\rho')$ part of a state $\rho'$ on the simulator. We therefore include results concerning the behaviour of $F$, in order to cover the choice of $\cEs$ given in \cref{eq:estate_Gibbs}, as well other mappings on states.


\subsection{The complex-to-real encoding}

The only nontrivial encoding (as opposed to simulation, q.v.) that we will need to use is an encoding of complex Hamiltonians as real Hamiltonians.

\begin{lemma}\label{complex-to-real enc}
  There exists an encoding $\varphi$ such that for any Hamiltonian $H\in\cB(\C^d)$, the encoded Hamiltonian $H'=\varphi(H)\in\cB(\R^{2d})$ is \emph{real}.
\end{lemma}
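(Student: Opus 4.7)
The plan is to exhibit an explicit encoding of the form guaranteed by \cref{encoding:characterisation}, namely
\[
  \varphi(H) \;=\; H \otimes P + \bar{H} \otimes Q
\]
with $P,Q$ carefully chosen rank-one projectors on an ancilla qubit, and then to check by direct computation that the resulting matrix has only real entries in the computational basis. Verification that this is an encoding is automatic from \cref{encoding}, because the prescription already matches \cref{eq:characterisationproj} with $p=q=1$ and $U=\mathbb{1}$.

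Concretely, I would take $P = \proj{+_y}$ and $Q = \proj{-_y}$, where $\ket{\pm_y} = (\ket{0} \pm i \ket{1})/\sqrt{2}$ are the $\pm 1$ eigenvectors of the Pauli $Y$. The choice of $Y$ rather than $X$ or $Z$ is the crucial observation: among the Paulis, $Y$ is the only one whose entries are imaginary, and therefore the only one for which $iY$ is a real matrix. The remainder is a direct calculation: since $\proj{\pm_y} = \tfrac{1}{2}(\mathbb{1} \pm Y)$, one obtains
\[
  \varphi(H) = \frac{H+\bar H}{2}\otimes \mathbb{1} + \frac{H-\bar H}{2}\otimes Y
             = \operatorname{Re}(H)\otimes \mathbb{1} + \operatorname{Im}(H)\otimes (iY),
\]
and since $\operatorname{Re}(H),\operatorname{Im}(H)$ are real $d\times d$ matrices while $\mathbb{1}$ and $iY$ are real $2\times 2$ matrices, $\varphi(H)$ is manifestly a real $2d\times 2d$ matrix. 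Hermiticity of $\varphi(H)$ as a real (i.e.\ symmetric) matrix also follows by inspection: $\operatorname{Re}(H)$ and $\mathbb{1}$ are symmetric, while $\operatorname{Im}(H)$ and $iY$ are antisymmetric, so each tensor factor is symmetric.

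There is no substantive obstacle here --- the only ``hard part'' is the initial guess of which ancilla projectors to use. Once the $Y$-basis choice is made, all the verifications (that $\varphi$ is an encoding, that $\varphi(H)$ is real, that $\varphi(H)$ is Hermitian) are one-line computations. I would finish by remarking that $U=\mathbb{1}$ is not essential: for any unitary $W$ on the ancilla with $W\bar{W}^{\dagger}=\mathbb{1}$ the conjugated projectors still give a real encoding, which is what will later allow the encoding to be made \emph{local} by replacing the single ancilla qubit with an ancilla qubit per site plus a strong penalty that forces the ancillas into the two-dimensional subspace spanned by $\ket{+_y}^{\otimes n}$ and $\ket{-_y}^{\otimes n}$, as foreshadowed in the overview.
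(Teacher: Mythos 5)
Your proof is correct and takes essentially the same approach as the paper: the encoding is the same complex-to-real map built from the $Y$\nbd-eigenbasis, differing only in that you write $H\ox \proj{\pm_y}$ rather than $\proj{\pm_y}\ox H$ (a cosmetic swap of tensor factors), and in that you start from the manifestly-valid encoding form and then expand to see realness, whereas the paper starts from the manifestly-real form $\operatorname{Re}(M)\oplus\operatorname{Re}(M) + J(\operatorname{Im}(M)\oplus\operatorname{Im}(M))$ and then exhibits a unitary conjugation to confirm it is an encoding. Both directions are one-line computations, and the paper in fact states your $\proj{\pm_y}$ form explicitly in the remark immediately following the lemma.
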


\begin{proof}
  This follows from the canonical Hilbert space isomorphism $\C^d \simeq \R^{2d}$ where the latter is endowed with a linear complex structure~$J$.

  Concretely, let
  \begin{equation}
    J:= \begin{pmatrix} 0 & \1_d \\ -\1_d & 0 \end{pmatrix}=i Y \ox \1_d
  \end{equation}
  where where $\1_d$ is the $d\times d$ identity matrix, and define the mapping
  \begin{equation}
    \begin{aligned}
    \varphi:\quad &\cB(\C^d) &\to\quad &\cB(\R^{2d})\\
    &\varphi(M) &=\quad &\real M\oplus\real M + J \imag M\oplus\imag M.
    \end{aligned}
  \end{equation}

  To see that $\varphi$ is indeed a valid encoding, we can either verify directly that it satisfies all the properties listed in \cref{encoding:operational_subset} of \cref{encoding}, or observe that
  \begin{equation}
    \varphi(M) = U (M\oplus\bar{M}) U^\dg
    \quad\text{where}\quad
    U = \frac{1}{\sqrt{2}}\begin{pmatrix} \phantom{i}\1 & \phantom{-i}\1 \\ i\1 & -i\1 \end{pmatrix}
      = \frac{1}{\sqrt{2}}\begin{pmatrix} 1 & \phantom{-}1 \\ i & -i \end{pmatrix} \ox \1,
  \end{equation}
  which is manifestly of the form given in \cref{encoding:characterisation} of \cref{encoding}.
  The \namecref{complex-to-real enc} follows by setting
  \begin{equation}
    H' = \varphi(H) = \real(H)\oplus\real(H) + J \imag(H)\oplus\imag(H).
  \end{equation}
\end{proof}

When applied to a Hamiltonian on a system of $n$ qubits, the encoding of \cref{complex-to-real enc} is local (see \cref{sec:local_encodings}).
Indeed, it produces a Hamiltonian $H'$ on $n+1$ qubits, given by
\begin{equation}
  H'=\ket{+_y}\bra{+_y} \ox H +\ket{-_y}\bra{-_y} \ox \bar{H}
\end{equation}
where $\ket{\pm_y}=(\ket{0}\pm i \ket{1})/\sqrt{2}$ are the eigenstates of $Y$.
It is easy to see that $H'$ is real since $\overline{\ket{+_y}}=\ket{-_y}$.
Any complex $k$\nbd-local interaction is mapped to a $(k+1)$\nbd-local interaction involving the additional qubit.

This additional qubit therefore has a special significance in the construction, which leads to two unwanted consequences.
Firstly, the interaction graph of $H'$ is in general more complicated than that of $H$.
Any geometric locality or spatial sparsity in the original Hamiltonian $H$ is lost, as all complex local terms are mapped to interactions in $H'$ that involve this additional qubit.
Secondly, an error on this single additional qubit would mix the spaces where $H$ and $\bar{H}$ act.
This could lead to unusual errors when simulating the time evolution of $\rho$ under $H$ with the simulator $H'$.

In \cref{complex-to-real sim} below we give an alternative to this encoding that avoids these problems.


\subsection{Local encodings}\label{sec:local_encodings}

So far, we have considered encodings of arbitrary Hamiltonians, with no additional structure.
However, in Hamiltonian simulation, we are typically interested in many-body Hamiltonians composed of local interactions between subsets of particles.
That is, Hamiltonians $H \in \cB((\C^d)^{\ox n})$ with $H = \sum_i h_{S_i}$, where the local terms $h_{S_i} \in \cB((\C^d)^{\ox\abs{S_i}})$ act on subsets $S_i$ of the particles (implicitly extended to $\cB((\C^d)^{\ox n})$ in the sum by tensoring with identity on the rest of the space, as usual).

In this case, we typically want our encoding to be \emph{local}, i.e.\ it should map local observables to local observables, and consequently
\begin{equation}
  \cE(h_{S_i}\ox\1) = h'_{S'_i}\ox\1
\end{equation}
so that the simulation $H' = \cE(H) = \sum_i h'_{S'_i}$ is itself a local Hamiltonian.

\begin{definition}[Local encoding]\label{def:local_encoding}
  Let $\cE:\cB(\bigotimes_{i=1}^n\HS_i) \to \cB(\bigotimes_{i=1}^{n'}\HS'_i)$ be an encoding, and let $\{S'_i\}_{i=1}^n$ be subsets of $[n']$.
  We say that the encoding is \emph{local} with respect to $\{S'_i\}$ if for any operator $A\in\cB(\HS_i)$, $\cE(A\ox\1)$ acts non-trivially only on $S'_i$.
\end{definition}

\begin{theorem}
  \label{local_encoding}
  Let $\cE:\cB(\bigotimes_{i=1}^n\HS_i) \to \cB(\bigotimes_{i=1}^{n'}\HS'_i)$ be a local encoding with respect to $\{S'_i\}$.
  Denote $Q_0 = \bigcup_{i,j} S'_i\cap S'_j$ and $Q_i = S'_i \setminus Q_0$ (see \cref{fig:local-encodings}).
  Then there exist decompositions $\HS_{Q_0} \simeq {E_0} \ox (\bigotimes_i\HS_i^{(\operatorname{in})})$ and $\HS_{Q_i} \simeq E_i\ox\HS_i^{(\operatorname{out})}$, together with identifications $\HS_i \simeq \HS_i^{(\operatorname{in})}\ox\HS_i^{(\operatorname{out})}$ and a decomposition $E_0 = \bigoplus_{\alpha}\left(\bigotimes_{i=0}^n E_{0.i}^{(\alpha)}\right)$, such that the encoding takes the form
  \begin{multline}\label{eq:local_encoding_characterisation}
    \cE(M) = \\
    U_{Q_0}
     \bigoplus_{\alpha} \left(
       \Bigl[\bigotimes_i U_{(i)}^{(\alpha)}\Bigr]
      \biggl( M^{(\alpha)} \ox \1_{E_1, \dots ,E_n}\ox \1_{E_{0.1}^{(\alpha)},\dots,E_{0.n}^{(\alpha)}} \biggr)
      \Bigl[\bigotimes_i U_{(i)}^{(\alpha)}\Bigr]^\dg
     \right)
    U_{Q_0}^\dg
  \end{multline}
where $U_{Q_0}$ acts non-trivially only on $\cH_{Q_0}$, each $U_{Q_i^+}^{(\alpha)}$ acts on $\HS_{Q_i^+}^{(\alpha)}:=\cH_i \ox E_i \ox E^{(\alpha)}_{0.i}$, and each $M^{(\alpha)} = M$ or $\bar{M}$.
\end{theorem}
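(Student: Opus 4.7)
The strategy is to combine the global characterisation of encodings (Theorem~\ref{encoding}) with its repeated application to ``restrictions'' of $\cE$ to single subsystems, and then to exploit commutation between the images to force a common sectorial decomposition of the overlap $\HS_{Q_0}$.

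First, for each $i \in [n]$ define the single-subsystem map
\begin{equation}
  \cE_i : \cB(\HS_i) \to \cB(\HS_{S'_i}), \qquad \cE_i(A) = \cE(A \ox \1_{[n]\setminus\{i\}}).
\end{equation}
By \cref{def:local_encoding}, the image indeed lies in $\cB(\HS_{S'_i})$. Since $\cE_i$ inherits \cref{Hermitian-preserving,spectrum-preserving,local_interactions} from $\cE$, Theorem~\ref{encoding} applies and gives a decomposition $\cE_i(A) = W_i (A \ox P_i + \bar{A} \ox Q_i) W_i^\dg$ for a unitary $W_i$ on $\HS_{S'_i}$ and orthogonal projectors $P_i + Q_i = \1$. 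In particular, the image $\cA_i := \cE_i(\cB(\HS_i))$ is a $*$-subalgebra of $\cB(\HS_{S'_i})$, and inside it we can read off the ``outer'' decomposition $\HS_{Q_i} \simeq E_i \ox \HS_i^{(\operatorname{out})}$ by pulling out the factor of $\HS_{S'_i}$ that lies entirely outside $Q_0$: the corresponding unitary blocks of $W_i$ give the outer part of the $U_{(i)}^{(\alpha)}$ in \cref{eq:local_encoding_characterisation}.

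Second, I would use commutation to bring all the $\cA_i$ simultaneously into block form on $\HS_{Q_0}$. Since $\cE$ is a $*$-algebra homomorphism (\cref{encoding:mathematical:product}) and operators on disjoint subsystems commute, the subalgebras $\cA_i$ pairwise commute. For $j \neq i$, $\cA_j$ acts as scalars on $\HS_{Q_i}$, so the only ``interesting'' commutation happens on the overlap $\HS_{Q_0}$. Restricting to $\HS_{Q_0}$, each $\cA_i$ acts as a (possibly reducible) direct sum of copies of $\cB(\HS_i^{(\operatorname{in})})$ or $\overline{\cB(\HS_i^{(\operatorname{in})})}$, where $\HS_i^{(\operatorname{in})}$ is the ``inner'' factor of $\HS_i$ corresponding to the part of the encoding that lives in the overlap (and $\HS_i \simeq \HS_i^{(\operatorname{in})} \ox \HS_i^{(\operatorname{out})}$ is obtained by matching the inner and outer dimensions obtained from Theorem~\ref{encoding} applied to $\cE_i$).

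Third, apply the structure theorem for mutually commuting semisimple $*$-subalgebras of a finite-dimensional matrix algebra: there is a unitary $U_{Q_0}$ on $\HS_{Q_0}$ and a direct sum decomposition $\HS_{Q_0} \simeq E_0 \ox \bigotimes_i \HS_i^{(\operatorname{in})}$ with $E_0 = \bigoplus_\alpha \bigotimes_i E_{0.i}^{(\alpha)}$ such that, within sector $\alpha$, each $\cA_i$ acts as $\cB(\HS_i^{(\operatorname{in})}) \ox \1_{E_{0.i}^{(\alpha)}}$ (or its complex conjugate), and the sector label $\alpha$ records the sign choice $M \leftrightarrow \bar M$ of each local encoding. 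This is a classical application of Artin--Wedderburn plus the double commutant, performed sector-by-sector.

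Fourth, assemble the outer part from Step~1 with the inner decomposition from Step~3: within each sector $\alpha$, the full local unitary $U_{(i)}^{(\alpha)}$ acts on $\HS_i^{(\operatorname{in})} \ox \HS_i^{(\operatorname{out})} \ox E_i \ox E_{0.i}^{(\alpha)}$, implementing $M^{(\alpha)} = M$ or $\bar M$ as dictated by the sector. Conjugating first by $U_{Q_0}$ on the overlap and then by $\bigotimes_i U_{(i)}^{(\alpha)}$ within each sector produces exactly \cref{eq:local_encoding_characterisation}.

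The main obstacle is Step~3: one must verify that the commuting images really do admit this ``tensor-product-on-each-sector'' common refinement, while keeping track of the $M$-versus-$\bar M$ choice independently for each $i$ and each sector. The subtlety is that a single $\cA_i$ may itself mix copies of $\cB(\HS_i^{(\operatorname{in})})$ with its complex conjugate (whenever the associated $\cE_i$ has both $p_i \ge 1$ and $q_i \ge 1$), so these must be treated as two inequivalent irreducible representations contributing to the refinement, and consistency across the $n$ different subalgebras is what forces the direct-sum-of-tensor-products structure of $E_0$.
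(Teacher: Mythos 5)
Your outline mirrors the paper's proof in structure: form the commuting algebras $\cA_i = \cE(\cB(\HS_i))$, apply a block-decomposition theorem (you invoke Artin--Wedderburn and the double commutant; the paper's \cref{Bravyi-Vyalyi} does the same work in this specific setting), obtain a sector decomposition of $\HS_{Q_0}$ in which each $\cA_i$ factorises, and then split $\HS_i$ into inner and outer pieces. That scaffolding is correct.

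The genuine gap is the uniformity of the conjugation choice within a sector. In Step~3 you write that within sector $\alpha$ each $\cA_i$ acts as $\cB(\HS_i^{(\operatorname{in})}) \ox \1$ \emph{or} its conjugate, and in your final paragraph you recognise the need for consistency but assert without argument that it ``forces'' the stated structure. However \cref{eq:local_encoding_characterisation} has a \emph{single} $M^{(\alpha)} \in \{M,\bar M\}$ per sector, so you must show that the identity/conjugate choice is made simultaneously for all $i$ within each sector. The usual $*$-algebra decomposition theorems cannot enforce this, because the identity and conjugate representations of $\cB(\HS_i)$ have the same image as subalgebras of the target -- the distinction is carried not by the subalgebra $\cA_i$ but by the encoding map $\cE$ itself, via the complex structure $J := \cE(i\1)$. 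The argument you are missing: since $i\1 = (i\1_j) \ox \1_{\rest} = \1 \ox (i\1_k) \ox \1_{\rest}$ for any $j,k$, restricting $J$ to sector $\alpha$ and evaluating via $\cA_j$ gives $\bigl(\1^{\oplus n_j(\alpha)} \oplus (-\1)^{\oplus m_j(\alpha)}\bigr) \ox \1$, while evaluating via $\cA_k$ gives $\1 \ox \bigl(\1^{\oplus n_k(\alpha)} \oplus (-\1)^{\oplus m_k(\alpha)}\bigr)$; equality forces either $n_i(\alpha)=0$ for all $i$ or $m_i(\alpha)=0$ for all $i$. Without this step you prove only a weaker statement in which $M^{(\alpha)}$ may be a per-factor product of conjugations. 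A smaller issue: the factorisation $\HS_i \simeq \HS_i^{(\operatorname{in})} \ox \HS_i^{(\operatorname{out})}$ must be chosen consistently across all sectors $\alpha$; the paper establishes this with a divisibility argument on dimensions which your Step~4 does not address.
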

\Cref{local_encoding} implies that the locality structure of an encoding is fully determined by how it maps 1-local operators.
Note that any of the Hilbert spaces in the decomposition could be one-dimensional.

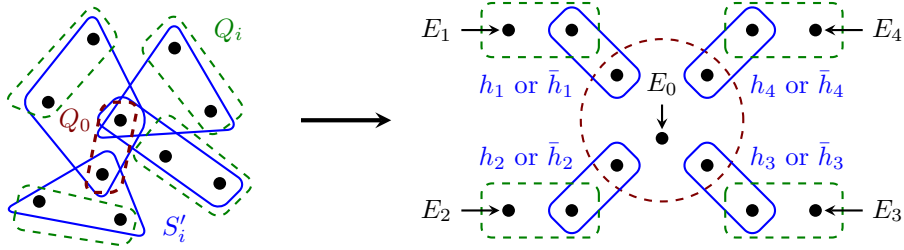
\begin{figure}
  \begin{center}
    \begin{tikzpicture}[scale=1.2]
      \fill (0,0) circle (2pt); \fill (1,0.1) circle (2pt); \fill (0.6,0.8) circle (2pt); \fill (-0.8,0.2) circle (2pt); \fill (-0.3,0.9) circle (2pt); \fill (-0.2,-0.6) circle (2pt); \fill (0.5,-0.4) circle (2pt); \fill (1.1,-0.7) circle (2pt); \fill (-0.9,-0.9) circle (2pt); \fill (0,-1.1) circle (2pt);
      \draw[thick,blue,rounded corners] (-1.3,-1) -- (0.3,-1.3) -- (-0.1,-0.3) -- cycle;
      \draw[thick,blue,rounded corners] (-0.3,1.2) -- (0.3,0) -- (-0.2,-0.9) -- (-1.1,0.2) -- cycle;
      \draw[thick,blue,rounded corners] (0.6,1.1) -- (1.3,-0.1) -- (-0.3,-0.2) -- cycle;
      \draw[thick,blue,rounded corners] (-0.3,0) -- (0,0.3) -- (1.4,-0.7) -- (1.1,-1) -- cycle;
      \draw[thick,green!50!black,rounded corners,dashed] (-1.3,0.2) -- (-0.8,0) -- (0,0.9) -- (-0.3,1.3) -- cycle;
      \draw[thick,green!50!black,rounded corners,dashed] (0.2,0.8) -- (0.6,1.2) -- (1.4,0.1) -- (1,-0.2) -- cycle;
      \draw[thick,green!50!black,rounded corners,dashed] (0.1,-0.4) -- (0.5,0.1) -- (1.5,-0.7) -- (1.1,-1.1) -- cycle;
      \draw[thick,green!50!black,rounded corners,dashed] (-1.2,-1.2) -- (-1.1,-0.7) -- (0.2,-1) -- (0.1,-1.4) -- cycle;
      \draw[very thick,red!50!black,rounded corners,dashed] (-0.2,0.2) -- (0.2,0.2) -- (0,-0.8) -- (-0.4,-0.8) -- cycle;
      \node at (1.2,1) {\color{green!50!black} $Q_i$};
      \node at (-0.5,0) {\color{red!50!black} $Q_0$};
      \node at (0.6,-1.2) {\color{blue} $S'_i$};

      \draw[->,>=stealth,ultra thick] (2,0) -- (3,0);

      \begin{scope}[xshift=6cm]
        \fill (0,-0.2) circle (2pt); \fill (0.5,0.5) circle (2pt); \fill (0.5,-0.5) circle (2pt); \fill (-0.5,0.5) circle (2pt); \fill (-0.5,-0.5) circle (2pt); \fill (1,1) circle (2pt); \fill (1.7,1) circle (2pt);  \fill (1,-1) circle (2pt); \fill (1.7,-1) circle (2pt);  \fill (-1,1) circle (2pt); \fill (-1.7,1) circle (2pt);  \fill (-1,-1) circle (2pt); \fill (-1.7,-1) circle (2pt);
        \draw[thick, red!50!black, dashed] (0,0) circle (0.9);
        \draw[thick, blue, rounded corners] (0.2,0.5) -- (0.5,0.2) -- (1.3,1) -- (1,1.3) -- cycle;
        \draw[thick, blue, rounded corners] (0.2,-0.5) -- (0.5,-0.2) -- (1.3,-1) -- (1,-1.3) -- cycle;
        \draw[thick, blue, rounded corners] (-0.2,0.5) -- (-0.5,0.2) -- (-1.3,1) -- (-1,1.3) -- cycle;
        \draw[thick, blue, rounded corners] (-0.2,-0.5) -- (-0.5,-0.2) -- (-1.3,-1) -- (-1,-1.3) -- cycle;
        \draw[thick,green!50!black,dashed,rounded corners] (0.7,0.7) -- (0.7,1.3) -- (2,1.3) -- (2,0.7) -- cycle;
        \draw[thick,green!50!black,dashed,rounded corners] (0.7,-0.7) -- (0.7,-1.3) -- (2,-1.3) -- (2,-0.7) -- cycle;
        \draw[thick,green!50!black,dashed,rounded corners] (-0.7,0.7) -- (-0.7,1.3) -- (-2,1.3) -- (-2,0.7) -- cycle;
        \draw[thick,green!50!black,dashed,rounded corners] (-0.7,-0.7) -- (-0.7,-1.3) -- (-2,-1.3) -- (-2,-0.7) -- cycle;
        \node (E0) at (0,0.4) {$E_0$};
        \node (E1) at (-2.5,1) {$E_1$};
        \node (E2) at (-2.5,-1) {$E_2$};
        \node (E4) at (2.5,1) {$E_4$};
        \node (E3) at (2.5,-1) {$E_3$};
        \draw [->,>=stealth,thick] (E0) -- (0,-0.1);
        \draw [->,>=stealth,thick] (E1) -- (-1.8,1);
        \draw [->,>=stealth,thick] (E2) -- (-1.8,-1);
        \draw [->,>=stealth,thick] (E3) -- (1.8,-1);
        \draw [->,>=stealth,thick] (E4) -- (1.8,1);
        \node at (-1.5,0.4) {\color{blue} $h_1$ or $\bar{h}_1$};
        \node at (-1.5,-0.4) {\color{blue} $h_2$ or $\bar{h}_2$};
        \node at (1.5,-0.4) {\color{blue} $h_3$ or $\bar{h}_3$};
        \node at (1.5,0.4) {\color{blue} $h_4$ or $\bar{h}_4$};
      \end{scope}
    \end{tikzpicture}
    \caption[Any local encoding can be decoupled into disjoint subsystems by local unitaries.]{%
      Any local encoding can be decoupled into disjoint subsystems by local unitaries on the $Q_i$ systems.
      Each subsystem encodes one of the qudits of the original system.
      Here $S'_i$ denotes the subsystems encoding qudit $i$ as a direct sum of identity and conjugate representations.
      $E_i$ denotes ancilla subsystems.}
    \label{fig:local-encodings}
  \end{center}
\end{figure}

The characterisation in \cref{local_encoding} shows that the most general possible encoding of local Hamiltonians looks very like the complex-to-real encoding from \cref{complex-to-real enc}.
Up to local unitaries, local encodings are just direct sums of product encodings, with a classical ancilla that determines whether to take the complex conjugate of all the local interactions or not.

To prove \cref{local_encoding}, we will need the following (slightly generalised) lemma from \cite{Aharonov-Eldar}, which is itself a special case of a result from \cite{Bravyi-Vyalyi}:
\begin{lemma}[Lemma 3.3 of \cite{Aharonov-Eldar}]
  \label{Bravyi-Vyalyi}
  Let $\HS=\bigotimes_{i=0}^n \HS_i$ be a Hilbert space and let $\cA_{0,k}$, $k \in \{1,\dots , n\}$, be sets of matrices which act non-trivially only on $\HS_0$ and $\HS_k$, such that matrices from different sets all commute.
  Then there exists a direct sum decomposition of $\HS_0$
  \begin{equation}
    \HS_0=\bigoplus_{\alpha} \HS_0^{(\alpha)}
  \end{equation}
  such that inside each subspace $ \HS_0^{(\alpha)}$ there is a tensor product structure
  \begin{equation}
    \HS_0^{(\alpha)}=\bigotimes_{i=0}^n \HS_{0.i}^{(\alpha)},
  \end{equation}
  and any element $A \in \cA_{0,k}$ preserves the subspaces $\HS^{(\alpha)} := \HS_0^{(\alpha)} \otimes \bigotimes_{i=1}^n \HS_i$.
  Moreover $A|_{\HS^{(\alpha)}}$ acts non-trivially only on $\HS_{0.k}^{(\alpha)} \ox \HS_k$.
\end{lemma}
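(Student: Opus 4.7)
The plan is to reduce the lemma to a statement about pairwise commuting \textsuperscript{*}-subalgebras of $\cB(\HS_0)$ and then induct on $n$, using the finite-dimensional structure theorem (sometimes called the Wedderburn/Artin decomposition): any unital \textsuperscript{*}-subalgebra $\cM\subseteq\cB(V)$ induces an essentially unique block decomposition $V=\bigoplus_\alpha A_\alpha\ox B_\alpha$ with $\cM=\bigoplus_\alpha \cB(A_\alpha)\ox\1_{B_\alpha}$ and commutant $\cM'=\bigoplus_\alpha\1_{A_\alpha}\ox\cB(B_\alpha)$.

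First, I reduce the problem to $\HS_0$. Fix bases $\{E^{(k)}_i\}$ of $\cB(\HS_k)$ for each $k\ge 1$. Any $A\in\cA_{0,k}$ can then be written uniquely as $A=\sum_i A^{(i)}\ox E^{(k)}_i$ (with identity on the remaining factors suppressed) for some $A^{(i)}\in\cB(\HS_0)$, which I will call the $\HS_0$-slices of $A$. For $j\neq k$, $A\in\cA_{0,j}$ and $A'\in\cA_{0,k}$, the commutativity $[A,A']=0$ expands as $\sum_{i,m}[A^{(i)},A'^{(m)}]\ox E^{(j)}_i\ox E^{(k)}_m=0$; since the tensors $E^{(j)}_i\ox E^{(k)}_m$ on disjoint subsystems are linearly independent, each slice commutator vanishes. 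Let $\tilde\cB_k\subseteq\cB(\HS_0)$ be the unital \textsuperscript{*}-algebra generated by the slices of $\cA_{0,k}$ together with their adjoints; then $\tilde\cB_1,\dots,\tilde\cB_n$ are pairwise commuting unital \textsuperscript{*}-subalgebras of $\cB(\HS_0)$.

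Now I induct on $n$. For $n=1$, apply the structure theorem to $\tilde\cB_1$, giving $\HS_0=\bigoplus_\alpha H_\alpha\ox K_\alpha$ with $\tilde\cB_1=\bigoplus_\alpha\cB(H_\alpha)\ox\1_{K_\alpha}$; set $\HS^{(\alpha)}_{0.1}:=H_\alpha$ and $\HS^{(\alpha)}_{0.0}:=K_\alpha$. Any $A\in\cA_{0,1}$ has slices in $\tilde\cB_1$, so $A$ preserves each block $H_\alpha\ox K_\alpha\ox\HS_1$ and, restricted to it, acts as an operator on $H_\alpha\ox\HS_1=\HS^{(\alpha)}_{0.1}\ox\HS_1$ tensored with $\1_{K_\alpha}$. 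For the inductive step, again apply the structure theorem to $\tilde\cB_1$ first, obtaining $\HS_0=\bigoplus_\beta H_\beta\ox K_\beta$; because $\tilde\cB_2,\dots,\tilde\cB_n$ lie in $\tilde\cB_1'=\bigoplus_\beta\1_{H_\beta}\ox\cB(K_\beta)$, each of them preserves every block and acts only on the $K_\beta$ factor. Restricting the remaining $n-1$ algebras to a single $K_\beta$, I obtain pairwise commuting \textsuperscript{*}-subalgebras of $\cB(K_\beta)$, so the inductive hypothesis (applied with $K_\beta$ playing the role of the shared space and $\HS_2,\dots,\HS_n$ as the other factors) yields $K_\beta=\bigoplus_\gamma\bigotimes_{i=0}^{n-1}K^{(\gamma)}_{\beta,i}$ with $\cA_{0,k}$ acting on $K^{(\gamma)}_{\beta,k-1}\ox\HS_k$ within each block.

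Combining these two decompositions and relabelling $\alpha=(\beta,\gamma)$ gives $\HS_0=\bigoplus_\alpha\HS^{(\alpha)}_0$ with $\HS^{(\alpha)}_0=H_\beta\ox\bigotimes_{i=0}^{n-1}K^{(\gamma)}_{\beta,i}$. Setting $\HS^{(\alpha)}_{0.1}:=H_\beta$, $\HS^{(\alpha)}_{0.k}:=K^{(\gamma)}_{\beta,k-1}$ for $k\ge 2$, and $\HS^{(\alpha)}_{0.0}:=K^{(\gamma)}_{\beta,0}$ produces the tensor factorisation required by the lemma, and the support properties are inherited from the two steps: $\cA_{0,1}$ acts non-trivially only on $\HS^{(\alpha)}_{0.1}\ox\HS_1$ by construction of the $\beta$-decomposition, and for $k\ge 2$ the support is controlled by the induction hypothesis applied inside $K_\beta$. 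The main obstacle I anticipate is bookkeeping: one has to check carefully that the full operator $A\in\cA_{0,k}$ (and not merely its slices) respects every block, which is precisely what the slicing identity and the fact that the algebras $\tilde\cB_j$ for $j\neq k$ lie in $\tilde\cB_k'$ buy us. A minor technical point worth mentioning is that $\cA_{0,k}$ need not be self-adjoint; this is handled transparently by defining $\tilde\cB_k$ to include adjoint slices, and commutativity of these enlarged algebras across different $k$ follows because if $[A,A']=0$ for all $A\in\cA_{0,j}$, $A'\in\cA_{0,k}$, then the same holds after replacing $A$ or $A'$ by their adjoints (taking adjoints of the identity $[A,A']^\dg=0$ and using that $\cA_{0,j}$ and $\cA_{0,k}$ act on disjoint tensor factors outside $\HS_0$).
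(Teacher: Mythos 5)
Your overall strategy --- passing to the ``slices'' of each $\cA_{0,k}$ on $\HS_0$, showing that these generate pairwise commuting unital $*$-subalgebras of $\cB(\HS_0)$, and then applying the structure theorem for finite-dimensional $*$-algebras inductively --- is exactly the standard Bravyi--Vyalyi argument that the paper is implicitly invoking when it cites Lemma 3.3 of Aharonov--Eldar, and your slicing identity, base case, and inductive step are all correct.

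There is, however, one genuine gap, and it sits precisely at the point you dismiss as a ``minor technical point'': the claim that the enlarged algebras $\tilde\cB_j$ and $\tilde\cB_k$ (which by construction contain the adjoints of the slices) still commute for $j \neq k$. Taking adjoints of $[A,A']=0$ gives $[A^\dg, A'^\dg]=0$, i.e.\ it lets you replace \emph{both} operators by their adjoints, but it does not give $[A^\dg, A']=0$, which is what you need for $\tilde\cB_j$ to commute with $\tilde\cB_k$: commuting with an operator does not imply commuting with its adjoint unless that operator is normal (that is Fuglede's theorem, and it fails for general matrices). This step cannot be patched in the stated generality, because the lemma itself is false for sets that are not closed under adjoints. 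Take $\HS_0 = \C^2$, $N = \ket{0}\bra{1}$, and $\cA_{0,1} = \cA_{0,2} = \{N\}$ (tensored with identities on $\HS_1,\HS_2$); these trivially commute, yet the conclusion would force the restriction of $N$ to each block to be simultaneously of the form $M\ox\1$ with $M$ on the $\HS_{0.1}$ factor and of the form $\1\ox M'$ with $M'$ on the $\HS_{0.2}$ factor, hence a scalar, hence (being nilpotent) zero --- a contradiction. The fix is to add the hypothesis that each $\cA_{0,k}$ is closed under adjoints (equivalently, state the lemma for the generated $*$-algebras, or, as in the original references, for Hermitian operators). This costs nothing in the paper's application, where each set is the image of a full matrix algebra under a $*$-homomorphism and so is automatically $*$-closed; with that hypothesis the adjoint slices are themselves linear combinations of slices of elements of $\cA_{0,k}$, cross-commutativity of the $\tilde\cB_k$ follows directly from your slicing identity, and the rest of your proof goes through.
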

In \cite{Aharonov-Eldar} this lemma is stated only in terms of single operators $H_{0,k}$ rather than sets of operators $\cA_{0,k}$, but the proof from \cite{Aharonov-Eldar} or \cite{Bravyi-Vyalyi} can be easily seen to generalise to this case.

\begin{proof}[{of \cref{local_encoding}}]
  Let $\cA_i = \langle\cE(A_i\ox\1) : A \in \cB(\HS_i)\rangle$ be the algebra generated by the operators $\{\cE(A_i\ox\1)\}$.
  By assumption, $\cA_i$ acts non-trivially only on $\HS_{Q_0\cup Q_i}$.
  Multiplicativity of encodings (\cref{encoding}\labelcref{encoding:mathematical:product}) yields that, for $i \neq j$ and all $A \in \cB(\HS_i)$, $B \in \cB(\HS_j)$,
  \begin{equation}
    [\cE(A_i \ox \1),\cE(B_j \ox \1)] = \cE([A_i \ox \1, B_j \ox \1]) = 0.
  \end{equation}
  Thus the algebras $\cA_i$ fulfil the hypothesis of \cref{Bravyi-Vyalyi} for the Hilbert spaces $\HS = \bigotimes_{i=1}^n \HS_{Q_i}$.
  Applying \cref{Bravyi-Vyalyi}, we obtain a decomposition
  \begin{gather}
    \HS_{Q_0} = \bigoplus_\alpha \left[\bigotimes_{i=0}^n \HS_{0.i}^{(\alpha)}\right]
  \end{gather}
  such that $\cA_i = \bigoplus_\alpha \cA_i^{(\alpha)}$ where $\cA_i^{(\alpha)}$ acts non-trivially only on the factors $\HS_{Q_i}\ox\HS_{0.i}^{(\alpha)}$.
  Let $U_{Q_0}^\dg:\HS_{Q_0}\to\bigoplus_\alpha \bigotimes_{i=0}^n \HS_{0.i}^{(\alpha)}$ be the unitary change of basis corresponding to this decomposition of $\HS_{Q_0}$.

  Now, from the general characterisation of encodings \cref{eq:characterisation}, we know $\cA_i$ has the form
  \begin{equation}
    \label{eq:aiform}
    \cA_i = \Bigl\langle W \left( (A\ox\1)^{\oplus p} \oplus
                   (\bar{A}\ox\1)^{\oplus q} \right) W^\dg \Bigr\rangle
          = \Bigl\langle W (A^{\oplus Dp} \oplus {\bar{A}}^{\oplus Dq}) W^\dg \Bigr\rangle
  \end{equation}
  for some unitary $W$ and $p,q\in\N$.
  ($D$ here is the dimension of the identity operator which acts on all but the $i$'th qudit of the original system.)
  Thus $\cA_i$ is unitarily equivalent to a direct sum of identity and conjugated identity representations of the full matrix algebra on $\HS_i$.
  Note that this decomposes $\cA_i$ into irreducible representations, as the full matrix algebra in any dimension is irreducible.

  Since $\cA_i$ is simultaneously equivalent to $\bigoplus_\alpha \cA_i^{(\alpha)}$, each $\cA_i^{(\alpha)}$ must itself be unitarily equivalent to a direct sum of copies of identity and conjugated identity representations.
  Thus, for arbitrary $A\in\cB(\HS_i)$,
  \begin{equation}
    \label{eq:Aialphadecomp}
    \cE(A_i\ox\1)
    = U_{Q_0}
      \left(
        \bigoplus_\alpha \left[ U_{Q_i^+}^{(\alpha)} \left(A^{\oplus n_i(\alpha)}\oplus \bar{A}^{\oplus m_i(\alpha)}\right) {U_{Q_i^+}^{(\alpha)}}^\dg \ox \1^{(\alpha)}_{\rest} \right]
      \right)
      U_{Q_0}^{\dg}
  \end{equation}
  for some $n_i(\alpha)$, $m_i(\alpha)\in\N$, where $U_{Q_i^+}^{(\alpha)}$ acts on $\HS_{Q_i^+}^{(\alpha)} := \HS_{Q_i}\ox\HS_{0.i}^{(\alpha)}$.

  We will show that for each $\alpha$, either $n_i(\alpha) = 0$ for all~$i$, or $m_i(\alpha) = 0$ for all~$i$.
  Note that $J=\cE(i\1) = \cE\left((i\1_j)\ox\1_k\ox\1_{\rest}\right) = \cE\left(\1_j\ox(i\1_k)\ox\1_{\rest}\right)$ for any qudits $j,k$ of the original system.
  From \cref{eq:Aialphadecomp},
  \begin{multline}
    \left(\bigoplus_\alpha U_{Q_j^+}^{(\alpha)}\ox U_{Q_k^+}^{(\alpha)}\right)^\dg U_{Q_0}^\dg \;
    \cE((i\1_j)\ox\1_k\ox\1_{\rest}) \;
    U_{Q_0} \left(\bigoplus_\alpha U_{Q_j^+}^{(\alpha)}\ox U_{Q_k^+}^{(\alpha)}\right) \\
    = i\bigoplus_{\alpha}\left[
        \left( \1^{\oplus n_j(\alpha)}\oplus(-\1)^{\oplus m_j(\alpha)} \right)
        \ox \1_{Q_k^+}^{(\alpha)}\ox\1_{\rest}^{(\alpha)}
      \right].
  \end{multline}
  Equating this with $U_{Q_0}^\dg \cE(\1_j\ox(i\1_k)\ox\1_{\rest}) U_{Q_0}$ and matching up factors in the direct sum over $\alpha$, we obtain
  \begin{equation}
      \left( \1^{\oplus n_j(\alpha)}\oplus(-\1)^{\oplus m_j(\alpha)} \right)
      \ox \1_{Q_k^+}^{(\alpha)}
    =\1_{Q_j^+}^{(\alpha)}\ox
      \left( \1^{\oplus n_k(\alpha)}\oplus(-\1)^{\oplus m_k(\alpha)} \right),
  \end{equation}
  which is only possible if either $n_j(\alpha) = n_k(\alpha) = 0$ or $m_j(\alpha) = m_k(\alpha) = 0$.
  Since this holds for any pair $j,k$, either $n_i(\alpha) = 0$ for all $i$, or $m_i(\alpha) = 0$ for all $i$, as claimed.
  We write ``$n(\alpha)=0$'', ``$m(\alpha)=0$'' as shorthand for each of these two cases.
  Then
  \begin{multline}\label{eq:Eiplusdecomp}
    U_{Q_0}^\dg \cE(A_i\ox\1) U_{Q_0}
    =\left(\bigoplus_{\alpha : m(\alpha)=0}
        U_{Q_i^+}^{(\alpha)} \bigl(A \ox \1_{E_i^+}^{(\alpha)}\bigr) {U_{Q_i^+}^{(\alpha)}}^\dg \ox \1^{(\alpha)}_{\rest}
     \right) \\
    \oplus
     \left(\bigoplus_{\alpha : n(\alpha)=0}
        U_{Q_i^+}^{(\alpha)} \bigl(\bar{A} \ox \1_{E_i^+}^{(\alpha)}\bigr) {U_{Q_i^+}^{(\alpha)}}^\dg \ox \1^{(\alpha)}_{\rest}
     \right)
  \end{multline}
  where $\dim(\1_{E_i^+}^{(\alpha)}) = n_i(\alpha) + m_i(\alpha)$, so that $\HS_i\ox\HS_{E_i^+}^{(\alpha)} \simeq  \HS_{Q_i^+}^{(\alpha)}$.

  At this point, since $U_{Q_i^+}^{(\alpha)}$ acts on the whole of $\HS_{Q_i^+}^{(\alpha)}$, how we choose to factor $\HS_{Q_i^+}^{(\alpha)}$ to obtain \cref{eq:local_encoding_characterisation} is arbitrary, as long as we choose the factorisation consistently across all $\alpha$.
  Recalling that $\HS_{Q_i^+}^{(\alpha)} \simeq \HS_{Q_i}\ox\HS_{0.i}^{(\alpha)} \simeq \HS_i\ox\HS_{E_i^+}^{(\alpha)}$, one possible choice is to take
  \begin{align}
    \dim\HS_i^{(\operatorname{out})} &= \gcd\Bigl\{\dim\HS_i,\dim\HS_{Q_i}\Bigr\}, \label{eq:dimHiout}\\
    \dim\HS_i^{(\operatorname{in})} &= \dim\HS_i/\dim\HS_i^{(\operatorname{out})}, \label{eq:dimHiin}\\
    \dim E_i &= \dim\HS_{Q_i}/\dim \HS_i^{(\operatorname{out})},\\
    \dim E_{0.i}^{(\alpha)} &= \dim\HS_{0.i}^{(\alpha)}/\dim H_i^{(\operatorname{in})}.
  \end{align}
  (Note that any of these spaces could turn out to be 1-dimensional.)
  This choice manifestly satisfies $\dim\HS_i = \dim(\HS_i^{(\operatorname{out})}\ox\HS_i^{(\operatorname{in})})$, $\dim\HS_{Q_i} = \dim(\HS_i^{(\operatorname{out})}\ox E_i)$, and $ \dim\HS_{0.i}^{(\alpha)}=\dim (\HS_i^{(\operatorname{in})}\ox E_{0.i}^{(\alpha)})$.

  To see that this choice is possible for all $\alpha$, it remains to show that $\dim\HS_{0.i}^{(\alpha)}$ is divisible by $\dim \HS_i^{(\operatorname{in})}$, so that $\dim E_{0.i}^{(\alpha)}$ is well-defined.
  First, note that $\dim\HS_i^{(\operatorname{in})}$ and $\dim\HS_{Q_i}$ are co-prime by \cref{eq:dimHiout,eq:dimHiin}.
  But $\dim\HS_i^{(\operatorname{in})}$ divides $\dim\HS_{Q_i^+}^{(\alpha)} = \dim\HS_{Q_i}\cdot\dim\HS_{0.i}^{(\alpha)}$, so $\dim\HS_i^{(\operatorname{in})}$ must divide $\dim\HS_{0.i}^{(\alpha)}$.

  Therefore, we can consistently factor
  \begin{align}
    \HS_{Q_i} &\simeq \HS_i^{(\operatorname{out})} \ox E_i,\\
    \HS_{0.i}^{(\alpha)} &\simeq \HS_i^{(\operatorname{in})} \ox E_{0.i}^{(\alpha)},\\
    \HS_{Q_i^+}^{(\alpha)}
      &\simeq \HS_i^{(\operatorname{out})} \ox \HS_i^{(\operatorname{in})} \ox E_i \ox E_{0.i}^{(\alpha)}.
  \end{align}
  Recalling that $H_{Q_i^+}^{(\alpha)} \simeq \HS_i\ox\HS_{E_i^+}^{(\alpha)}$, we can identify $H_{E_i^+}^{(\alpha)} \simeq E_i\ox E_{0.i}^{(\alpha)}$, allowing us to rewrite \cref{eq:Eiplusdecomp} in the form
  \begin{multline}
    \label{eq:singlequditchar}
    U_{Q_0}^\dg \cE(A_i\ox\1) U_{Q_0}
    =\left(\bigoplus_{\alpha : m(\alpha)=0}
        U_{Q_i^+}^{(\alpha)} \bigl(A \ox \1_{E_i} \ox \1_{E_{0.i}^{(\alpha)}}\bigr) {U_{Q_i^+}^{(\alpha)}}^\dg \ox \1^{(\alpha)}_{\rest}
     \right) \\
    \mspace{200mu}
    \oplus
     \left(\bigoplus_{\alpha : n(\alpha)=0}
        U_{Q_i^+}^{(\alpha)} \bigl(\bar{A} \ox \1_{E_i} \ox \1_{E_{0.i}^{(\alpha)}}\bigr) {U_{Q_i^+}^{(\alpha)}}^\dg \ox \1^{(\alpha)}_{\rest}
     \right)
  \end{multline}
  where $A$ and $\bar{A}$ act on $\HS_i^{(\operatorname{in})} \ox \HS_i^{(\operatorname{out})}$.

  Finally, note that \cref{eq:singlequditchar} holds for any single qudit operator $A_i\ox\1$ on any qudit~$i$.
  For an arbitrary operator $M$ on $n$ qudits, \cref{local_encoding} follows by expressing $M$ as a real-linear combination of products of single qudit terms, and using additivity, real-linearity and multiplicativity of encodings from \cref{encoding}\labelcref{encoding:mathematical}.
\end{proof}

An alternative statement of the characterisation in \cref{local_encoding} is given by the following corollary:
\begin{corollary}
  Let $\cE:\cB(\bigotimes_{i=1}^n\HS_i) \to \cB(\bigotimes_{i=1}^{n'}\HS'_i)$ be a local encoding with respect to $\{S'_i\}$.
  Denote $Q_0 = \bigcup_{i,j} S'_i\cap S'_j$ and $Q_i = S'_i \setminus Q_0$ (see \cref{fig:local-encodings}).
  Then there exist decompositions $\HS_{Q_0} \simeq {E_0} \ox (\bigotimes_i\HS_i^{(\operatorname{in})})$ and $\HS_{Q_i} \simeq {E_i}\ox\HS_i^{(\operatorname{out})}$, together with identifications $\HS_i \simeq \HS_i^{(\operatorname{in})}\ox\HS_i^{(\operatorname{out})}$, such that the encoding takes the form:
  \begin{multline}\label{eq:local_encoding_characterisation2}
    \cE(M)
    =U_{Q_0}
      \left(
   \prod_i U_{(i)}
   \right)
      \biggl(
        M \ox \1 \ox P_{E_0}
        + \bar{M}\ox \1 \ox P_{E_0}^\perp
      \biggr)
      \left(
         \prod_i U_{(i)}
     \right)^\dg
      U_{Q_0}^\dg
  \end{multline}
  where each unitary $U_{(i)}$ acts non-trivially only on $\cH_i \ox E_0 \ox E_i$, and the following commutators vanish for all $i,j$:
  \begin{equation}
    [U_{(i)},U_{(j)}] = 0 \quad \text{and} \quad [P_{E_0},U_{(i)}] = 0.
  \end{equation}
\end{corollary}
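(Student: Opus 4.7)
The plan is to derive this corollary directly from \cref{local_encoding} by repackaging its per-$\alpha$ direct sum decomposition into an equivalent single-block form, using a projector on $E_0$ together with $n$ mutually commuting local unitaries.

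First, I would define $P_{E_0}$ on $E_0 = \bigoplus_\alpha \bigl(\bigotimes_{i=0}^n E_{0.i}^{(\alpha)}\bigr)$ to be the orthogonal projector onto the direct sum of those summands for which $M^{(\alpha)} = M$; automatically $P_{E_0}^\perp$ is the projector onto the summands where $M^{(\alpha)} = \bar M$. Next, I would glue the per-$\alpha$ unitaries $U_{(i)}^{(\alpha)}$ from \cref{local_encoding} into a single unitary on $\cH_i \ox E_i \ox E_0$ by setting
\begin{equation}
  U_{(i)} := \bigoplus_\alpha U_{(i)}^{(\alpha)},
\end{equation}
where within each $\alpha$-block $U_{(i)}^{(\alpha)}$ is understood to act on $\cH_i \ox E_i \ox E_{0.i}^{(\alpha)}$ and as the identity on the remaining factors $\bigotimes_{j \neq i} E_{0.j}^{(\alpha)}$. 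By construction each $U_{(i)}$ acts non-trivially only on $\cH_i \ox E_i \ox E_0$, as required.

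The two commutation relations then follow by inspection. Each $U_{(i)}$ is block-diagonal with respect to the $\alpha$-decomposition of $E_0$, and $P_{E_0}$ is a sum of projectors onto whole $\alpha$-blocks, so $[P_{E_0}, U_{(i)}] = 0$. For $i \neq j$, within any single $\alpha$-block the operators $U_{(i)}^{(\alpha)}$ and $U_{(j)}^{(\alpha)}$ act non-trivially on the disjoint tensor factors $\cH_i \ox E_i \ox E_{0.i}^{(\alpha)}$ and $\cH_j \ox E_j \ox E_{0.j}^{(\alpha)}$, hence they commute block-by-block, giving $[U_{(i)}, U_{(j)}] = 0$ globally. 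In particular $\prod_i U_{(i)}$ in \cref{eq:local_encoding_characterisation2} is well-defined independently of the order.

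Finally, I would substitute these definitions into \cref{eq:local_encoding_characterisation} and split the direct sum over $\alpha$ according to whether $M^{(\alpha)} = M$ or $M^{(\alpha)} = \bar M$. After conjugation by $\prod_i U_{(i)}$ and by $U_{Q_0}$, the first family contributes $M \ox \1 \ox P_{E_0}$ and the second contributes $\bar M \ox \1 \ox P_{E_0}^\perp$, which combine to give exactly \cref{eq:local_encoding_characterisation2}. The only real obstacle is bookkeeping: one must verify that the identifications $\cH_i \simeq \cH_i^{(\operatorname{in})} \ox \cH_i^{(\operatorname{out})}$ and $\cH_{Q_i} \simeq E_i \ox \cH_i^{(\operatorname{out})}$ inherited from \cref{local_encoding} are chosen uniformly across all $\alpha$, so that the glued $U_{(i)}$ really is a well-defined unitary on the full space $\cH_i \ox E_i \ox E_0$. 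This uniformity is precisely what the consistent dimension choices at the end of the proof of \cref{local_encoding} arrange.
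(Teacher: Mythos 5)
Your proof is correct and takes essentially the same approach as the paper: defining $P_{E_0}$ as the projector onto the $\alpha$-blocks with $M^{(\alpha)} = M$ and $U_{(i)} = \bigoplus_\alpha \bigl[ U_{Q_i^+}^{(\alpha)} \ox \1_{\rest}^{(\alpha)} \bigr]$, then substituting into \cref{eq:local_encoding_characterisation}. The paper states this as immediate; you simply spell out the commutativity verifications, which is a helpful elaboration but not a different route.
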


\begin{proof}
  This is immediate from \cref{local_encoding} and the following definitions of $P_{E_0}$ and $U_{(i)}$:
  \begin{equation}
    P_{E_0} = \bigoplus_{\alpha : m(\alpha)=0} \1_{E_0}^{(\alpha)},
    \quad \text{and} \quad
    U_{(i)} = \bigoplus_{\alpha}\left[ U_{Q_i^+}^{(\alpha)}\ox \1^{(\alpha)}_{\rest}\right],
  \end{equation}
  where $m(\alpha)$ is defined as in the proof of \cref{local_encoding}.
\end{proof}

\Cref{local_encoding} characterises what encodings must look like if they are to map local Hamiltonians to local Hamiltonians, and more generally local observables on the original system to local observables on the simulator.
We have seen that, because encodings preserve commutators, observables on different qudits of the original system are necessarily mapped to commuting observables on the simulator system, so remain simultaneously measurable.

However, if the subsets $S'_i$ overlap, these observables on the simulator will in general no longer be on disjoint subsets of qudits; tensor products of operators on the original system are not necessarily mapped to tensor products on the simulator.
If we impose the additional requirement that tensor products are mapped to tensor products, which is equivalent to requiring that all the subsets $S'_i$ are disjoint, then there is no $Q_0$ subsytem and the characterisation from \cref{local_encoding} simplifies substantially:

\begin{corollary}[Product-preserving encodings]\hfill\\
  \label{product-encodings}
  Let $\cE:\cB(\bigotimes_{i=1}^n\HS_i) \to \cB(\bigotimes_{i=1}^{n'}\HS'_i)$ be a local encoding with respect to $\{S'_i\}$, where $S'_i$ are disjoint subsets.
  Then the encoding must take one of the following forms, where $S'_i = \{i\}\cup E_i$:
  \begin{align}
    \cE(M) &= \Bigl(\bigotimes_i U_{i,E_i}\Bigr)
              \Bigl(M_{1,\dots,n} \ox \1_{E_1,E_2,...E_n} \Bigr)
              \Bigl(\bigotimes_i U_{i,E_i}^\dg\Bigr)\\
    \intertext{or}
    \cE(M) &= \Bigl(\bigotimes_i U_{i,E_i}\Bigr)
              \Bigl(\bar{M}_{1,\dots,n} \ox \1_{E_1,E_2,...E_n} \Bigr)
              \Bigl(\bigotimes_i U_{i,E_i}^\dg\Bigr).
  \end{align}
\end{corollary}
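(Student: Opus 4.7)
The plan is to specialise \cref{local_encoding} to the disjoint case. First I would observe that if the $S'_i$ are pairwise disjoint then $Q_0 = \bigcup_{i,j} S'_i \cap S'_j = \emptyset$, so $\HS_{Q_0}$ is the trivial one-dimensional space. Consequently $U_{Q_0}$ becomes a scalar (which can be absorbed), and the direct sum $\bigoplus_{\alpha}$ in \cref{eq:local_encoding_characterisation} collapses to a single term, since $\HS_{Q_0} \simeq \C$ admits no nontrivial direct-sum decomposition. Likewise each $\HS_i^{(\operatorname{in})}$ and each $E_{0.i}^{(\alpha)}$ becomes one-dimensional, leaving $\HS_i^{(\operatorname{out})} \simeq \HS_i$ and $\HS_{Q_i} \simeq \HS_i \ox E_i$.

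Next I would invoke the dichotomy established inside the proof of \cref{local_encoding}: for each $\alpha$, either $n_i(\alpha)=0$ for all $i$, or $m_i(\alpha)=0$ for all $i$. With only one $\alpha$ present this becomes a single global choice between the two forms, so the encoding reduces, on single-qudit operators $A_i\ox\1$, to
\begin{equation}
  \cE(A_i\ox\1) = U_{(i)}\bigl(A\ox\1_{E_i}\bigr)U_{(i)}^\dg \quad \text{or} \quad U_{(i)}\bigl(\bar A\ox\1_{E_i}\bigr)U_{(i)}^\dg,
\end{equation}
uniformly in $i$, where $U_{(i)}=:U_{i,E_i}$ acts on $\HS_i \ox E_i = \HS_{S'_i}$.

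Because the subsets $S'_i$ are pairwise disjoint, the unitaries $U_{i,E_i}$ act on mutually disjoint systems and so commute; their product is thus just the tensor product $\bigotimes_i U_{i,E_i}$. Using additivity, real-linearity and multiplicativity of encodings from \cref{encoding}\labelcref{encoding:mathematical} to extend from single-qudit operators $A_i\ox\1$ to arbitrary $M$ on all $n$ qudits, I would conclude that
\begin{equation}
  \cE(M) = \Bigl(\bigotimes_i U_{i,E_i}\Bigr)\bigl(M\ox\1_{E_1,\ldots,E_n}\bigr)\Bigl(\bigotimes_i U_{i,E_i}^\dg\Bigr),
\end{equation}
or the same expression with $\bar M$ in place of $M$, yielding the two stated forms.

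The main obstacle was really \cref{local_encoding} itself; given that theorem this corollary is essentially a bookkeeping specialisation. The only non-trivial steps are verifying that $\bigoplus_\alpha$ genuinely collapses when $\HS_{Q_0}$ is trivial, and that the conjugation choice is forced to be uniform across all qudits rather than independently chosen per qudit. Both follow immediately from arguments already used in the proof of \cref{local_encoding}, in particular the cross-qudit commutator argument involving $\cE(i\1)$ that originally forced either $n_i(\alpha)=0$ for all $i$ or $m_i(\alpha)=0$ for all $i$.
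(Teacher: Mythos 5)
Your proof is correct and follows exactly the route the paper indicates (the paper merely asserts that the characterisation of \cref{local_encoding} "simplifies substantially" when the $S'_i$ are disjoint; you have supplied the details of that simplification). The key observations — that $Q_0=\emptyset$ forces $\HS_{Q_0}\simeq\C$ and hence collapses the direct sum $\bigoplus_\alpha$ and trivialises $\HS_i^{(\operatorname{in})}$, $E_{0.i}^{(\alpha)}$ and $U_{Q_0}$, and that the already-established per-$\alpha$ dichotomy ($n_i(\alpha)=0$ for all $i$ or $m_i(\alpha)=0$ for all $i$) then degenerates to a single global choice of conjugation — are all sound and match what the paper's proof of \cref{local_encoding} provides. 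The extension from single-qudit operators to arbitrary $M$ via the Jordan-homomorphism properties in \cref{encoding}\labelcref{encoding:mathematical} is the same closing step the theorem's proof uses.
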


Thus for tensor products to be mapped to tensor products under encoding, the encoding must be rather trivial.
Up to local unitaries, it either consists solely of copies of $H$, \emph{or} solely of copies of $\bar{M}$; it cannot contain both $M$ and $\bar{M}$.
This rules out for example the complex-to-real encoding of \cref{complex-to-real enc}.

\Cref{product-encodings} applies to product-preserving encodings that map to the entire Hilbert space of the simulator system.
We will see shortly that things are more interesting if the local encoding maps into a subspace of the simulator's Hilbert space; non-trivial tensor-product-preserving encodings into a subspace \emph{are} possible.


\subsection{Encodings in a subspace}
\label{sec:subspaceenc}

It may be the case that an encoding $\cE(H)$ acts only within a subspace $S$ of the simulator system $\cH'$.
That is, we say a map $\cE :\cB(\cH)\to \cB(\cH')$ is an \emph{encoding into the subspace $S$} if $\cE(H)$ has support only on $S$ and the map $H\mapsto \cE(H)|_S$ is an encoding.
Later we may refer to a map of this form simply as a \emph{subspace encoding} or even just an encoding when the subspace is implicit.
We call the subspace $S_{\cE}$ onto which $\cE$ maps the \emph{encoded subspace}.

All the conclusions of the above section still hold, but now the target space $S_{\cE}$ is embedded in a larger space $\cH'$, so the unitary $U$ is replaced with an isometry $V$.
Any subspace encoding may therefore be written in the form
\begin{equation}
  \mathcal{E}(M) = V \left(M^{\oplus p} \oplus \bar{M}^{\oplus q}\right) V^\dg
  = V\left(M\ox P + \bar{M} \ox Q\right)V^\dg.
\end{equation}
We remark that $P$ and $Q$ may be chosen to be any orthogonal projectors on the ancilla system $E$ with rank$(P)=p$ and rank$(Q)=q$, provided that the isometry $V$ is changed accordingly.
Indeed, even the dimension of the ancilla system $E$ may be increased such that $P$ and $Q$ do not sum to the identity, as long as the map $V|_{\text{supp}(P+Q)}$ is an isometry onto the subspace $S_{\cE}$.
This will be useful in the simple characterisation of local subspace encodings given in the next section.
Note that $\cE(\1)$ is the projector onto the subspace $S_{\cE}$.


\subsection{Local encodings in a subspace}

We can now consider encodings into a subspace that are \emph{local}.
Since all the encodings we construct later will not only be local, but in fact will also satisfy the stronger condition of mapping tensor products of operators to tensor products on the simulator, we will restrict our attention here to tensor-product-preserving encodings into a subspace.
We therefore want to be able to decompose the simulator system $\cH'$ into $n$ subsystems $\cH'= \bigotimes_{i=1}^n \cH'_i$ such that $\cH'_i$ corresponds to $\cH_i$ operationally.
The encoding of a local observable should then be equivalent to a local observable, in terms of its action on the subspace $S_{\cE}$ into which the encoding maps:

\begin{definition}\label{def:localsubenc}
  Let $\cE:\cB\left(\bigotimes_{i=1}^n \cH_j\right) \to \cB\left(\bigotimes_{j=1}^n \cH'_j\right)$ be a subspace encoding.
  We say that the encoding is \emph{local} if for any $A_j\in \Herm(\cH_j)$, there exists $A_j'\in \Herm(\cH'_j)$ such that
  \begin{equation}
\label{eq:localsubencdef}
    \cE(A_j\ox \1)=(A_j'\ox \1)\cE(\1).
  \end{equation}
\end{definition}

Note that for a simulation of $n$ particles with $m$ particles, this does not mean we require $m=n$, but rather that the $m$ particles can be partitioned into $n$ groups, each of which is labelled by $\cH_j'$. First we show that local observables on the original system correspond to local observables on the simulator system:

\begin{proposition}\label{localmeas}
  Let $\cE$ be a local encoding into the subspace $S_{\cE}$.
  Let $\rho'$ be a state in the encoded subspace such that $\cE(\1)\rho'=\rho'$.
  Let $A_j$ be an observable on qudit $j$ of the original system.
  Then there exists an observable $A'_j$ on $\cH'_j$ such that
  \begin{equation}
    \tr[(A_j\ox \1)\rho]=\tr[(A'_j \ox \1)\rho']
  \end{equation}
  where $\rho=F(\rho')+B(\rho')$, for $F$ and $B$ defined as

  \begin{equation}
    F(\rho')=\tr_E[V^{\dag}\rho'V(\1 \ox P)] \text{ and } B(\rho')=\tr_E[V^{\dag}\rho'V(\1 \ox Q)]
  \end{equation}
\end{proposition}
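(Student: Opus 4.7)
The plan is to start on the simulator side and unpack until the original system is recovered. By \cref{def:localsubenc}, locality of $\cE$ supplies an observable $A_j'\in\Herm(\cH'_j)$ satisfying $\cE(A_j\ox\1)=(A_j'\ox\1)\cE(\1)$; this is the candidate observable on the simulator. Using the hypothesis $\cE(\1)\rho'=\rho'$ together with cyclicity of the trace, the first step is simply
\begin{equation}
\tr\bigl[(A_j'\ox\1)\rho'\bigr]
=\tr\bigl[(A_j'\ox\1)\cE(\1)\rho'\bigr]
=\tr\bigl[\cE(A_j\ox\1)\,\rho'\bigr],
\end{equation}
which reduces the claim to evaluating $\tr[\cE(A_j\ox\1)\rho']$ and matching it with $\tr[(A_j\ox\1)\rho]$.

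Next I would substitute the explicit form $\cE(M)=V(M\ox P+\bar M\ox Q)V^\dg$ from \cref{encoding}, split the resulting trace additively into a $P$-part and a $Q$-part, and move $V^\dg(\cdot)V$ onto $\rho'$ via cyclicity. Applying the standard partial-trace identity $\tr[(X\ox Y)Z]=\tr[X\cdot\tr_E[(\1\ox Y)Z]]$ to each piece, the $P$-part collapses immediately to $\tr[(A_j\ox\1)F(\rho')]$; this is essentially the local specialisation of the non-local computation already carried out in \cref{eqn:Fmeas}. The $Q$-part first appears as $\tr[(\bar A_j\ox\1)\tr_E[V^\dg\rho'V(\1\ox Q)]]$. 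To convert this into the desired $\tr[(A_j\ox\1)B(\rho')]$, I would exploit the Hermiticity of $A_j$ (so that $\bar A_j=A_j^T$) together with the Hermiticity of the inner partial trace, and then transpose inside the outer trace to absorb the complex conjugation into $B(\rho')$.

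Summing the two contributions then yields $\tr[(A_j\ox\1)(F(\rho')+B(\rho'))]=\tr[(A_j\ox\1)\rho]$, as required. I expect the only step of any subtlety to be the careful bookkeeping of the complex conjugate in the $Q$-block; this mirrors exactly the manipulation already performed in the derivation of \cref{eqn:Fmeas}, and is the step at which the precise convention for $B(\rho')$ matters. Everything else follows from a direct application of the definition of a local subspace encoding and linearity of the partial trace; note that locality of $\cE$ is invoked only in the opening step to produce $A_j'$, after which the structural characterisation of \cref{encoding} does all the remaining work.
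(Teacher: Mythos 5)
Your proof is correct and follows essentially the same route as the paper, which dispatches the proposition as ``an immediate consequence of \cref{def:localsubenc} and \cref{eqn:Fmeas}'': locality supplies $A_j'$ with $\cE(A_j\ox\1)=(A_j'\ox\1)\cE(\1)$, the hypothesis $\cE(\1)\rho'=\rho'$ removes the $\cE(\1)$, and the remaining trace computation is exactly \cref{eqn:Fmeas} specialised to $A = A_j\ox\1$. One bookkeeping point worth keeping explicit: your transpose step requires the \emph{conjugated} convention $B(\rho')=\overline{\tr_E[V^\dg\rho'V(\1\ox Q)]}$ from \cref{eq:fandb} rather than the unconjugated expression printed in the proposition's statement, since with $\widetilde{B} := \tr_E[V^\dg\rho'V(\1\ox Q)]$ and Hermitian $A_j$ one has $\tr[(\bar{A}_j\ox\1)\widetilde{B}]=\tr[(A_j\ox\1)\overline{\widetilde{B}}]$, and $\overline{\widetilde{B}}$ need not equal $\widetilde{B}$; the missing overline is a typo in the statement that your argument silently (and correctly) repairs.
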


\begin{proof}
  This is an immediate consequence of \cref{def:localsubenc} and \cref{eqn:Fmeas}.
\end{proof}

It turns out that \cref{def:localsubenc} is equivalent to saying that $\cE$ is a tensor product of encodings acting on the the encoded space $S_{\cE}$:

\begin{lemma}\label{lem:tensorenc}
  An encoding $\cE$ is local if and only if it can be written as a ``tensor product'' of encodings $\varphi_j:\Herm(\cH_j)\to \Herm(\cH'_j)$ in the following way:
  \begin{equation}\label{eqn:tensorenc}
    \cE\left(\bigotimes_{j=1}^n A_j\right)=\left[\bigotimes_{j=1}^n \varphi_j(A_j)\right]\cE(\1)
  \end{equation}
\end{lemma}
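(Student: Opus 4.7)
The plan is to prove both directions by exploiting multiplicativity of encodings (from \cref{encoding}\labelcref{encoding:mathematical:product}) together with the fact that $\varphi_j(\1)$ is a projector for any subspace encoding $\varphi_j$.

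For the backward direction ($\Leftarrow$), I would specialize \cref{eqn:tensorenc} by taking $A_k=\1$ for all $k\neq j$, which yields $\cE(\1\otimes A_j\otimes \1) = [\varphi_j(A_j)\otimes \bigotimes_{k\neq j} P_k]\cE(\1)$, where $P_k := \varphi_k(\1)$ is a projector because each $\varphi_k$ is an encoding. Specializing further to $A_j = \1$ gives $\cE(\1) = [\bigotimes_k P_k]\cE(\1)$, so the support of $\cE(\1)$ lies within the range of $\bigotimes_k P_k$; this forces $[\bigotimes_{k\neq j} P_k]\cE(\1) = \cE(\1)$, since any state in the range of the full product projector automatically satisfies the partial product projector. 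Hence $\cE(A_j\otimes\1) = (\varphi_j(A_j)\otimes\1)\cE(\1)$, which is exactly \cref{def:localsubenc} with $A'_j := \varphi_j(A_j)$.

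For the forward direction ($\Rightarrow$), I define $\varphi_j(A_j) := A'_j$ using the operator supplied by \cref{def:localsubenc}. The core step is establishing the tensor-product formula via multiplicativity. The operators $\{A_j\otimes\1_{\neq j}\}_j$ mutually commute on the original system, and their product equals $\bigotimes_j A_j$, so \cref{encoding}\labelcref{encoding:mathematical:product} yields $\cE(\bigotimes_j A_j) = \prod_j \cE(A_j\otimes\1) = \prod_j (A'_j\otimes\1)\cE(\1)$. The identity $\cE(\1)(A'_j\otimes\1)\cE(\1) = (A'_j\otimes\1)\cE(\1)$, which follows from $\cE(A_j\otimes\1) = \cE(A_j\otimes\1)\cdot\cE(\1) = \cE(\1)\cdot\cE(A_j\otimes\1)$ (since $\cE(\1)$ is the projector onto $S_\cE$), allows me to consolidate adjacent factors:
\begin{equation}
(A'_j\otimes\1)\cE(\1)(\1\otimes A'_k\otimes\1)\cE(\1) = (A'_j\otimes A'_k\otimes\1)\cE(\1).
\end{equation}
Induction on the number of sites collapses the product to $[\bigotimes_j A'_j]\cE(\1)$, giving \cref{eqn:tensorenc}.

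It remains to verify that each $\varphi_j$ is itself a (subspace) encoding. There is freedom in the $A'_j$ from \cref{def:localsubenc}: its action on the orthogonal complement of the relevant single-site subspace of $\cH'_j$ is unconstrained, so I would first fix a canonical representative, e.g. the unique $A'_j$ supported on the projector $P_j$ naturally associated to site $j$ in the decomposition of $\cE(\1)$ (using that $\cE(\1) = \bigotimes_j P_j$, which now follows from the tensor-product formula applied to the case all $A_j = \1$ together with $\cE(\1)^2 = \cE(\1)$). With this choice, \cref{Hermitian-preserving} and \cref{local_interactions} for $\varphi_j$ inherit directly from the corresponding properties of $\cE(A_j\otimes\1)$, while \cref{spectrum-preserving} on $P_j$ follows from spectrum-preservation of $\cE(A_j\otimes\1)$ on $S_\cE$ combined with the tensor factorisation just established. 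The main obstacle I anticipate is making the ``canonical choice'' of $A'_j$ fully precise and checking that $\varphi_j$ so defined is spectrum-preserving, rather than picking up spurious eigenvalues from the action outside $P_j$; this is where isolating the single-site projector $P_j$ cleanly from the global projector $\cE(\1)$ matters, and closes the loop with the tensor formula itself.
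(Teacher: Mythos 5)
The core gap is in your forward direction: you assert that the tensor-product formula with all $A_j=\1$, combined with $\cE(\1)^2=\cE(\1)$, gives $\cE(\1)=\bigotimes_j P_j$. It does not. What you actually get is $\cE(\1)=\bigl[\bigotimes_j P_j\bigr]\cE(\1)$, which together with Hermiticity gives only $\cE(\1)=\bigl[\bigotimes_j P_j\bigr]\cE(\1)\bigl[\bigotimes_j P_j\bigr]$, i.e.\ $\cE(\1)\le\bigotimes_j P_j$ as projectors; the range of $\cE(\1)$ is \emph{contained in} but generally strictly smaller than the range of the product. A concrete counterexample is exactly the local complex-to-real encoding of \cref{complex-to-real sim}: there $\cE(\1)$ restricted to the $n$ ancilla qubits is $\ket{+_y}\!\bra{+_y}^{\ox n}+\ket{-_y}\!\bra{-_y}^{\ox n}$, a rank-2 projector that is not a tensor product of single-qubit projectors for $n\ge 2$. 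So there is no ``projector $P_j$ naturally associated to site $j$ in the decomposition of $\cE(\1)$'' in general, and your spectrum-preservation argument for $\varphi_j$, which leans on that factorisation, collapses.

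The paper avoids this by defining $\Pi_j$ differently: it is the projector onto $T_j^\perp$ where $T_j$ is the subspace of $\cH'_j$ \emph{annihilated} by $\cE(\1)$ (so $(\Pi_j\ox\1)\cE(\1)=\cE(\1)$ but $\cE(\1)\ne\bigotimes_j\Pi_j$), and then sets $\varphi_j(A_j)=\Pi_j A'_j\Pi_j$. Spectrum preservation of $\varphi_j$ then needs a genuinely different argument: since $\varphi_j(A_j)\ox\1$ commutes with $\cE(\1)$ it is block-diagonal with respect to $S_\cE$ and its complement, and since $\varphi_j(A_j)$ has no support on $T_j$, no eigenspace of $\varphi_j(A_j)\ox\1$ is completely killed by $\cE(\1)$; this is what lets one identify $\spec(\varphi_j(A_j)|_{T_j^\perp})$ with $\spec(\cE(A_j\ox\1)|_{S_\cE})=\spec(A_j)$ even though $\cE(\1)$ is a strict sub-projector of $\bigotimes_j\Pi_j$. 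A similar ``support on $T_j^\perp$ plus commutation with $\cE(\1)$'' argument is needed for real-linearity. Your backward direction and your derivation of the tensor-product formula from multiplicativity and commutation with $\cE(\1)$ are both fine and essentially match the paper; the flaw is confined to the claimed factorisation of $\cE(\1)$, which unfortunately is exactly the step you flagged as the crux.
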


\begin{proof}
  If there exist encodings $\varphi_j$ such that \cref{eqn:tensorenc} holds, then $\cE$ is local as for any $A_j \in \cB(\cH_j)$ one can take $A_j'=\varphi_j(A_j) \in \cB(\cH'_j)$, and
  \begin{align}
    \cE(A_j \ox \1)
    &=\left[\varphi_j(A_j) \ox \left(\bigotimes_{k\neq j} \varphi_k(\1)\right)\right]\cE(\1)\\
    &=\left[\varphi_j(A_j) \varphi_j(\1) \ox \left(\bigotimes_{k\neq j} \varphi_k(\1)\right)\right]\cE(\1)\\
    &=\left[(\varphi_j(A_j) \ox \1) \left(\bigotimes_{k=1}^n \varphi_k(\1)\right)\right]\cE(\1)\\
    &=(A_j' \ox \1)\cE(\1).
  \end{align}
  For the converse, we will first show that the map $A_j \mapsto A_j'$ can be taken to be a subspace encoding.
  Since $A_j' \in \Herm(\cH'_j) $ is Hermitian, we have
  \begin{equation}
    (A_j'\ox \1)\cE(\1)=\cE(A_j\ox \1)=\cE(A_j\ox \1)^\dag=\cE(\1)(A_j'\ox \1),
  \end{equation}
  so $A_j'\ox \1$ commutes with $\cE(\1)$.

  For a given $j$, consider the subspace $T_j$ of $\cH'_j$ which is entirely annihilated by $\cE(\1)$, defined by $T_j=\{\ket{\psi}\in \cH_j\: : \:(\proj{\psi} \ox \1)\cE(\1)=0\}$.
  We will choose to take $\varphi_j(A_j)=\Pi_j A_j' \Pi_j$ where $\Pi_j$ is the projector onto $T_j^{\perp}$.
  We will show that $\varphi_j$ is a subspace encoding, by showing the requirements of \cref{encoding}\ref{encoding:operational_subset} hold in the subspace $T_j^{\perp}$: Hermiticity preservation, spectrum preservation and real-linearity.
  First note that $\varphi_j(A_j)$ is Hermitian and has support only on $T_j^{\perp}$.

  The projector $(\1-\Pi_j)\ox \1$ annihilates $\cE(\1)$ by definition of $T_j$, so $(\Pi_j\ox\1)\cE(\1)=\cE(\1)$.
  Therefore
  \begin{equation}
    [\varphi_j(A_j)\ox \1]\cE(\1)=[\Pi_j A_j' \Pi_j\ox \1]\cE(\1)=\cE(A_j \ox \1),
  \end{equation}
  where we have used the fact that $\cE(\1)$ commutes with $A_j' \ox \1$.
  Thus $\varphi_j(A_j)$ can be used as a replacement for $A_j'$ in \cref{eq:localsubencdef} which has support only on $T_j^\perp$.

  We know that $\varphi_j(A_j)\ox\1$ commutes with $\cE(\1)$ and is therefore block diagonal with respect to the $\cE(\1), \1-\cE(\1)$ split.
  Furthermore since $\varphi_j(A_j)$ has no support on $T_j$, no eigenvalues of $\varphi_j(A_j) \ox \1$ are completely annihilated when multiplied by $\cE(\1)$.
  Therefore
  \begin{equation}
    \spec(\varphi_j(A_j)|_{T_j^{\perp}})=\spec(\cE(A_j \ox\1)|_{S_{\cE}})=\spec(A_j).
  \end{equation}

  Next we show that $\varphi_j$ is real-linear, using the real-linearity of $\cE$.
  For any $\lambda,\mu \in \R$, and $A_j,B_j \in \Herm(\cH)$,
 \begin{align}
[\varphi_j(\lambda A_j+\mu B_j) \ox \1]\cE(\1)  &=\cE((\lambda A_j+\mu B_j)\ox\1)\\
&=\lambda\cE( A_j\ox \1)+\mu\cE( B_j\ox \1)\\
 &= [(\lambda\varphi_j(A_j)+\mu\varphi_j( B_j))\ox \1 ]\cE(\1)\\
 \Leftrightarrow [(\lambda\varphi_j( A_j)+\mu\varphi_j( B_j) &-\varphi_j(\lambda A_j+\mu B_j))\ox \1 ]\cE(\1) =0.
\end{align}
For real-linearity of $\varphi_j$ we need to show that $M=\lambda\varphi_j( A_j)+\mu\varphi_j( B_j) -\varphi_j(\lambda A_j+\mu B_j)$ vanishes. This follows because $M\ox \1$ commutes with and is annihilated by $\cE(\1)$, but $M$ has no support on $T_j$.
  Therefore $\varphi_j$ is an encoding into the subspace $T_j^{\bot}$.

  It remains to show that $\cE$ can be written in the form of \cref{eqn:tensorenc}.
  This follows from the fact that $\cE$ and $\varphi_j$ are Jordan homomorphisms, and $(A_j\ox \1)(\1\ox B_k)=(\1\ox B_k)(A_j\ox \1)$.
  So for example for a bipartite system with two subsystems labelled $a$ and $b$:
  \begin{align}
    \cE(A_a\ox B_b)
    &=\cE(A_a\ox \1)\cE(\1\ox B_b)\\
    &=\left[\varphi_a(A_a)\ox \1\right]\cE(\1)\left[\1\ox \varphi_b(B_b)\right]\cE(\1)\\
    &=\left[\varphi_a(A_a)\ox \varphi_b(B_b)\right]\cE(\1).
  \end{align}
\end{proof}

We remark that if $\cE$ and $\varphi_j$ are extended to homomorphisms on all matrices as described in \cref{local_encoding}, then \cref{eqn:tensorenc} holds for all matrices, not just Hermitian ones. This is because the enveloping algebra for the Hermitian matrices includes all matrices, so any matrix can be written as a product of Hermitian matrices.

This extension to all matrices may seem problematic: for example, when calculating $\cE(i\1)$ one could put the factor of $i$ on any one of the subsystems $\cH_j$ before appplying \cref{eqn:tensorenc}.
This just implies that the encodings $\varphi_j$ must satisfy some extra constraints, in order for the overall map to be an encoding.

In fact, we are able to use this condition to derive the following general form of a local encoding (see \cref{fig:localenc}):

\begin{figure}
\begin{center}
\begin{tikzpicture}[scale=0.8]
\tikzstyle{targ}=[fill=yellow!50]
\tikzstyle{sim}=[fill=green!50]
\tikzstyle{anc}=[fill=blue!50]
\def\r{0.5}
\foreach \y in {1,2}
{\def\h{-2.2*\y*\r}
\draw (-0.1,\h) circle (\r cm);
\node  at (-0.1,\h) {$\mathcal{H}_{\y}$};
\draw (3*\r+1.75,\h) ellipse (2*\r cm and \r cm);
\node  at (3*\r+1.75,\h) {$\mathcal{H}'_{\y}$};
\draw (1.5*\r,\h) circle (0.6*\r cm);
\node at (1.5*\r,\h) {\footnotesize $E_{\y}$};
\draw [->] (\r +0.75,\h) -- (\r +1.5,\h);
\node at (\r +1,\h+0.3) {$V_{\y}$};}
\node at (0,-6*\r) {$\vdots$};
\node at (3*\r+1.75,-6*\r) {$\vdots$};
\def\h{-8*\r};
\draw (-0.1,\h) circle (\r cm);
\node  at (-0.1,\h) {$\mathcal{H}_{n}$};
\draw (3*\r+1.75,\h) ellipse (2*\r cm and \r cm);
\node  at (3*\r+1.75,\h) {$\mathcal{H}'_{n}$};
\draw (1.5*\r,\h) circle (0.6*\r cm);
\node at (1.5*\r,\h) {\footnotesize $E_{n}$};
\draw [->] (\r +0.75,\h) -- (\r +1.5,\h);
\node at (\r +1,\h+0.3) {$V_{n}$};
\end{tikzpicture}
\end{center}
\caption[Any local encoding within a subspace can be represented as a tensor product of isometries.]{%
  Any local encoding within a subspace can be represented as a tensor product of isometries, as illustrated here.}
\label{fig:localenc}
\end{figure}
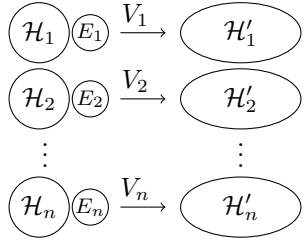

\begin{theorem}
\label{thm:localenc}
A map $\cE:\cB(\bigotimes_{j=1}^n \cH_j) \to \cB(\bigotimes_{j=1}^n \cH'_j)$ is a local encoding if and only if there exist $n$ ancilla systems $E_j$ such that $\cE$ is of the form \begin{equation}
  \cE(M)=V(M\ox P+\bar{M} \ox Q)V^\dag
\end{equation}
 where \begin{itemize}
 \item $V$ is a local isometry: $V=\bigotimes_j V_{j}$ for isometries $V_j:\cH_j\ox E_j \to \cH'_j$.
 \item $P$ and $Q$ are orthogonal projectors on $E=\bigotimes_j E_j$, and are locally distinguishable: for all $j$, there exist orthogonal projectors $P_{E_j}$ and $Q_{E_j}$ acting on $E_j$ such that $(P_{E_j}\ox \1)P=P$ and $(Q_{E_j}\ox \1)Q=Q$.
\end{itemize}
\end{theorem}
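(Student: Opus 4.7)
The ``if'' direction is a direct check. Given the stated form, $\cE$ is a subspace encoding by \cref{encoding:characterisation}. For locality, I would compute $\cE(A_j\ox\1)$ and use the local-distinguishability relations $(P_{E_j}\ox\1)P = P$ and $(Q_{E_j}\ox\1)Q = Q$ together with $V = \bigotimes_k V_k$ to extract the action on the $j$-th factor, producing $\cE(A_j\ox\1) = (A_j'\ox\1)\cE(\1)$ with $A_j' := V_j(A_j\ox P_{E_j} + \bar A_j\ox Q_{E_j})V_j^\dg$.

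The ``only if'' direction starts from \cref{lem:tensorenc}, which gives $\cE(\bigotimes_j A_j) = \bigl[\bigotimes_j\varphi_j(A_j)\bigr]\cE(\1)$ for subspace encodings $\varphi_j$. Applying \cref{encoding:characterisation} to each $\varphi_j$ yields $\varphi_j(M) = V_j(M\ox P_j + \bar M\ox Q_j)V_j^\dg$ for isometries $V_j\colon\cH_j\ox E_j\to\cH'_j$ and orthogonal projectors with $P_j + Q_j = \1_{E_j}$. Setting $V := \bigotimes_j V_j$ and expanding the product gives
\begin{equation}
    \bigotimes_j\varphi_j(A_j) = V\left[\sum_{\epsilon\in\{0,1\}^n}\Bigl(\bigotimes_j A_j^{(\epsilon_j)}\Bigr)\ox R_\epsilon\right]V^\dg,
\end{equation}
where $A_j^{(0)} := A_j$, $A_j^{(1)} := \bar A_j$, $R_\epsilon := \bigotimes_j P_j^{(\epsilon_j)}$, with $P_j^{(0)} := P_j$ and $P_j^{(1)} := Q_j$.

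The main obstacle, and the crux of the proof, is to show that only the aligned terms $R_{0^n}$ and $R_{1^n}$ survive multiplication by $\cE(\1)$. For this I would exploit the fact that $\cE$ extends to a ring homomorphism on all of $\cM_n$ (\cref{encoding:mathematical}). For each pair $j\ne k$, equating $\cE(i\1)$ as computed by placing the factor of $i$ on the $j$-th qudit versus the $k$-th, via \cref{lem:tensorenc} together with $\varphi_l(i\1) = iV_l(\1\ox(P_l - Q_l))V_l^\dg$, gives after subtraction
\begin{equation}
    V\bigl[\1\ox(P_jQ_k - Q_jP_k)\bigr]V^\dg\,\cE(\1) = 0
\end{equation}
(with identities on the remaining $E_l$'s implicit). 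Since $P_jQ_k$ and $Q_jP_k$ are mutually orthogonal projectors on $E$, each term must vanish separately, so $V[\1\ox P_jQ_k]V^\dg\cE(\1) = 0$. Any mixed $R_\epsilon$ is dominated by such a projector (choose $j$ with $\epsilon_j = 0$ and $k$ with $\epsilon_k = 1$), so its contribution to $\cE(\bigotimes_j A_j)$ vanishes, leaving only $P := R_{0^n} = \bigotimes_j P_j$ and $Q := R_{1^n} = \bigotimes_j Q_j$. Real-linearity of $\cE$ then extends the resulting formula from pure tensors to all Hermitian $M$. Local distinguishability follows immediately by taking $P_{E_j} := P_j$ and $Q_{E_j} := Q_j$, which are orthogonal by construction of the encoding $\varphi_j$.
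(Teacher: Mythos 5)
Your strategy is sound and tracks the paper's closely: both routes pass through \cref{lem:tensorenc} to obtain the $\varphi_j$, characterise each $\varphi_j$ via \cref{encoding}, and crucially equate $\cE(i\1)$ computed with the factor of $i$ placed on qudit $j$ versus qudit $k$ (this is precisely what produces \cref{eqn:localdistinguish} in the paper's proof). Your orthogonality argument that kills the mixed terms $R_\epsilon$ is correct: from $V[\1\ox(P_jQ_k - Q_jP_k)]V^\dg\cE(\1)=0$ and mutual orthogonality of the two projectors, each piece annihilates $\cE(\1)$ separately, and any mixed $R_\epsilon$ factors through one of these projectors.

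However, there is a genuine gap at the end. What you derive is
\begin{equation}
\cE(M) = V\bigl[M\ox P + \bar M\ox Q\bigr]V^\dg\,\cE(\1),\qquad P:=\bigotimes_j P_j,\quad Q:=\bigotimes_j Q_j,
\end{equation}
which retains the trailing $\cE(\1)$, whereas the theorem asserts the form $\cE(M)=V[M\ox P+\bar M\ox Q]V^\dg$ with no such factor. You cannot simply drop $\cE(\1)$: the tensor products $\bigotimes_j P_j$ and $\bigotimes_j Q_j$ are in general strictly larger than the projectors the theorem speaks of. Local distinguishability only requires $(P_{E_j}\ox\1)P=P$, i.e.\ $P\le\bigotimes_j P_{E_j}$, not equality. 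A concrete instance: take $n=2$, $E_1=E_2=\C^3$, with $P=\proj{00}+\proj{11}$ (rank $2$) and $Q=\proj{22}$ (rank $1$). These are locally distinguishable with $P_{E_j}=\proj{0}+\proj{1}$, $Q_{E_j}=\proj{2}$, so $\cE(M)=V(M\ox P+\bar M\ox Q)V^\dg$ is a valid local encoding; but here $\bigotimes_j P_j$ has rank $4$, and $\rank\cE(\1)=3\dim\cH$ while $\rank V[\1\ox(\bigotimes_j P_j+\bigotimes_j Q_j)]V^\dg=5\dim\cH$. Dropping $\cE(\1)$ in your formula therefore gives the wrong encoding.

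To close the gap within your framework one more commutant step is needed. Set $\Sigma:=V^\dg\cE(\1)V$, a projector with $\Sigma=(\1\ox(P+Q))\Sigma$. Hermiticity of $\cE(M)$ gives $[M\ox P+\bar M\ox Q]\Sigma=\Sigma[M\ox P+\bar M\ox Q]$ for all Hermitian $M$, and the $\cE(i\1)$ relations already give $[\1\ox P_j,\Sigma]=0$ for each $j$, hence $[\1\ox P,\Sigma]=0$. Restricting $\Sigma$ to the two invariant blocks $\1\ox P$ and $\1\ox Q$ and applying Schur's lemma (the commutant of $\{M\ox\1\}$ is $\1\ox(\cdot)$) forces $\Sigma=\1\ox P'+\1\ox Q'$ for projectors $P'\le P$, $Q'\le Q$. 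Substituting back collapses the trailing $\cE(\1)$ to give $\cE(M)=V[M\ox P'+\bar M\ox Q']V^\dg$, and local distinguishability of $P',Q'$ is inherited from that of $P,Q$. The paper avoids introducing $\bigotimes_j P_j$ at all: it keeps the global representation $\cE(M)=W(M\ox\widetilde P+\bar M\ox\widetilde Q)W^\dg$, derives the intertwiner $V^\dg W(\1\ox\widetilde P)=\1\ox U\widetilde P$ via an operator Schmidt decomposition and Schur's lemma, and defines $P:=U\widetilde PU^\dg$, $Q:=U\widetilde QU^\dg$ directly, sidestepping the over-counting issue.
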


\begin{proof}
  If $\cE$ is of the form given above then by \cref{encoding} it is an encoding into the subspace $\cE(\1)=V(\1\ox(P+Q))V^\dg$.
  It is easy to check that $\cE$ is local: for $A_j \in \Herm(\cH_j)$, just take $A_j'=V_j(A_j\ox P_{E_j}+\bar{A_j}\ox Q_{E_j})V_j^\dg \in \Herm(\cH'_j)$ and use the conditions of the theorem.

  For the converse, note that since $\cE$ is an encoding, it must be of the form $\cE(M)=W(M\ox \widetilde{P}+\bar{M} \ox \widetilde{Q})W^\dg$, where $\widetilde{P}$ and $\widetilde{Q}$ are projectors on an ancilla system $\widetilde{E}$ and $W:\cH \ox \widetilde{E} \to \cH'$ is an isometry.
  By \cref{lem:tensorenc}, there exist $n$ encodings $\varphi_j$ such that $\cE(A_j\ox \1)=(\varphi_j(A_j)\ox \1)\cE(\1)$ for any $A_j \in \Herm(\cH'_j)$.
  Since $\varphi_j$ is an encoding, it must be of the form $\varphi_j(A_j)= V_{j}(A_j\ox P_{E_j}+\bar{A_j}\ox Q_{E_j})V_{j}^\dg$ where $P_{E_j}$ and $Q_{E_j}$ are projectors on an ancilla system $E_j$ and $V_j:\cH_j \ox E_j \to \cH'_j$ is an isometry.

Let $E=\bigotimes_j E_j$ and define an isometry $V=\bigotimes_j V_j: \cH \ox E \to \cH'$. Then by \cref{lem:tensorenc}, for any $j$ and $A_j \in \cB(\cH_j)$:
  \begin{align}\label{eqn:localV}
    \cE(A_j \ox \1)
    &= W(A_j\ox \1 \ox \widetilde{P}+\bar{A_j} \ox \1\ox \widetilde{Q})W^\dg\\
    &= V(A_j\ox P_{E_j}\ox \1+\bar{A_j} \ox Q_{E_j}\ox \1)V^\dg W(\1\ox(\widetilde{P}+\widetilde{Q}))W^\dg
  \end{align}
  Substituting in $A_j=i\1$ in the above expression and matching up the $+i$ and $-i$ eigenspaces implies that
  \begin{align}
    \label{eqn:localdistinguish}
    V(P_{E_j}\ox \1)V^\dg W(\1 \ox \widetilde{P})W^\dg=W(\1 \ox \widetilde{P})W^\dg \\
    V(Q_{E_j}\ox \1)V^\dg W(\1 \ox \widetilde{Q})W^\dg=W(\1 \ox \widetilde{Q})W^\dg.
  \end{align}
  We can therefore multiply \cref{eqn:localV} by $W(\1\ox \widetilde{P})W^\dg$ to obtain:
  \begin{equation}
    W(A_j\ox \1\ox \widetilde{P})W^\dg=V(A_j\ox \1)V^\dg W(\1\ox \widetilde{P})W^\dg
 \end{equation}
implying
\begin{equation}
    V^\dg W(\1\ox \widetilde{P})(A_j\ox \1)=(A_j\ox \1)V^\dg W(\1\ox \widetilde{P})
\label{eq:VWPcommutation}
  \end{equation}
  Let $\sum_l B_l \ox C_l$ be the operator Schmidt decomposition of $V^{\dag}W(\1 \ox \widetilde{P})$, where $B_l \in \cB(\cH_j)$ and $C_l: \left(\bigotimes_{k\neq j} \cH_k \right)\ox \widetilde{E} \to \left(\bigotimes_{k\neq j} \cH_k \right) \ox E$.  Then from \cref{eq:VWPcommutation} we have
\begin{equation}
\sum_l[B_l,A_j] \ox C_l=0
\end{equation}
which implies $[B_l,A_j]=0$ for all $l$ by linear independence of the $C_l$. This holds for all matrices on $A_j \in \cB(\cH_j)$. So by Schur's lemma, each $B_l$, and hence also $V^\dg W(\1\ox \widetilde{P})$, must act trivially (i.e.\ as a multiple of the identity) on $\cH_j$ for all $j$, and hence on $\cH$.

  By the same argument $V^\dg W(\1\ox \widetilde{Q})$ acts trivially on all of $\cH$ and so we can conclude there must exist an isometry $U:  \widetilde{E} \to E$ such that
  \begin{equation}
    V^\dg W(\1\ox \widetilde{P})=(\1\ox U \widetilde{P}) \text{ and }     V^\dg W(\1\ox \widetilde{Q})=(\1\ox U \widetilde{Q})
  \end{equation}
  Define $P=U\widetilde{P}U^\dg$ and $Q=U\widetilde{Q}U^\dg$, and remember that $\cE(M)$ must be in the range of the isometry $V$ by \cref{lem:tensorenc}, so we have
  \begin{align}
    \cE(M)&=VV^{\dag}\cE(M)VV^{\dag} =VV^{\dag}W(M\ox \widetilde{P}+\bar{M} \ox \widetilde{Q})W^\dg VV^{\dag}\\
&= V(M\ox P+\bar{M} \ox Q)V^\dg
  \end{align}
  and note that \cref{eqn:localdistinguish} implies that $(P_{E_j}\ox \1)P=P$ and $(Q_{E_j}\ox \1)Q=Q$ as required.
\end{proof}

When $\cE$ is a local encoding from $n$ qudits to $m$ qudits of the same local dimension $d$, the space $\cH_j'$ is a group of $k_j$ qudits. As described at the end of \cref{sec:subspaceenc}, the dimension of the ancilla $E_j$ can be increased until it is of size $d^{k_j-1}$ so that the dimensions of $\cH_j \ox E_j$ and $\cH'$ match. If this is done for all $j$, then  all the $V_j$ (and hence also $V=\bigotimes V_j$) are unitaries.


\subsection{Composition and approximation of encodings}

In this section, we collect some straightforward technical lemmas about encodings which we will need later: that encodings compose properly, and that approximations to encodings behave as one would expect.

\begin{lemma}\label{lem:chaining}
  If $\cE_1$ and $\cE_2$ are encodings, then their composition $\cE_1\circ\cE_2$ is also an encoding.
  Furthermore, if $\cE_1$ and $\cE_2$ are both local, then their composition $\cE_1\circ\cE_2$ is local.
\end{lemma}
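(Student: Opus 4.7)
The plan is to verify the operational characterisation of encodings from \cref{encoding}\labelcref{encoding:operational_subset} directly, since all three defining properties are manifestly closed under composition. Hermitian-preservation is immediate: if $A = A^\dg$ then $\cE_2(A)$ is Hermitian, so $\cE_1(\cE_2(A))$ is too. Spectrum-preservation follows from $\spec((\cE_1\circ\cE_2)(A)) = \spec(\cE_2(A)) = \spec(A)$ for any Hermitian $A$. Convex-linearity is preserved by direct substitution:
\begin{equation}
  (\cE_1\circ\cE_2)(pA + (1-p)B) = \cE_1\bigl(p\,\cE_2(A) + (1-p)\,\cE_2(B)\bigr) = p\,(\cE_1\circ\cE_2)(A) + (1-p)(\cE_1\circ\cE_2)(B).
\end{equation}
By the equivalence in \cref{encoding}, this already gives that $\cE_1\circ\cE_2$ is an encoding.

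For the locality claim, I would use the characterisation of \cref{def:local_encoding}. Suppose $\cE_2:\cB(\bigotimes_{i=1}^n\cH_i)\to\cB(\bigotimes_{j=1}^{n'}\cH'_j)$ is local with respect to subsets $\{S_i^{(2)}\}_{i=1}^n\subseteq[n']$, and $\cE_1:\cB(\bigotimes_{j=1}^{n'}\cH'_j)\to\cB(\bigotimes_{k=1}^{n''}\cH''_k)$ is local with respect to subsets $\{S_j^{(1)}\}_{j=1}^{n'}\subseteq[n'']$. Define
\begin{equation}
  \widetilde{S}_i \;:=\; \bigcup_{j\in S_i^{(2)}} S_j^{(1)} \;\subseteq\; [n''].
\end{equation}
I would then show that $\cE_1\circ\cE_2$ is local with respect to $\{\widetilde{S}_i\}$. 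Fix $A\in\cB(\cH_i)$. By locality of $\cE_2$, the operator $\cE_2(A\ox\1)$ acts non-trivially only on qudits in $S_i^{(2)}$, so it can be written as a real-linear combination of products of single-qudit operators of the form $B_j\ox\1$ with $j\in S_i^{(2)}$. Because the extension of $\cE_1$ to the full matrix algebra is an algebra homomorphism by \cref{encoding}\labelcref{encoding:mathematical}, the image under $\cE_1$ is the same combination of products of the operators $\cE_1(B_j\ox\1)$, each of which is supported only on $S_j^{(1)}$ by locality of $\cE_1$. Thus $(\cE_1\circ\cE_2)(A\ox\1)$ acts non-trivially only on $\bigcup_{j\in S_i^{(2)}} S_j^{(1)} = \widetilde{S}_i$.

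No serious obstacle is anticipated: this is essentially an index-tracking verification, with the only non-cosmetic ingredient being the use of multiplicativity (\cref{encoding}\labelcref{encoding:mathematical:product}) to push the locality of $\cE_1$ through products of single-qudit operators. For the subspace-local variant of \cref{def:localsubenc}, the same strategy applies with a small adjustment: one writes $\cE_2(A_i\ox\1) = (A_i'\ox\1)\cE_2(\1)$, applies $\cE_1$ using multiplicativity to obtain $\cE_1(A_i'\ox\1)\,\cE_1(\cE_2(\1)) = (A_i''\ox\1)\cE_1(\1)\cE_1(\cE_2(\1)) = (A_i''\ox\1)(\cE_1\circ\cE_2)(\1)$, which is the required tensor-factor form.
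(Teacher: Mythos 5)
Your route for the first claim---verifying the three operational conditions of \cref{encoding}\labelcref{encoding:operational_subset} directly---is genuinely cleaner than what the paper does, and it is watertight when $\cE_1,\cE_2$ are encodings in the strict sense of that theorem, i.e.\ when the isometries are actually unitaries. But by the time \cref{lem:chaining} is stated, ``encoding'' means \emph{subspace} encoding (\cref{sec:subspaceenc}), and the lemma is invoked exactly in that generality in \cref{lem:compsim}, where the isometries $V$ map into proper low-energy subspaces. For a proper subspace encoding $\cE_2$ the unrestricted map does \emph{not} preserve spectrum: one has $\spec(\cE_2(A))=\spec(A)\cup\{0\}$ in general, since $\cE_2(A)$ annihilates everything outside $S_{\cE_2}$. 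So the chain $\spec((\cE_1\circ\cE_2)(A)) = \spec(\cE_2(A)) = \spec(A)$ fails precisely in the case the lemma is needed for. The statement one actually has to verify is that the \emph{restriction} of $\cE_1\circ\cE_2$ to the range of the projector $(\cE_1\circ\cE_2)(\1)$ satisfies the operational conditions, and the spectrum bookkeeping for that restriction is not an immediate consequence of the operational conditions on $\cE_1$ and $\cE_2$ alone. The paper avoids the issue by computing the composition of the explicit representations $V(M\ox P^{(1)}+\bar{M}\ox Q^{(1)})V^\dg$ and $W(M\ox P^{(2)}+\bar{M}\ox Q^{(2)})W^\dg$ and exhibiting an isometry $U$ and orthogonal projectors $P,Q$ that put $\cE_1\circ\cE_2$ directly into the required subspace-encoding form.

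For locality, you give two arguments aimed at two different definitions. The index-tracking argument against \cref{def:local_encoding} is fine, but the notion the paper actually needs in \cref{lem:compsim} is \cref{def:localsubenc}. Your one-line adjustment---write $\cE_2(A_i\ox\1)=(A_i'\ox\1)\cE_2(\1)$, push through $\cE_1$ by multiplicativity, then apply locality of $\cE_1$ to $A_i'\ox\1$---is correct, assuming the composition is already known to be a subspace encoding, and is genuinely shorter than the paper's route. The paper instead assembles the composed local isometry $V=\bigotimes_i V_i$ and verifies explicitly that the composed projectors $P,Q$ are locally distinguishable by combining the constituent $P^{(a)}_{E_i^{(a)}},Q^{(a)}_{E_i^{(a)}}$. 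The trade-off is that the paper's explicit construction gives you the composed encoding in closed form, which \cref{lem:compsim} then uses directly; your approach is more abstract and would still need the first part repaired before the locality step can be invoked.
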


\begin{proof}
  By the definition of encodings, we can write
  \begin{align}
  \cE_1(M)=V(M\ox P^{(1)}+\bar{M}\ox Q^{(1)})V^\dg \\
  \cE_2(M)=W(M\ox P^{(2)}+\bar{M}\ox Q^{(2)})W^\dg
  \end{align}
  for isometries $V$ and $W$, and orthogonal pairs of projectors $P^{(1)}, Q^{(1)}$ and $P^{(2)}, Q^{(2)}$ . Then
  \begin{align}
    (\cE_1\circ\cE_2)(M)
    &= V\left[W(M\ox P^{(2)}+\bar{M}\ox Q^{(2)})W^\dg\ox P^{(1)}\right.\\
    &\qquad \left.+\overline{W(M\ox P^{(2)}+\bar{M}\ox Q^{(2)})W^\dg}\ox Q^{(1)}\right]V^\dg\\
    &= U\left[M\ox\left(P^{(2)}\ox P^{(1)}+\bar{Q}^{(2)}\ox Q^{(1)}\right) \right.\\
    & \quad \left.+\bar{M}\ox \left(Q^{(2)}\ox P^{(1)} +\bar{P}^{(2)}\ox Q^{(1)}\right)\right]U^\dg
  \end{align}
  where $U=V\left(W\ox P^{(1)} +\bar{W}\ox Q^{(1)}+\1\ox (\1-P^{(1)}-Q^{(1)})\right)V^\dg$ is an isometry.
  Then observing that $P=P^{(2)}\ox P^{(1)}+\bar{Q}^{(2)}\ox Q^{(1)}$ and $Q=Q^{(2)}\ox P^{(1)} +\bar{P}^{(2)}\ox Q^{(1)}$ are orthogonal projectors, we conclude that $\cE_1\circ\cE_2$ is an encoding.

  If $\cE_1$ and $\cE_2$ are both local then the projectors are locally distinguishable, which means there exist projectors $P^{(a)}_{E_i^{(a)}}$ and $Q^{(a)}_{E_i^{(a)}}$ for $a \in \{1,2\}$ such that
  \begin{equation}
    \left(P^{(a)}_{E_i^{(a)}}\ox \1\right)P^{(a)}=P^{(a)} \quad \text{ and } \quad \left(Q^{(a)}_{E_i^{(a)}}\ox \1\right)Q^{(a)}=Q^{(a)}.
  \end{equation}
  We can show that $P$ and $Q$ are locally distinguishable by defining orthogonal projectors on the systems $E_i=E_i^{(2)}\ox E_i^{(1)}$ as follows:
  \begin{equation}
    P_{E_i}=P^{(2)}_{E_i^{(2)}}\ox P^{(1)}_{E_i^{(1)}} +\bar{Q}^{(2)}_{E_i^{(1)}}\ox Q^{(1)}_{E_i^{(1)}}
    \text{ and }
    Q_{E_i}=Q^{(2)}_{E_i^{(2)}}\ox P^{(1)}_{E_i^{(1)}} +\bar{P}^{(2)}_{E_i^{(1)}}\ox Q^{(1)}_{E_i^{(1)}}
  \end{equation}
  such that $(P_{E_i}\ox \1)P=P$ and $(Q_{E_i}\ox \1)Q=Q$.

  Furthermore, since $\cE_1$ and $\cE_2$ are local, the isometries $V$ and $W$ are tensor products $V=\bigotimes_iV_i$ and $W=\bigotimes_i W_i$, and we can define a local isometry
  \begin{equation}
    U'=\bigotimes_iV_i\left(W_i\ox P^{(1)}_{E_i^{(1)}}+\bar{W}_i\ox Q^{(1)}_{E_i^{(1)}}+\1\ox (\1-P^{(1)}_{E_i^{(1)}}-Q^{(1)}_{E_i^{(1)}})\right)V_i^\dg
  \end{equation}
  such that $(\cE_1\circ\cE_2)(M)=U'(M\ox P+\bar{M}\ox Q)U'^\dg$.
\end{proof}

Next we show that, unsurprisingly, if two encodings are close, the results of applying the encodings to the same operator are also close; and similarly that if two operators are close, the results of applying the same encoding to the operators are close.
We first prove a small technical lemma, which will be useful both here and throughout the paper.

\begin{lemma}\label{lem:conjdiff}
  Let $A,B:\cH\to \cH'$ and $C:\cH \to \cH$ be linear maps.
  Let $\| \cdot \|_a$ be the trace norm or operator norm.
  Then
  \begin{equation}
    \|ACA^{\dg}-BCB^{\dg}\|_a \le (\|A\|+\|B\|)\|A-B\|\|C\|_a.
  \end{equation}
\end{lemma}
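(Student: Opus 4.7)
The plan is to use the standard add-and-subtract trick to split the difference $ACA^\dg - BCB^\dg$ into two pieces, each of which involves only one factor of $A - B$. Concretely, I would write
\begin{equation}
  ACA^\dg - BCB^\dg = AC(A^\dg - B^\dg) + (A - B)CB^\dg,
\end{equation}
and then apply the triangle inequality to get
\begin{equation}
  \|ACA^\dg - BCB^\dg\|_a \le \|AC(A - B)^\dg\|_a + \|(A - B)CB^\dg\|_a.
\end{equation}

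The next step is to bound each of these two terms. The crucial ingredient is the fact that both the operator norm and the trace norm satisfy the submultiplicative bound $\|XYZ\|_a \le \|X\|\cdot\|Y\|_a\cdot\|Z\|$, where the unsubscripted norm denotes the operator norm. This is a standard property: for the operator norm it is immediate, and for the trace norm it follows from the variational characterisation (or from Hölder's inequality for Schatten norms). Applying this to each term gives $\|AC(A - B)^\dg\|_a \le \|A\|\cdot\|C\|_a\cdot\|(A - B)^\dg\| = \|A\|\cdot\|C\|_a\cdot\|A - B\|$ (using $\|X^\dg\| = \|X\|$) and similarly $\|(A - B)CB^\dg\|_a \le \|A - B\|\cdot\|C\|_a\cdot\|B\|$.

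Adding these two bounds and factoring out $\|A - B\|\cdot\|C\|_a$ yields the desired inequality
\begin{equation}
  \|ACA^\dg - BCB^\dg\|_a \le (\|A\| + \|B\|)\|A - B\|\|C\|_a.
\end{equation}

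There is no real obstacle here: the entire argument is routine once one recalls that the trace norm (as well as the operator norm) is an ideal norm under multiplication by bounded operators on either side. The only minor point of care is invoking this submultiplicativity for the trace norm, rather than just for the operator norm, since the lemma is stated for both norms simultaneously.
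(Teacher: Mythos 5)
Your proof is correct and uses essentially the same approach as the paper: add and subtract a cross term, apply the triangle inequality, and use submultiplicativity of the trace/operator norm under left and right multiplication by bounded operators. The only difference is which cross term you insert ($ACB^\dg$ rather than the paper's $BCA^\dg$), which is immaterial.
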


\begin{proof}
  The proof is a simple application of the triangle inequality followed by submultiplicativity:
  \begin{align}
    \|ACA^{\dg}-BCB^{\dg}\|_a
    &\le \|ACA^{\dg}-BCA^{\dg}\|_a +\|BCA^{\dg}-BCB^{\dg}\|_a\\
    &\le \|A-B\|\|C\|_a \|A^{\dg}\|+\|B\|\|C\|_a \|A^{\dg}-B^{\dg}\| \\
    &=(\|A\|+\|B\|)\|A-B\|\|C\|_a
  \end{align}
  where we have also used $\|A\|=\|A^{\dg}\|$.
\end{proof}

\begin{lemma}\label{lem:closeencs}
  Consider two encodings $\mathcal{E}$ and $\widetilde{\mathcal{E}}$ defined by $\mathcal{E}(M) = V(M^{\oplus p} \oplus \bar{M}^{\oplus q})V^\dg$, $\widetilde{\mathcal{E}}(M) = \widetilde{V}(M^{\oplus p} \oplus \bar{M}^{\oplus q})\widetilde{V}^\dg$, for some isometries $V$, $\widetilde{V}$.
  Then, for any operators $M$ and $\widetilde{M}$:
  \begin{enumerate}
  \item $\|\mathcal{E}(M) - \widetilde{\mathcal{E}}(M)\| \le 2\|V - \widetilde{V}\| \|M\|$;
  \item $\|\mathcal{E}_{\operatorname{state}}(M) - \widetilde{\mathcal{E}}_{\operatorname{state}}(M)\|_1 \le 2\|V - \widetilde{V}\| \|M\|_1$;
  \item $\|\mathcal{E}(M) - \mathcal{E}(\widetilde{M})\| = \|M - \widetilde{M}\|$.
  \end{enumerate}
\end{lemma}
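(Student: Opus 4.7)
The plan is to apply \cref{lem:conjdiff} directly to each of the three bounds, exploiting the fact that $V$ and $\widetilde{V}$ are isometries and that both encodings share the same internal block structure.

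For part~1, I would first observe that since $\mathcal{E}$ and $\widetilde{\mathcal{E}}$ use the \emph{same} $p$ and $q$, setting $C = M^{\oplus p} \oplus \bar{M}^{\oplus q}$ gives
\begin{equation}
\mathcal{E}(M) - \widetilde{\mathcal{E}}(M) = V C V^\dg - \widetilde{V} C \widetilde{V}^\dg.
\end{equation}
Applying \cref{lem:conjdiff} with $A = V$, $B = \widetilde{V}$ and the operator norm yields
\begin{equation}
\|\mathcal{E}(M) - \widetilde{\mathcal{E}}(M)\| \le (\|V\| + \|\widetilde{V}\|)\|V - \widetilde{V}\| \|C\|.
\end{equation}
Since $V$ and $\widetilde{V}$ are isometries, $\|V\| = \|\widetilde{V}\| = 1$; and since the operator norm is invariant under complex conjugation and under direct sums collapses to the maximum block, $\|C\| = \|M\|$. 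This gives the claimed bound.

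For part~2, since both encodings have the same $(p,q)$, the defining projector $P$ (or $Q$, if $p=0$) on the ancilla is the same in each case, so the same ancilla state $\sigma$ (with $P\sigma=\sigma$, normalised in trace norm) can be used in both $\mathcal{E}_{\operatorname{state}}$ and $\widetilde{\mathcal{E}}_{\operatorname{state}}$ from \cref{eq:estate}. Then the same trick applies with $C = M \ox \sigma$, invoking \cref{lem:conjdiff} in the trace norm:
\begin{equation}
\|\mathcal{E}_{\operatorname{state}}(M) - \widetilde{\mathcal{E}}_{\operatorname{state}}(M)\|_1 \le 2 \|V - \widetilde{V}\| \|M \ox \sigma\|_1 = 2 \|V - \widetilde{V}\| \|M\|_1,
\end{equation}
using $\|M \ox \sigma\|_1 = \|M\|_1 \|\sigma\|_1$ and $\|\sigma\|_1 = 1$.

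For part~3, I would use that $\mathcal{E}$ is (complex-)linear (\cref{encoding}\labelcref{encoding:mathematical:sum,encoding:mathematical:real-linear}, extended to $\cM_n$), so $\mathcal{E}(M) - \mathcal{E}(\widetilde{M}) = \mathcal{E}(M - \widetilde{M})$. Writing $N = M - \widetilde{M}$, we have $\mathcal{E}(N) = V(N^{\oplus p} \oplus \bar{N}^{\oplus q})V^\dg$. Conjugation by an isometry preserves the operator norm (the upper bound is submultiplicativity and the lower bound comes from $N = V^\dg (VNV^\dg) V$ when $V^\dg V = \1$), and complex conjugation and direct sums do not enlarge the operator norm past $\|N\|$, so $\|\mathcal{E}(N)\| = \|N\|$ exactly. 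There is no real obstacle here; the only subtlety worth flagging is the consistent choice of ancilla state in part~2, which is available precisely because the two encodings share the same $(p,q)$.
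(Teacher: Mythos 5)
Your proof is correct and follows essentially the same route as the paper: all three parts apply \cref{lem:conjdiff} with $C = M^{\oplus p}\oplus\bar{M}^{\oplus q}$ (resp.\ $M\ox\sigma$) together with the norm identities $\|M^{\oplus p}\oplus\bar{M}^{\oplus q}\| = \|M\|$, $\|M\ox\sigma\|_1 = \|M\|_1$, and isometric invariance. You flag the same-$\sigma$ point for part~2 slightly more explicitly than the paper does, but this is a presentation difference, not a change of argument.
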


\begin{proof}
  Write $M' = M^{\oplus p} \oplus \bar{M}^{\oplus q}$.
  Then, for the first part,
  \begin{equation}
    \|\mathcal{E}(M) - \widetilde{\mathcal{E}}(M)\| = \| V M' V^\dg - \widetilde{V} M' \widetilde{V}^\dg \| \le 2\|V - \widetilde{V}\| \|M\|
  \end{equation}
  by \cref{lem:conjdiff}, using $\|M'\| = \|M\|$.
  For the second part, recall that $\mathcal{E}_{\operatorname{state}}(\rho)$ is either defined as $V(\rho \ox \sigma)V^\dg$ or $V(\bar{\rho} \ox \sigma)V^\dg$, dependent on whether $p \ge 1$, for some fixed state $\sigma$.
  Then, writing $M' = M \ox \sigma$ or $M' = \bar{M} \ox \sigma$ and observing that $\|M'\|_1 = \|M\|_1$, the argument is the same as the first part (replacing the operator norm with the trace norm appropriately).

  The third part is essentially immediate:
  \begin{equation}
    \|\mathcal{E}(M) - \mathcal{E}(\widetilde{M})\| = \|V((M-\widetilde{M})^{\oplus p} \oplus (\bar{M}- \bar{\widetilde{M}})^{\oplus q})V^\dg\| = \|M - \widetilde{M}\|.
  \end{equation}
\end{proof}


\section{Hamiltonian simulation}
\label{sec:hamsim}

\subsection{Perfect simulation}

We have seen that encodings capture the notion of one Hamiltonian exactly reproducing all the physics of another.
We will be interested in a less restrictive notion, where this holds only for the low-energy part of the first Hamiltonian.
This concept can be captured by generalising the idea of encodings to \emph{simulations}.
Let $H \in \mathcal{B}((\C^d)^{\ox n})$ and $H' \in \mathcal{B}((\C^{d'})^{\ox m})$ for some $m \ge n$.
We usually think of the local dimensions $d$, $d'$ as fixed, but the number of qudits $n$, $m$ as growing.
Recall that $S_{\leq \Delta (H')}=\linspan \{ \ket{\psi} : H\ket{\psi}=\lambda\ket{\psi}, \lambda \le \Delta \}$ denotes the low energy space of $H'$ and  $P_{\leq \Delta(H')}$ denotes the projector onto this space.
\begin{definition}
  \label{dfn:perfectsim}
  We say that $H'$ perfectly simulates $H$ below energy $\Delta$ if there is a local encoding $\mathcal{E}$ into the subspace $S_{\cE}$ such that:
  \begin{enumerate}
  \item \label[condition]{dfn:perfectsim:subspace}
    $S_{\cE}=S_{\leq \Delta (H')}$ (or equivalently $\cE(\1)=P_{\leq \Delta(H')}$);
  \item \label[condition]{dfn:perfectsim:Hamiltonian}
    $H'|_{\le \Delta} = \mathcal{E}(H)|_{S_{\cE}}$ .
  \end{enumerate}
\end{definition}
Note that \cref{dfn:perfectsim:subspace} is crucial in order for it to make sense to compare $H'|_{\leq \Delta}$ and $\cE (H)|_{S_{\cE}}$.
When \cref{dfn:perfectsim:subspace} holds, \cref{dfn:perfectsim:Hamiltonian} is equivalent to $H'_{\le \Delta} = \mathcal{E}(H)$, where $H'_{\le \Delta}=H'P_{\leq \Delta(H')}$ is the low energy part of $H'$.

To gain some intuition for the above definition, taking $\cE$ to be the identity map, we see that $H$ perfectly simulates itself.
Further, for any $U \in U(d)$, we see that $U^{\ox n} H (U^\dg)^{\ox n}$ is a perfect simulation of $H$.
This freedom to apply local unitaries allows us, for example, to relabel Pauli matrices in the Pauli expansion of $H$.
It also allows us to bring 2\nbd-qubit interactions into a canonical form~\cite{Cubitt-Montanaro}.
Imagine we have a Hamiltonian on $n$ qubits which can be written as a sum of 2\nbd-local terms, each proportional to some 2\nbd-qubit interaction $H$ which is symmetric under interchange of the qubits.
Then it is not hard to show~\cite{Cubitt-Montanaro} that there exists $U \in SU(2)$ such that
\begin{equation}
   U^{\ox 2} H (U^\dg)^{\ox 2} = \sum_{s \in \{x,y,z\}} \alpha_s \sigma_s \ox \sigma_s + \sum_{t \in \{x,y,z\}} \beta_t(\sigma_t \ox \1 + \1 \ox \sigma_t)
\end{equation}
for some weights $\alpha_s,\beta_t \in \R$.
Applying $U^{\ox n}$ to the whole Hamiltonian simulates the $H$ interactions with interactions of this potentially simpler form.

Both of these examples of perfect simulations are actually also encodings. As an example of a perfect simulation which is not an encoding, we observe that qubit Hamiltonians can simulate qudit Hamiltonians.

\begin{lemma}
\label{prop:qudits}
  Let $H$ be a $k$\nbd-local qudit Hamiltonian on $n$ qudits with local dimension $d$.
  Then, for any $\Delta \ge \|H\|$, there is a $k\lceil \log_2 d\rceil$\nbd-local qubit Hamiltonian $H'$ which perfectly simulates $H$ below energy $\Delta$.
\end{lemma}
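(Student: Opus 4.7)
Proof plan for \cref{prop:qudits}:

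The plan is to implement the standard textbook embedding of a qudit into $q := \lceil \log_2 d \rceil$ qubits, and to enforce the logical subspace energetically. Concretely, for each qudit $i$, define the isometry $V_i : \C^d \to (\C^2)^{\otimes q}$ by $V_i\ket{j} = \ket{j}$ for $j \in \{0,\dots,d-1\}$, where on the right-hand side $\ket{j}$ denotes the $j$\nbd-th qubit computational basis state. Let $V = \bigotimes_{i=1}^n V_i$ and let $\mathcal{E}(M) := V M V^\dagger$; by \cref{thm:localenc} (with trivial ancilla systems, $P = \1$, $Q = 0$), $\mathcal{E}$ is a local encoding from $n$ qudits of dimension $d$ into $n$ groups of $q$ qubits, and its encoded subspace is $S_\mathcal{E} = \mathrm{Im}(V) = \bigotimes_i \mathrm{Im}(V_i)$.

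Next, I would define the penalty terms $P_i^\perp := \1 - V_i V_i^\dagger$, which project onto the ``unused'' qubit states $\ket{d},\dots,\ket{2^q - 1}$ on the $i$\nbd-th block of $q$ qubits. Each $P_i^\perp$ is a $q$\nbd-local qubit operator. Writing $H = \sum_j h_j$ with each $h_j$ acting on $k$ qudits, each term $V h_j V^\dagger$ acts on $kq$ qubits, so the simulating Hamiltonian
\begin{equation}
  H' \;:=\; \mathcal{E}(H) + \Delta' \sum_{i=1}^n P_i^\perp \;=\; \sum_j V h_j V^\dagger + \Delta'\sum_i P_i^\perp
\end{equation}
is $kq$\nbd-local (since $q \le kq$), where $\Delta' > \Delta$ will be fixed below.

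Now I would verify the two conditions of \cref{dfn:perfectsim}. The projectors $\{P_i^\perp\}$ act on disjoint sets of qubits so commute; their simultaneous zero eigenspace is exactly $S_\mathcal{E}$, and every other joint eigenvalue is at least $\Delta'$. Moreover $\mathcal{E}(H) = V H V^\dagger$ has support contained in $S_\mathcal{E}$ in both bra and ket (since $V = V V^\dagger V$), so $\mathcal{E}(H)$ and $\sum_i P_i^\perp$ are simultaneously block-diagonal with respect to the direct-sum decomposition by the penalty eigenvalue. On the zero-penalty block $S_\mathcal{E}$, $H'$ restricts to $\mathcal{E}(H)|_{S_\mathcal{E}}$, whose eigenvalues coincide with those of $H$ and hence lie in $[-\|H\|,\|H\|] \subseteq [-\Delta,\Delta]$. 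On every other block, $\mathcal{E}(H)$ vanishes and the spectrum is $\{\Delta' k : k \ge 1\}$. Choosing $\Delta' = \Delta + 1$ (or any value exceeding $\Delta$), we conclude $S_{\le \Delta(H')} = S_\mathcal{E} = \mathrm{Im}(\mathcal{E}(\1))$, verifying \cref{dfn:perfectsim:subspace}, and $H'|_{\le \Delta} = \mathcal{E}(H)|_{S_\mathcal{E}}$, verifying \cref{dfn:perfectsim:Hamiltonian}.

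There is no real ``hard part'' here: the only thing to be careful about is that the penalty does not push any state in $S_\mathcal{E}$ above the cut-off (which it doesn't, since the penalty vanishes there) and that no high-penalty eigenstate slips below $\Delta$ (which is prevented by the assumption $\Delta \ge \|H\|$ together with the choice $\Delta' > \Delta$). All remaining steps are bookkeeping about locality, the dimension count $2^q \ge d$, and the observation that $V_i V_i^\dagger$ and its complement are polynomials in the $q$ qubit Pauli operators on the $i$\nbd-th block, so $P_i^\perp$ is a genuine $q$\nbd-local qubit interaction.
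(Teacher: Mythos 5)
Your proposal is correct and follows essentially the same route as the paper's own proof: embed each qudit into $\lceil \log_2 d\rceil$ qubits via a tensor-product isometry $V$, define $\cE(M) = VMV^\dg$, and add a penalty $\Delta' \sum_i (\1 - V_iV_i^\dg)$ with $\Delta' > \Delta$ to force the encoded subspace to coincide with the low-energy space. The only cosmetic difference is that you fix the computational-basis embedding, whereas the paper allows an arbitrary isometry $W: \C^d \to (\C^2)^{\ox \lceil\log_2 d\rceil}$; the locality bookkeeping and the verification of \cref{dfn:perfectsim} are the same.
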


\begin{proof}
  We use a local encoding $\mathcal{E}(M) = VMV^\dg$, where $V = W^{\ox n}$, and $W:\C^d \to (\C^2)^{\ox \lceil \log_2 d \rceil}$ is an arbitrary isometry.
  Write $P = \1 - WW^\dg$ for the projector onto the subspace orthogonal to the image of $W$ (if $d$ is a power of 2, $P=0$).
  Then we define the Hamiltonian
  \begin{equation}
    H' = \mathcal{E}(H) + \Delta' \sum_{i=1}^n P_i,
  \end{equation}
  for some $\Delta'>\Delta$.
  The nullspace of the positive semidefinite operator $P := \sum_{i=1}^n P_i$ is precisely the image of $V$, and the smallest nonzero eigenvalue of $P$ is $\Delta'$.
  So, as $\Delta < \Delta'$, $\cE$ is an encoding into the subspace $S_{\cE}=S_{\le \Delta(H')}$; and as $\Delta \ge \|H\|$, $H'|_{\le \Delta} = \mathcal{E}(H)|_{S_{\cE}}$.
  Thus $H'$ meets the requirements of \cref{dfn:perfectsim} and perfectly simulates $H$ below energy $\Delta$.
\end{proof}

Another case where we can achieve perfect simulation is the simulation of complex Hamiltonians with real Hamiltonians, using an alternative to the complex-to-real encoding of \cref{complex-to-real enc} where no single qubit corresponds to the ancilla qubit of \cref{complex-to-real enc}. This enables us to make the subspace encoding in the simulation local.

\begin{lemma} \label{complex-to-real sim}
For any integer $k$, let $H$ be a $k$-local qubit Hamiltonian. Then for any $\Delta \ge 2\|H\|$ there is a real $2k$-local qubit Hamiltonian $H'$ which simulates $H$ perfectly below energy $\Delta$.
\end{lemma}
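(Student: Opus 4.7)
The plan is to promote the global complex-to-real encoding of \cref{complex-to-real enc} to a \emph{local} subspace encoding by introducing one ancilla qubit $a_j$ per system qubit $j$, and then pinning the ancillas into a two-dimensional ``doubled'' subspace via a heavy penalty. Concretely, set $P = \proj{+_y}^{\ox n}$ and $Q = \proj{-_y}^{\ox n}$ on the ancilla register, let $S_{\cE} = (\C^2)^{\ox n}_{\mathrm{sys}} \ox \linspan(\ket{+_y}^{\ox n}, \ket{-_y}^{\ox n})$, and define
\begin{equation}
  \cE(M) = M \ox P + \bar M \ox Q.
\end{equation}
By \cref{encoding}, this is an encoding (trivial isometry, $p=q=1$) into $S_{\cE}$, and \cref{thm:localenc} certifies it is local with distinguishers $P_{E_j} = \proj{+_y}$ and $Q_{E_j} = \proj{-_y}$ on each ancilla.

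For each $k$-local term $h_i$ of $H = \sum_i h_i$ on system qubits $S_i$, the candidate for the corresponding local simulator term is
\begin{equation}
  h'_i = h_i \ox \Pi^+_{a_{S_i}} + \bar h_i \ox \Pi^-_{a_{S_i}}, \qquad \Pi^{\pm}_{a_{S_i}} = \proj{\pm_y}^{\ox k},
\end{equation}
a Hermitian operator on the $k$ system qubits $S_i$ together with their $k$ ancilla partners, hence $2k$-local. Expanding $\proj{\pm_y} = \tfrac{1}{2}(\1 \pm Y)$ gives $h'_i = 2^{-k}\sum_{T \subseteq [k]} Y^{\ox T} \ox (h_i + (-1)^{|T|}\bar h_i)$. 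Since $Y^{\ox T}$ is real when $|T|$ is even and purely imaginary when $|T|$ is odd, while $h_i + \bar h_i$ is real and $h_i - \bar h_i$ is purely imaginary, the parities match and every summand is real, so $h'_i$ itself is a real matrix. A direct computation using $\Pi^+ P = P$, $\Pi^- P = 0$, $\Pi^+ Q = 0$, $\Pi^- Q = Q$ then yields $(h'_i \ox \1)\cE(\1) = \cE(h_i \ox \1)$, as required by \cref{def:localsubenc}.

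Finally, to force $S_{\cE}$ to be precisely the low-energy subspace of the simulator, adjoin the penalty
\begin{equation}
  H_{\mathrm{anc}} = \tfrac{1}{2} \sum_{j=1}^{n-1} (\1 - Y_j Y_{j+1})
\end{equation}
on the ancilla register: because $Y \ox Y$ is real, $H_{\mathrm{anc}}$ is a real $2$-local Hamiltonian with zero-energy ground space $\linspan(\ket{+_y}^{\ox n}, \ket{-_y}^{\ox n})$ and spectral gap $1$. Set $H' = \sum_i h'_i + \Delta' H_{\mathrm{anc}}$ with $\Delta' > \Delta + \sum_i \|h_i\|$; then $H'$ is real and $2k$-local (the penalty terms are $2$-local, hence $\le 2k$-local for $k\ge 1$, while the $k=0$ case is trivial). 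Since $\Pi^+_{a_{S_i}}$ annihilates $\ket{-_y}^{\ox n}$ and vice versa, each $h'_i$ preserves $S_{\cE}$, so $H'|_{S_{\cE}} = \cE(H)|_{S_{\cE}}$ has eigenvalues in $[-\|H\|,\|H\|] \subseteq [-\Delta/2, \Delta/2]$, whereas on $S_{\cE}^\perp$ every eigenvalue is at least $\Delta' - \|\sum_i h'_i\| > \Delta$. Hence $S_{\le \Delta(H')} = S_{\cE}$ and $H'|_{\le \Delta} = \cE(H)|_{S_{\cE}}$, satisfying both conditions of \cref{dfn:perfectsim}. The only non-routine ingredient is the reality argument for $h'_i$, which relies on the parity matching between tensor factors of $Y$ and real/imaginary parts of $h_i$; everything else is bookkeeping plus the choice of $\Delta'$ large enough to separate the ancilla excited sector from the cutoff.
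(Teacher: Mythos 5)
Your proof is correct and follows essentially the same strategy as the paper: one ancilla per system qubit, a local subspace encoding $\cE(M)=M\ox\proj{+_y}^{\ox n}+\bar M\ox\proj{-_y}^{\ox n}$, and a strong ferromagnetic $Y\!Y$ penalty pinning the ancillas into the two-dimensional all-$\ket{+_y}$/all-$\ket{-_y}$ sector. The one genuine variation is in how you obtain real $2k$-local terms: the paper first Pauli-decomposes each interaction into strings $\bigotimes_j\sigma_{s_j}$ and applies $\varphi$ factorwise, so that $h'=\bigotimes_j\varphi(\sigma_{s_j})$ is a tensor product of manifestly real $2$-qubit factors; you instead keep the arbitrary $k$-local term $h_i$ intact and set $h'_i=h_i\ox\Pi^+_{a_{S_i}}+\bar h_i\ox\Pi^-_{a_{S_i}}$, which requires the parity argument to establish reality but avoids the Pauli decomposition and gives a slightly cleaner formula; the two choices of $h'_i$ differ off $S_\cE$ but agree on it, so both work. (You also have the sign on the penalty the right way around, $\1-Y_jY_{j+1}$; the paper's stated $H_0=\sum_i(Y_{i'}Y_{(i+1)'}+\1)$ is not zero on $S$ and should read $\sum_i(\1-Y_{i'}Y_{(i+1)'})$.)
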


\begin{proof}
  Let $H$ be a $k$-local qubit Hamiltonian, and let $h = \bigotimes_{i=1}^k\sigma_{s_i}$ with $s_i \in \{x,y,z\}$ be a $k$-local term in the Pauli decomposition of $H$.
  The complex-to-real encoding $\varphi$ from \cref{complex-to-real enc} maps individual Paulis as follows:
  \begin{align}
    \varphi(\1) &= \1\oplus\1\\
    \varphi(\sigma_{x,z}) &= \sigma_{x,z}\oplus\sigma_{x,z} = \1 \ox\sigma_{x,z} \\
    \varphi(\sigma_y) &= J(\sigma_y\oplus\sigma_y) = \sigma_y\ox\sigma_y.
  \end{align}
  For each qubit $j$ in the original Hamiltonian $H$, add an additional qubit labelled $j'$ and apply the map $\varphi$ separately to these pairs of qubits.
  This results in a term $h'$ on $2k$ qubits of the following form:
  \begin{equation}
   h'=\bigotimes_{j=1}^k \left(\ket{+_y}\bra{+_y}_{j'}\ox \sigma_{s_j}+\ket{-_y}\bra{-_y}_{j'} \ox \bar{\sigma}_{s_j}\right)
 \end{equation}
 Restricted to the space $S$ spanned by $\ket{+_y}^{\ox n}$ and $\ket{-_y}^{\ox n}$ on the ancilla qubits, $h$ is of the desired form.
 (Indeed, the restriction recovers the complex-to-real encoding of \cref{complex-to-real enc}.)
 Let $\widetilde{H}$ be the total Hamiltonian formed by the sum of the $h'$ terms.
 Then $\widetilde{H}$ is real and
  \begin{equation}
   \widetilde{H}|_S= \ket{+_y}\bra{+_y}^{\ox n}\ox H +\ket{-_y}\bra{-_y}^{\ox n}\ox \bar{H}
\end{equation}
We can add a term $\Delta'H_0$ where $\Delta'>\Delta$ and $H_0 = \sum_i (Y_{i'} Y_{(i+1)'}+\1)$ is zero on $S$ and is $\ge \1$ on $S^{\bot}$. The overall Hamiltonian $H'=\widetilde{H} +\Delta'H_0$ is therefore real and, since $\Delta' > \Delta\ge2\| H\|$, $S_{\le \Delta(H')}=S$ and $H'|_{\le \Delta}=\widetilde{H}|_S$.
\end{proof}


\subsection{Approximate simulation}

In general we may not be able to achieve perfect simulation, so it is natural to generalise this concept to allow approximate simulations.
If \cref{dfn:perfectsim:subspace} in \cref{dfn:perfectsim} no longer holds exactly for a map $\cE(M)=V(M\ox P + \bar{M}\ox Q)V^\dg$, it is not immediately clear how to generalise \cref{dfn:perfectsim:Hamiltonian}, as $H'_{\leq \Delta}$ and $\cE (H)$ now have support on different spaces.
However, if \cref{dfn:perfectsim:subspace} holds approximately such that $\|\cE(\1)-P_{\leq \Delta(H')}\|\le\eta$, then there exists an alternative encoding $\widetilde{\cE}(M)=\widetilde{V}(M\ox P + \bar{M}\ox Q)\widetilde{V}^\dg$ such that $\|\widetilde{V}-V\|\le\sqrt{2} \eta$ and $\widetilde{\cE}(\1)=P_{\leq \Delta (H')}$ (see \cref{tildeV} below); so we can compare $H'_{\leq \Delta}$ and $\widetilde{\cE} (H)$.

\begin{definition}
  \label{dfn:sim}
  We say that $H'$ is a $(\Delta,\eta,\epsilon)$-simulation of $H$ if there exists a local encoding $\cE(M)=V(M\ox P + \bar{M}\ox Q)V^\dg$  such that:
  \begin{enumerate}
  \item \label[condition]{dfn:sim:encoding}
    There exists an encoding $\widetilde{\cE}(M)=\widetilde{V}(M\ox P + \bar{M}\ox Q)\widetilde{V}^\dg$  such that $S_{\widetilde{\cE}}= S_{\le \Delta(H')}$ and $\|\widetilde{V} - V\| \le \eta$;
  \item \label[condition]{dfn:sim:Hamiltonian}
    $\| H'_{\le \Delta} - \widetilde{\mathcal{E}}(H)\| \le \epsilon$.
  \end{enumerate}
  We say that a family $\mathcal{F}'$ of Hamiltonians can simulate a family $\mathcal{F}$ of Hamiltonians if, for any $H \in \mathcal{F}$ and any $\eta,\epsilon >0$ and $\Delta \ge \Delta_0$ (for some $\Delta_0 > 0$), there exists $H' \in \mathcal{F}'$ such that $H'$ is a $(\Delta,\eta,\epsilon)$-simulation of $H$.

  We say that the simulation is efficient if, in addition, for $H$ acting on $n$ qudits and $H'$ acting on $m$ qudits, $\|H'\| = \poly(n,1/\eta,1/\epsilon,\Delta)$ and $m= \poly(n,1/\eta,1/\epsilon,\Delta)$; $H'$ is efficiently computable given $H$, $\Delta$, $\eta$ and $\epsilon$; each local isometry $V_i$ in the decomposition of $V$ from \cref{thm:localenc} is itself a tensor product of isometries which map to $O(1)$ qudits; and there is an efficiently constructable state $\ket{\psi}$ such that $P\ket{\psi}=\ket{\psi}$.
\end{definition}

Note that different notions of computational efficiency could be used in \Cref{dfn:sim}. For all the simulations considered in this paper, $H'$ is computed using a polynomial-time classical algorithm, and $\ket{\psi}=\ket{00\dots 0}$.

We usually think of $\Delta$ as satisfying $\Delta \gg \|H\|$.
We may interpret \cref{dfn:sim} as stating that $H'_{\leq \Delta}$ is close to an encoding $\widetilde{\cE}(H)$ of $H$, and that the encoding map $\widetilde{\cE}$ is close to a local encoding $\cE$.
However, we assume that $\cE$ is the map that we understand and have access to, whereas all we know about $\widetilde{\cE}$ is that it exists.

A perfect simulation of $H$ by $H'$ below energy $\Delta$ is a $(\Delta,\eta,\epsilon)$-simulation of $H$ with $\eta=\epsilon=0$.
Observe that every local encoding is a perfect simulation with $\Delta=\infty$.
Reducing the inaccuracy $\eta$, $\epsilon$ of the simulation will typically require expending more ``effort'', e.g.\ by increasing the strength of the local interactions.

An alternative definition might try to compare $H'_{\leq \Delta}$ and $\cE(H)$ even though they have different support.
This would be essentially equivalent to our definition because, from \cref{lem:closeencs} and the reverse triangle inequality,
\begin{equation}
   \left| \| H'_{\leq \Delta} -\cE(H) \| - \| H'_{\le \Delta} -\widetilde{\cE}(H) \|\right| \le 2\|V-\widetilde{V}\| \| H^{\oplus p} \oplus \bar{H}^{\oplus q}\| \le 2\eta \|H\|.
\end{equation}
Thus the two definitions are equivalent up to a $O(\eta \|H\|)$ term.
Our simulations will in general assume that $\eta = O(1/\poly(\|H\|))$, making the difference negligible.
It is also worth noting that this alternative definition appears to result in worse bounds in \cref{lem:compsim} and \cref{time-evolution} below.

We remark that our physically motivated definition of simulation is very similar to one previously introduced by Bravyi and Hastings~\cite{Bravyi-Hastings}.
The main differences are:
\begin{enumerate}
\item
  The second part of the definition in \cite{Bravyi-Hastings} is stated as
  \begin{equation}
    \| H - \widetilde{V}^\dg H' \widetilde{V} \| \le \epsilon.
  \end{equation}
  But we have $\| H - \widetilde{V}^\dg H' \widetilde{V} \| = \| \widetilde{V} H \widetilde{V}^\dg - \widetilde{V}\widetilde{V}^\dg H' \widetilde{V}\widetilde{V}^\dg \| = \| \widetilde{V} H \widetilde{V}^\dg - H'_{\le \Delta} \| $, which matches the term $\|H'_{\le \Delta} -\widetilde{\cE}(H) \|$ in our definition, except that our encoding $\widetilde{\cE} (H)$ may be of the more general form $\widetilde{V}( H^{\oplus p} \oplus \bar{H}^{\oplus q})\widetilde{V}^\dg$.
  As discussed above, this is essential to enable e.g.\ complex Hamiltonians to be encoded as real Hamiltonians.

\item
  We insist that $\mathcal{E}$ is local, whereas~\cite{Bravyi-Hastings} deliberately does not impose any restriction on the isometry $V$, other than to say it should be sufficiently simple in practice.
  This enables us to find stronger implications of our notion of simulation for error-tolerance and computational complexity.
\end{enumerate}

We also remark that, although \cref{dfn:sim} requires simulation in the low-energy subspace, this can readily be generalised to other types of subspace, by replacing $P_{\le \Delta(H')}$ by a projector onto the subspace of interest.
However, some of the physical consequences of \cref{dfn:sim} later in this section do depend on the simulation being in the low-energy subspace.
All the simulations we construct will achieve this.

We now prove the previously promised claim that if the isometry $V$ used in an encoding approximately maps to the ground space of $H'$, there exists an isometry $\widetilde{V}$ close to $V$ which maps exactly to this ground space.
See~\cite{Bravyi-Hastings} for a similar result.

\begin{lemma} \label{tildeV}%
  Let $\cE:\cB(\cH)\to\cB(\cH')$ be a subspace encoding of the form $\cE(M)=V(M\ox P+\bar{M}\ox Q)V^{\dg}$, and let $\Pi$ be the projector onto a subspace $S\subseteq \mathcal{H}'$.
  If $\|\Pi-\cE(\1)\|<1$, then there exists an isometry $\widetilde{V}:\mathcal{H}\to\mathcal{H'}$ such that $\|\widetilde{V}-V\|\leq \sqrt{2} \|\Pi-\cE(\1)\|$ and the corresponding encoding $\widetilde{\cE}(M)=\widetilde{V}(M\ox P+\bar{M}\ox Q)\widetilde{V}^{\dg}$ satisfies $\widetilde{\cE}(\1)=\Pi$.
\end{lemma}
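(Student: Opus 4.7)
The plan is to reduce to the standard fact that two projectors at operator-norm distance less than $1$ can be rotated into each other by a unitary close to the identity, and then set $\widetilde V = UV$ where $U$ is this unitary. Writing $P_1 = \cE(\1) = V(\1\ox(P+Q))V^\dg$ (a projector, since $V|_{\mathrm{supp}(P+Q)}$ is an isometry) and $P_2 = \Pi$, the hypothesis $\|P_1 - P_2\| < 1$ forces the two projectors to have the same rank, so a unitary $U$ on $\cH'$ with $U P_1 U^\dg = P_2$ exists; the only real work is to produce one that is as close to the identity as promised.

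To construct $U$, I would use the Sz.-Nagy-style formula
\begin{equation}
  X := P_2 P_1 + (\1 - P_2)(\1 - P_1),
\end{equation}
a direct computation of which yields
\begin{equation}
  X^\dg X = X X^\dg = \1 - (P_1 - P_2)^2.
\end{equation}
Since $\|P_1 - P_2\| < 1$, $|X| := (\1 - (P_1-P_2)^2)^{1/2}$ is invertible, and the polar decomposition $U := X|X|^{-1}$ defines a unitary. One checks that $P_1$ commutes with $(P_1-P_2)^2$, hence with $|X|$, so from $P_2 X = P_2 P_1 = X P_1$ one gets $U P_1 U^\dg = P_2$, as required.

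For the norm bound, I would write $X = \tfrac12(X + X^\dg) + \tfrac12(X - X^\dg) = |X|^2 + iC$ where $C := \tfrac{1}{2i}(X - X^\dg)$; the identity $X X^\dg = |X|^2$ gives $C^2 = |X|^2(\1 - |X|^2) = |X|^2(P_1-P_2)^2$, and $C$ commutes with $|X|$. Therefore $U = |X| + iC|X|^{-1}$, and by commutativity
\begin{equation}
  (U - \1)^\dg (U - \1) = (|X| - \1)^2 + C^2 |X|^{-2} = (|X| - \1)^2 + (P_1 - P_2)^2.
\end{equation}
On the joint spectrum, where $|X|$ has eigenvalue $\mu = \sqrt{1-\lambda^2}$ opposite eigenvalue $\lambda$ of $P_1 - P_2$, this operator has eigenvalue $(\mu-1)^2 + (1-\mu^2) = 2(1-\mu)$, so
\begin{equation}
  \|U - \1\|^2 = 2\bigl(1 - \sqrt{1 - \|P_1 - P_2\|^2}\bigr) \le 2\|P_1 - P_2\|^2,
\end{equation}
giving $\|U - \1\| \le \sqrt 2 \|\Pi - \cE(\1)\|$.

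Finally I would set $\widetilde V := UV$. Then $\widetilde V$ is still an isometry on $\mathrm{supp}(P+Q)$, the encoding $\widetilde \cE(M) = \widetilde V(M\ox P + \bar M \ox Q)\widetilde V^\dg$ satisfies $\widetilde \cE(\1) = U\cE(\1)U^\dg = \Pi$, and
\begin{equation}
  \|\widetilde V - V\| = \|(U - \1) V\| \le \|U - \1\|\,\|V\| \le \sqrt 2 \|\Pi - \cE(\1)\|.
\end{equation}
The main (minor) obstacle is the computation of $\|U - \1\|$; everything else is essentially bookkeeping. One subtlety worth a sentence in the write-up is that $V$ need only be an isometry on $\mathrm{supp}(P+Q)$ rather than on all of $\cH\ox E$, so $\|V\| \le 1$ is still safe to use and the construction of $\widetilde V$ preserves exactly the same partial-isometry support structure.
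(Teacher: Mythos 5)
Your argument is correct and takes the same route as the paper: reduce to producing a unitary $U$ with $U\cE(\1)U^\dg = \Pi$ and $\|U-\1\|\le\sqrt{2}\,\|\Pi-\cE(\1)\|$, then set $\widetilde V = UV$. The only difference is that the paper delegates the existence and norm bound for $U$ to Jordan's lemma, citing Lemma~3 of Bravyi--Hastings, whereas you supply a self-contained construction via the Sz.-Nagy operator $X = \Pi\,\cE(\1) + (\1-\Pi)(\1-\cE(\1))$, its polar decomposition, and a direct spectral computation of $\|U-\1\|^2 = 2\bigl(1-\sqrt{1-\|\Pi-\cE(\1)\|^2}\bigr) \le 2\|\Pi-\cE(\1)\|^2$; both give the same constant.
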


\begin{proof}
Recall that $\cE(\1)$ is a projector.
  If $\|\Pi-\cE(\1)\|<1$, then $\rank(\Pi)=\rank(\cE(\1))$ and hence there exists a unitary $U$ on $\mathcal{H}'$ such that $\Pi=U\cE(\1)U^\dg$.
  One can show using Jordan's lemma that $U$ can be chosen to obey the bound $\|U-\1\|\leq \sqrt{2}\|\Pi-\cE(\1)\|$; the short argument is contained in the proof of Lemma~3 in~\cite{Bravyi-Hastings}.

  Defining $\widetilde{V}=UV$, we have $\widetilde{\cE}(\1)=U\cE(\1)U^\dg=\Pi$ and
  \begin{equation}
    \|\widetilde{V}-V\|\leq\|U-\1\|\|V\|\leq \sqrt{2}\|\Pi-\cE(\1)\|
  \end{equation}
  as desired.
\end{proof}

Importantly, the notion of simulation we use is transitive: if $A$ simulates $B$, and $B$ simulates $C$, then $A$ simulates $C$.
We now formalise this as a lemma; a very similar result to this was shown by Bravyi and Hastings~\cite{Bravyi-Hastings}, but as our encodings are somewhat more general to those they consider we include a proof.

\begin{lemma}
  \label{lem:compsim}
  Let $A$, $B$, $C$ be Hamiltonians such that $A$ is a $(\Delta_A,\eta_A,\epsilon_A)$-simulation of $B$ and $B$ is a $(\Delta_B,\eta_B,\epsilon_B)$-simulation of $C$.
  Suppose $\epsilon_A,\epsilon_B \leq \|C\|$ and $\Delta_B\geq\|C\|+2\epsilon_A+\epsilon_B$.
  Then $A$ is a $(\Delta,\eta,\epsilon)$-simulation of $C$, where $\Delta \geq \Delta_B-\epsilon_A,$
  \begin{equation}
  \eta=\eta_A+\eta_B+O\left(\frac{\epsilon_A}{\Delta_B-\|C\|+\epsilon_B}\right)\quad \text{ and }\quad \epsilon =\epsilon_A + \epsilon_B+O\left(\frac{\epsilon_A\|C\|}{\Delta_B-\|C\|+\epsilon_B}\right).
\end{equation}
\end{lemma}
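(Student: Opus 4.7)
The natural candidate is to take $\cE := \cE_A \circ \cE_B$ as the local encoding realising the $A$-to-$C$ simulation (local by \cref{lem:chaining}) and $\widehat{\cE} := \widetilde{\cE}_A \circ \widetilde{\cE}_B$ as the auxiliary tilded encoding; denote their isometries by $V$ and $\widehat{V}$. A direct triangle-inequality calculation using the explicit composition formula from the proof of \cref{lem:chaining} gives $\|V - \widehat{V}\| \le \eta_A + \eta_B$. The subtlety is that $\widehat{\cE}(\1) = \widetilde{\cE}_A(P_{\le \Delta_B(B)}) =: \widehat{\Pi}$ is only close to, not equal to, $P_{\le \Delta(A)}$ for the target cutoff $\Delta := \Delta_B - \epsilon_A$; I will correct this using \cref{tildeV}.

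\textbf{Closeness of the projectors.} The main technical step is to show $\|\widehat{\Pi} - P_{\le \Delta(A)}\| = O(\epsilon_A/(\Delta_B - \|C\|))$. The operators $A_{\le \Delta_A}$ and $\widetilde{\cE}_A(B)$ are both supported on $S_{\le \Delta_A(A)}$, vanish on its complement, and differ in operator norm by at most $\epsilon_A$. Since $\widetilde{\cE}_A(B)$ reproduces the spectrum of $B$ (with multiplicity) on its support, and since $B$ simulating $C$ forces $\|B_{\le \Delta_B}\| \le \|C\|+\epsilon_B$, the operator $\widetilde{\cE}_A(B)$ has a spectral gap between eigenvalues $\le \|C\|+\epsilon_B$ and eigenvalues $\ge \Delta_B$. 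Weyl's inequality transfers this gap (shrunk by $2\epsilon_A$) to $A$'s spectrum, which is non-trivial by the assumption $\Delta_B \ge \|C\|+2\epsilon_A+\epsilon_B$. A Davis--Kahan $\sin\theta$ bound with cutoff in the common gap then yields the stated projector estimate.

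\textbf{Assembling the simulation.} I will invoke \cref{tildeV} on $\widehat{\cE}$ with target projector $P_{\le \Delta(A)}$, obtaining an isometry $\widetilde{V}$ with $\|\widetilde{V} - \widehat{V}\| \le \sqrt{2}\,\|\widehat{\Pi} - P_{\le \Delta(A)}\|$ and an encoding $\widetilde{\cE}$ of the correct form; combining with the strategy step gives the claimed $\eta$. For the Hamiltonian bound, I split
\begin{equation*}
\|A_{\le \Delta} - \widetilde{\cE}(C)\| \le \|A_{\le \Delta} - \widetilde{\cE}_A(B_{\le \Delta_B})\| + \|\widetilde{\cE}_A(B_{\le \Delta_B}) - \widehat{\cE}(C)\| + \|\widehat{\cE}(C) - \widetilde{\cE}(C)\|.
\end{equation*}
The middle term reduces via norm preservation (\cref{lem:closeencs}) to $\|B_{\le \Delta_B} - \widetilde{\cE}_B(C)\| \le \epsilon_B$; the third is $\le 2\|C\|\,\|\widetilde{V}-\widehat{V}\|$ by \cref{lem:closeencs}, which contributes an $O(\epsilon_A\|C\|/\text{gap})$ term. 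For the first, multiplicativity of $\widetilde{\cE}_A$ gives $\widetilde{\cE}_A(B_{\le \Delta_B}) = \widetilde{\cE}_A(B)\,\widehat{\Pi}$, so
\begin{equation*}
A_{\le \Delta} - \widetilde{\cE}_A(B)\,\widehat{\Pi} = (A_{\le \Delta_A} - \widetilde{\cE}_A(B))\,\widehat{\Pi} + A\,(P_{\le \Delta(A)} - \widehat{\Pi}),
\end{equation*}
whose first summand has norm $\le \epsilon_A$.

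\textbf{Main obstacle.} Bounding the remaining term $A(P_{\le \Delta(A)} - \widehat{\Pi})$ is the most delicate point and is what produces the $\|C\|$ (rather than $\|A\|$) in the claimed error. A naive $\|A\|\cdot\|P_{\le \Delta(A)} - \widehat{\Pi}\|$ estimate is ruinous, since $\|A\|$ may scale with $\Delta_A$. I will instead use the $\sin\theta$ decomposition $P_{\le \Delta(A)} - \widehat{\Pi} = P_{\le \Delta(A)}(\1 - \widehat{\Pi}) - (\1 - P_{\le \Delta(A)})\widehat{\Pi}$, combined with the spectral-gap consequence $\|A\,P_{\le \Delta(A)}\| \le \|C\| + O(\epsilon)$ (since every eigenvalue of $A$ below $\Delta$ in fact lies below $\|C\|+\epsilon_A+\epsilon_B$) and $\|A\,\widehat{\Pi}\| \le \|C\| + O(\epsilon)$ (using $A\,\widehat{\Pi} = \widetilde{\cE}_A(B_{\le \Delta_B}) + O(\epsilon_A)$ with $\|\widetilde{\cE}_A(B_{\le \Delta_B})\| \le \|C\|+\epsilon_B$). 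Since $\|P_{\le \Delta(A)}(\1-\widehat{\Pi})\|$ and $\|(\1-P_{\le \Delta(A)})\widehat{\Pi}\|$ are both bounded by $\|P_{\le \Delta(A)} - \widehat{\Pi}\|$ (a standard consequence of the principal-angle decomposition for two orthogonal projectors), these observations assemble into the $O(\epsilon_A\|C\|/(\Delta_B - \|C\|))$ contribution that completes the $\epsilon$ bound.
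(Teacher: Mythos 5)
Your overall route is essentially the paper's: compose the local encodings, rotate the composed encoding onto the exact low-energy subspace of $A$ (you use Weyl plus a Davis--Kahan argument plus \cref{tildeV}; the paper cites the proof of Lemma~3 of \cite{Bravyi-Hastings}, which amounts to the same construction of a unitary $U$ with $\|U-\1\|=O(\epsilon_A/\Delta_G)$), and split the Hamiltonian error by the triangle inequality. The middle and third terms of your three-way split are handled correctly (norm-preservation from \cref{lem:closeencs} and the closeness of $\widehat{V}$ to the corrected isometry).

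There is, however, a gap in the ``Main obstacle'' step where you bound $\|A(P_{\le\Delta(A)}-\widehat{\Pi})\|$. The first summand $AP_{\le\Delta(A)}(\1-\widehat{\Pi})$ does factor as $(AP_{\le\Delta(A)})\cdot(P_{\le\Delta(A)}(\1-\widehat{\Pi}))$, so it inherits both $\|A_{\le\Delta}\|\le\|C\|+O(\epsilon)$ and the small projector-overlap factor. But the second summand $A(\1-P_{\le\Delta(A)})\widehat{\Pi}=(\1-P_{\le\Delta(A)})A\widehat{\Pi}$ cannot be bounded by simply multiplying $\|A\widehat{\Pi}\|\le\|C\|+O(\epsilon)$ against $\|(\1-P_{\le\Delta(A)})\widehat{\Pi}\|$: the operator $A\widehat{\Pi}$ is \emph{not} left-supported by $\widehat{\Pi}$ (since $A$ and $\widehat{\Pi}$ do not commute), so there is no factorisation that inserts $(\1-P_{\le\Delta(A)})\widehat{\Pi}$ in front of it. To make this step rigorous you must substitute $A\widehat{\Pi}=\widetilde{\cE}_A(B_{\le\Delta_B})+(A_{\le\Delta_A}-\widetilde{\cE}_A(B))\widehat{\Pi}$; the first piece \emph{is} left-supported by $\widehat{\Pi}$ and yields the wanted $O(\|C\|\,\|P_{\le\Delta(A)}-\widehat{\Pi}\|)$, but the second piece contributes an additional additive $\epsilon_A$ that does \emph{not} acquire the gap denominator. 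Executed as written, your argument therefore produces $\epsilon=2\epsilon_A+\epsilon_B+O(\epsilon_A\|C\|/(\Delta_B-\|C\|+\epsilon_B))$ rather than the stated $\epsilon_A+\epsilon_B+O(\cdots)$; when $\Delta_B\gg\|C\|$ the extra $\epsilon_A$ is not absorbed by the $O(\cdot)$ term. The cure is to absorb the whole $\epsilon_A$ error into a \emph{single} sandwiched term: for example, decompose $A_{\le\Delta}-\widetilde{\cE}_A(B_{\le\Delta_B})=P_{\le\Delta(A)}(A-\widetilde{\cE}_A(B))\widehat{\Pi}+AP_{\le\Delta(A)}(\1-\widehat{\Pi})-(\1-P_{\le\Delta(A)})\widetilde{\cE}_A(B_{\le\Delta_B})$, whose first summand is still $\le\epsilon_A$ and whose latter two factor cleanly through the projector difference multiplied by $\|A_{\le\Delta}\|$ and $\|B_{\le\Delta_B}\|\le\|C\|+\epsilon_B$ respectively. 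This is exactly the bookkeeping the paper achieves by writing $U=\1+M$ and sandwiching the error between $P_{\le\Delta(A)}$ and $P_{\le\Delta_B(B')}$ before expanding.
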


Note that any good simulation should satisfy $\Delta_B \gg \|C\|$ (see \cref{prop:partition} below for one reason why) in which case the condition on $\Delta_B$ is easily satisfied and we have $\eta=\eta_A+\eta_B+o(1)$ and $\epsilon \approx \epsilon_A+\epsilon_B$.

\begin{proof}
  We closely follow the argument of~\cite[Lemma 3]{Bravyi-Hastings}.
  Let $\mathcal{E}_A$ be the local encoding corresponding to the simulation of $B$ with $A$, and let $\mathcal{E}_B$ be the local encoding corresponding to the simulation of $C$ with $B$.
  We will use the composed map $\mathcal{E} = \mathcal{E}_A \circ \mathcal{E}_B$ to simulate $C$ with $A$.
  By \cref{lem:chaining}, this map is indeed a local encoding.

  Let $V_A$ and $V_B$ be the isometries in the definition of $\mathcal{E}_A$ and $\mathcal{E}_B$.
  Recall from the definition of simulation that there exist isometries $\widetilde{V}_A$, $\widetilde{V}_B$ such that $\|\widetilde{V}_A - V_A\| \le \eta_A$, $\|\widetilde{V}_B - V_B\| \le \eta_B$, $\widetilde{V}_A\widetilde{V}_A^\dg = P_{\le\Delta_A(A)}$, $\widetilde{V}_B\widetilde{V}_B^\dg = P_{\le\Delta_B(B)}$.
  We define the encodings $\widetilde{\mathcal{E}}_A$, $\widetilde{\mathcal{E}}_B$ to be the encodings obtained by replacing $V_A$ with $\widetilde{V}_A$ and $V_B$ with $\widetilde{V}_B$.
  Note that composing these maps to obtain $\widetilde{\mathcal{E}}_A \circ \widetilde{\mathcal{E}}_B$ makes sense ($\widetilde{\mathcal{E}}_B$ maps $C$ to the low-energy part of $B$, and $\widetilde{\mathcal{E}}_A$ maps all of $B$ to the low-energy part of $A$).

  Let $N$ be the dimension of $S_{\leq \Delta_B(B)}$.
  By \cref{lem:eigenvalues}, the $N$th smallest eigenvalue of $B$ is bounded by $\lambda_N(B)\leq \|C\| +\epsilon_B$.
  Therefore the condition $\Delta_B\geq\|C\|+2\epsilon_A+\epsilon_B$ allows us to put a lower bound on $\Delta_G$, the spectral gap between the $N$th and $(N+1)$th eigenvalues of $B$:
  \begin{equation}
    \Delta_G=\lambda_{N+1}(B)-\lambda_{N}(B)>\Delta_B-\|C\|-\epsilon_B\geq 2\epsilon_A.
  \end{equation}
  Let $\widetilde{\mathcal{E}}_A(B) = \widetilde{V}_A (B^{\oplus p}\oplus \bar{B}^{\oplus q})\widetilde{V}_A^\dg$.
  By \cref{lem:eigenvalues}, $\lambda_{N(p+q)}(A) \le \lambda_{N}(B)+\epsilon_A$ and $\lambda_{N(p+q)+1}(A) \geq \lambda_{N+1}(B)-\epsilon_A$, so the condition $\Delta_G>2\epsilon_A$ implies that there exists $\Delta$ such that $\lambda_{N(p+q)}(A)< \Delta < \lambda_{N(p+q)+1}(A)$.
  Furthermore, since $\lambda_{N(p+q)+1}(A)\geq \lambda_{N+1}(B)-\epsilon_A>\Delta_B-\epsilon_A$, we can choose $\Delta$ to be at least as big as $\Delta_B-\epsilon_A$.

  Let $B'=B^{\oplus p}\oplus \bar{B}^{\oplus q}$, so we can write $\widetilde{\mathcal{E}}_A(B) = \widetilde{V}_A B' \widetilde{V}_A^\dg$.
  It is shown in the proof of~\cite[Lemma 3]{Bravyi-Hastings} that there exists a unitary operator $U$ such that
  \begin{equation}\label{eq:exactu}
    S_{\le \Delta(A)} = U \widetilde{V}_A S_{\le \Delta_B(B')}
  \end{equation}
  and $\|U - \1\| \le 2 \sqrt{2} \epsilon_A/\Delta_G$.
  That is, $U\widetilde{V}_A$ maps the low-energy subspace of $B'$ precisely onto the low-energy subspace of $A$.
  Note that the existence of such a $U$ is nontrivial, as all we know in advance from the fact that $A$ simulates $B$ is that $\widetilde{V}_A$ maps all of $B'$ into the less low-energy subspace $S_{\le \Delta_A(A)}$.

  The composed approximate encoding in the simulation of $C$ by $A$ will be $\widetilde{\mathcal{E}}(M) = U\widetilde{\mathcal{E}}_A(\widetilde{\mathcal{E}}_B(M))U^\dg$.
  By \cref{eq:exactu}, $\widetilde{\mathcal{E}}$ maps the Hilbert space of $C$ onto $S_{\le \Delta(A)}$.
  The overall isometry $\widetilde{V}$ in the encoding $\widetilde{\mathcal{E}}_A \circ \widetilde{\mathcal{E}}_B$ is obtained from the isometry $V$ in the encoding $\mathcal{E}$ by replacing $V_A$ with $\widetilde{V}_A$ and $V_B$ with $\widetilde{V}_B$.
  By the triangle inequality and \cref{lem:chaining}, $\|V - \widetilde{V}\| \le \eta_A + \eta_B$, so
  \begin{equation}
    \eta=\|V - U\widetilde{V}\| \le \eta_A + \eta_B + O(\epsilon_A\Delta_G^{-1}).
  \end{equation}
  Therefore, $\mathcal{E}$ meets \cref{dfn:sim:encoding} from \cref{dfn:sim} for simulation of $C$ with $A$.

  It remains to show \cref{dfn:sim:Hamiltonian}.
  We aim to bound $\|A_{\le \Delta} - U \widetilde{\mathcal{E}}_A(\widetilde{\mathcal{E}}_B(C))U^\dg \|$, which, by the triangle inequality, is upper-bounded by
  \begin{equation} \label{eq:terms}
    \begin{split}
    &\|A_{\le \Delta} - U \widetilde{\mathcal{E}}_A(\widetilde{\mathcal{E}}_B(C))U^\dg \| \\
    &\qquad \leq \|A_{\le \Delta} - U \widetilde{\mathcal{E}}_A(B_{\le \Delta_B})U^\dg \|
         + \|U \widetilde{\mathcal{E}}_A(B_{\le \Delta_B})U^\dg
             - U \widetilde{\mathcal{E}}_A(\widetilde{\mathcal{E}}_B(C))U^\dg \|.
    \end{split}
  \end{equation}
  The second term in \cref{eq:terms} is precisely equal to $\|B_{\le \Delta_B} - \widetilde{\mathcal{E}}_B(C) \|$.
  By the assumption of the present lemma that $B$ is a $(\Delta_B,\eta_B,\epsilon_B)$-simulation of $C$, this term is upper-bounded by $\epsilon_B$.
  In order to deal with the first term in \cref{eq:terms}, we rewrite it as
  \begin{equation}
    \|  A_{\le \Delta} U \widetilde{V}_A - U \widetilde{V}_A B'_{\le \Delta_B}  \|.
  \end{equation}
  We write $U= \1 + M$, so
  \begin{align}
    A_{\le \Delta} U \widetilde{V}_A - U \widetilde{V}_A B'_{\le \Delta_B}
    &=  P_{\le \Delta(A)} ( A_{\le \Delta} U \widetilde{V}_A - U \widetilde{V}_A B'_{\le \Delta_B})  P_{\le \Delta_B(B')} \\
    &= P_{\le \Delta(A)} ( A\widetilde{V}_A - \widetilde{V}_A B')  P_{\le \Delta_B(B')}\\
    &\qquad+ A_{\le \Delta} M \widetilde{V}_A P_{\le \Delta_B(B')} - P_{\le \Delta(A)} M \widetilde{V}_A B'_{\le \Delta_B}.
  \end{align}
  For the first part,
  \begin{equation}
    \|P_{\le \Delta(A)} ( A\widetilde{V}_A - \widetilde{V}_A B')  P_{\le \Delta_B(B')}\|
    \leq \| A\widetilde{V}_A - \widetilde{V}_A B'\| = \| A_{\leq \Delta_A}- \widetilde{V}_A B'\widetilde{V}_A^\dg\|
    \leq \epsilon_A
  \end{equation}
  by simulation of $B$ with $A$.
  The second part is bounded by $\|M\| \|A_{\le \Delta}\|$ and the third by $\|M\| \|B'_{\le \Delta_B}\|$.
  We have $\|M\| = O(\epsilon_A \Delta_G^{-1})$ by \cref{eq:exactu}.
  By simulation of $B$ with $A$ and \cref{eq:exactu}, $\|A_{\le \Delta}\| \le \|B'_{\le \Delta_B}\| + \epsilon_A$; by simulation of $C$ with $B$, $\|B'_{\le \Delta_B}\|=\|B_{\le \Delta_B}\| \le \|C\|+\epsilon_B$.
  Combining all the terms, we get the overall bound that
  \begin{equation}
    \|A_{\le \Delta} - U \widetilde{\mathcal{E}}_A(\widetilde{\mathcal{E}}_B(C))U^\dg  \|
    \leq \epsilon_A+\epsilon_B +2\sqrt{2}\epsilon_A \Delta_G^{-1}(\|C\| +\epsilon_A+2 \epsilon_B ).
  \end{equation}
  Since $\epsilon_A,\epsilon_B \leq \|C\|$ and $\Delta_B\leq \Delta_G+\|C\|+\epsilon_B$, we have that the overall error $\epsilon$ is
  \begin{equation}
    \epsilon =\epsilon_A+\epsilon_B+O\left(\frac{\epsilon_A\|C\|}{\Delta_B-\|C\|+\epsilon_B}\right)
  \end{equation}
  as claimed.
\end{proof}

Later we will see that certain families of Hamiltonians are extremely powerful simulators: they can simulate any other Hamiltonian.

\begin{definition}\label{dfn:universal}
  We say that a family of Hamiltonians is a \emph{universal simulator}, or is \emph{universal}, if \emph{any} (finite-dimensional) Hamiltonian can be simulated by a Hamiltonian from the family.
  We say that the universal simulator is \emph{efficient} if the simulation is efficient for all local Hamiltonians.
\end{definition}
Although we restrict to finite-dimensional Hamiltonians in this definition, infinite-dimensional cases can be treated via standard discretisation techniques.
Indeed, we will see one such example later.
We restrict our notion of efficiency to local Hamiltonians, as this is a natural class of Hamiltonians which have efficient descriptions themselves.

First, however, we will show that the definition of simulation we have arrived at has some interesting consequences.


\subsection{Simulation and static properties}

First we show that Hamiltonian simulation does indeed approximately preserve important physical properties of the simulated Hamiltonian.
Although this is effectively immediate for perfect simulations from the definition of encodings, for approximate simulations we need to check how the level of inaccuracy in the simulation translates into a level of inaccuracy in the property under consideration.
We first do this for eigenvalues; essentially the same result was shown in~\cite{Bravyi-Hastings} but we include a proof for completeness.

\begin{lemma}\label{lem:eigenvalues}
  Let $H$ act on $(\C^d)^{\ox n}$, let $H'$ be a ($\Delta,\eta,\epsilon$)-simulation of $H$, and let $\lambda_i(H)$ (resp.\ $\lambda_i(H')$) be the $i$'th smallest eigenvalue of $H$ (resp.\ $H'$).
  Then for all $1 \le i \le d^n$ and all $j$ such that $(i-1)(p+q)+1\le j \le i(p+q)$, $|\lambda_i(H) - \lambda_{j}(H')| \le \epsilon$ (where the integers $p,q$ are those appearing in simulation's encoding).
\end{lemma}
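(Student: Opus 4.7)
The approach is a direct application of Weyl's perturbation inequality for Hermitian operators, applied on the low-energy subspace $S_{\le \Delta(H')}$. By \cref{dfn:sim}, there is an encoding $\widetilde{\cE}(H) = \widetilde{V}(H^{\oplus p} \oplus \bar{H}^{\oplus q})\widetilde{V}^\dg$ with $\widetilde{V}$ an isometry onto $S_{\le \Delta(H')}$, satisfying $\|H'_{\le \Delta} - \widetilde{\cE}(H)\| \le \epsilon$. Both operators are Hermitian and have support contained in $S_{\le \Delta(H')}$, which has dimension $N := d^n(p+q)$ (matching the domain of $\widetilde{V}$).

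First I would identify the spectra of the two operators restricted to $S_{\le \Delta(H')}$. Because $\widetilde{V}$ is an isometry, $\widetilde{\cE}(H)$ acting on its support has the same spectrum (with multiplicity) as $H^{\oplus p} \oplus \bar{H}^{\oplus q}$. Since $H$ is Hermitian, its eigenvalues are real, and so $\bar{H}$ has exactly the same spectrum as $H$; hence the restricted spectrum of $\widetilde{\cE}(H)$ consists of each eigenvalue $\lambda_i(H)$ repeated $p+q$ times. Sorted in increasing order, $\lambda_i(H)$ occupies precisely positions $(i-1)(p+q)+1$ through $i(p+q)$. On the other hand, $H'_{\le \Delta}$ restricted to $S_{\le \Delta(H')}$ has spectrum $\{\lambda_1(H'), \dots, \lambda_N(H')\}$, because by definition $S_{\le \Delta(H')}$ is spanned by the eigenvectors of $H'$ with the $N$ smallest eigenvalues.

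Next I would observe that restricting to the common invariant subspace $S_{\le \Delta(H')}$ does not increase the operator norm, so $\|(H'_{\le \Delta} - \widetilde{\cE}(H))|_{S_{\le \Delta(H')}}\| \le \epsilon$. Weyl's inequality applied to these two $N\times N$ Hermitian matrices then gives, for every $1 \le j \le N$,
\begin{equation}
  |\lambda_j(H'_{\le \Delta}|_{S_{\le \Delta(H')}}) - \lambda_j(\widetilde{\cE}(H)|_{S_{\le \Delta(H')}})| \le \epsilon.
\end{equation}
Combining with the spectral identifications above yields $|\lambda_j(H') - \lambda_i(H)| \le \epsilon$ whenever $(i-1)(p+q)+1 \le j \le i(p+q)$, as claimed.

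There is essentially no hard step: the result is a clean consequence of the definitions once one observes that the support of both operators coincides and that the doubling of eigenvalues on the ancilla system accounts exactly for the factor $p+q$ in the index range. The only point requiring a moment's thought is the remark that $\bar{H}$ inherits the real spectrum of $H$, which is what allows the multiset of eigenvalues of $H^{\oplus p} \oplus \bar{H}^{\oplus q}$ to be described simply as $\spec(H)$ with multiplicity $p+q$.
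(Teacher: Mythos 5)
Your proof is correct and follows essentially the same route as the paper's: identify the spectrum of $\widetilde{\cE}(H)$ as that of $H$ with multiplicity $p+q$ (using that $\bar{H}$ has the same real spectrum as $H$), note that $H'_{\le\Delta}$ restricted to $S_{\le\Delta(H')}$ carries the $N$ smallest eigenvalues of $H'$, and apply Weyl's inequality together with $\|H'_{\le\Delta}-\widetilde{\cE}(H)\|\le\epsilon$. You are somewhat more explicit than the paper in restricting both operators to their common support $S_{\le\Delta(H')}$ before invoking Weyl's inequality, which is the right way to make the counting of eigenvalue indices unambiguous (otherwise the zero eigenvalues on the orthogonal complement would interfere with the ordering); but the underlying argument is the same.
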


\begin{proof}
  For any $i,j$ satisfying the above conditions, $\lambda_i(H)=\lambda_j(\cE(H))$ by the definition of an encoding.
  By \cref{dfn:sim:encoding} of \cref{dfn:sim}, the spectrum of $\widetilde{\cE}(H)$ is the same as the spectrum of $\cE(H)$.
  By \cref{dfn:sim:Hamiltonian} of \cref{dfn:sim} and Weyl's inequality $|\lambda_j(\widetilde{\cE}(H)) - \lambda_j(H')| \le \|\widetilde{\cE}(H)-H'_{\le \Delta}\|$, each eigenvalue differs from its counterpart by at most $\epsilon$.
\end{proof}

Next we verify that simulation approximately preserves partition functions.

\begin{proposition}
  \label{prop:partition}
  Let $H'$ on $m$ $d'$-dimensional qudits be a ($\Delta,\eta,\epsilon$)-simulation of $H$ on $n$ $d$-dimensional qudits, with $\|H'_{\leq \Delta}-\widetilde{\cE}(H)\|\leq \epsilon$ for some encoding $\widetilde{\cE}(H)=\widetilde{V}(H^{\oplus p }\oplus \bar{H}^{\oplus q})\widetilde{V}^\dg$.
  Then the relative error in the simulated partition function evaluated at $\beta$ satisfies
  \begin{equation}
    \frac{|\mathcal{Z}_{H'}(\beta)-(p+q)\mathcal{Z}_H(\beta)|}{(p+q)\mathcal{Z}_H(\beta)}
    \le \frac{(d')^m e^{-\beta\Delta}}{(p+q)d^n e^{-\beta \|H\|}}+(e^{\epsilon\beta}-1) .
  \end{equation}
\end{proposition}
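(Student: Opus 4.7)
The plan is to split the partition function of $H'$ into its contribution from $S_{\le\Delta(H')}$ and its contribution from the complementary high-energy subspace, and bound each piece separately.

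First I would isolate the high-energy tail. Writing $N=(d')^m$ for the total dimension, the complementary subspace has dimension at most $N$, and every eigenvalue of $H'$ contributing to it is strictly larger than $\Delta$. Therefore
\begin{equation}
  \sum_{j : \lambda_j(H')>\Delta} e^{-\beta\lambda_j(H')} \;\le\; (d')^m e^{-\beta\Delta},
\end{equation}
which already accounts for the first term in the claimed bound.

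Next I would handle the low-energy piece. Let $K=(p+q)d^n$ be the dimension of $S_{\le\Delta(H')}$ (which equals the rank of $\widetilde{\cE}(\1)=P_{\le\Delta(H')}$ since $\widetilde{V}$ is an isometry and $\mathrm{rank}(P+Q)=p+q$). The key input is \cref{lem:eigenvalues}: since $H'$ is a $(\Delta,\eta,\epsilon)$-simulation of $H$, for $1\le i\le d^n$ and $(i-1)(p+q)+1\le j\le i(p+q)$, we have $|\lambda_j(H')-\lambda_i(H)|\le\epsilon$. Exponentiating gives $e^{-\beta\lambda_j(H')}\le e^{\beta\epsilon}e^{-\beta\lambda_i(H)}$ for each such pair, so summing pairs off $(p+q)$ copies of each eigenvalue of $H$:
\begin{equation}
  \sum_{j=1}^{K} e^{-\beta\lambda_j(H')} \;\le\; e^{\beta\epsilon}(p+q)\,\mathcal{Z}_H(\beta),
\end{equation}
and analogously the lower bound $e^{-\beta\epsilon}(p+q)\mathcal{Z}_H(\beta)$. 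Combining yields
\begin{equation}
  \Bigl|\sum_{j=1}^{K} e^{-\beta\lambda_j(H')} - (p+q)\mathcal{Z}_H(\beta)\Bigr| \;\le\; (e^{\beta\epsilon}-1)(p+q)\mathcal{Z}_H(\beta),
\end{equation}
using that $e^{\beta\epsilon}-1\ge 1-e^{-\beta\epsilon}$ for $\beta,\epsilon\ge 0$.

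Finally I would combine the two bounds via the triangle inequality, divide through by $(p+q)\mathcal{Z}_H(\beta)$, and use the trivial bound $\mathcal{Z}_H(\beta)\ge d^n e^{-\beta\|H\|}$ (since each of the $d^n$ eigenvalues of $H$ is at most $\|H\|$) to convert $1/\mathcal{Z}_H(\beta)$ into the factor $e^{\beta\|H\|}/d^n$ appearing in the statement. Nothing in this argument is delicate; the only point that requires care is correctly identifying $K=(p+q)d^n$ as the dimension of the encoded low-energy subspace so that the pairing of eigenvalues from \cref{lem:eigenvalues} matches up with the summation over $j=1,\dots,K$. The high-energy tail and the Gibbs-factor comparison are then immediate.
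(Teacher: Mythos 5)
Your proof is correct and takes essentially the same approach as the paper: you split $\mathcal{Z}_{H'}(\beta)$ into its contribution from the low-energy subspace $S_{\le\Delta(H')}$ and its high-energy tail, bound the tail by $(d')^m e^{-\beta\Delta}$, use \cref{lem:eigenvalues} to pair up the $(p+q)d^n$ low-lying eigenvalues of $H'$ with the $(p+q)$-fold copies of the spectrum of $H$ to within $\epsilon$, and finish with $\mathcal{Z}_H(\beta)\ge d^n e^{-\beta\|H\|}$. The only cosmetic difference is that the paper organises the triangle inequality by first comparing $\tr(e^{-\beta H'})$ to $\tr(e^{-\beta H'|_{\le\Delta}})$ and then to $\tr(e^{-\beta\widetilde{\cE}(H)}|_S)$, but this is the same calculation in slightly different clothes.
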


\begin{proof}
  Let $S$ be the low-energy subspace of $H'$, $S=\operatorname{Im}(\widetilde{V})$.
  We have
  \begin{equation}
  (p+q)\mathcal{Z}_{H}(\beta) = (p+q)\tr(e^{-\beta H}) = \tr(e^{-\beta \widetilde{\cE}(H)}|_S)
  \end{equation}
   and hence
  \begin{align}
   & \frac{|\mathcal{Z}_{H'}(\beta)-(p+q)\mathcal{Z}_{H}(\beta) |}{(p+q)\mathcal{Z}_{H}(\beta) }\\
   &  = \frac{|\tr(e^{-\beta H'})-\tr(e^{-\beta \widetilde{\cE}(H)}|_S)|}{\tr(e^{-\beta \widetilde{\cE}(H)}|_S)} \\
    & \leq \frac{|\tr(e^{-\beta H'})-\tr(e^{-\beta H'|_{\leq \Delta}})|}{(p+q)\tr(e^{-\beta H})}
      + \frac{|\tr(e^{-\beta H'|_{\leq \Delta}})
      - \tr(e^{-\beta \widetilde{\cE}(H)|_S})|}{\tr(e^{-\beta \widetilde{\cE}(H)}|_S)}.
  \end{align}
  For the first term, the numerator is upper-bounded by $(d')^m e^{-\beta\Delta}$, whereas in the denominator $\tr(e^{-\beta H})$ is lower-bounded by $d^n e^{-\beta \|H\|}$.
  For the second term, we write $\lambda_k$ for the $k$'th eigenvalue of $H$ (in nonincreasing order), and $\lambda_k+\epsilon_k$ for the $k$'th eigenvalue of $H'|_{\leq \Delta}$ (in the same order), and have
  \begin{equation}
    |\tr(e^{-\beta H'|_{\leq \Delta}})-\tr(e^{-\beta \widetilde{\cE}(H)|_S})|
    \leq \sum_k |e^{-\beta(\lambda_k + \epsilon_k)} - e^{-\beta \lambda_k}|
    = \sum_k e^{-\beta \lambda_k} |e^{-\beta \epsilon_k}-1|.
  \end{equation}
  By \cref{lem:eigenvalues}, $|\epsilon_k|\leq \epsilon$ for all $k$, so we have $|e^{-\beta \epsilon_k}-1| \leq e^{\beta \epsilon}-1$, and thus the relative error is upper-bounded by
  \begin{equation}
    \frac{(d')^m e^{-\beta\Delta}}{(p+q)d^n e^{-\beta \|H\|}}+(e^{\epsilon\beta}-1)
  \end{equation}
  as claimed.
\end{proof}
We remark that if we choose $\Delta \gg \|H\| + (m \log d' - n \log d - \log(p+q))/\beta$ and $\epsilon \ll 1/\beta$ then this relative error tends to zero.
All the simulations we construct allow us to choose $\Delta \gg m-n$, so these scalings are possible.


\subsection{Simulation and time-evolution}

We showed in \cref{prop:encodingswork} that encodings allow perfect simulation of time-evolution.
We now confirm that this holds for simulations too, up to a small approximation error.

\begin{proposition}
\label{prop:time-evolution}
  Let $H'$ be a ($\Delta,\eta,\epsilon)$-simulation of $H$ with corresponding encoding $\cE=V(M\ox P +\bar{M} \ox Q)V^\dg$.
  Then for any density matrix $\rho'$ in the encoded subspace, so that $\cE(\1)\rho'=\rho'$,
  \begin{equation}
    \|e^{-iH't}\rho'e^{iH't}-e^{-i\cE(H)t}\rho'e^{i\cE(H)t}\|_1\leq 2\epsilon t +4 \eta
  \end{equation}
\end{proposition}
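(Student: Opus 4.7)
The plan is to use the triangle inequality with two natural intermediates: the Hamiltonian $\widetilde{\cE}(H)$ from \cref{dfn:sim:encoding} of \cref{dfn:sim}, whose isometry $\widetilde V$ maps exactly onto the low-energy subspace $S_{\le\Delta(H')}$, and the projected state $\tilde\rho := \widetilde\cE(\1)\rho'\widetilde\cE(\1)$. Writing $U_1 := e^{-iH't}$, $U_2 := e^{-i\cE(H)t}$, and $U_3 := e^{-i\widetilde\cE(H)t}$, this yields the bound
\begin{equation*}
\|U_1\rho' U_1^\dagger - U_2\rho' U_2^\dagger\|_1 \;\le\; T_1 + T_2 + T_3,
\end{equation*}
where $T_1 := \|U_1(\rho'-\tilde\rho)U_1^\dagger\|_1$, $T_2 := \|U_1\tilde\rho U_1^\dagger - U_3\tilde\rho U_3^\dagger\|_1$, and $T_3 := \|U_3\tilde\rho U_3^\dagger - U_2\rho' U_2^\dagger\|_1$.

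For $T_1$, unitary invariance of the trace norm gives $T_1 = \|\rho'-\tilde\rho\|_1$. The hypothesis $\cE(\1)\rho'=\rho'$ combined with Hermiticity lets me write $\rho' = \cE(\1)\rho'\cE(\1)$, while $\tilde\rho = \widetilde\cE(\1)\rho'\widetilde\cE(\1)$ by construction. Applying \cref{lem:conjdiff} to these two conjugations of $\rho'$, together with the bound $\|\cE(\1)-\widetilde\cE(\1)\| \le 2\|V-\widetilde V\| \le 2\eta$ (which is \cref{lem:closeencs} applied to $M=\1$), yields $T_1 = O(\eta)$.

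For $T_2$, the key observation is that $\tilde\rho$ is supported on the subspace $S_{\le\Delta(H')}$, which is $H'$-invariant since $[H',P_{\le\Delta(H')}]=0$; hence $U_1\tilde\rho U_1^\dagger = e^{-iH'_{\le\Delta}t}\tilde\rho e^{iH'_{\le\Delta}t}$. Since $\widetilde\cE(H)$ is supported on the same subspace and $\|H'_{\le\Delta}-\widetilde\cE(H)\|\le\epsilon$ by \cref{dfn:sim:Hamiltonian} of \cref{dfn:sim}, the standard estimate $\|e^{-iAt}-e^{-iBt}\| \le t\|A-B\|$ applied on both sides of $\tilde\rho$ gives $T_2 \le 2\epsilon t\,\|\tilde\rho\|_1 \le 2\epsilon t$.

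For $T_3$, I would use the closed form $e^{-i\cE(H)t} = Ve^{-iKt}V^\dagger + (\1 - VV^\dagger)$, where $K := H\otimes P + \bar H\otimes Q$, derived from $\cE(H)=VKV^\dagger$ by power-series expansion together with $V^\dagger V = \1$, and the analogous identity for $\widetilde\cE$. The relations $\rho' = VV^\dagger\rho' VV^\dagger$ and $\tilde\rho = \widetilde V\widetilde V^\dagger\rho'\widetilde V\widetilde V^\dagger$ cause the ``identity-on-the-complement'' pieces to annihilate, reducing $T_3$ to $\|Ve^{-iKt}V^\dagger\rho' Ve^{iKt}V^\dagger - \widetilde Ve^{-iKt}\widetilde V^\dagger\rho'\widetilde Ve^{iKt}\widetilde V^\dagger\|_1$. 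A final application of \cref{lem:conjdiff} with $C=\rho'$, using $\|Ve^{-iKt}V^\dagger - \widetilde Ve^{-iKt}\widetilde V^\dagger\| \le 2\|V-\widetilde V\|\,\|e^{-iKt}\| \le 2\eta$, then gives $T_3 = O(\eta)$. The principal obstacle is the bookkeeping of constants: a naive summation yields $T_1+T_3 = O(\eta)$ with a slightly larger constant than the claimed $4\eta$, and recovering the sharp prefactor requires a more careful combined accounting of $T_1$ and $T_3$, both of which ultimately stem from a single factor of $2\|V-\widetilde V\|$ sandwiching $\rho'$ rather than two independent ones.
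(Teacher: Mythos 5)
Your approach has the same skeleton as the paper's proof -- split by the triangle inequality into three terms, bound the outer two with \cref{lem:conjdiff} and $\|V-\widetilde V\|\le\eta$, and bound the middle term by passing to the low-energy restriction $H'_{\le\Delta}$ and comparing with $\widetilde\cE(H)$ using the operator-exponential estimate -- and your $T_2\le 2\epsilon t$ bound is exactly the paper's middle term. You have also correctly diagnosed where the extra factor of $2$ comes from. The gap is precisely your choice of intermediate state: you take $\tilde\rho = \widetilde\cE(\1)\rho'\widetilde\cE(\1)$, which replaces \emph{both} copies of $V$ by $\widetilde V$ in $\rho' = \cE(\1)\rho'\cE(\1)$, so $\|\rho'-\tilde\rho\|_1 \le 2\|\cE(\1)-\widetilde\cE(\1)\|\le 4\eta$ by \cref{lem:closeencs}, and the same doubling recurs in $T_3$, giving $8\eta$ total. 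The paper instead uses the mixed intermediate
\begin{equation*}
\widetilde\rho := \widetilde V V^\dg\,\rho'\,V\widetilde V^\dg,
\end{equation*}
which differs from $\rho' = VV^\dg\rho'VV^\dg$ by swapping only the \emph{outer} pair of $V$'s to $\widetilde V$'s, so \cref{lem:conjdiff} with $A = VV^\dg$, $B = \widetilde V V^\dg$, $C = \rho'$ gives $T_1 \le 2\|(V-\widetilde V)V^\dg\|\le 2\eta$. The same mixed conjugation makes the third term exact: one checks directly that
\begin{equation*}
e^{-i\widetilde\cE(H)t}\widetilde\rho\,e^{i\widetilde\cE(H)t} = \widetilde V V^\dg\bigl(e^{-i\cE(H)t}\rho'e^{i\cE(H)t}\bigr)V\widetilde V^\dg,
\end{equation*}
so $T_3$ is again a comparison of a $\widetilde V V^\dg(\cdot)V\widetilde V^\dg$ conjugation against a $VV^\dg(\cdot)VV^\dg$ conjugation of the same density operator, yielding $T_3 \le 2\|\widetilde V V^\dg - VV^\dg\|\le 2\eta$. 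This is exactly the ``single factor of $2\|V-\widetilde V\|$ sandwiching $\rho'$'' that you anticipated was needed, and it recovers the stated $2\epsilon t + 4\eta$. (A minor secondary point: your closed form $e^{-i\cE(H)t} = Ve^{-iKt}V^\dg + (\1-VV^\dg)$ and the ensuing cancellation of the $\widetilde V e^{-iKt}\widetilde V^\dg$-conjugations implicitly assume $P+Q=\1$; for general subspace encodings there is an additional $\1\ox(P+Q)$ projector that commutes through $e^{-iKt}$, which is harmless for the norm bound but needs to be carried along explicitly.)
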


We will need the following simple \namecref{Lipschitz} establishing the Lipschitz constant of the exponential map for Hermitian operators:
\begin{lemma}\label{Lipschitz}
  For $H$, $E$ Hermitian and $\norm{\cdot}$ any unitarily invariant norm,\linebreak $\norm{e^{i(H+E)} - e^{iH}} \le \norm{E}$.
\end{lemma}
\begin{proof}
  \begin{align}
    \norm{e^{i(H+E)} - e^{iH}}
    &= \Norm{ \int_0^1 \frac{\dd}{\dd s} e^{i(H+sE)} \dd s }
    = \Norm{ \int_0^1 iE e^{i(H+sE)} \dd s }\\
    &\le \int_0^1 \Norm{ E e^{i(H+sE)} } \dd s
    = \int_0^1 \norm{E} \dd s
    = \norm{E}.
  \end{align}
\end{proof}

\begin{proof}[of \cref{prop:time-evolution}]
Recall that by the definition of simulation there exists an alternative encoding $\widetilde{\cE}(M)=\widetilde{V}(M\ox P+\bar{M}\ox Q)\widetilde{V}^{\dg}$ such that\linebreak $\widetilde{\cE}(\1)=P_{\le \Delta(H')}$ and $\|\widetilde{V}-V\|\le \eta$.
 Let $\widetilde{\rho}=\widetilde{V}V^\dg\rho' V\widetilde{V}^\dg$. Then
  \begin{align}
    \| e^{-iH't}&\rho'e^{iH't} - e^{-i\cE(H)t}\rho'e^{i\cE(H)t}\|_1 \\
    \begin{split}
      \leq\,
      &\|e^{-iH't}\rho'e^{iH't}-e^{-iH't}\widetilde{\rho}e^{iH't}\|_1
      + \| e^{-iH't}\widetilde{\rho}e^{iH't}
           - e^{-i\widetilde{\cE}(H)t}\widetilde{\rho}e^{i\widetilde{\cE}(H)t}\|_1\\
      &+ \| e^{-i\widetilde{\cE}(H)t}\widetilde{\rho}e^{i\widetilde{\cE}(H)t}-e^{-i\cE(H)t}\rho'e^{i\cE(H)t}\|_1
    \end{split}
  \end{align}
  by the triangle inequality.
  Since $\rho'$ is in the encoded subspace, we know that $VV^{\dg}\rho'VV^{\dg}=\rho'$.
  Therefore \cref{lem:conjdiff} lets us bound the first term by $\|\rho'-\widetilde{\rho}\|_1 \le 2\|\widetilde{V}V^\dg-VV^\dg\| \le 2\eta$.
  Similarly, noting that
  \begin{equation}
    e^{-i\widetilde{\cE}(H)t}\widetilde{\rho}e^{i\widetilde{\cE}(H)t}
    = \widetilde{V}V^\dg e^{-i\cE(H)t}\rho'e^{i\cE(H)t}V\widetilde{V}^\dg,
  \end{equation}
  we use \cref{lem:conjdiff} to bound the third term by $2\|\widetilde{V}V^\dg-VV^\dg\| \le 2\eta$.
  Finally, for the second term, we note that $P_{\le \Delta(H')} \widetilde{\rho}=\widetilde{\rho}$, so $e^{-iH't}\widetilde{\rho}e^{iH't} = e^{-iH'_{\le \Delta}t}\widetilde{\rho}e^{iH'_{\le \Delta}t}$, and by \cref{lem:conjdiff} again this term is bounded by
  \begin{equation}
    2\|e^{iH'_{\le \Delta}t}-e^{i\widetilde{\cE}(H)t}\|
    \le 2t\|H'_{\le \Delta} -\widetilde{\cE}(H)\| \le 2\epsilon t,
  \end{equation}
  where we have used \cref{Lipschitz}.
\end{proof}

\begin{corollary}
\label{time-evolution}
  Suppose in addition to the conditions of \cref{time-evolution} that $\cE$ is a standard encoding. Let $\cEs(\rho)=V(\rho\ox \sigma)V^\dg$ for some state $\sigma$ satisfying $P\sigma=\sigma$, and let $F(\rho')=\tr_E[(\1\ox P)V^\dg\rho'V]$ as defined in \cref{eq:fandb}.
  Then
  \begin{gather}
    \|e^{-iH't}\cEs(\rho)e^{iH't}-\cEs(e^{-iHt}\rho e^{iHt})\|_1\leq 2\epsilon t +4 \eta,\\
    \|F(e^{-iH't}\rho'e^{iH't})-e^{-iHt}F(\rho')e^{iHt}\|_1\leq 2\epsilon t +4 \eta.
  \end{gather}
\end{corollary}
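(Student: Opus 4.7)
The plan is to deduce both inequalities as direct applications of \cref{prop:time-evolution}, combining the bound it gives on $\|e^{-iH't}\rho'e^{iH't}-e^{-i\cE(H)t}\rho'e^{i\cE(H)t}\|_1$ with the two exact identities already established in the text: the time-evolution covariance of $\cEs$ in the standard-encoding case, which is part of \cref{prop:encodingswork}, and the covariance identity $F(e^{-i\cE(H)t}\rho'e^{i\cE(H)t}) = e^{-iHt}F(\rho')e^{iHt}$ derived in the paragraph preceding \cref{prop:encodingswork}.

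For the first inequality, I would first check that $\cEs(\rho)$ lies in the encoded subspace, so that \cref{prop:time-evolution} applies with $\rho' = \cEs(\rho)$. This is a short calculation: $\cE(\1)\cEs(\rho) = V(\1\ox(P+Q))V^\dg V(\rho\ox\sigma)V^\dg = V(\rho\ox(P+Q)\sigma)V^\dg$, which equals $\cEs(\rho)$ using the hypothesis $P\sigma=\sigma$ together with $PQ=0$. \Cref{prop:time-evolution} then bounds $\|e^{-iH't}\cEs(\rho)e^{iH't} - e^{-i\cE(H)t}\cEs(\rho)e^{i\cE(H)t}\|_1$ by $2\epsilon t + 4\eta$, and the time-evolution identity in \cref{prop:encodingswork} (applicable because $\cE$ is a standard encoding, so $p\ge 1$) rewrites the second term as $\cEs(e^{-iHt}\rho\, e^{iHt})$.

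For the second inequality, I would apply $F$ to both operators inside the trace norm on the left-hand side of the bound from \cref{prop:time-evolution}. The one extra ingredient needed is the trace-norm contraction $\|F(X)\|_1 \le \|X\|_1$, which I would establish by composing the trace-norm contractivity of the partial trace with the H\"older-type bound $\|AB\|_1 \le \|A\|\cdot\|B\|_1$, using $\|\1\ox P\|\le 1$ and $\|V\| = \|V^\dg\| = 1$ for the isometry $V$. This preserves the $2\epsilon t + 4\eta$ bound on the difference, and the $F$-covariance identity then replaces $F(e^{-i\cE(H)t}\rho'e^{i\cE(H)t})$ by $e^{-iHt}F(\rho')e^{iHt}$, yielding the claim.

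The whole argument is essentially bookkeeping on top of \cref{prop:time-evolution}, so there is no genuine obstacle. The only step that warrants a moment's care is confirming the trace-norm contraction of $F$, which is immediate from standard norm inequalities; everything else follows directly from results already proved in the preceding subsection.
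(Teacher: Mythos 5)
Your proof is correct and takes essentially the same route as the paper's: set $\rho'=\cEs(\rho)$ in \cref{prop:time-evolution} and use the covariance identity $e^{-i\cE(H)t}\cEs(\rho)e^{i\cE(H)t}=\cEs(e^{-iHt}\rho e^{iHt})$ for the first bound, and apply the trace-norm contraction of $F$ together with the identity $F(e^{-i\cE(H)t}\rho'e^{i\cE(H)t})=e^{-iHt}F(\rho')e^{iHt}$ for the second. The paper's proof is two sentences; you fill in the minor checks it leaves implicit (that $\cEs(\rho)$ lies in the encoded subspace, and why $F$ is $\|\cdot\|_1$-contracting), both of which you handle correctly.
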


\begin{proof}
  The first statement follows from setting $\rho'=\cEs(\rho)$ in \cref{prop:time-evolution} and noting that $e^{-i\cE(H)t}\cEs(\rho)e^{i\cE(H)t}=\cEs(e^{-iHt}\rho e^{iHt})$.
  The second statement follows from $F(e^{-i\cE(H)t}\rho'e^{i\cE(H)t})=e^{-iHt}F(\rho')e^{iHt}$ and the fact that $F$ is trace-nonincreasing.
\end{proof}


\subsection{Errors and noise}

An important question for any simulation technique is how errors affecting the simulator relate to errors on the simulated system.
Understanding this in full detail will depend strongly on the physical noise model being considered and the implementation details of the simulation.
However, our notion of simulation via local encodings enables us to make some general statements about errors.

First, we show that a local error on the simulator does not map between the forward-evolving and backward-evolving parts of the simulator.
This implies the existence of a corresponding local error on the original system by using the $F$ map to extract the forward-evolving part.
Second, we show that for the types of encoding used in this paper, a stronger result holds: any local error on an encoded state is equal to the encoding of a local error on the original system.
Finally, we show that, under a reasonable physical assumption, any error on the simulator is close to an error that acts only within the encoded subspace.
This allows us to continue to simulate time-evolution and measurement following an error.

\begin{theorem}\label{errors}
  Let $\cE(M)=V(M\ox P+\bar{M}\ox Q)V^\dg$ be a local encoding, where $M$ acts on $n$ qudits, and let $\rho'$ be a state on the encoded subspace such that $\cE(\1)\rho'=\rho'$.
  Let $\cN'$ be a CP-map whose Kraus operators each act on at most $l<n$ qudits of the simulator system.

  \begin{enumerate}[label=\arabic{*}.,ref=\arabic{*}]
  \item \label[part]{errors:1}
    Let $P'=V(\1\ox P)V^{\dg}$ and $Q'=V(\1 \ox Q)V^{\dg}$.
    Then
    \begin{equation}
      P'\cN'(\rho')=P'\cN'(P'\rho') \quad \text{and} \quad Q'\cN'(\rho')=Q'\cN'(Q'\rho').
    \end{equation}

  \item \label[part]{errors:2}
    Let $\cEs(\rho)=V(\rho \ox \sigma)V^{\dg}$ for a density matrix $\sigma$ satisfying $P\sigma=\sigma$.
    Then the map defined by $\cN(\rho)=F(\cN'(\cEs(\rho)))$ is a CP-map whose Kraus operators act on at most $l$ qudits of the original system.
  \end{enumerate}
\end{theorem}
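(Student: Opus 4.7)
The plan is to exploit the locality structure of the encoding guaranteed by \cref{thm:localenc}: $V = \bigotimes_j V_j$ with each $V_j : \HS_j \ox E_j \to \HS'_j$ an isometry, and $P$, $Q$ locally distinguishable through orthogonal projectors $P_{E_j}, Q_{E_j}$ on each $E_j$. Because each Kraus operator $N_i'$ of $\cN'$ acts on at most $l<n$ simulator qudits, for every $i$ there exists at least one original qudit $j^{*}$ whose entire corresponding simulator factor $\HS'_{j^{*}}$ is untouched by $N_i'$. The pivotal preparatory claim is that $M_i := V^\dg N_i' V$ then acts trivially on the full preimage factor $\HS_{j^{*}}\ox E_{j^{*}}$. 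Indeed, for any $A\in\cB(\HS'_{j^{*}})$ one has $[N_i',A\ox\1]=0$, and sandwiching by $V^\dg$ and $V$ together with $V^\dg(A\ox\1)V = V_{j^{*}}^\dg A V_{j^{*}}\ox\1$ yields $[M_i,V_{j^{*}}^\dg A V_{j^{*}}\ox\1]=0$. Taking $A = V_{j^{*}} X V_{j^{*}}^\dg$ shows that $V_{j^{*}}^\dg A V_{j^{*}}$ ranges over all of $\cB(\HS_{j^{*}}\ox E_{j^{*}})$ as $X$ does, so Schur's lemma forces $M_i = \1_{\HS_{j^{*}}\ox E_{j^{*}}}\ox M_i''$.

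For \cref{errors:1}, I would use $\cE(\1) = P'+Q'$ to write $\rho' = P'\rho' + Q'\rho'$, reducing the statement to $P' N_i' Q' = 0$ for each $i$. Conjugating by $V$ (and using $V^\dg V = \1$) turns this into the claim $(\1\ox P) M_i (\1\ox Q) = 0$. The local-distinguishability conditions give $(P_{E_{j^{*}}}\ox\1)P = P$ and $(P_{E_{j^{*}}}\ox\1)Q = (P_{E_{j^{*}}}Q_{E_{j^{*}}})\ox\1\cdot Q = 0$. So I would insert $\1\ox P_{E_{j^{*}}}\ox\1$ adjacent to $(\1\ox P)$, commute it through $M_i$ (allowed by the preparatory claim, since $P_{E_{j^{*}}}$ acts only on $E_{j^{*}}$), and have it annihilate $(\1\ox Q)$ on the far side. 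The $Q'$ case is symmetric.

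For \cref{errors:2}, expanding definitions gives
\begin{equation}
  \cN(\rho) = \tr_E\Bigl[(\1\ox P)\textstyle\sum_i M_i(\rho\ox\sigma)M_i^\dg\Bigr].
\end{equation}
Since $\sigma = P\sigma = \sigma P$ implies $\rho\ox\sigma = (\1\ox P)(\rho\ox\sigma)(\1\ox P)$, and since the partial trace over $E$ satisfies $\tr_E[(\1\ox P)Y] = \tr_E[Y(\1\ox P)]$ for any $Y$, a short manipulation using $P^2 = P$ rewrites this as $\sum_i\tr_E[\tilde{M}_i(\rho\ox\sigma)\tilde{M}_i^\dg]$ with $\tilde{M}_i := (\1\ox P)M_i(\1\ox P)$. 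Spectrally decomposing $\sigma = \sum_\alpha p_\alpha \ket{\alpha}\bra{\alpha}$ (with $\ket{\alpha}$ in the range of $P$) and evaluating the partial trace in a basis $\{\ket{\beta}\}$ of $E$ yields the manifest Kraus form
\begin{equation}
  \cN(\rho) = \sum_{i,\alpha,\beta} K_{i,\alpha,\beta}\,\rho\,K_{i,\alpha,\beta}^\dg, \qquad K_{i,\alpha,\beta} := \sqrt{p_\alpha}\,(\1\ox\bra{\beta})\tilde{M}_i(\1\ox\ket{\alpha}).
\end{equation}
Because $M_i$ acts trivially on every $\HS_j\ox E_j$ with $j$ untouched by $N_i'$, and $(\1\ox P)$ is identity on $\HS$, the same holds for $\tilde{M}_i$ and therefore for each $K_{i,\alpha,\beta}$; since at most $l$ original qudits are affected, we obtain the promised locality bound.

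The main obstacle is the preparatory claim about $M_i$: once one knows that triviality of $N_i'$ on $\HS'_{j^{*}}$ upgrades to triviality of $M_i$ on the entire preimage factor $\HS_{j^{*}}\ox E_{j^{*}}$, everything else reduces to algebraic manipulation of projectors and partial traces. The upgrade is not automatic, since conjugation by a general isometry can entangle the physical subsystem with its ancilla; it is precisely the tensor-product form $V = \bigotimes_j V_j$ supplied by \cref{thm:localenc} that rules this out and makes both parts go through.
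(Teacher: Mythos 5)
Your overall strategy and structure match the paper's proof closely: both rest on the local tensor-product form $V=\bigotimes_j V_j$ from \cref{thm:localenc} together with the local distinguishability of $P,Q$, and the observation that if $N_i'$ is trivial on some block $\cH'_{j^*}$ then conjugation by $V$ forces triviality on $\cH_{j^*}\ox E_{j^*}$. The Kraus-form computation for part 2 (spectrally decomposing $\sigma$, expanding the partial trace over $E$) is also the same as the paper's.

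There is, however, a misstep in the justification of your preparatory claim. You assert that for \emph{any} $A\in\cB(\cH'_{j^*})$, sandwiching $[N_i',A\ox\1]=0$ by $V^\dag,V$ and using $V^\dag(A\ox\1)V = V_{j^*}^\dag AV_{j^*}\ox\1$ yields $[M_i,V_{j^*}^\dag AV_{j^*}\ox\1]=0$. This does not follow: sandwiching gives the identity $V^\dag N_i'(A\ox\1)V = V^\dag(A\ox\1)N_i'V$, whereas the commutator $[M_i,\,V_{j^*}^\dag AV_{j^*}\ox\1]$ involves $M_i(V_{j^*}^\dag AV_{j^*}\ox\1) = V^\dag N_i'V\,V^\dag(A\ox\1)V$, which has an extra projector $VV^\dag$ in the middle. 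These two expressions coincide only when $(A\ox\1)V = V(V_{j^*}^\dag AV_{j^*}\ox\1)$, equivalently when $A$ preserves the range of $V_{j^*}$; this fails for generic $A$ when $V_{j^*}$ is a proper isometry. Fortunately the intertwining relation \emph{does} hold for the $A = V_{j^*}XV_{j^*}^\dag$ you take in the next sentence, since then $AV_{j^*} = V_{j^*}X$, and that special case is all you need to run the commutant argument. So the preparatory claim itself is correct and the rest of the proof goes through, but the derivation should be stated only for $A$ of that form (equivalently, argue directly that $M_i(X\ox\1) = V^\dag N_i'(V_{j^*}XV_{j^*}^\dag\ox\1)V = (X\ox\1)M_i$). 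For what it is worth, the paper avoids this subtlety in part~1 by working entirely on the simulator side with $P'_j := \1\ox V_j P_{E_j}V_j^\dag$, which manifestly commutes with $N_k'$, and chaining $P'N_k'\cE(\1)=P'P'_jN_k'\cE(\1)=P'N_k'P'_j(P'+Q')=P'N_k'P'$; the descent to $M_i$ is only needed for the locality statement in part~2.
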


\begin{proof}
  Let $\cN'(\rho')=\sum_k N'_k \rho' N_{k}^{\prime\dg}$.
  For a given $k$, the Kraus operator $N'_k$ acts on only $l$ qudits of the simulator system.
  Therefore $N'_k$ must act trivially on at least one subsystem $\cH'_j$.
  Recall from \cref{thm:localenc} that there exists a projector $P_{E_j}$ which acts only on the ancilla $E_j$ such that $(\1 \ox P_{E_j})P=P$ and $(\1 \ox P_{E_j})Q=0$.
  Defining $P'_j=\1\ox V_j P_{E_j} V_j^{\dg}$, we have $P'_j P'=P'$ and $P'_j Q'=0$.
  Note that $P'_j$ acts non-trivially only on $\cH'_j$ and so commutes with $N'_k$.
  Therefore
  \begin{equation}
    P'N'_k\cE(\1)=P'P'_j N'_k\cE(\1)=P' N'_k P'_j\cE(\1)=P' N'_k P'_j (P'+Q')=P'N'_k P'.
  \end{equation}
  So, remembering that $\rho'$ is in the encoded subspace and satisfies $\rho'=\cE(\1)\rho'$, we have
  \begin{equation}
    P'\cN(\rho')=\sum_k P' N'_k \cE(\1)\rho'N_k^{\prime\dg}
    = \sum_k P' N'_k P'\rho' N_k^{\prime\dg}=P'\cN'(P' \rho').
  \end{equation}
  The statement for $Q$ follows analogously.

  We now prove the second part of the theorem.
  $\cN(\rho)$ is clearly CP, since it is defined by a composition of CP maps.
  Let the spectral decomposition of $\sigma$ be given by $\sigma=\sum_j \lambda_j \proj{\psi_j}$.
  Extend $\{\ket{\psi_j}\}_j$ to a basis for the subspace of the ancilla $E$ given by the support of $P$.
  Then
  \begin{align}
    \cN(\rho)
    &=F(\cN'(\cEs(\rho)))\\
    &=\tr_E[(\1 \ox P)\sum_{k} V^{\dg}N'_k V(\rho \ox \sigma)V^{\dg}N_k^{\prime \dg}V(\1 \ox P)]\\
    &=\sum_{i,j,k} (\1 \ox \bra{\psi_i}) V^{\dg} N'_k V(\rho \ox \lambda_j \proj{\psi_j})V^{\dg}N_k^{\prime \dg}V(\1 \ox \ket{\psi_i}) \\
    &=\sum_{k,i,j} N_{i,j,k} \rho N_{i,j,k}^{\dg}
  \end{align}
  where $N_{i,j,k}=\sqrt{\lambda_j} (\1 \ox \bra{\psi_i})V^{\dg}N'_k V(\1 \ox \ket{\psi_j})$ are the Kraus operators of $\cN$.
  Since $\cE$ is a local encoding, the isometry $V$ may be chosen to be local by \cref{thm:localenc}, so $V^{\dg}N'_kV$ acts non-trivially on at most $l$ qudits of the original system.
  Therefore the Kraus operators $N_{i,j,k}$ act non-trivially on at most $l$ qudits, as claimed.
\end{proof}

For a general encoding with a corresponding map on states $\cEs(\rho)=V(\rho\ox \sigma)V^\dg$, the error $\cN'$ may entangle $\rho$ and $\sigma$, so it is not possible in general to show that $\cN'(\cEs(\rho))\approx \cEs(\cN(\rho))$.
However, if $\operatorname{rank}(P)=1$ (as is the case in all our simulations) then we are able to get a stronger result, which composes more straightforwardly with our other results.

\begin{corollary}
\label{errorsEstate}
  Let $\cE(M)=V(M\ox P+\bar{M}\ox Q)V^\dg$ be a local encoding with $\operatorname{rank}(P)=1$ and let $\cEs(\rho)=V(\rho\ox P)V^\dg$.
 Let $\mathcal{N'}$ and $\cN$ be the CP-maps given in \cref{errors}.
  Then
  \begin{equation}
    \cE(\1)\cN'(\cEs(\rho))\cE(\1)=\cEs(\cN(\rho)).
  \end{equation}
\end{corollary}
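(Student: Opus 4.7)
The plan is to reduce both sides of the claimed identity to $P'\cN'(\cEs(\rho))P'$, where $P':=V(\1\ox P)V^\dg$, by combining the locally block-diagonal structure of the Kraus operators $\{N'_k\}$ of $\cN'$ (implicit in the proof of \cref{errors}\labelcref{errors:1}) with the rank\nbd-one assumption on $P$, which allows the ancilla degree of freedom in $\cEs$ to be purified into a one-sided isometric embedding.

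First I would set $P':=V(\1\ox P)V^\dg$ and $Q':=V(\1\ox Q)V^\dg$, so that $P'+Q'=\cE(\1)$. Since $P^2=P$ and $V^\dg V=\1$, one immediately checks $P'\cEs(\rho)=\cEs(\rho)=\cEs(\rho)P'$. The proof of \cref{errors}\labelcref{errors:1} establishes the operator identity $P'N'_k\cE(\1)=P'N'_kP'$ for every Kraus operator $N'_k$ of $\cN'$, which rearranges to $P'N'_kQ'=0$; the symmetric argument using $Q_{E_j}$ in place of $P_{E_j}$ yields $Q'N'_kP'=0$. Taking adjoints also gives $Q'N_k^{\prime\dg}P'=0=P'N_k^{\prime\dg}Q'$. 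Expanding $\cE(\1)\cN'(\cEs(\rho))\cE(\1)$ using $\cE(\1)=P'+Q'$ and then inserting $\cEs(\rho)=P'\cEs(\rho)P'$ kills every cross term, leaving
\begin{equation*}
  \cE(\1)\cN'(\cEs(\rho))\cE(\1) \;=\; \sum_k P'N'_k\cEs(\rho)N_k^{\prime\dg}P' \;=\; P'\cN'(\cEs(\rho))P'.
\end{equation*}

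For the right-hand side, I would use $\operatorname{rank}(P)=1$ to write $P=\ket{\psi}\bra{\psi}$. Then $\cEs(\rho)=V(\1\ox\ket{\psi})\rho(\1\ox\bra{\psi})V^\dg$, and the partial trace defining $F$ collapses to $F(\rho')=(\1\ox\bra{\psi})V^\dg\rho'V(\1\ox\ket{\psi})$. Substituting this into $\cEs(\cN(\rho))=V(\1\ox\ket{\psi})\cN(\rho)(\1\ox\bra{\psi})V^\dg$ and regrouping the outer factors as $V(\1\ox\ket{\psi})(\1\ox\bra{\psi})V^\dg=V(\1\ox P)V^\dg=P'$ yields
\begin{equation*}
  \cEs(\cN(\rho)) \;=\; P'\cN'(\cEs(\rho))P',
\end{equation*}
which matches the left-hand side and completes the proof.

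The main obstacle is really the last paragraph: the identification relies fundamentally on $\operatorname{rank}(P)=1$, because only then can $F$ be written as a one-sided isometric conjugation (rather than a genuine partial trace) whose outer factors can be absorbed into $P'$. Without this, $\cN'$ could entangle the logical system with the ancilla inside $\cEs(\rho)$ in a way that survives the partial trace, and the equality would generally fail. The remaining manipulations are routine bookkeeping with the orthogonal decomposition $\cE(\1)=P'+Q'$ supplied by \cref{errors}\labelcref{errors:1}.
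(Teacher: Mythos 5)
Your proof is correct and follows essentially the same route as the paper's: use the block-diagonal structure coming from \cref{errors}\labelcref{errors:1} to reduce the left-hand side to $P'\cN'(\cEs(\rho))P'$, then exploit $\operatorname{rank}(P)=1$ to turn the partial-trace definition of $F$ (hence of $\cN$) into a one-sided conjugation by $\ket{\psi}$, so that $\cEs(\cN(\rho))$ also collapses to $P'\cN'(\cEs(\rho))P'$. The only cosmetic difference is that you re-open the proof of \labelcref{errors:1} to extract Kraus-level identities ($P'N'_kQ'=0$ etc.) and expand term by term, whereas the paper invokes \labelcref{errors:1} directly at the superoperator level (together with $Q'\cEs(\rho)=0$) to conclude $Q'\cN'(\cEs(\rho))=0=\cN'(\cEs(\rho))Q'$ in one stroke; both are valid and genuinely the same argument.
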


\begin{proof}
  Let the Kraus operators of $\cN'$ be given by $N'_k$.
  Since rank$(P)=1$, we must have $P=\proj{\psi}$ for some state $\ket{\psi}$ on the ancilla system $E$.
  Since $Q'\cEs(\rho)=0=\cEs(\rho)Q'$, where $Q'$ is defined as in \cref{errors}, \cref{errors:1} of that \lcnamecref{errors} shows that $Q'\cN'(\cEs(\rho))=0=\cN'(\cEs(\rho))Q'$. Then writing $\cE(\1)=P'+Q'$, we have
  \begin{align}
    &\cE(\1)\cN'( \cEs(\rho))\cE(\1)=P'\cN'(\cEs(\rho))P' \\
    &=V(\1 \ox  \proj{\psi} )V^{\dg}\left(\sum_k N'_k V(\rho\ox \proj{\psi} )V^{\dg} N^{\prime \dg}_k \right) V(\1 \ox \proj{\psi} )V^{\dg} \\
    &= V \left(\sum_k N_k \rho N_k^{\dg} \ox \proj{\psi} \right)V^{\dg}=\cEs(\cN(\rho)),
  \end{align}
  where we recall from the proof of \cref{errors} that the Kraus operators of $\cN(\rho)$ are given by $N_k=(\1 \ox \bra{\psi})V^{\dg}N'_k V(\1\ox \ket{\psi})$ (the sum over $i$ and $j$ is not necessary when rank$(P)=1$).
\end{proof}

\Cref{errorsEstate} is the strongest general result relating errors on the simulator and simulated systems that one could hope for: it states that any error (CP-map) on the simulator system corresponds naturally to simulating an error (CP-map) on the simulated system.

Even in the more general setting of \cref{errors}, we interpret the map $\cN(\rho)=F(\cN'\cEs(\rho))$ as the error on the original system corresponding to $\cN'$. This is because by \cref{errors:1} of \cref{errors}  we have $B(\cEs(\rho))=0$, and therefore by \cref{eqn:Fmeas}, for any observable $A$,
\begin{equation}
  \tr[A\cN(\rho)]=\tr[\cE(A)\cN'(\cEs(\rho))].
\end{equation}

Although $\cN'$ may not map between the forwards and backwards parts of the encoded space, it may take a state out of the encoded subspace. But in order to implement a local measurement with \cref{localmeas} and time-evolve with \cref{time-evolution}, we need $\rho'=\cN'(\cEs(\rho))$ to be in the encoded subspace.

The map $\rho' \mapsto \cE(\1)\cN'(\rho')\cE(\1)$ does map within the encoded subspace, and has the same corresponding error $\cN$ on the original system. Indeed, it is the map that appears in \cref{errorsEstate}.
For this error map we can therefore apply \cref{localmeas,time-evolution} as desired.
We will make an extra physically-motivated assumption on the form of the error map $\mathcal{N'}$, which guarantees that the difference between this map and $\cN'$ is negligible.

Let $H'$ be a $(\Delta,\eta,\epsilon)$-simulation of $H$ with corresponding local encoding~$\mathcal{E}$.
We might reasonably assume that errors that take the state out of the low-energy space of $H'$ are unlikely due to the high energy required for such an error.
We can formalise this by considering only noise operations $\mathcal{N}'$ such that $\tr[P_{\le \Delta(H')} \mathcal{N}'(\sigma)] \ge 1-\delta$ for any state $\sigma$ supported only on $S_{\le \Delta(H')}$, and some~$\delta$.

\begin{proposition}
  \label{error_resistance_cor}
  Let $H'$ be a $(\Delta,\eta,\epsilon)$-simulation of $H$ with corresponding local encoding $\mathcal{E}$.
  Let $\mathcal{N'}$ be a quantum channel acting on the simulator system and let $\rho'$ be a state in the encoded subspace, so that $\cE(\1)\rho'=\rho'$.

  Then, if $\tr[P_{\le \Delta(H')} \mathcal{N}'(\sigma)] \ge 1-\delta$ for all states $\sigma$ supported only on $S_{\le \Delta(H')}$,
  \begin{equation}
    \norm{\cN'(\rho')-\cE(\1)\cN'(\rho')\cE(\1)}_1\leq \sqrt{\delta(4-3\delta)}+8\eta.
  \end{equation}
\end{proposition}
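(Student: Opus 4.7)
The plan is to approximate $\rho'$ by a genuine state $\sigma$ supported exactly on the low-energy subspace $S_{\le\Delta(H')} = \operatorname{Im}(P)$, where $P := P_{\le \Delta(H')}$; apply the hypothesis on $\cN'$ together with a sharp ``gentle measurement'' inequality to $\sigma$; and then transfer the resulting bound back to $\rho'$ via the triangle inequality, picking up only $O(\eta)$ in each transition.

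First I would note that \cref{lem:closeencs} applied to $M = \1$, combined with $\widetilde{\cE}(\1) = P$ and $\|V - \widetilde{V}\| \le \eta$, yields $\|\cE(\1) - P\| \le 2\eta$, so the encoded subspace is close to the true low-energy subspace in operator norm. Setting $\sigma := P\rho'P/t$ with $t := \tr(P\rho'P)$, the identity $\rho' = \cE(\1)\rho'\cE(\1)$ combined with $|\tr((\cE(\1) - P)\rho')| \le 2\eta$ shows $1 - t \le 2\eta$, and \cref{lem:conjdiff} (applied with $A = \cE(\1)$, $B = P$, $C = \rho'$) gives $\|\rho' - P\rho'P\|_1 \le 4\eta$. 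Thus $\sigma$ is a bona fide state on $S_{\le\Delta(H')}$ within trace-distance $O(\eta)$ of $\rho'$, and the hypothesis applies to give $\tr[P\cN'(\sigma)] \ge 1 - \delta$.

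The core technical step is the sharp gentle-measurement inequality: for any state $\tau$ and projector $P$ with $\tr(P\tau) \ge 1 - \delta$,
\[
  \|\tau - P\tau P\|_1 \le \sqrt{\delta(4 - 3\delta)}.
\]
For a pure state $\tau = \ket{\psi}\bra{\psi}$, writing $\ket{\psi} = \sqrt{1-\delta}\ket{0} + \sqrt{\delta}\,\ket{1}$ with $\ket{0} \in \operatorname{Im}(P)$ and $\ket{1} \in \operatorname{Im}(P^\perp)$ and diagonalising the resulting $2\times 2$ error matrix produces eigenvalues of magnitude $(\delta \pm \sqrt{4\delta - 3\delta^2})/2$, whose absolute values sum to exactly $\sqrt{\delta(4-3\delta)}$. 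The mixed-state case follows by Jensen's inequality, since $x \mapsto \sqrt{x(4-3x)}$ is concave on $[0,1]$ (its second derivative equals $-4\,(4x - 3x^2)^{-3/2} < 0$). Applied to $\tau = \cN'(\sigma)$ this yields $\|\cN'(\sigma) - P\cN'(\sigma)P\|_1 \le \sqrt{\delta(4-3\delta)}$.

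To finish, I would decompose $\rho' = P\rho'P + R$ with $\|R\|_1 \le 4\eta$, so that $\cN'(\rho') = t\,\cN'(\sigma) + \cN'(R)$. A triangle inequality using the gentle-measurement bound on the $\sigma$-part and trace-norm contractivity of $\cN'$ and of conjugation by a projector on the $R$-part yields $\|\cN'(\rho') - P\cN'(\rho')P\|_1 \le \sqrt{\delta(4-3\delta)} + O(\eta)$, and a further application of \cref{lem:conjdiff} replaces $P$ by $\cE(\1)$ in the outer sandwich at cost $O(\eta)$; careful bookkeeping of these $\eta$-contributions produces the advertised $\sqrt{\delta(4-3\delta)} + 8\eta$ bound. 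I expect the main obstacle to be securing the sharp constant $\sqrt{\delta(4-3\delta)}$, since standard forms of the gentle-measurement lemma give only the strictly weaker $2\sqrt{\delta}$ -- one must both identify the worst-case pure state explicitly and verify the concavity needed to lift the pure-state bound to mixed states. Once that is in hand, the remaining manipulations are routine triangle-inequality norm accounting.
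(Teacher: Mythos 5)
Your sharp gentle-measurement bound $\|\tau - P\tau P\|_1 \le \sqrt{\delta(4-3\delta)}$ is exactly the right ingredient, and your pure-state $2\times 2$ diagonalisation is correct. Your route to the mixed-state version via concavity of $x\mapsto\sqrt{x(4-3x)}$ and Jensen is a perfectly good alternative to the paper's Cauchy--Schwarz step, arguably cleaner. However, the final bookkeeping does not reach $8\eta$ with your choice of approximating state, and the "careful bookkeeping" you defer is where the proof actually breaks down.

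Concretely, setting $\sigma = P\rho'P/t$ with $t=\tr(P\rho')$ and $R=\rho'-P\rho'P$, the best you can extract is $\|R\|_1\le 4\eta$ (this is tight up to lower order, since for a pure $\rho'$ one has $\|R\|_1 = \sqrt{x(4-3x)}$ with $x\le\|\cE(\1)-P\|^2\le 4\eta^2$, giving $\|R\|_1\approx 4\eta$). Tracing through the triangle inequality then produces
\begin{equation}
\|\cN'(\rho')-\cE(\1)\cN'(\rho')\cE(\1)\|_1 \le \underbrace{t\sqrt{\delta(4-3\delta)}}_{\text{gentle meas.}} + \underbrace{2\|R\|_1}_{\le 8\eta} + \underbrace{2\|\cE(\1)-P\|}_{\le 4\eta} \le \sqrt{\delta(4-3\delta)} + 12\eta,
\end{equation}
and I don't see a re-ordering of the steps that recovers $8\eta$ from this decomposition, since the cost of moving from $\rho'$ to a state on $S_{\le\Delta(H')}$ and the cost of swapping $P$ for $\cE(\1)$ are both paid out of the same $\|\cE(\1)-P\|\le 2\eta$ budget but in your accounting the first is charged twice at the higher rate $4\eta$.

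The paper gets $8\eta$ by choosing a different approximant: $\sigma = \widetilde V V^\dg\rho' V\widetilde V^\dg$, which is automatically a normalised state supported on $S_{\le\Delta(H')}$, and for which \cref{lem:conjdiff} gives the tighter $\|\rho'-\sigma\|_1 \le 2\|(V-\widetilde V)V^\dg\| \le 2\eta$ directly, with no renormalisation correction. Then a four-term telescoping triangle inequality (two terms costing $2\eta$ each via $\|\rho'-\sigma\|_1$, one term costing $4\eta$ via $\|\cE(\1)-P\|$, and the gentle-measurement term) lands exactly on $\sqrt{\delta(4-3\delta)}+8\eta$. If you replace your $\sigma$ by this one and reorganise the triangle inequality accordingly, the rest of your argument — including the Jensen variant of the mixed-state lift — goes through and yields the stated constant.
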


\begin{proof}
  For readability, write $P_{\le \Delta} := P_{\le \Delta(H')}$. Then three applications of the triangle inequality give
  \begin{align}
    \| &\cN'(\rho')-\cE(\1)\cN'(\rho')\cE(\1) \|_1 \\
    &\leq \|\cE(\1)\cN'(\rho')\cE(\1)-\cE(\1)\cN'(\sigma)\cE(\1)\|_1 \\
    & \mspace{20mu} +\|\cE(\1)\cN'(\sigma)\cE(\1)-P_{\le \Delta}\cN'(\sigma)P_{\le \Delta}\|_1 \\
    & \mspace{20mu} +\|P_{\le \Delta}\cN'(\sigma)P_{\le  \Delta}-\cN'(\sigma)\|_1+\|\cN'(\sigma)-\cN'(\rho')\|_1 \label{eq:error_resistance_bound}
  \end{align}
  where $\sigma=\widetilde{V}V^\dg\rho' V\widetilde{V}^\dg$.
  Since $\cN'$ is a quantum channel and $\cE(\1)$ is a projector, the first and fourth terms are both bounded by
  \begin{equation}
    \|\rho'-\sigma\|_1=\|\rho'-\widetilde{V}V^\dg\rho' V\widetilde{V}^\dg\|_1
    \leq 2\|VV^\dg-\widetilde{V}V^\dg\|
    \leq 2 \eta,
  \end{equation}
  where we have used \cref{lem:conjdiff}.
  Similarly, we can bound the second term using \cref{lem:conjdiff} twice:
  \begin{equation}
    \|\cE(\1)\cN'(\sigma)\cE(\1)-P_{\le \Delta}\cN'(\sigma) P_{\le \Delta}\|_1
    \leq 2\|\cE(\1)-P_{\le \Delta}\|
    \leq 4 \eta.
  \end{equation}

  It remains to bound the third term $\|P_{\le \Delta}\cN'(\sigma) P_{\le \Delta}-\cN'(\sigma)\|$ in terms of $\delta$ using the condition assumed in the \lcnamecref{error_resistance_cor}.
  Given any state $\ket{\psi}$ such that $P_{\le\Delta}\ket{\psi} \neq \ket{\psi}$, define the orthonormal states $\ket{\phi_0}=P_{\le \Delta}\ket{\psi}/\sqrt{1-x}$ and $\ket{\phi_1}=(\1-P_{\le \Delta})\ket{\psi}/\sqrt{x}$ where $x=1-\bra{\psi}P_{\le \Delta}\ket{\psi}$.
  The operator $\proj{\psi}-P_{\le \Delta}\proj{\psi}P_{\le \Delta}$ is a rank 2 operator which acts non-trivially only on the space spanned by $\{\ket{\phi_0},\ket{\phi_1}\}$ as the following matrix:
  \begin{equation}
    \begin{pmatrix} 0 & \sqrt{x(1-x)} \\ \sqrt{x(1-x)} & x \end{pmatrix} \quad \text{ with eigenvalues } \lambda_{\pm}=\frac{x}{2} \pm \sqrt{x(1-x)+\frac{x^2}{4}}.
  \end{equation}
  Therefore $\norm{\proj{\psi}-P_{\le \Delta}\proj{\psi}P_{\le \Delta}}_1=|\lambda_+|+|\lambda_-|=\sqrt{x(4-3x)}$.
  This equality also holds trivially in the case $P_{\le\Delta}\ket{\psi} = \ket{\psi}$.

  Using the spectral decomposition, we can write $\cN'(\sigma)=\sum_j \lambda_j\proj{\psi_j}$ and use the triangle inequality to show that the third term in \cref{eq:error_resistance_bound} is bounded by
  \begin{align}
    \|&\cN'(\sigma) - P_{\le \Delta}\cN'(\sigma) P_{\le \Delta}\|_1 \\
    &\leq \sum_j \lambda_j \norm{\proj{\psi_j}-P_{\le \Delta}\proj{\psi_j}P_{\le \Delta}}_1 \\
    &= \sum_j \lambda_j\sqrt{x_j(4-3x_j)}=\sum_j \sqrt{\lambda_j x_j}\sqrt{\lambda_j(4-3x_j)}\\
    &\leq \sqrt{\left(\sum_j \lambda_j x_j\right)\left(4-3\sum_k \lambda_k x_k)\right)}
  \end{align}
  where $x_j=1-\bra{\psi_j}P_{\le \Delta}\ket{\psi_j}$ and we have used the Cauchy-Schwarz inequality in the last step.
  The result follows from $\sum_j\lambda_j x_j=1-\tr(P_{\le \Delta} \cN'(\sigma))=\delta$.
\end{proof}

By setting $\rho'=\cEs(\rho)$ in  \cref{error_resistance_cor}, and using \cref{errorsEstate}, we have
\begin{corollary}
  Let $H'$ be a $(\Delta,\eta,\epsilon)$-simulation of $H$ with corresponding local encoding $\cE(M)=V(M\ox P+\bar{M}\ox Q)V^\dg$ such that $\operatorname{rank}(P)=1$.
 Let $\cEs(\rho)=V(\rho\ox P)V^\dg$ and let $\mathcal{N'}$ be a quantum channel whose Kraus operators act on at most $l<n$ qudits of the simulator system.

  Then, if $\tr[P_{\le \Delta(H')} \mathcal{N}'(\cEs(\rho))] \ge 1-\delta$, there exists a CP-map $\cN$ whose Kraus operators act on at most $l$ qudits of the original system such that
  \begin{equation}
    \norm{\cN'(\cEs(\rho))-\cEs(\cN(\rho)) }_1\leq \sqrt{\delta(4-3\delta)}+8\eta.
  \end{equation}
\end{corollary}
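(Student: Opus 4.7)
The plan is to derive this corollary by directly chaining together \cref{error_resistance_cor} and \cref{errorsEstate}, both of which are already available. The two key facts that make the chaining work are: first, that $\cEs(\rho) = V(\rho\otimes P)V^\dg$ is automatically in the encoded subspace, since $P^2 = P$ and $QP = 0$ give $\cE(\1)\cEs(\rho) = V(\1\otimes P + \1\otimes Q)V^\dg V(\rho\otimes P)V^\dg = V(\rho\otimes P)V^\dg = \cEs(\rho)$; and second, that \cref{errorsEstate} applies precisely because we have assumed $\operatorname{rank}(P)=1$.

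First I would apply \cref{error_resistance_cor} with $\rho'=\cEs(\rho)$. This yields
\begin{equation}
  \|\cN'(\cEs(\rho)) - \cE(\1)\cN'(\cEs(\rho))\cE(\1)\|_1 \le \sqrt{\delta(4-3\delta)} + 8\eta.
\end{equation}
There is a minor technical point here: the hypothesis of \cref{error_resistance_cor} is stated for all states supported on $S_{\le\Delta(H')}$, but an inspection of its proof shows that it only uses the bound on $\tr[P_{\le\Delta(H')}\cN'(\sigma)]$ for the specific state $\sigma = \widetilde{V}V^\dg \cEs(\rho) V\widetilde{V}^\dg$. Since $\|\sigma - \cEs(\rho)\|_1 \le 2\eta$ by \cref{lem:conjdiff} and $\cN'$ is trace-norm contractive, the hypothesis $\tr[P_{\le\Delta(H')}\cN'(\cEs(\rho))] \ge 1-\delta$ transfers to $\sigma$ up to an additive $O(\eta)$ term, which is absorbed into the $8\eta$ slack on the right-hand side (perhaps with a slightly adjusted constant).

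Next I would invoke \cref{errorsEstate}, which uses the $\operatorname{rank}(P)=1$ condition to conclude that
\begin{equation}
  \cE(\1)\cN'(\cEs(\rho))\cE(\1) = \cEs(\cN(\rho)),
\end{equation}
where $\cN$ is the CP map defined in \cref{errors} with Kraus operators $N_k = (\1\otimes\bra{\psi})V^\dg N'_k V(\1\otimes\ket{\psi})$ (writing $P=\proj{\psi}$). By \cref{errors:2} of \cref{errors}, since $V$ is local and $N'_k$ acts on at most $l$ qudits of the simulator system, each $N_k$ acts on at most $l$ qudits of the original system, as required.

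Substituting this identity into the bound from the first step immediately gives the desired inequality. There is no real obstacle: both ingredients are already proved, and the only subtlety is the mild hypothesis mismatch addressed above, which is resolved by the triangle inequality. The role of $\operatorname{rank}(P)=1$ is essential because, as noted just before \cref{errorsEstate}, for general encodings the noise $\cN'$ may entangle the logical state with the ancilla, so $\cN'(\cEs(\rho))$ need not equal $\cEs$ of anything; the rank-one assumption prevents this.
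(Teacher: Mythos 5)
The proposal takes exactly the route the paper uses: set $\rho'=\cEs(\rho)$ in \cref{error_resistance_cor} (first verifying, as you do, that $\cEs(\rho)$ lies in the encoded subspace), then substitute the identity from \cref{errorsEstate}, and invoke \cref{errors} for the locality of the Kraus operators of $\cN$. That is the paper's entire one-sentence proof.

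You additionally flag a real subtlety that the paper's proof glosses over: \cref{error_resistance_cor}'s hypothesis concerns $\sigma = \widetilde{V}V^\dg\cEs(\rho)V\widetilde{V}^\dg$, supported on $S_{\le\Delta(H')}$, whereas the corollary's hypothesis is stated for $\cEs(\rho)$, which lives in the nearby but distinct subspace $S_\cE$. Your proposed fix — transfer the hypothesis using $\|\cEs(\rho)-\sigma\|_1\le 2\eta$, so that $\delta$ becomes $\delta+O(\eta)$ — is the right move, but the claim that the resulting error ``is absorbed into the $8\eta$ slack'' with a slightly adjusted constant is not accurate: since $f(\delta)=\sqrt{\delta(4-3\delta)}\sim 2\sqrt{\delta}$ near zero, an $O(\eta)$ shift in $\delta$ can move $f$ by $O(\sqrt{\eta})$ in the regime $\delta\ll\eta$, which an $O(\eta)$ additive term cannot absorb. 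The bound your argument actually yields is $\sqrt{(\delta+2\eta)(4-3(\delta+2\eta))}+8\eta$. This is arguably an imprecision shared with the paper (whose stated bound matches the proposition verbatim even though the hypothesis does not); your scrutiny is more careful than the authors', but the fix is not quite as tight as you assert.
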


\section{Universal Hamiltonian simulation}

Having drawn some consequences from the notion of simulation, we will now move on to prove that certain types of Hamiltonians are universal simulators, first introducing the key technique we use: perturbative reductions~\cite{Kempe-Kitaev-Regev,Bravyi-Hastings, Oliveira-Terhal,Bravyi-DiVincenzo-Loss}.

\subsection{Techniques}
\label{sec:perturbative}

Let $\mathcal{H}_{\text{sim}}$ be a Hilbert space decomposed as $\mathcal{H}_{\text{sim}} = \mathcal{H}_+ \oplus \mathcal{H}_-$, and let $\Pi_{\pm}$ denote the projector onto $\mathcal{H}_{\pm}$.
For any linear operator $O$ on $\mathcal{H}_{\text{sim}}$, write
\begin{equation}
  O_{--} = \Pi_- O \Pi_-,\;\;\;\; O_{-+}
  = \Pi_- O \Pi_+,\;\;\;\; O_{+-}
  = \Pi_+ O \Pi_-,\;\;\;\; O_{++} = \Pi_+ O \Pi_+.
\end{equation}
Let $H_0$ be a Hamiltonian such that $H_0$ is block-diagonal with respect to the split $\mathcal{H}_+ \oplus \mathcal{H}_-$, $(H_0)_{--} = 0$, and $\lambda_{\min}((H_0)_{++}) \ge 1$.

Slight variants of the following lemmas were shown in~\cite{Bravyi-Hastings}, building on previous work~\cite{Oliveira-Terhal,Bravyi-DiVincenzo-Loss}:

\begin{lemma}[First-order simulation~\cite{Bravyi-Hastings}]
  \label{lem:firstorder}
  Let $H_0$ and $H_1$ be Hamiltonians acting on the same space. Suppose there exists a local isometry $V$ such that $\operatorname{Im}(V)=\cH_-$ and
  \begin{equation}
    \| V H_{\operatorname{target}} V^\dg - (H_1)_{--} \| \le \epsilon/2.
  \end{equation}
  Then $H_{\operatorname{sim}} = \Delta H_0 + H_1$ $(\Delta/2,\eta,\epsilon)$-simulates $H_{\operatorname{target}}$, provided that the bound $\Delta \ge O(\|H_1\|^2/\epsilon + \|H_1\| / \eta)$ holds.
\end{lemma}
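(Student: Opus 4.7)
The plan is to verify both conditions of \cref{dfn:sim} for the candidate local encoding $\cE(M) = VMV^\dg$ (so $p=1$, $q=0$, with a trivial one-dimensional ancilla), taking the image of $V$ to be $\cH_-$. The locality of $\cE$ is inherited from the hypothesis that $V$ is local, so the only real work is to exhibit a nearby isometry $\widetilde{V}$ onto $S_{\le\Delta/2(H_{\text{sim}})}$ and to bound $\|H_{\text{sim}}|_{\le\Delta/2} - \widetilde{V} H_{\operatorname{target}} \widetilde{V}^\dg\|$ by $\epsilon$.

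First I would control the spectra. Since $H_0$ vanishes on $\cH_-$ and is $\ge \1$ on $\cH_+$, the operator $\Delta H_0$ has a zero eigenspace of dimension $\dim\cH_-$ and all other eigenvalues $\ge \Delta$. Weyl's inequality with the perturbation $H_1$ (using $\|H_1\| \le \Delta/4$, which follows from the assumed scaling of $\Delta$) shows that $H_{\text{sim}}$ has exactly $\dim\cH_-$ eigenvalues of absolute value at most $\|H_1\|$ and all others at least $\Delta - \|H_1\| > \Delta/2$. Hence $P_{\le\Delta/2(H_{\text{sim}})}$ has the same rank as $\cE(\1) = VV^\dg = \Pi_-$, and the standard first-order bound $\|P_{\le\Delta/2(H_{\text{sim}})} - \Pi_-\| = O(\|H_1\|/\Delta)$ (a direct consequence of the resolvent expansion, or of Davis--Kahan applied to the spectral gap) gives $\|P_{\le\Delta/2(H_{\text{sim}})} - \cE(\1)\| \le \eta/\sqrt{2}$ once $\Delta \ge O(\|H_1\|/\eta)$. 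Then \cref{tildeV} supplies an isometry $\widetilde{V}$ with $\widetilde{V}\widetilde{V}^\dg = P_{\le\Delta/2(H_{\text{sim}})}$ and $\|\widetilde{V} - V\| \le \eta$, discharging \cref{dfn:sim:encoding}.

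Next I would establish \cref{dfn:sim:Hamiltonian} via first-order perturbation theory. Writing $H_{\text{sim}} = \Delta H_0 + H_1$, the projection onto the low-energy subspace satisfies
\begin{equation}
\| H_{\text{sim}}|_{\le\Delta/2} - \widetilde{V}(H_1)_{--}\widetilde{V}^\dg \cdot (\text{identification}) \| = O(\|H_1\|^2/\Delta),
\end{equation}
which is the content of the first-order self-energy expansion used in \cite{Bravyi-Hastings} (the higher-order terms in the Schrieffer--Wolff/self-energy series are suppressed by additional factors of $\|H_1\|/\Delta$, so a single $\|H_1\|^2/\Delta$ bound suffices at first order). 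Combining this with the hypothesis $\|V H_{\operatorname{target}} V^\dg - (H_1)_{--}\| \le \epsilon/2$ and the closeness $\|\widetilde{V} - V\| \le \eta$ (via \cref{lem:closeencs}) yields $\|H_{\text{sim}}|_{\le\Delta/2} - \widetilde{V} H_{\operatorname{target}} \widetilde{V}^\dg\| \le \epsilon/2 + O(\|H_1\|^2/\Delta) + O(\eta\|H_{\operatorname{target}}\|) \le \epsilon$, once $\Delta \ge O(\|H_1\|^2/\epsilon)$ as assumed.

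The main obstacle is the perturbative bound on the restriction of $H_{\text{sim}}$ to its low-energy subspace: a naive argument only gives closeness to $\Pi_- H_{\text{sim}} \Pi_- = (H_1)_{--}$ on $\cH_-$ rather than on the actual low-energy subspace, and the gap between the two must be handled by rotating through the unitary $U$ of \cref{tildeV} (or equivalently by explicitly summing the resolvent series). Fortunately this is precisely the situation handled in prior work \cite{Bravyi-Hastings}, so the argument reduces to transcribing their first-order self-energy bound into our $(\Delta,\eta,\epsilon)$ language and checking that the combined scaling requirement on $\Delta$ matches the hypothesis.
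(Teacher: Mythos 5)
Your outline tracks the paper's treatment of this lemma, which is itself only a citation: the paper does not reprove \cref{lem:firstorder} but instead notes that the correctness proof from~\cite{Bravyi-Hastings} goes through unchanged once one insists that $V$ be local; your sketch gives more of the scaffolding (verifying the two conditions of \cref{dfn:sim}, invoking Weyl plus \cref{tildeV} for \cref{dfn:sim:encoding}, and the first-order self-energy bound for \cref{dfn:sim:Hamiltonian}) but still defers the crux -- the $O(\|H_1\|^2/\Delta)$ bound on $\|H_{\operatorname{sim}}|_{\le\Delta/2} - U\Pi_- H_1 \Pi_- U^\dg\|$ -- to the Schrieffer--Wolff analysis in that same reference, so the route is essentially the same.

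One small slip: the term $O(\eta\|H_{\operatorname{target}}\|)$ in your final chain of inequalities should not appear, and as written it would prevent you from concluding $\le\epsilon$ for arbitrary $\eta$. You do not need to compare $V H_{\operatorname{target}} V^\dg$ to $\widetilde{V} H_{\operatorname{target}}\widetilde{V}^\dg$; instead compare $\widetilde{V}\bigl(V^\dg H_1 V\bigr)\widetilde{V}^\dg$ directly to $\widetilde{V}H_{\operatorname{target}}\widetilde{V}^\dg$, since
\begin{equation}
  \bigl\|\widetilde{V}\bigl(V^\dg H_1 V\bigr)\widetilde{V}^\dg - \widetilde{V}H_{\operatorname{target}}\widetilde{V}^\dg\bigr\| = \|V^\dg H_1 V - H_{\operatorname{target}}\| = \|V^\dg\bigl((H_1)_{--} - V H_{\operatorname{target}} V^\dg\bigr)V\| \le \epsilon/2,
\end{equation}
with no $\eta$-dependent contribution. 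With that correction the bound closes as $\epsilon/2 + O(\|H_1\|^2/\Delta) \le \epsilon$ under the stated scaling of $\Delta$.
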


\begin{lemma}[Second-order simulation~\cite{Bravyi-Hastings}]
  \label{lem:secondorder}
  Let $H_0$, $H_1$, $H_2$ be Hamiltonians acting on the same space, such that: $\max\{\|H_1\|,\|H_2\|\} \le \Lambda$; $H_1$ is block-diagonal with respect to the split $\mathcal{H}_+ \oplus \mathcal{H}_-$; and $(H_2)_{--} =0$.
  Suppose there exists a local isometry $V$ such that $\operatorname{Im}(V)=\cH_-$ and
  \begin{equation}
    \| V H_{\operatorname{target}} V^\dg - (H_1)_{--} + (H_2)_{-+} H_0^{-1} (H_2)_{+-} \| \le \epsilon/2.
  \end{equation}
  Then $H_{\operatorname{sim}} = \Delta H_0 + \Delta^{1/2} H_2 + H_1$ $(\Delta/2,\eta,\epsilon)$-simulates $H_{\operatorname{target}}$, provided that $\Delta \ge O(\Lambda^6/\epsilon^2 + \Lambda^2/\eta^2)$.
\end{lemma}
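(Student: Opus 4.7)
The plan is to invoke the rigorous second-order perturbation theory framework of \cite{Bravyi-Hastings}. The central object is the self-energy
\[
  \Sigma_-(z) := (H_{\text{sim}})_{--} + (H_{\text{sim}})_{-+}\bigl(z\Pi_+ - (H_{\text{sim}})_{++}\bigr)^{-1}(H_{\text{sim}})_{+-},
\]
whose spectrum, for $z$ in a suitable low-energy window, controls the low-lying spectrum of $H_{\text{sim}}$ up to errors that vanish as $\Delta\to\infty$.

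First I would substitute $H_{\text{sim}} = \Delta H_0 + \Delta^{1/2} H_2 + H_1$ into the blocks and use the hypotheses $(H_0)_{--}=0$, $(H_2)_{--}=0$, and block-diagonality of $H_1$ to obtain $(H_{\text{sim}})_{--}=(H_1)_{--}$, $(H_{\text{sim}})_{\mp\pm}=\Delta^{1/2}(H_2)_{\mp\pm}$, and $(H_{\text{sim}})_{++}=\Delta(H_0)_{++}+\Delta^{1/2}(H_2)_{++}+(H_1)_{++}$. For $|z|\le\Delta/2$ and $\Delta$ large compared to $\Lambda$, the resolvent $(z\Pi_+ -(H_{\text{sim}})_{++})^{-1}$ is well defined and admits a Neumann expansion around $-\Delta^{-1}(H_0)_{++}^{-1}$. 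Substituting back gives
\[
  \Sigma_-(z) \;=\; (H_1)_{--} \;-\; (H_2)_{-+}(H_0)_{++}^{-1}(H_2)_{+-} \;+\; R(z),
\]
where every term in $R(z)$ is a product of factors drawn from $\{\Delta^{1/2}H_2, H_1, z\}$ sandwiched by resolvents of operator norm $O(\Delta^{-1})$. Using $\|H_1\|,\|H_2\|\le \Lambda$ and $\lambda_{\min}((H_0)_{++})\ge 1$, the dominant surviving remainder comes from three factors of $\Delta^{1/2}H_2$ with two resolvents, giving $\|R(z)\|=O(\Lambda^3/\Delta^{1/2})$ uniformly in $z$; all other cross-terms in the expansion are smaller by further powers of $\Lambda/\Delta^{1/2}$ or $\Lambda/\Delta$.

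Second, I would invoke the effective-Hamiltonian theorem from \cite{Bravyi-Hastings} (the same result underlying \cref{lem:firstorder}): when $\Sigma_-(z)$ is uniformly close to a fixed $H_{\text{eff}}$ over an interval around $\mathrm{spec}(H_{\text{eff}})$ that is separated from $\mathrm{spec}((H_{\text{sim}})_{++})$ by $\Omega(\Delta)$, there exists an isometry $\widetilde V$ with $\mathrm{Im}(\widetilde V)=S_{\le \Delta/2(H_{\text{sim}})}$ satisfying $\|\widetilde V - V\|=O(\|(H_{\text{sim}})_{-+}\|/\Delta)=O(\Lambda/\Delta^{1/2})$ and $\|H_{\text{sim}}|_{\le\Delta/2}-\widetilde V H_{\text{eff}}\widetilde V^{\dagger}\|$ bounded by the uniform remainder. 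Taking $H_{\text{eff}}=(H_1)_{--}-(H_2)_{-+}H_0^{-1}(H_2)_{+-}$ and combining with the hypothesis $\|V H_{\text{target}} V^{\dagger} - H_{\text{eff}}\|\le\epsilon/2$ via the triangle inequality yields $\|H_{\text{sim}}|_{\le\Delta/2}-\widetilde{\mathcal{E}}(H_{\text{target}})\|\le\epsilon$, for $\widetilde{\mathcal{E}}$ the encoding built from $\widetilde V$; together with the bound on $\|\widetilde V - V\|$, this verifies both conditions of \cref{dfn:sim}.

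Finally, requiring $O(\Lambda^3/\Delta^{1/2})\le\epsilon/2$ and $O(\Lambda/\Delta^{1/2})\le\eta$ gives exactly the scaling $\Delta\ge O(\Lambda^6/\epsilon^2+\Lambda^2/\eta^2)$. The main obstacle is the bookkeeping in the Neumann expansion of $\Sigma_-(z)$: the prefactor $\Delta^{1/2}$ on $H_2$ is precisely calibrated so that the putative first-order contribution $(H_2)_{--}$ vanishes identically by hypothesis while the second-order contribution $(H_2)_{-+}H_0^{-1}(H_2)_{+-}$ survives at order $\Delta^0$. One must check that every cross-term involving $H_1$ or $(H_2)_{++}$ inserted into the resolvent expansion carries at least one extra factor of $\Lambda/\Delta^{1/2}$, so that it can be absorbed into $R(z)$ without perturbing the identified effective Hamiltonian; this is where the slight generality over the first-order version (and hence the quadratic dependence $\Lambda^6$ rather than $\Lambda^2$) enters.
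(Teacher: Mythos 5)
Your reconstruction is correct and follows exactly the Schrieffer–Wolff/self-energy argument of Bravyi and Hastings, which is the route the paper takes: it does not reprove the lemma but cites that work directly and observes that ``the correctness proofs of~\cite{Bravyi-Hastings} go through without change'' once locality of $V$ is imposed. Your bookkeeping matches the stated scaling: the dominant remainder in the Neumann expansion, from three factors of $\Delta^{1/2}H_2$ sandwiched between two resolvents of norm $O(\Delta^{-1})$, is $O(\Lambda^3/\Delta^{1/2})$, and the isometry perturbation is $O(\Lambda/\Delta^{1/2})$, yielding precisely $\Delta = O(\Lambda^6/\epsilon^2 + \Lambda^2/\eta^2)$. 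The one point you state as a hypothesis but do not justify, and which the paper explicitly flags as its only departure from Bravyi--Hastings, is why $\widetilde V$ can be taken \emph{local}: this follows because $\widetilde V$ is a small rotation of the given local $V$ onto the exact low-energy subspace of $H_{\mathrm{sim}}$, and \cref{dfn:sim} only requires $V$ (not $\widetilde V$) to be a tensor product of local isometries, so the locality requirement is discharged by assumption rather than by construction.
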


\begin{lemma}[Third-order simulation~\cite{Bravyi-Hastings}]
  \label{lem:thirdorder}
  Let $H_0$, $H_1$, $H_1'$, $H_2$ be Hamiltonians acting on the same space, such that: $\max\{\|H_1\|,\|H_1'\|,\|H_2\|\} \le \Lambda$; $H_1$ and $H_1'$ are block-diagonal with respect to the split $\mathcal{H}_+ \oplus \mathcal{H}_-$; $(H_2)_{--}=0$.
  Suppose there exists a local isometry $V$ such that $\operatorname{Im}(V)=\cH_-$ and
  \begin{equation}
    \| V H_{\operatorname{target}} V^\dg - (H_1)_{--} - (H_2)_{-+} H_0^{-1} (H_2)_{++} H_0^{-1} (H_2)_{+-} \| \le \epsilon/2
  \end{equation}
  and also that
  \begin{equation}
    (H_1')_{--} = (H_2)_{-+} H_0^{-1} (H_2)_{+-}.
  \end{equation}
  Then $H_{\operatorname{sim}} = \Delta H_0 + \Delta^{2/3} H_2 + \Delta^{1/3} H_1' + H_1$ $(\Delta/2, \eta, \epsilon)$-simulates $H_{\operatorname{target}}$, provided that $\Delta \ge O(\Lambda^{12}/\epsilon^3 + \Lambda^3/\eta^3)$.
\end{lemma}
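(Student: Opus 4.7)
The plan is to analyse the low-energy effective Hamiltonian of $H_{\operatorname{sim}}$ via a rigorous Schrieffer--Wolff (or self-energy) expansion in powers of $1/\Delta$, taking the unperturbed part to be $\Delta H_0$ and writing the perturbation as $W := \Delta^{2/3} H_2 + \Delta^{1/3} H_1' + H_1$. The carefully chosen scaling hierarchy $\Delta \gg \Delta^{2/3} \gg \Delta^{1/3} \gg 1$ is engineered so that the first-order contribution from $H_2$ vanishes, the $\Delta^{1/3}$-scale contributions from $H_1'$ and from the second-order self-energy of $H_2$ cancel against each other by hypothesis, and the surviving $O(1)$ part is exactly the target effective Hamiltonian built from $(H_1)_{--}$ together with the third-order self-energy of $H_2$.

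A decisive simplification comes from the fact that $H_1$ and $H_1'$ are block-diagonal with respect to $\mathcal{H}_+ \oplus \mathcal{H}_-$, so the off-diagonal blocks of the perturbation collapse to $W_{-+} = \Delta^{2/3}(H_2)_{-+}$ and $W_{+-} = \Delta^{2/3}(H_2)_{+-}$. Enumerating orders: the first-order diagonal piece is $W_{--} = \Delta^{1/3}(H_1')_{--} + (H_1)_{--}$ because $(H_2)_{--}=0$; the second-order self-energy $-W_{-+}(\Delta H_0)^{-1}W_{+-}$ simplifies, with no cross-terms, to exactly $-\Delta^{1/3}(H_2)_{-+} H_0^{-1}(H_2)_{+-}$; and the third-order self-energy $W_{-+}(\Delta H_0)^{-1} W_{++}(\Delta H_0)^{-1} W_{+-}$ contributes $(H_2)_{-+}H_0^{-1}(H_2)_{++}H_0^{-1}(H_2)_{+-}$ at $O(1)$, with the only cross-terms coming from $W_{++}$ picking up $\Delta^{1/3}(H_1')_{++}$ or $(H_1)_{++}$, which are $O(\Lambda^3/\Delta^{1/3})$ in operator norm or smaller. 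The $\Delta^{1/3}$-scale pieces cancel by the hypothesis $(H_1')_{--} = (H_2)_{-+}H_0^{-1}(H_2)_{+-}$, and what remains matches $V H_{\operatorname{target}} V^\dg$ to within $\epsilon/2$.

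Controlling the tail of the series is the quantitative part of the argument: the $n$-th order term is bounded in norm by $(\|W\|/\Delta)^{n-1}\|W\| \le \Lambda^n \Delta^{1-n/3}$, so the sum over $n \ge 4$ is $O(\Lambda^4/\Delta^{1/3})$ and is at most $\epsilon/2$ once $\Delta \ge \Omega(\Lambda^{12}/\epsilon^3)$; the analogous estimate on the rotation that implements the low-energy isometry yields the requirement $\Delta \ge \Omega(\Lambda^3/\eta^3)$. With the effective Hamiltonian identified and the tail bounded, a variant of \cref{lem:firstorder} concludes that $H_{\operatorname{sim}}$ is a $(\Delta/2,\eta,\epsilon)$-simulation of $H_{\operatorname{target}}$. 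The main obstacle is not any single calculation but the bookkeeping: formulating a precise operator-norm-convergent form of the self-energy expansion and tracking which algebraic terms contribute at which fractional power of $\Delta$ so that exactly the right cancellations occur. This is essentially the framework set up by Bravyi--Hastings, whose second-order analogue is \cref{lem:secondorder} above, and in practice the proof amounts to citing the expansion, substituting $W$, and reading off the bookkeeping described in the preceding paragraph.
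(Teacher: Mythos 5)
Your proposal is correct and reconstructs, with the right bookkeeping of fractional powers of $\Delta$, the Schrieffer--Wolff/self-energy argument that the paper simply defers to: the paper does not give its own proof of \cref{lem:thirdorder} but states that ``the correctness proofs of~\cite{Bravyi-Hastings} go through without change'' once locality of $V$ is imposed, and that cited proof uses exactly the expansion you describe. The scalings ($W_{--}$ at $\Delta^{1/3}$ and $O(1)$, second-order self-energy at $\Delta^{1/3}$, leading third-order piece at $O(1)$ with $O(\Lambda^3 \Delta^{-1/3})$ cross-terms, tail $O(\Lambda^4 \Delta^{-1/3})$) and the signs (minus at second order, plus at third) all check out against the stated effective Hamiltonian, and the resulting thresholds $\Delta \gtrsim \Lambda^{12}/\epsilon^3$ and $\Delta \gtrsim \Lambda^3/\eta^3$ match the lemma.
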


In fact, whenever we use \cref{lem:firstorder,lem:secondorder} we will be able to replace the approximate equalities up to $\epsilon/2$ with exact equalities.
We do not invoke \cref{lem:thirdorder} explicitly in this work; however, we state it for completeness because it can be used to show that a QMA-completeness result of~\cite{Oliveira-Terhal} (\cref{thm:pauliterms} below) actually implies a simulation result.
The scaling of $\Delta$ assumed in these lemmas is sufficient to ensure that $\Delta/2$ separates the high- and low-energy parts of the simulator Hamiltonian $H_{\operatorname{sim}}$.
The main difference between these lemmas and their equivalents in~\cite{Bravyi-Hastings} is that here we insist on locality of the isometry $V$, corresponding to our local notion of simulation.
The correctness proofs of~\cite{Bravyi-Hastings} go through without change.

We remark that, in order to use the above lemmas, it will often be convenient to add a multiple of the identity to the simulator or target Hamiltonians, corresponding to an overall energy shift.
The families of Hamiltonians which we consider will always contain the identity, so we are free to do this with impunity.
For readability, we often omit this implicit freely added identity term when we state the form of restricted types of Hamiltonians below.

In the Hamiltonian complexity literature, many constructions, known as ``gadgets'', have been developed to prove that special cases of the {\sc Local Hamiltonian} problem\footnote{The problem of computing the ground-state energy of a $k$-local Hamiltonian on $n$ qubits, up to $1/\poly(n)$ precision~\cite{Kitaev-Shen-Vyalyi,Kempe-Kitaev-Regev}.} are QMA-complete, by reducing more complex cases to the more specialised cases (e.g.~\cite{Kempe-Kitaev-Regev,Oliveira-Terhal,Cubitt-Montanaro,Piddock-Montanaro}).
These reductions often use perturbation theory and can be interpreted as instances of \cref{lem:firstorder} or \cref{lem:secondorder}.
Thus, rather than being merely \emph{reductions}, they are \emph{simulations} in our terminology.
Two types of gadget are commonly used:

\begin{itemize}
\item \textbf{Mediator qubits.}
  Imagine we have two qubits $a$ and $b$ and would like to implement some effective interaction across them.
  One way to achieve this is to attach an ancilla, ``mediator'' qubit $c$, and apply a heavily-weighted local term $H_0$ to $c$, and a less heavily-weighted term $H_2 = H_{ac} + H_{bc}$.
  If we insist that qubit $c$ is in the ground state of $H_0$, this produces an effective interaction across qubits $a$ and $b$, together with some additional local terms on $a$ and $b$ which we can cancel out by adding an extra term $H_1$.
  This puts us in the setting of \cref{lem:secondorder}.
  The isometry $V$ is the map which acts as the identity on qubits $a$ and $b$, and attaches a qubit $c$ in the ground state of $H_0$.
  This type of gadget is used in~\cite{Oliveira-Terhal,Cubitt-Montanaro,Schuch-Verstraete} and elsewhere in the literature.
  Whenever such gadgets are used and analysed using second-order perturbation theory, the preconditions of \cref{lem:secondorder} hold, so we obtain that the physical Hamiltonian constructed simulates the desired logical Hamiltonian.

\item \textbf{Subspace encodings.}
  This type of gadget encodes a logical qubit within $\ell = O(1)$ physical qubits.
  A Hamiltonian $H$ on $\ell$ qubits is chosen whose ground space is 2\nbd-dimensional.
  Then an overall Hamiltonian is produced using a sum of heavily-weighted $H$ terms, one on each $\ell$-tuple of physical qubits.
  Within the ground space of the whole Hamiltonian, each $\ell$-tuple corresponds to a qubit.
  Less heavily-weighted interactions across $\ell$-tuples produce interactions across logical qubits.
  \Cref{lem:firstorder} and \cref{lem:secondorder} can be used to show that the simulator Hamiltonian does indeed simulate the target Hamiltonian.
  Now the isometry $V$ is a tensor product of $n$ isometries, each of which maps a qubit to the ground space of $H$ within the space of $\ell$ qubits.
  By choosing the right isometry, corresponding to a choice of basis for this ground space, we obtain desired new interactions across logical qubits.

  This type of gadget is used in~\cite{Cubitt-Montanaro}.
  However, note that two of the reductions in that work (simulating an arbitrary 2\nbd-local qubit Hamiltonian with a Hamiltonian made up of interactions of Heisenberg or XY type) were more complicated.
  In these reductions $H$ acts on 3 qubits and has a 4-dimensional ground space, corresponding to two logical qubits.
  Then additional heavily weighted terms are used to effectively project one qubit in each logical pair into a fixed, and highly entangled, state of $n$ qubits.
  This technique would not comply with our notion of simulation, as the state attached by the corresponding isometry $V$ would be far from a product state.
  Here we no longer need to use this type of reduction as we have a genuinely local simulation (\cref{thm:heisenberg} below).
\end{itemize}

In this work we will use both of these kinds of simulation.
For readability, we will not fully repeat the correctness proofs of the simulations from previous work, instead sketching the arguments and deferring to the original papers for technical details.
However, we stress that replacing the analysis of these gadgets in previous work with the use of \cref{lem:firstorder,lem:secondorder} is sufficient to obtain fully rigorous proofs of correctness.

In addition, to gain some intuition, we now describe more formally how one of the simpler gadgets from~\cite{Oliveira-Terhal} can be analysed using \cref{lem:secondorder}, and verify that it fits the constraints.
The gadget, which is called the subdivision gadget and is an example of a mediator qubit gadget, allows a $k$\nbd-local Hamiltonian to be simulated by a $(\lceil k/2 \rceil + 1)$\nbd-local Hamiltonian.
Consider an interaction of the form $H_{\operatorname{target}} = A_a B_b$, where $A$ acts on a subset of qubits $a$, and $B$ acts on a disjoint subset of qubits $b$.
A mediator qubit $c$ is introduced and we take Hamiltonians
\begin{equation}
   H_0 = \proj{1}_c,\;\;\;\; H_2 = \frac{1}{\sqrt{2}}(A_a X_c - X_c B_b).
\end{equation}
Then $(H_2)_{-+} = (H_2)_{+-}^\dg = \frac{1}{\sqrt{2}}\ket{0}\!\bra{1}_c(A_a - B_b)$, so
\begin{equation}
  (H_2)_{-+} H_0^{-1} (H_2)_{+-}
  = \frac{1}{2} \proj{0}_c (A_a - B_b)^2 = \proj{0}_c (\frac{1}{2} A_a^2 - A_a B_b + \frac{1}{2} B_b^2).
\end{equation}
In addition, $(H_2)_{--} = 0$.
We choose $H_1 = \frac{1}{2}(A_a^2 + B_b^2)$, so $(H_1)_{--} = \frac{1}{2}\proj{0}_c (A_a^2 + B_b^2)$.
Consider the isometry defined by $V\ket{\psi}_{ab} = \ket{\psi}_{ab} \ket{0}_c$.
Then it is easy to verify that
\begin{equation}
   V H_{\operatorname{target}} V^\dg = (H_1)_{--} - (H_2)_{-+} H_0^{-1} (H_2)_{+-}.
\end{equation}
It follows from \cref{lem:secondorder} that, for sufficiently high $\Delta$, $H_{\operatorname{sim}} = \Delta H_0 + \sqrt{\Delta} H_2 + H_1$ $(\Delta,\eta,\epsilon)$-simulates $H_{\operatorname{target}}$.
Observe that $H_{\operatorname{sim}}$ contains interactions on only at most $\max \{|a|+1,|b|+1\}$ qubits.
This idea can be used to reduce the locality of the whole Hamiltonian simultaneously, by writing each $k$-local interaction term in the original Hamiltonian as a sum of tensor product interactions, and adding a new mediator qubit for each such interaction to simulate it with a $(\lceil k/2 \rceil + 1)$\nbd-local interaction.
The corresponding isometry simply attaches a state of $\poly(n)$ qubits, each in the state $\ket{0}$, so is local.

Since each term of $H_2$ acts on at most one mediator qubit, there is no interference between gadgets and the total effective Hamiltonian is simply the sum of the effective interactions of each gadget.
We say that the gadgets are applied \emph{in parallel}.
For a detailed discussion of the parallel application of mediator qubit gadgets, see~\cite{Piddock-Montanaro}.
We formalise this discussion in the following lemma, with the addition of a corresponding result for subspace encoding gadgets.

\begin{lemma}
  \label{lem:interference}
  Let the Hamiltonian $H_0=\sum_i H_0^{(i)}$ be a sum of terms $H_0^{(i)}$ each with ground space energy $0$ and acting non-trivially only on disjoint subsets of qudits $S_i$.
  Let the ground space projection operator for $H_0^{(i)}$ be given by $P_-^{(i)}$ so the overall ground space projection operator for $H_0$ is given by $P_-=\prod_{i} P_-^{(i)}$.

  \begin{itemize}
  \item
    If $H_1$ can be expressed as a sum of terms, $H_1=\sum_\alpha H_1^{(\alpha)}$, the first order perturbation satisfies
    \begin{equation}
      (H_1)_{--}=\sum_{\alpha}(H_1^{(\alpha)})_{--}.
    \end{equation}

  \item {\textbf{Mediator gadgets.}}
    Let $H_2=\sum_{i} H_2^{(i)}$, where $H_2^{(i)}$ acts trivially on all qudits in $\cup_{j\neq i}S_j$.
    Suppose that all first order terms vanish, i.e.
    $P_-^{(i)}H_2^{(i)}P_-^{(i)}=0$ for all $i$.
    Then the second and third order terms are given by
    \begin{align}
      - (H_2 H_0^{-1}H_2)_{--} &= -\sum_{i}P_-H_2^{(i)}(H_0^{(i)})^{-1}H_2^{(i)}P_-\\
      - (H_2 H_0^{-1}H_2 H_0^{-1}H_2)_{--} &= -\sum_{i}P_-H_2^{(i)}(H_0^{(i)})^{-1}H_2^{(i)}(H_0^{(i)})^{-1}H_2^{(i)}P_-.
    \end{align}

  \item \textbf{Subspace gadgets.}
    Let $H_2=\sum_{(i,j)} H_2^{(i,j)}$ for ordered pairs $(i,j)$, where $H_2^{(i,j)}$ acts non-trivially only on $S_i$ and $S_j$ and raises both sets of qudits completely out of their ground spaces such that $P_-^{(i)}H_2^{(i,j)}P_-=0$ and $ P_-^{(j)}H_2^{(i,j)} P_-=0$.
    Then the second order perturbation is given by
    \begin{equation}
      - (H_2 H_0^{-1}H_2)_{--}= -\sum_{ (i,j)}P_-H_2^{(i,j)}\left(H_0^{(i)}+H_0^{(j)}\right)^{-1}H_2^{(i,j)}P_-.
    \end{equation}
  \end{itemize}
\end{lemma}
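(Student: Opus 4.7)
The plan is to exploit the fact that the $H_0^{(i)}$ act on disjoint subsystems $S_i$, so that $H_0$ is simultaneously block-diagonal with respect to the tensor product basis of eigenstates of the $H_0^{(i)}$, its eigenvalues are sums of eigenvalues of the individual $H_0^{(i)}$, and the overall ground space projector factorises as $P_- = \prod_i P_-^{(i)}$. Part~1 is then immediate from linearity of the map $M \mapsto P_- M P_-$.

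For the mediator gadget case, the central observation is that the pseudoinverse $H_0^{-1}$ has a clean action on states classified by their \emph{excitation pattern}: on the invariant subspace where exactly the ancilla systems in some set $T$ are excited out of their respective $P_-^{(i)}$'s, $H_0$ acts as $\sum_{i \in T} H_0^{(i)}$ and hence $H_0^{-1}$ acts as $\bigl(\sum_{i \in T} H_0^{(i)}\bigr)^{-1}$. Combined with the vanishing hypothesis $P_-^{(i)} H_2^{(i)} P_-^{(i)} = 0$, which forces $H_2^{(i)}$ to map $P_-^{(i)}$ entirely into $\1 - P_-^{(i)}$, this lets me track the excitation pattern through the operator product. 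For the second-order term, expand $(H_2 H_0^{-1} H_2)_{--} = \sum_{j,i} P_- H_2^{(j)} H_0^{-1} H_2^{(i)} P_-$: the state $H_2^{(i)} P_- \ket{\psi}$ lives on $T = \{i\}$, so $H_0^{-1}$ collapses to $(H_0^{(i)})^{-1}$; then $H_2^{(j)}$ with $j \ne i$ acts only on $S_j$, leaves system $i$ excited, and the outer $P_-$ annihilates the result. Only the diagonal terms $i = j$ survive, giving the claimed second-order formula.

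The third-order mediator term is analysed identically after expansion as a triple sum over indices $(i,j,k)$: each $H_2^{(\ell)}$ can change the excitation status of at most one ancilla system (namely $S_\ell$), so for the outer $P_-$ to give a non-vanishing contribution every excitation must eventually be undone by a later $H_2^{(\cdot)}$ acting on the same subsystem. A short case analysis forces $i = j = k$, and each intermediate $H_0^{-1}$ then collapses to $(H_0^{(i)})^{-1}$ (noting that the $P_-^{(i)}$-component produced by the middle factor is killed by the pseudoinverse anyway). For the subspace gadget case the same bookkeeping applies: $H_2^{(i,j)} P_-$ excites both systems $i$ and $j$ by the two vanishing hypotheses, so the second $H_2$ term must simultaneously de-excite both, forcing it to act on precisely the subsystems $\{i,j\}$; the intermediate $H_0^{-1}$ then acts on the doubly excited subspace as $(H_0^{(i)} + H_0^{(j)})^{-1}$, as claimed.

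The main obstacle is the third-order case analysis: one must carefully verify that every cross-term with distinct indices vanishes, and that the asserted factorisation of $H_0^{-1}$ holds on each invariant excitation-pattern subspace (which uses crucially that the $S_i$ are pairwise disjoint, so that the $H_0^{(i)}$ mutually commute and their spectra add on tensor product eigenstates). Apart from this bookkeeping, the argument is a straightforward manipulation of projectors and the induced tensor-product structure on $\ker H_0$ and its complement.
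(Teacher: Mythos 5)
Your proposal is correct and takes essentially the same approach as the paper: the key observation in both is that $H_0^{-1}$ restricted to a given excitation pattern reduces to the inverse of $\sum_{i \in T} H_0^{(i)}$ over the excited subsystems $T$, which the paper encodes algebraically via the identity $(H_0)^{-1}P_-^{\overline{(i)}}=(H_0^{(i)})^{-1}P_-^{\overline{(i)}}$ where $P_-^{\overline{(i)}}=\prod_{k\neq i}P_-^{(k)}$, while you track the same information by bookkeeping excitations through the operator product. Both proofs then close out the cross-terms by the identical counting argument (an excitation raised on subsystem $S_i$ cannot be undone by a term acting elsewhere), so the two arguments are the same modulo presentation.
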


Before providing a proof, we remark why different results are needed for mediator and subspace gadgets.
In the mediator gadget case, the qudits of $S_i$ are in a one dimensional ground space of $H_0^{(i)}$ for all $i$, and the effective Hamiltonian acts non-trivially on the remaining qudits in $\overline{\cup_j S_j}$.
Therefore interesting interactions can be effected, even when each perturbative term acts on only one of the sets $S_i$.
Whereas for subspace gadgets, the $i$th logical qudit lives in the groundspace of $H_0^{(i)}$ on the physical qudits $S_i$, so we need perturbative terms to act between different $S_i$ in order to make 2\nbd-local interactions.

\begin{proof}
  The first claim is trivial.
  For mediator qudit gadgets, define a projection operator $P_-^{\overline{(i)}}=\prod_{k\neq i}P_-^{(k)}$ and note that it acts trivially on $S_i$, so commutes with $H_2^{(i)}$.
  Since the ground state energy for each $H_0^{(k)}$ is zero, $H_0^{(k)}P_-^{(k)}=0$ and so $H_0^{(k)}P_-^{\overline{(i)}}=0$ for all $k\neq i$.
  Therefore,
  \begin{equation}
    (H_0)^{-1}P_-^{\overline{(i)}}=\left(\sum_k H_0^{(k)}\right)^{-1}P_-^{\overline{(i)}}=(H_0^{(i)})^{-1}P_-^{\overline{(i)}}.
  \end{equation}
  Since $P_-= P_-^{\overline{(i)}}P_-$, the second order term is given by
  \begin{align}
    - (H_2 H_0^{-1}H_2)_{--}
    &= -\sum_{i}P_-H_2 H_0^{-1}H_2^{(i)}P_- \\
    &= -\sum_{ i}P_-H_2 H_0^{-1}P_-^{\overline{(i)}}H_2^{(i)}P_- \\
    &= -\sum_{ i}P_-H_2 P_-^{\overline{(i)}}(H_0^{(i)})^{-1}H_2^{(i)}P_- \\
    &= -\sum_{ i}P_-H_2^{(i)}(H_0^{(i)})^{-1}H_2^{(i)}P_-
  \end{align}
  where the final equality holds because $P_-^{(j)}H_2^{(k)}P_-^{(j)}=0$ for all $j\neq k$, and so $P_-H_2P_-^{\overline{(i)}}=P_-H_2^{(i)}$.

  Using the same techniques, we can show that the third order term is equal to
  \begin{equation}
    -\sum_{i,j,k}P_-H_2^{(j)}(H_0^{(j)})^{-1} P_-^{\overline{(j)}}H_2^{(k)} P_-^{\overline{(i)}}(H_0^{(i)})^{-1}H_2^{(i)}P_-.
  \end{equation}
  If $k\neq i,j$, then $P_-^{(k)}$ appears in the product expression for both $P_-^{\overline{(i)}}$ and $P_-^{\overline{(j)}}$ and so $ P_-^{\overline{(j)}} H_2^{(k)} P_-^{\overline{(i)}}=0$.
  We may therefore assume $k=j$ (the proof for $k=i$ proceeds analogously), in which case we have
  \begin{equation}
    -\sum_{i,j}P_-H_2^{(j)}(H_0^{(j)})^{-1} P_-^{\overline{(j)}}H_2^{(j)} P_-^{\overline{(i)}}(H_0^{(i)})^{-1}H_2^{(i)}P_-.
  \end{equation}
  The operator $P_-^{\overline{(j)}}$ commutes with $H_2^{(j)}$ and $P_-^{\overline{(i)}}$, and so, remembering that $P_-^{\overline{(j)}}(H_0^{(i)})^{-1}=0$ for $i\neq j$, we must have $i=j$, giving the desired result.

  The proof is very similar for subspace gadgets, but we instead define a projection operator $P_-^{\overline{(i,j)}}=\prod_{k\neq i,j}P_-^{(k)}$ for ordered pairs $(i,j)$, noting that it acts trivially on $S_i$ and $S_j$, so commutes with $H_2^{(i,j)}$.
  As before, we have $H_0^{(k)}P_-^{\overline{(i,j)}}=0$ for all $k\neq i,j$, so $(H_0)^{-1}P_-^{\overline{(i,j)}}=(H_0^{(i)}+H_0^{(j)})^{-1}P_-^{\overline{(i,j)}}$.
  Therefore the second order term is given by
  \begin{align}
    - (H_2 H_0^{-1}H_2)_{--}
    &= -\sum_{ (i,j)}P_-H_2 H_0^{-1}H_2^{(i,j)}P_- \\
    &= -\sum_{ (i,j)}P_-H_2 H_0^{-1}P_-^{\overline{(i,j)}}H_2^{(i,j)}P_- \\
    &= -\sum_{ (i,j)}P_-H_2 P_-^{\overline{(i,j)}}\left(H_0^{(i)}+H_0^{(j)}\right)^{-1}H_2^{(i,j)}P_- \\
    &= -\sum_{ (i,j)}P_-H_2^{(i,j)}\left(H_0^{(i)}+H_0^{(j)}\right)^{-1}H_2^{(i,j)}P_-
  \end{align}
  where the final equality holds since by the form of $H_2^{(i,j)}$ assumed in the lemma, $P_-H_2^{(i',j')}P_-^{\overline{(i,j)}}=0$ unless $(i',j')=(i,j)$, so $P_-H_2 P_-^{\overline{(i,j)}}=P_-H_2^{(i,j)} P_-^{\overline{(i,j)}}$.
\end{proof}

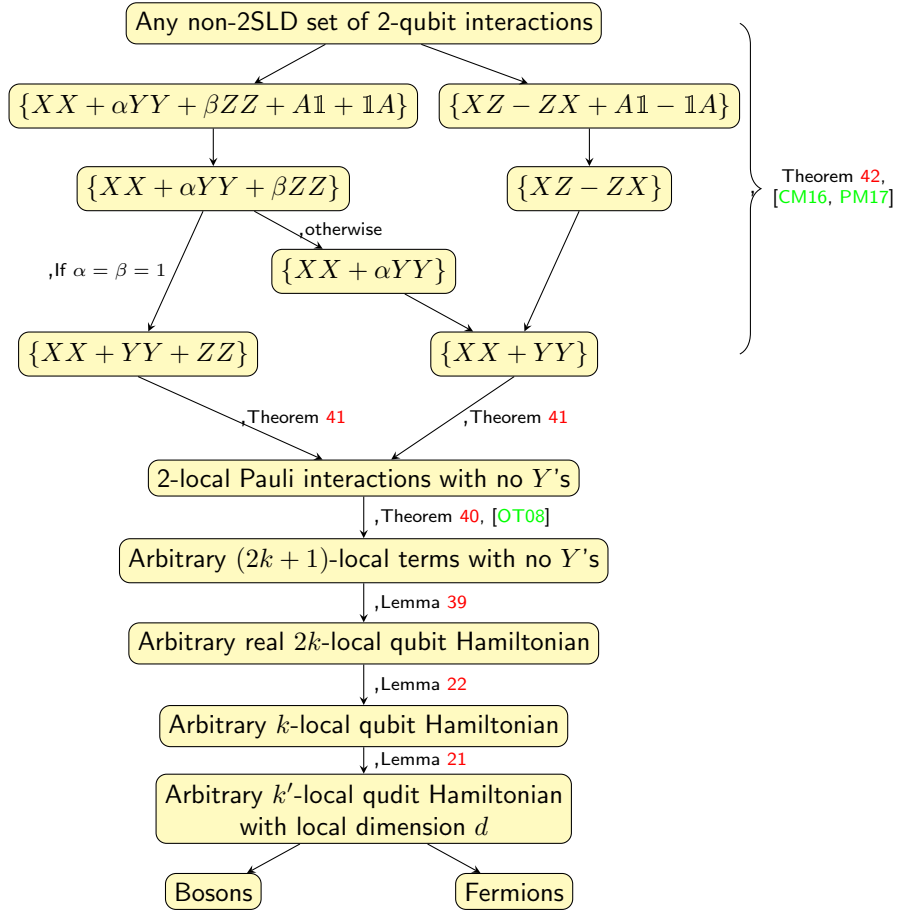
\begin{figure}[p]
  \begin{center}
    \begin{tikzpicture}[yscale=1.1,font=\sffamily,every node/.style={draw,fill=yellow!30,rounded corners},label/.style={draw=none,fill=none,font={\sffamily,\scriptsize}},>=stealth]
      \node (arb) at (0,4) {Any non-2SLD set of 2\nbd-qubit interactions};
      \node (arbsym) at (-2,3) {$\{XX + \alpha YY + \beta ZZ + A\1 + \1 A\}$};
      \node (arbanti) at (3,3) {$\{XZ - ZX + A\1 - \1 A\}$};
      \node (xxayybzz) at (-2,2) {$\{XX + \alpha YY + \beta ZZ\}$};
      \node (xxayy) at (0,1) {$\{XX + \alpha YY\}$};
      \node (heisenberg) at (-3,0) {$\{XX+YY+ZZ\}$};
      \node (xy) at (2,0) {$\{XX+YY\}$};
      \node (xxzz) at (0,-1.5) {2\nbd-local Pauli interactions with no $Y$'s};
      \node (antisym) at (3,2) {$\{XZ-ZX\}$};
      \node (4local) at (0,-2.5) {Arbitrary $(2k+1)$\nbd-local terms with no $Y$'s};
      \node (real) at (0,-3.5) {Arbitrary real $2k$\nbd-local qubit Hamiltonian};
      \node (klocal) at (0,-4.5) {Arbitrary $k$\nbd-local qubit Hamiltonian};
      \node[align=center] (klocald) at (0,-5.5) {Arbitrary $k'$\nbd-local qudit Hamiltonian\\ with local dimension $d$};
      \node (bosons) at (-2,-6.5) {Bosons};
      \node (fermions) at (2,-6.5) {Fermions};
      \draw[->] (arb) -- (arbsym);
      \draw[->] (arb) --  (arbanti);
      \draw[->] (arbsym) --  (xxayybzz);
      \draw[->] (arbanti) --  (antisym);
      \draw[->] (xxayybzz) -- node[label,right] {otherwise} (xxayy);
      \draw[->] (xxayy) --   (xy);
      \draw[->] (heisenberg.south) -- node[label,right] {\cref{thm:heisenberg}} (xxzz);
      \draw[->] (xy.south) -- node[label,right] {\cref{thm:heisenberg}} (xxzz);
      \draw[->] (xxzz) -- node[label,right] {\cref{thm:pauliterms}, \cite{Oliveira-Terhal}} (4local);
      \draw[->] (xxayybzz) -- node[label,left] {If $\alpha=\beta=1$} (heisenberg);
      \draw[->] (antisym) -- (xy);
      \draw[->] (4local) -- node[label,right] {\cref{lem:YYfromXZ}} (real);
      \draw[->] (real) -- node[label,right] {\cref{complex-to-real sim}} (klocal);
      \draw [decorate,decoration={brace,amplitude=8pt}] (5,4) -- node[label,right]
        {\begin{tabular}{c}
         \cref{thm:reductions},\\ \cite{Cubitt-Montanaro,Piddock-Montanaro}\\\end{tabular}} (5,0);
     \draw[->] (klocal) -- node[label,right] {\cref{prop:qudits}} (klocald);
     \draw[->] (klocald) -- (fermions);
     \draw[->] (klocald) -- (bosons);
   \end{tikzpicture}
 \end{center}
 \caption[Sequence of simulations used in this work.]{%
   Sequence of simulations used in this work.
   An arrow from one box to another indicates that a Hamiltonian of the first type can simulate a Hamiltonian of the second type.
   Where two arrows leave a box, this indicates that a Hamiltonian of this type can simulate one of the two target Hamiltonians, but not necessarily both.
   ``2SLD'' is short for ``the 2\nbd-local parts of all interactions in the set are simultaneously locally diagonalisable'', and $k,k' \ge 2$ are arbitrary integers such that $k \ge \lceil k' \log_2 d\rceil$.}
 \label{fig:reductions}
\end{figure}


\subsection{Universal simulators}

We are now ready to prove universality of a variety of classes of Hamiltonians.
The overall structure of the argument is illustrated in \cref{fig:reductions}; the eventual result is that each of the classes of qudit Hamiltonians illustrated in the diagram is universal.
For brevity, when we state and prove simulation results, rather than writing ``The family of A-Hamiltonians can simulate the family of B-Hamiltonians'' for some A and B, we simply write ``A-Hamiltonians can simulate B-Hamiltonians''.
We stress that such a statement is nevertheless rigorous and should be understood in the sense of \cref{dfn:sim}.

We have already proven some of the simulation results required (Lemmas \ref{prop:qudits} and \ref{complex-to-real sim}). We now complete the programme of \cref{fig:reductions} by showing that every remaining type of qudit Hamiltonian in the diagram is universal.
The simulation of \cref{complex-to-real sim} may produce terms which include even numbers of $Y$ components.
First we show that such terms are not necessary.
Note that it was already known that Hamiltonians without any $Y$ components can be QMA-complete~\cite{Biamonte-Love}; what we show here is that such Hamiltonians can in fact be universal simulators.

\begin{lemma}
  \label{lem:YYfromXZ}
  Real $k$-local qubit Hamiltonians can be simulated by real $(k+1)$-local qubit Hamiltonians whose Pauli decomposition does not contain any $Y$ terms.
\end{lemma}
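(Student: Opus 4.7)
My plan is to decompose $H$ into its Pauli expansion $H = \sum_\alpha c_\alpha P_\alpha$; because $H$ is real and $Y$ is purely imaginary, each surviving Pauli string $P_\alpha$ contains an even number $2m_\alpha \ge 0$ of $Y$ factors. Pauli terms with $m_\alpha = 0$ cause no difficulty: they can be placed directly into the $H_1$ component of a second-order perturbative simulation (\cref{lem:secondorder}). The real work is to realise each $Y$-containing term as a second-order effective interaction generated by a gadget built entirely from $X$, $Z$, $I$ Paulis at the cost of one extra qubit of locality.

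The algebraic key is the identity $Y \otimes Y = (X \otimes Z)(Z \otimes X)$, which tensors up to
\begin{equation}
  Y_{i_1} Y_{i_2} \cdots Y_{i_{2m}} \;=\; A_\alpha B_\alpha \;=\; B_\alpha A_\alpha,
\end{equation}
where $A_\alpha := X_{i_1} Z_{i_2} X_{i_3} Z_{i_4} \cdots X_{i_{2m-1}} Z_{i_{2m}}$ and $B_\alpha := Z_{i_1} X_{i_2} Z_{i_3} X_{i_4} \cdots Z_{i_{2m-1}} X_{i_{2m}}$ are $Y$-free Pauli strings of weight $2m$: the $m$ factors of $-i$ from $XZ = -iY$ and $m$ factors of $+i$ from $ZX = iY$ multiply to $+1$ independently of ordering. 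Writing $P_\alpha = A_\alpha B_\alpha \cdot R_\alpha$ with $R_\alpha$ a $Y$-free Pauli on the qubits outside $\{i_1,\ldots,i_{2m_\alpha}\}$, I introduce a fresh mediator qubit $c_\alpha$ and define
\begin{equation}
  H_0^{(\alpha)} = \proj{1}_{c_\alpha}, \qquad H_2^{(\alpha)} = \left( A_\alpha - \tfrac{c_\alpha}{2}\, B_\alpha R_\alpha \right) X_{c_\alpha}.
\end{equation}
Both summands of $H_2^{(\alpha)}$ are $Y$-free Pauli strings acting on at most $k+1$ qubits. Since $A_\alpha^2 = (B_\alpha R_\alpha)^2 = \1$ and $R_\alpha$ commutes with both $A_\alpha$ and $B_\alpha$ (which act on disjoint qubits), a direct computation gives $K_\alpha^2 = (1 + c_\alpha^2/4)\1 - c_\alpha P_\alpha$ for $K_\alpha := A_\alpha - \tfrac{c_\alpha}{2}B_\alpha R_\alpha$, so that $-(H_2^{(\alpha)})_{-+}(H_0^{(\alpha)})^{-1}(H_2^{(\alpha)})_{+-}$ reproduces $c_\alpha P_\alpha$ up to an additive scalar.

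To assemble the full simulator I would apply one such gadget in parallel for every $Y$-containing Pauli term of $H$, using a distinct mediator qubit for each. Setting $H_0 = \sum_\alpha H_0^{(\alpha)}$ and $H_2 = \sum_\alpha H_2^{(\alpha)}$, and collecting the $Y$-free terms of $H$ together with the scalars $\sum_\alpha (1 + c_\alpha^2/4)\1$ into $H_1$, the simulator is $H_{\text{sim}} = \Delta H_0 + \sqrt{\Delta}\, H_2 + H_1$. Each $H_2^{(\alpha)}$ involves only its own mediator, so different gadgets do not interfere, and the first-order contribution vanishes because $X_{c_\alpha}$ is off-diagonal in the computational basis of $c_\alpha$. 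The mediator-gadget case of \cref{lem:interference} then ensures that the net second-order effective Hamiltonian on the system equals $H$ exactly, and since the isometry $V$ that appends $\ket{0}^{\otimes M}$ on the mediator register is a tensor product of single-qubit attachments (hence local), \cref{lem:secondorder} yields the desired $(\Delta/2,\eta,\epsilon)$-simulation for sufficiently large $\Delta$. Every term of $H_{\text{sim}}$ is by construction a real Pauli string on at most $k+1$ qubits with no $Y$ factor.

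The only delicate step is the sign/phase bookkeeping verifying $A_\alpha B_\alpha = B_\alpha A_\alpha = Y_{i_1}\cdots Y_{i_{2m}}$ exactly (with no residual phase and in particular independently of the ordering of factors); this is precisely what makes the $A_\alpha(B_\alpha R_\alpha) + (B_\alpha R_\alpha) A_\alpha$ cross-term in $K_\alpha^2$ a real, sign-definite multiple of $P_\alpha$, so that any desired real coefficient $c_\alpha$ can be produced by tuning a single scalar in $H_2^{(\alpha)}$.
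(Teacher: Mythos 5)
Your proposal is correct and takes essentially the same approach as the paper: a mediator qubit per $Y$-containing term, with $H_0 = \proj{1}_{c}$ and $H_2 = K X_{c}$ for a $Y$-free Pauli combination $K$ whose anticommutator cross-term regenerates $Y^{\otimes 2m}$ at second order. Your alternating $A_\alpha = XZXZ\cdots$, $B_\alpha = ZXZX\cdots$ split (which conveniently cancels the phase, whereas the paper's uniform $X^{\otimes 2m}$, $Z^{\otimes 2m}$ choice carries a $(-1)^m$ that they compensate in the sign of $H_2$) and your explicit absorption of the coefficient $c_\alpha$ into $K_\alpha$ are minor, if tidy, variations on the same gadget.
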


\begin{proof}
  Let $H$ be a real $k$-local qubit Hamiltonian.
  For each $k'$-local interaction $h$ in the Pauli decomposition of $H$ ($k' \le k$), add an additional mediator qubit $a$.
  Since $H$ is real, there must be an even number of $Y$ terms in $h$.
  We may assume, by reordering qubits if necessary, that $h=Y^{\ox 2m} \ox A$ where $A$ is a tensor product of $X$ and $Z$ terms on $k'-2m$ qubits.

  We use second-order perturbation theory (\cref{lem:firstorder}) to effectively generate $h$ from an interaction containing only $X$ and $Z$ terms.
  Consider a heavy interaction term $H_0$ acting only on the mediator qubit, $H_0=(\1+Z_a)/2=\ket{0}\bra{0}_a$, with groundstate $\ket{1}_a$, and a perturbative term $H_2=X_a(X^{\ox 2m} \ox \1+ (-1)^{m+1} Z^{\ox 2m} \ox A)$.
  $H_2$ acts as a switch between the ground space and the excited space.
  It is clear that the first-order term $\Pi_{-}H_2\Pi_-$ vanishes.
  The second-order term is, up to a multiple of the identity, of the desired form:
  \begin{align}
    -\Pi_{-}H_2(H_0^{-1})_{++}H_2\Pi_-
    &= -\ket{0}\bra{0}_a (X^{\ox 2m} \ox \1+ (-1)^{m+1}Z^{\ox 2m} \ox A)^2\\
    &=2\ket{0}\bra{0}_a \left(Y^{\ox 2m} \ox A+\1\right).
  \end{align}
  It follows from \cref{lem:firstorder} that, for sufficiently large $\Delta$, $H' = \Delta H_0 + \Delta^{1/2} H_2$ is a $(\Delta,\eta,\epsilon)$-simulation of the interaction $h$.
  This can be used to simulate the whole Hamiltonian $H$ by applying separate mediator qubit gadgets for each term $h$ in parallel; by \cref{lem:interference}, different terms do not interfere with each other.
\end{proof}

It may be tempting to think that a similar second-order mediator qubit gadget could be used to simulate a 1-local $Y$ interaction, since $ZX=iY$.
However the same trick would not work if we took $H_2=X_a(X_1+Z_1)$, for example, because the anticommutator $\{X,Z\}$ vanishes and so $(X+Z)^2=2\1$.
Of course, this should not be surprising, as the perturbative expansion of any real Hamiltonian can only result in real Hamiltonian terms.

Next we use a result of Oliveira and Terhal~\cite{Oliveira-Terhal} to further specialise the class of Hamiltonians proven universal in \cref{lem:YYfromXZ}.

\begin{theorem}[essentially~\cite{Oliveira-Terhal}]
  \label{thm:pauliterms}
  $k$-local qubit Hamiltonians whose Pauli decomposition does not contain any $Y$ terms can be simulated by 2\nbd-local Hamiltonians of the form $\sum_{i>j}\alpha_{ij}A_{ij} +\sum_k (\beta_k X_k+\gamma_k Z_k)$, where $A_{ij}$ is one of the interactions $X_iX_j, X_iZ_j, Z_iX_j$ or $Z_iZ_j$ and $\alpha_{ij}, \beta_k, \gamma_k \in \R$.
\end{theorem}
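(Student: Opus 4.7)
Proof plan. The strategy is to recover the theorem as a direct consequence of the Oliveira--Terhal construction \cite{Oliveira-Terhal}, while tracking carefully which single-qubit Paulis appear in each gadget so as to verify that no $Y$ terms are introduced. The simulation is built by chaining two families of perturbative gadgets: a \emph{subdivision gadget} (analysed via \cref{lem:secondorder}) to reduce any $k$-local interaction to a $3$-local one, and a \emph{$3$-to-$2$ gadget} (analysed via \cref{lem:thirdorder}) to reduce $3$-local interactions to $2$-local ones. Transitivity of simulation (\cref{lem:compsim}) then yields the claimed result.

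First I would decompose the $k$-local target Hamiltonian into its Pauli components. By assumption every term is of the form $\alpha\, P_{i_1}\cdots P_{i_\ell}$ with $\ell \le k$ and each $P_{i_j} \in \{X,Z\}$. For each such term I would apply the subdivision gadget exactly as reviewed at the end of \cref{sec:perturbative}: split the $\ell$-local operator as a product $A_a B_b$, introduce a mediator qubit $c$, and take
\[
  H_0 = \tfrac{1}{2}(\1 - Z_c), \qquad H_2 = \tfrac{1}{\sqrt{2}}\bigl(A_a X_c - X_c B_b\bigr), \qquad H_1 = \tfrac{1}{2}(A_a^2 + B_b^2).
\]
Since $A_a$ and $B_b$ are tensor products of Paulis from $\{X,Z\}$, we have $A_a^2=B_b^2=\1$, so $H_1$ is merely an identity shift and can be dropped from the interaction structure. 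Crucially, the mediator appears only through $Z_c$ and $X_c$, and the system-qubit operators are unchanged, so the resulting interaction uses only $\{X,Z\}$ on every qubit. Iterating the subdivision reduces the locality from $\ell$ to $\lceil \ell/2\rceil + 1$, and hence after $O(\log k)$ rounds every remaining interaction is at most $3$-local and still contains no $Y$.

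Next I would apply the $3$-to-$2$ gadget of \cite{Oliveira-Terhal}, which via \cref{lem:thirdorder} simulates a $3$-local interaction $\alpha\, A_a B_b C_c$ using a mediator qubit and interactions whose non-mediator parts are $A_a$, $B_b$, $C_c$ and whose mediator parts are of the form $Z_m$ (for $H_0$) and $X_m$ (for $H_2$), together with correction terms that are products of the $A_a,B_b,C_c$ taken two at a time (which are again tensor products of $X$ and $Z$ only). Since none of the ingredients introduces a $Y$ on any qubit, the resulting $2$-local interactions each lie in $\{XX, XZ, ZX, ZZ\}$ up to relabelling of qubits, plus single-qubit $X$ and $Z$ terms and an identity shift. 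Applying these subdivision and $3$-to-$2$ gadgets to all terms of $H$ simultaneously, and invoking \cref{lem:interference} to rule out cross-talk between gadgets (each mediator is introduced fresh for a single term, so the $H_0^{(i)}$ are supported on disjoint qudits and the $H_2^{(i)}$ act trivially outside their own mediator), produces a single $2$-local simulator Hamiltonian whose effective action in the low-energy subspace equals $\mathcal{E}(H)$ to the desired precision $\epsilon$.

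The main obstacle is the careful bookkeeping for the $3$-to-$2$ gadget: unlike the second-order subdivision, the third-order expansion produces a sum of operator products $H_2 H_0^{-1} H_2 H_0^{-1} H_2$, and one must check both that all three factors on the mediator combine into something proportional to a projector on the ground state (so no residual $Y_m$ can appear through a combination like $X_m Z_m = -iY_m$), and that the cancellation of the second-order contribution via the auxiliary term $H_1'$ in \cref{lem:thirdorder} can be achieved with $\{X,Z\}$-only operators. Both checks go through in the Oliveira--Terhal construction because $H_0$ is diagonal, so $H_0^{-1}$ restricted to the excited subspace is again diagonal, and the mediator only ever experiences $X_m$ flips between $|0\rangle$ and $|1\rangle$; hence every mediator operator that survives in the effective Hamiltonian is a monomial in $X_m$ and $Z_m$ alone. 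Once this is verified, \cref{lem:compsim} chains the two simulations into a single $(\Delta,\eta,\epsilon)$-simulation of the form asserted in the theorem.
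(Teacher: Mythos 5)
Your proposal is correct and follows essentially the same route as the paper's proof: decompose into Pauli hyperedges, apply the Oliveira--Terhal subdivision gadget (analysed via \cref{lem:secondorder}) to reduce to $3$-local, then the $3$-to-$2$ mediator gadget (analysed via \cref{lem:thirdorder}), invoking \cref{lem:interference} and transitivity, and observing throughout that no gadget ever introduces a $Y$. You give somewhat more detail than the paper's sketch on why no $Y$ can arise in the third-order effective Hamiltonian, but the approach and key ingredients are the same.
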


We sketch the proof of \cref{thm:pauliterms}; see~\cite{Oliveira-Terhal} for more technical details.

\begin{proof}[sketch]
  The claim is trivial for $k \le 2$, so assume $k \ge 3$.
  We first note that, for each $k$-tuple of qubits, one can decompose any interaction across that $k$-tuple as a weighted sum of interactions which are each tensor products of Pauli matrices.
  These can be thought of as separate hyperedges in the hypergraph of interactions in $H$, and henceforth treated separately.

  Then, to effectively produce each of these Pauli interactions, the subdivision gadgets described in~\cite{Oliveira-Terhal} can be used.
  There are two of these gadgets.
  One gadget simulates an arbitrary $k$-wise interaction of the form $A \ox B$ across sets of qubits $a$ and $b$ by using a mediator qubit $c$, and $\lceil k/2 \rceil$-wise interactions of the form $A_a X_c + X_c B_b$.
  This gadget was discussed in detail near the start of \cref{sec:perturbative}.
  Repeated use of this procedure enables $k$-local interactions to be simulated using 3-local interactions.
  The second gadget simulates a 3-local Hamiltonian with a 2\nbd-local Hamiltonian.
  The gadget generates interactions of the form $A_a B_b C_c$ by introducing a mediator qubit $d$ and a Hamiltonian whose terms are proportional to $A_a X_d$, $B_b X_d$ and $C_c \proj{1}_d$, and using third-order perturbation theory to generate effective 3-local terms from these~\cite{Bravyi-Hastings,Oliveira-Terhal}.
  This leads to unwanted 2\nbd-local and 1-local terms being generated too, which can be effectively deleted using compensating terms of the form $XZ$, $X$, $Z$.
  By \cref{lem:interference}, these third order mediator qubit gadgets do not interfere.
  Note that the analysis of~\cite{Oliveira-Terhal} can be replaced with the use of \cref{lem:thirdorder} to show that this gadget indeed gives a simulation in our terminology.

  Finally, observe that these gadgets do not introduce any $Y$ terms if they were not present already.
\end{proof}

Next we show that the Heisenberg and XY interactions are sufficient to simulate any Hamiltonian of the form of \cref{thm:pauliterms}.
This is the most technically involved simulation in this paper.
A similar simulation result could be achieved using a reduction presented in \cite{Cubitt-Montanaro}, but this would require attaching a highly entangled state (the projector onto which corresponds to $P$ in \Cref{dfn:sim}). In the new simulation $P$ is trivial, implying that it is easy to implement the map $\cEs(\rho)$ defined in \cref{eq:estate}.

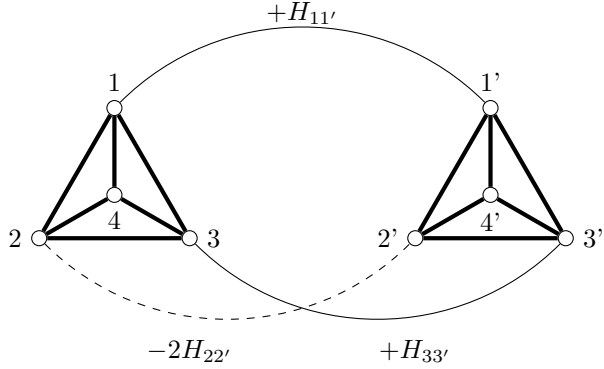
\begin{figure}
  \centering
  \begin{tikzpicture}
    \node[mqubit] (2) at (0,0) [label=left:2] {};
    \node[mqubit] (3) at (2,0) [label=right:3] {};
    \node[mqubit] (1) at (1,1.732) [label=above:1] {};
    \node[mqubit] (4) at (1,0.577) [label=below:4] {};

    \draw [heavy] (1) to (2);
    \draw [heavy] (2) to (3);
    \draw [heavy] (3) to (4);
    \draw [heavy] (4) to (1);
    \draw [heavy] (3) to (1);
    \draw [heavy] (2) to (4);

    \begin{scope}[shift={(5,0)}]
      \node[mqubit] (2a) at (0,0) [label=left:2'] {};
      \node[mqubit] (3a) at (2,0) [label=right:3'] {};
      \node[mqubit] (1a) at (1,1.732) [label=above:1'] {};
      \node[mqubit] (4a) at (1,0.577) [label=below:4'] {};

      \draw [heavy] (1a) to (2a);
      \draw [heavy] (2a) to (3a);
      \draw [heavy] (3a) to (4a);
      \draw [heavy] (4a) to (1a);
      \draw [heavy] (3a) to (1a);
      \draw [heavy] (2a) to (4a);
    \end{scope}

    \draw (1) to[out=45, in=135] (1a) [label=below:+1];
    \draw (2)[dashed] to[out=-45, in=-135] (2a);
    \draw (3) to[out=-45, in=-135] (3a);

    \node at (3.5,3) {$+H_{11'}$};
    \node at (5,-1.5) {$+H_{33'}$};
    \node at (2,-1.5) {$-2H_{22'}$};
  \end{tikzpicture}
  \caption[One logical qubit is encoded within a quadruple of physical qubits.]{%
    One logical qubit is encoded within a quadruple of physical qubits (1--4 and $1'$--$4'$).
    2-local interactions are implemented using interactions across the quadruples.
    The figure illustrates the Hamiltonian for simulating $X_LX_L$, up to 1-local terms.}
  \label{fig:comp4}
\end{figure}

\begin{theorem}
  \label{thm:heisenberg}
  Let $\mathcal{F}$ be the family of qubit Hamiltonians of the form $H=\sum_{i>j}\alpha_{ij}A_{ij} +\sum_k (\beta_k X_k+\gamma_k Z_k)$, where $A_{ij}$ is one of the interactions $X_iX_j$, $X_iZ_j$, $Z_iX_j$ or $Z_iZ_j$ and $\alpha_{ij}, \beta_k, \gamma_k \in \R$.
  Then $\{XX+YY+ZZ\}$-Hamiltonians and $\{XX+YY\}$-Hamiltonians can simulate $\mathcal{F}$.
\end{theorem}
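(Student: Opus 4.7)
The plan is to construct a subspace-encoding gadget in which each logical qubit is encoded into the two-dimensional ground space of a strong symmetric Hamiltonian on a quadruple of physical qubits, and then to use second-order perturbation theory to manufacture the target two-body interactions on the encoded qubits.

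First I would define the gadget Hamiltonian $H_0^{(i)}$ on each group of $4$ qubits as the uniformly weighted sum, over all $6$ pairs within the group, of the allowed two-qubit term ($XX+YY+ZZ$ in the Heisenberg case, $XX+YY$ in the XY case). In the Heisenberg case this sum is proportional to $S_{\mathrm{tot}}^2$ (up to a constant shift), so its ground space is the two-dimensional total-singlet ($S=0$) sector; in the XY case a direct computation in the $S^z=0$ sector (or Jordan--Wigner transformation to free fermions on $K_4$) similarly identifies a two-dimensional degenerate ground space. Choosing an orthonormal basis $\{\ket{0_L},\ket{1_L}\}$ of this ground space fixes a local isometry $V_i:\C^2\to(\C^2)^{\ox 4}$, and determines encoded Pauli operators $X_L,Z_L$ on the logical qubit. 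The overall encoding isometry is $V = \bigotimes_i V_i$, which is local in the sense of \cref{def:local_encoding}.

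Next I would build $H_{\mathrm{sim}} = \Delta H_0 + \sqrt{\Delta}\, H_2 + H_1$ with $H_0 = \sum_i H_0^{(i)}$ and a perturbation $H_2 = \sum_{(i,j)} H_2^{(i,j)}$ consisting of allowed inter-gadget interactions with carefully chosen real weights on a small pattern of bonds between corresponding physical qubits of two gadgets (the $+1,-2,+1$ pattern on three edges shown in \cref{fig:comp4} is one such example, designed to produce $X_LX_L$). By the subspace-gadget formula of \cref{lem:interference}, each pair of gadgets contributes an effective interaction $-P_- H_2^{(i,j)}\bigl(H_0^{(i)}+H_0^{(j)}\bigr)^{-1} H_2^{(i,j)} P_-$, which projects to a bilinear in the logical Paulis. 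By selecting independent weight patterns for each of the four required couplings $X_LX_L,X_LZ_L,Z_LX_L,Z_LZ_L$ and a compensating $H_1$ to cancel residual one-body terms, and by generating the logical one-local terms $X_L,Z_L$ either by intra-gadget perturbations or by analogous second-order gadgets acting on a single group of $4$ qubits, I would reproduce the full target family $\mathcal{F}$. \Cref{lem:secondorder} then certifies that $H_{\mathrm{sim}}$ is a $(\Delta/2,\eta,\epsilon)$-simulation in the sense of \cref{dfn:sim} for $\Delta$ sufficiently large in $1/\eta$ and $1/\epsilon$.

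The main obstacle is to verify that, despite the strong local symmetry of $H_0^{(i)}$ (full $SU(2)$ for Heisenberg, $U(1)$ about $Z$ for XY), the restricted palette of inter-gadget interactions $H_2^{(i,j)}$ generates enough linearly independent effective couplings to span all four target interactions with independently tunable real coefficients. This amounts to computing matrix elements of single-qubit physical operators in the singlet basis on each gadget, identifying the image of a physical $X$- or $Z$-type coupling as a concrete logical operator, and showing that by varying the bond-weight pattern between two gadgets one can independently realise $X_LX_L,X_LZ_L,Z_LX_L$ and $Z_LZ_L$. The asymmetric couplings $X_LZ_L$ and $Z_LX_L$ are the most delicate, since they require a bond pattern that distinguishes the two gadgets' roles while still respecting the symmetry of $H_0$. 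Once this linear-algebra task is carried out, the remaining arguments---cancellation of unwanted one-local terms, independent generation of $X_L$ and $Z_L$, and packaging into the form required by \cref{lem:secondorder}---are routine.
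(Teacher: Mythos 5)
Your proposal is correct and follows essentially the same route as the paper: the same complete-graph-on-four-qubits subspace gadget with a two-dimensional singlet ground space, the same $H_{\mathrm{sim}}=\Delta H_0+\sqrt{\Delta}H_2+H_1$ structure, intra-gadget first-order terms for $X_L,Z_L$, inter-gadget second-order terms (including the asymmetric bond patterns like $H_{13'}$ needed for $X_LZ_L$) for the two-local couplings, with \cref{lem:interference} and \cref{lem:secondorder} supplying the rigour. The only small inaccuracy is the parenthetical suggestion that Jordan--Wigner yields free fermions on $K_4$ (it does not, since $K_4$ is non-planar and most edges acquire string operators); the direct diagonalisation you also mention is what is actually needed, and the paper instead just observes the ground space is unchanged when the $ZZ$ terms are dropped.
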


\begin{proof}
  We prove the claim for the Heisenberg interaction $XX+YY+ZZ$; the argument is completely analogous for the XY interaction $XX+YY$.
  We use a subspace encoding gadget to encode a logical qubit in the ground space of the Hamiltonian of the complete graph on 4 qubits, as illustrated in \cref{fig:comp4}.

  The overall heavy interaction used is
  \begin{equation}
    H_0 := H_{12}+H_{23}+H_{34}+H_{14}+H_{24}+H_{13}+6\1,
  \end{equation}
  where we write $H_{ij}=X_iX_j+Y_iY_j+Z_iZ_j$.
  The identity term is present to ensure that the ground space of $H_0$ corresponds to eigenvalue zero.
  $H_0$ has a two dimensional ground space $S$ given in terms of singlet states $\ket{\Psi^-}$ by
  \begin{equation}
    S= \linspan\left\{\ket{\Psi^{-}}_{12}\ket{\Psi^{-}}_{34},\ket{\Psi^{-}}_{13}\ket{\Psi^{-}}_{24}\right\}
    \quad \text{where } \ket{\Psi^{-}}=\frac{\ket{01}-\ket{10}}{\sqrt{2}}.
  \end{equation}
  We choose the following orthonormal basis for our logical qubit:
  \begin{equation}
    \ket{0_L} = \ket{\Psi^-}_{13}\ket{\Psi^-}_{24} \qquad
    \ket{1_L} = \tfrac{2}{\sqrt{3}}\ket{\Psi^-}_{12}\ket{\Psi^-}_{34}
                - \tfrac{1}{\sqrt{3}}\ket{\Psi^-}_{13}\ket{\Psi^-}_{24}
  \end{equation}

  \paragraph{First-order perturbations}
  We can simulate 1-local interactions $X_L$ and $Z_L$ using first-order perturbation theory.
  By \cref{lem:firstorder}, given a perturbation term $H_1$, the first-order perturbation is given by $\Pi_{-}H_1\Pi_{-}$, where $\Pi_{-}$ is the projector into the ground space.
  Note that the ground space is defined in terms of singlet states which have the same form in any local basis, and so
  \begin{equation}
    \Pi_{-}X_iX_j\Pi_{-}=\Pi_{-}Y_iY_j\Pi_{-}=\Pi_{-}Z_iZ_j\Pi_{-}
  \end{equation}
  which we can also check explicitly.
  Although the heavy Hamiltonian $H_0$ is invariant under permutations of the physical qubits, this symmetry is lost when we fix the basis, and so $\Pi_{-}X_iX_j\Pi_{-}$ does depend on $(i,j)$ -- the values are given in \cref{tab:1local}.

  Therefore, we can simulate any real 1-local interaction up to an irrelevant identity term; by \cref{lem:firstorder}, choosing $H_1=\frac{\alpha}{\sqrt{3}} H_{14} + \frac{1}{2}(\frac{\alpha}{\sqrt{3}}-\beta) H_{13}$ will simulate the interaction $\Pi_-H_1\Pi_-=\alpha X_L +\beta Z_L+\frac{1}{2}(\beta-\sqrt{3}\alpha)\1$.

  \begin{table}
    \begin{equation}
      \begin{array}{|c|c|}
        \hline
        (i,j) & \Pi_{-}X_iX_j\Pi_{-} \\
        \hline
        \begin{array}{c}(1,3)\\
          (2,4)\end{array} & -\frac{2}{3}Z_L-\frac{1}{3}\1\\

        \hline
        \begin{array}{c}(1,2)\\
          (3,4)\end{array} & -\frac{1}{\sqrt{3}}X_L+\frac{1}{3}Z_L-\frac{1}{3}\1\\

        \hline
        \begin{array}{c}(1,4)\\
          (2,3)\end{array} & \frac{1}{\sqrt{3}}X_L+\frac{1}{3}Z_L-\frac{1}{3}\1\\
        \hline
      \end{array}
    \end{equation}
    \caption{Effective interactions produced by physical interaction acting on different choices of qubits.}
    \label{tab:1local}
  \end{table}

  \paragraph{Second-order perturbations}
  In order to make an effective interaction between two logical qubits we need to use physical interactions that act between two of these 4-qubit gadgets.
  We label the four physical qubits of one logical qubit as $1,2,3,4$, and the qubits of a second logical qubit with a dash $1',2',3',4'$ and consider a perturbation term of the form $H_2=\sum \alpha_{ij}H_{ij'}$.
  All first-order perturbation terms vanish as it is easy to show that $\Pi_{-}X_i\Pi_{-}=\Pi_{-}Y_i\Pi_{-}=\Pi_{-}Z_i\Pi_{-}=0$ for all $i \in \{1,2,3,4\}$.

  Let $H_0^{\text{tot}}=H_0\ox \1+\1\ox H_0$ be the total heavy Hamiltonian on these 8 qubits, and let $\Pi_{-}^{\text{tot}}$ project onto the groundspace of $H_0^{\text{tot}}$.

  We note that $Z_1\ket{\Psi^{-}}_{12}\ket{\Psi^{-}}_{34}=\ket{\Psi^{+}}_{12}\ket{\Psi^{-}}_{34}$ is an eigenvector of $H_{0}$ with eigenvalue 4, where $\ket{\Psi^{+}}=(\ket{01}+\ket{10})/\sqrt{2}$.
  Since the other eigenvector spanning the ground space of $H_0$, $\ket{\Psi^{-}}_{12}\ket{\Psi^{-}}_{34}$, is of a similar form, it is clear that $Z_1$ maps the ground space of $H_0$ into the eigenspace of eigenvalue 4.
  By unitary invariance of the Heisenberg interaction, and the symmetry between qubits 1, 2, 3, 4, we can say the same for any $X_i$, $Y_i$ or $Z_i$.
  This allows us to simplify the calculation of the second-order perturbation term,
  \begin{align}
    & -\Pi^{\text{tot}}_{-}H_2\Pi_{+}(H_0^{\text{tot}})^{-1}\Pi_{+}H_2\Pi^{\text{tot}}_{-}\\
    & = -\Pi_{-}^{\text{tot}}H_2\Pi_{+}\tfrac{1}{4+4}\Pi_{+}H_2\Pi_{-}^{\text{tot}}
    = -\tfrac{1}{8}\Pi_{-}^{\text{tot}}H_2^2\Pi_{-}^{\text{tot}} \\
    &=-\tfrac{1}{8}(\Pi_{-}\ox\Pi_{-})
      \left(\sum_{i,j,k,l=1}^{4}\alpha_{ij}\alpha_{kl}H_{ij'}H_{kl'}\right)
      (\Pi_{-}\ox\Pi_{-}) \\
    &= -\tfrac{1}{8}\sum_{i,j,k,l=1}^{4}\alpha_{ij}\alpha_{kl}
    \big[    (\Pi_{-}X_iX_k\Pi_{-})\ox(\Pi_{-}X_{j'}X_{l'}\Pi_{-})\\
        & \hspace{1cm} + (\Pi_{-}X_iY_k\Pi_{-})\ox(\Pi_{-}X_{j'}Y_{l'}\Pi_{-})
        +\dots
      \big].
  \end{align}
  Next, one can check that $\Pi_{-}X_iY_k\Pi_{-}=\Pi_{-}X_iZ_k\Pi_{-}=\Pi_{-}Y_iZ_k\Pi_{-}=0$ for any pair $(i,k)$, so many of these terms vanish.
  Remembering also that $\Pi_{-}X_iX_k\Pi_{-}=\Pi_{-}Y_iY_k\Pi_{-}=\Pi_{-}Z_iZ_k\Pi_{-}$, this expression simplifies to
  \begin{equation}
    -\tfrac{1}{8}\Pi_{-}H_2^2\Pi_{-}=-\tfrac{1}{8}\sum_{i,j,k,l=1}^{4}3 \alpha_{ij}\alpha_{kl}(\Pi_{-}X_iX_k\Pi_{-})\ox(\Pi_{-}X_{j'}X_{l'}\Pi_{-}),
  \end{equation}
  where the effective interactions produced by $\Pi_{-}X_iX_k\Pi_{-}$ can be read off again from \cref{tab:1local}.

  \begin{table}
    \[\begin{array}{|c|c|}
        \hline H_2 & \text{2\nbd-local part of effective interaction}\\
        \hline
        H_{11'}\mp H_{33'} & \pm Z_L Z_L\\
        H_{13'}-H_{11'} \pm H_{32'} & \pm Z_L X_L\\
        H_{11'}-2H_{22'}+H_{33'} & X_L X_L\\
        35 H_{11'}+5 H_{22'}-3H_{33'}+5H_{44'} & -X_L X_L\\
        \hline
      \end{array}\]
    \caption{Effective 2\nbd-local interactions produced from different choices of $H_2$ terms, up to a non-negative scaling factor.}
    \label{tab:2local}
  \end{table}

  By \cref{lem:secondorder}, for any $\epsilon > 0$ and sufficiently large $\Delta = \poly(\|H\|,1/\eta,1/\epsilon)$, $\Delta H_0+\Delta^{\frac{1}{2}}H_2+H_1$ $(\Delta,\eta,\epsilon)$-simulates the interaction $\Pi_-H_1\Pi_--\tfrac{1}{8}\Pi_{-}^{\text{tot}}H_2^2\Pi_{-}^{\text{tot}}$.
  Choosing $H_1$ as above we can cancel out any 1-local part of $\tfrac{1}{8}\Pi_{-}^{\text{tot}}H_2^2\Pi_{-}^{\text{tot}}$, so we are interested only in the 2\nbd-local part.
  \cref{tab:2local} shows some choices of $H_2$ with integer coefficients that generate effective interactions whose 2\nbd-local part is proportional to $\pm ZZ$, $\pm ZX$, $\pm XX$.

  By \cref{lem:interference}, we can apply this simulation to each interaction in $H$ in parallel.
  Letting $H'$ denote the overall simulator Hamiltonian, we finally obtain that, for any $\epsilon > 0$ and sufficiently large $\Delta = \poly(\|H\|,1/\eta,1/\epsilon)$, $H'$ is a $(\Delta,\eta,\epsilon)$-simulation of $H$.

  Everything follows through in exactly the same way for the XY interaction.
  If we set $H_{ij} = X_iX_j+Y_iY_j$ and use the same gadget, the ground space is left unchanged.
  So the only thing to check is that $X_i,Y_i,Z_i$ all map the ground space into an eigenspace of $H_0$ again (which they do!).
  Then the simulated interactions will be the same up to a constant factor of $2/3$.
\end{proof}

Finally, we show that every remaining class of qudit Hamiltonians in \cref{fig:reductions} can simulate either XY interactions or Heisenberg interactions, implying that they are all universal too.

\begin{theorem}
  \label{thm:reductions}
  Let $\mathcal{S}$ be a set of interactions on at most 2 qubits.
  Assume that there does not exist $U \in SU(2)$ such that, for each 2\nbd-qubit matrix $H_i \in \mathcal{S}$, $U^{\ox 2} H_i (U^\dg)^{\ox 2} = \alpha_i Z^{\ox 2} + A_i \ox \1 + \1 \ox B_i$,
  where $\alpha_i \in \R$ and $A_i$, $B_i$ are arbitrary single-qubit Hamiltonians.
  Then $\mathcal{S}$-Hamiltonians can simulate either $\{XX+YY+ZZ\}$-Hamiltonians or $\{XX+YY\}$-Hamiltonians.
  Furthermore, if the interaction graph of the target Hamiltonian is a 2D square lattice, then the simulator Hamiltonian may also be chosen to be on a 2D square lattice.
\end{theorem}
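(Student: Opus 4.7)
My plan is to proceed by case analysis on the normal form of the interactions in $\mathcal{S}$, reducing every case to a gadget construction that produces either the Heisenberg or the XY interaction, and then invoking \cref{thm:heisenberg}. First I would decompose each 2-qubit interaction $H_i \in \mathcal{S}$ into a symmetric part $H_i^{\operatorname{sym}}$ (invariant under swap of the two qubits) and an antisymmetric part $H_i^{\operatorname{anti}}$, together with 1-local parts. Using the canonical-form result recalled just after \cref{dfn:perfectsim} (and appearing in \cite{Cubitt-Montanaro}), there exists a single $U \in SU(2)$ bringing every $H_i^{\operatorname{sym}}$ simultaneously into the form $\alpha_i XX + \beta_i YY + \gamma_i ZZ$ plus 1-local Hermitian terms; a similar normal form exists for the antisymmetric pieces. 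Applying $U^{\ox n}$ globally (a perfect simulation) we may assume $\mathcal{S}$ is in canonical form.

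Next I would handle the cases flagged in \cref{fig:reductions}. The hypothesis that $\mathcal{S}$ is \emph{not} simultaneously locally diagonalisable into the form $\alpha_i ZZ + A_i \ox \1 + \1 \ox B_i$ rules out the stoquastic/classical branches. Two main cases remain. If some interaction in $\mathcal{S}$ has nonzero antisymmetric part, then after the canonical rotation it contains (up to a sign) a term proportional to $XZ - ZX$; a second-order mediator-qubit gadget in the style of \cref{lem:secondorder} using this antisymmetric interaction on two pairs $(a,c)$ and $(b,c)$ produces an effective interaction whose leading 2-local part on $(a,b)$ is proportional to $XX + YY$, delivering the XY interaction (the 1-local by-products are cancelled by a compensating $H_1$). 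If instead every interaction is symmetric, then after the rotation we have a family of interactions of the form $\alpha_i XX + \beta_i YY + \gamma_i ZZ + (\text{1-local})$, and the SLD hypothesis forces at least two of $\alpha_i, \beta_i, \gamma_i$ to be nonzero for some $i$ in a way that cannot be removed by another local rotation. If some $H_i$ has $\alpha_i = \beta_i = \gamma_i$ (up to sign and rescaling), we have Heisenberg directly; otherwise, one can combine two or more such interactions through mediator-qubit gadgets (again using \cref{lem:secondorder}, and the \emph{non-interference} guaranteed by \cref{lem:interference}) to cancel the coefficient of one of $\{XX, YY, ZZ\}$, producing an effective XY-type interaction $XX + \alpha YY$, which after a final canonical rescaling simulates $XX + YY$.

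For the final assertion about the 2D square lattice, I would observe that each mediator-qubit gadget above uses a single ancilla per simulated edge and the subspace-encoding gadgets of \cref{thm:heisenberg} use a fixed 4-qubit block per logical site, so both can be embedded locally by enlarging each site of the original square lattice into a constant-size block of physical qubits and laying the mediator qubits on new lattice sites adjacent to the qubits they couple. Since every gadget interaction is between qubits that are already adjacent (or become adjacent after this constant-factor blow-up), the simulator interaction graph remains a 2D square lattice pattern. Composing with \cref{thm:heisenberg} via the transitivity of simulation (\cref{lem:compsim}), the resulting $\mathcal{S}$-Hamiltonian simulates an arbitrary target Hamiltonian.

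\paragraph{Main obstacle.}
The delicate step is the subcase where multiple symmetric interactions are present but no single one is Heisenberg or XY: one must verify that the assumption of non-simultaneous-local-diagonalisability really does force the existence of a second-order perturbative combination whose effective interaction has two equal nonzero Pauli coefficients (so that, up to a local rotation, it becomes $XX + YY$). This amounts to ruling out every ``near-diagonal'' configuration of $(\alpha_i, \beta_i, \gamma_i)$ that fails to produce the needed off-diagonal structure, and is the step where I would expect to spend the most care and appeal most heavily to the two-qubit classification machinery of \cite{Cubitt-Montanaro,Piddock-Montanaro}.
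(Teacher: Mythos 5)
Your high-level plan (symmetric/antisymmetric decomposition, reduce to $\{XX+YY+ZZ\}$ or $\{XX+YY\}$, then invoke \cref{thm:heisenberg}) matches the paper's route, and you correctly flag the rank-1 symmetric case as the delicate step. However, two concrete problems undermine the argument as written.

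First, you assert that a single $U\in SU(2)$ brings \emph{every} $H_i^{\operatorname{sym}}\in\mathcal{S}$ simultaneously into the form $\alpha_i XX+\beta_i YY+\gamma_i ZZ+(\text{1-local})$. This is false in general: the canonical-form result from~\cite{Cubitt-Montanaro} diagonalises the $3\times 3$ real symmetric Pauli-coefficient matrix $M^{(i)}$ of a \emph{single} interaction, and distinct $M^{(i)}$ need not commute, so no common $U$ exists. The paper's proof works around this by introducing the notion of Pauli rank: it either selects a single $H_i$ of Pauli rank $\ge 2$ and restricts the construction to use only $H_i$ interactions, or — when all interactions have Pauli rank $1$ — uses the non-SLD hypothesis to exhibit a noncommuting pair $H_i,H_j$ whose linear combination has rank $\ge 2$, and then reduces to the single-interaction case. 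Your proposal needs this Pauli-rank case analysis rather than a global canonicalisation.

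Second, you do not address deletion of the 1-local parts $A_i\ox\1 + \1\ox B_i$ that appear intrinsically in the interactions of $\mathcal{S}$ (as distinct from the 1-local by-products of a gadget, which your compensating $H_1$ handles). The paper treats this with a dedicated 4-ancilla gadget: for an interaction $H=XX+\alpha YY+\beta ZZ + A\1+\1 A$, it attaches qubits $a,b,c,d$ with heavy term $H_0 = H_{ab}+H_{cd}-H_{ac}-H_{bd}$, whose unique ground state is maximally entangled in a way that turns an incoming $-H$ interaction into a $-A$ field that cancels the unwanted 1-local term; the antisymmetric case uses $H_0 = H_{ab}+H_{bc}+H_{cd}+H_{da}$. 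Without this, you cannot reduce to the ``clean'' interactions $\{XX+\alpha YY+\beta ZZ\}$, $\{XZ-ZX\}$ to which the subspace-encoding gadgets of \cref{tab:subspaces} are applied. A minor further point: those gadgets are \emph{first-order subspace encodings} (logical qubit in the ground space of a 3-qubit Hamiltonian, analysed via \cref{lem:firstorder}), not second-order mediator gadgets; the mediator alternatives appear only in~\cite{Piddock-Montanaro} and are used specifically to obtain the square-lattice version, which the paper also does not derive from a generic constant-blow-up argument but from explicit replacement gadgets.
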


Observe that the assumption in the theorem is equivalent to assuming that the set formed by extracting the 2\nbd-local parts of each interaction in $\mathcal{S}$ is not simultaneously locally diagonalisable.
\Cref{thm:reductions} was first proven in~\cite{Cubitt-Montanaro}, with the restriction to 2D square lattices shown in~\cite{Piddock-Montanaro}.
These proofs use different terminology (e.g.\ they prove ``reductions'' rather than ``simulations'').
However, all the gadgets used are examples of mediator qubit gadgets or first order subspace encoding gadgets which, as described in \cref{sec:perturbative}, give simulations in our terminology.
We therefore restrict ourselves here to sketching the arguments of~\cite{Cubitt-Montanaro, Piddock-Montanaro}.
See~\cite{Cubitt-Montanaro, Piddock-Montanaro} for a full proof of correctness and technical details.

\begin{table}
  \begin{center}
    \begin{tabular}{|l|l|l|}
      \hline Simulator interaction $H$ & Simulated interaction $H'$ & Gadget\\
      \hline $XX + \alpha YY$ & $XX+YY$ & $H_{ab} + H_{bc}$\\
      $XX + \alpha YY + \beta ZZ$ & $XX + \alpha' YY$ & $H_{ab} - H_{bc}$\\
      $XZ-ZX$ & $XX+YY$ & $H_{ab} + H_{bc} + H_{ca}$\\
      \hline
    \end{tabular}
    \caption[Subspace encodings used in \cref{thm:reductions}.]{%
      Subspace encodings used in \cref{thm:reductions}.
      In each case a qubit is encoded within the ground space of $H$ acting on three qubits labelled $a$--$c$.
      Here $\alpha$, $\beta$, $\alpha'$ are fixed nonzero real numbers.}
    \label{tab:subspaces}
  \end{center}
\end{table}

\begin{proof}[sketch]
  The claim follows by chaining together various simulations from~\cite{Cubitt-Montanaro} in the same order as used in that work; the sequence of simulations used is illustrated in \cref{fig:reductions}.
  To prove the final part of the theorem, each of the gadgets used in~\cite{Cubitt-Montanaro} can be replaced with a gadget from~\cite{Piddock-Montanaro} which fits onto a square lattice.
  Most of the steps of the argument show that, given access to one interaction $H$, we can effectively produce another interaction $H'$.
  Three of these are listed in \cref{tab:subspaces}.
  These are all simulations of the subspace encoding type, where we encode one logical qubit within the ground space of a Hamiltonian on 3 physical qubits.
  The simulations can be analysed using \cref{lem:firstorder} and, as they are subspace encodings, satisfy the definition of simulation.
  By applying the right interactions across qubit triples, we obtain new effective interactions between logical qubits.
  The effective interactions produced are calculated in~\cite{Cubitt-Montanaro}.
  Alternatively the mediator qubit gadgets of Figure 8 and Figure 11 of~\cite{Piddock-Montanaro} may be used to perform the same simulations on a square lattice.

  A somewhat different case is the interaction $H = XX + \alpha YY + \beta ZZ + A\1 + \1 A$, where at least one of $\alpha$ and $\beta$ is nonzero.
  Here the available interaction corresponds to one which was considered in \cref{tab:subspaces}, but with an additional 1\nbd-local term of some form.
  The simulation deletes these 1\nbd-local terms by introducing 4 ancilla qubits for each logical qubit $a$.
  Labelling these qubits $a$--$d$, it turns out that the ground state of $H_0 = H_{ab} + H_{cd} - H_{ac} - H_{bd}$ is unique and maximally-entangled across the $(a-c:d)$ split.
  If these four qubits are forced to be in this state, applying a $-H$ interaction between 4 and $a$ corresponds to a $-A$ term applied to $a$.
  This allows the local $A$ terms to be effectively deleted for each $H$ interaction used.
  The corresponding isometry $V$ attaches 4 ancilla qubits for each of the original qubits, in the ground state of $H_0$.
  The interaction $H = XZ - ZX + A\1 - \1 A$ is similar; here the local part of $H$ can be deleted using $H_0 = H_{ab} + H_{bc} + H_{cd} + H_{da}$.
  Section 4.6 of~\cite{Piddock-Montanaro} shows how these gadget constructions may be adjusted slightly such that they fit onto a 2D square lattice.

  Now that these special cases have been dealt with, to complete the argument we need to consider an arbitrary set $\mathcal{S}$ of 2\nbd-qubit interactions where there is no $U \in SU(2)$ such that, for each 2\nbd-qubit matrix $H_i \in \mathcal{S}$, $U^{\ox 2} H_i (U^\dg)^{\ox 2} = \alpha_i Z^{\ox 2} + A_i \1 + \1 B_i$.
  We sketch the argument and defer to~\cite{Cubitt-Montanaro} for details.

  Any 2\nbd-qubit interaction $H_i$ can be decomposed in terms of parts which are symmetric and antisymmetric under interchange of the qubits on which it acts, and each of these parts can be extracted by taking linear combinations of $H_i$ and the interaction obtained by swapping the two qubits; so we can assume that all the interactions in $\mathcal{S}$ are either symmetric or antisymmetric.
  The 2\nbd-local part of any symmetric interaction $H_i$ can be written as $\sum_{s,t \in \{x,y,z\}} M^{(i)}_{st} \sigma_s \ox \sigma_t$ for some symmetric $3 \times 3$ matrix $M^{(i)}$.
  Define the Pauli rank of $H_i$ to be the rank of $M^{(i)}$.
  If there exists $H_i \in \mathcal{S}$ with Pauli rank 2, we consider Hamiltonians produced only using $H_i$ interactions.
  As discussed in \cref{sec:hamsim}, by applying local unitaries and up to rescaling and relabelling Pauli matrices, we can replace $H_i$ with $XX + \alpha YY + \beta ZZ + A\1 + \1 A$ for some $A$, and some $\alpha,\beta \in \R$ such that at least one of them is nonzero.
  This is the special case we just considered.

  Otherwise, all $H_i \in \mathcal{S}$ have Pauli rank 1; we also know that there must exist $H_i, H_j \in \mathcal{S}$ such that the 2\nbd-local parts of $H_i$ and $H_j$ do not commute, by the assumptions of the theorem.
  This implies that there must exist some linear combination of $H_i$ and $H_j$ which has Pauli rank at least 2.
  Considering this linear combination, we are back in the same special case as before.
  Finally, the case where $\mathcal{S}$ contains an antisymmetric interaction can be dealt with in a similar way, by using local unitaries to put that interaction into the previously considered canonical form $XZ - ZX + A\1 - \1 A$.
\end{proof}

We finally observe that it was shown in~\cite{Piddock-Montanaro} that certain interactions remain universal even if they are only permitted to occur with non-negative weights.
Indeed, that work showed that the class of qubit Hamiltonians whose interactions are of the form $\alpha XX + \beta YY + \gamma ZZ$, where $\{\alpha + \beta, \alpha + \gamma, \beta + \gamma\} > 0$, can simulate qubit Hamiltonians with arbitrarily positively or negatively weighted interactions of the form $\alpha' XX + \beta' YY + \gamma' ZZ$, for some $\alpha'$, $\beta'$, $\gamma'$ such that at least two of $\alpha'$, $\beta'$, $\gamma'$ are nonzero.
This implies, for example, that the antiferromagnetic Heisenberg interaction is universal.


\subsection{Indistinguishable particles}
Throughout this work so far, we have only considered Hamiltonians on distinguishable particles with finite-dimensional Hilbert spaces.
As stated, our results -- and even the definitions of Hamiltonian encoding and simulation -- do not apply to indistinguishable particles or infinite-dimensional Hilbert spaces.
Extending these definitions to arbitrary self-adjoint operators on infinite-dimensional Hilbert spaces is beyond the scope of the present article.\footnote{As the definition and characterisation of encodings in particular is very $C^*$-algebraic in character, it does not seem too difficult to generalise.}

However, as bosonic and fermionic systems are ubiquitous in many-body physics, and our main focus is to show that there exist simple, universal quantum models that are able to simulate the physics of any other physical system, we will address the question of whether universal spin models such as the Heisenberg- and XY-models can simulate indistinguishable particles.
In fact, the required simulations follow from standard techniques for mapping fermionic and bosonic operators to spin operators, so we only sketch the arguments here.

\subsubsection{Fermions}
The canonical anti-commutation relation (CAR) algebra describing fermions is generated by fermionic creation and annihilation operators $c_i$, $c_i^\dg$ satisfying $\{c_i,c_j\} = 0$ and $\{c_i,c^\dg_j\} = \delta_{ij}$ (where the subscript indexes different fermionic modes).
This algebra is finite-dimensional (as long as the single-particle Hilbert space is).
It is well known that this algebra can be embedded into an operator algebra acting on a many-qubit system, e.g.\ by the well-known Jordan-Wigner transformation: $c_i = -\bigotimes_{j\leq i}Z_i \ox \frac{X_i + iY_i}{2}$, where we define some arbitrary total-ordering on the qubits.
However, this is not sufficient for our purposes.
It transforms individual fermionic creation or annihilation operators into operators that act non-trivially on all qubits in the system, so does not give a local encoding.

The mapping introduced by Bravyi and Kitaev~\cite{Bravyi-Kitaev} improves this to $\log n$-local operators (where $n$ is the total number of qubits)\footnote{Very recent independent work has given an analysis and comparison of different fermion-to-qubit mappings~\cite{havlicek17}.}.
However, simulating these $\log n$-local interactions using a universal model with two-body interactions, such as the Heisenberg- or XY-model, will require local interactions whose norms scale super-polynomially in $n$.
Whilst this gives a simulation with polynomial overhead in terms of the system size, it is not strictly speaking efficient according to our definition due to this super-polynomial scaling of the local interaction strengths.

Both of these mappings produce qubit Hamiltonians with the same number of qubits as fermionic modes.
This is much stronger than required for an efficient simulation in the the spirit of \cref{dfn:sim}, which allows a polynomial overhead in the simulator system size.
The fermion-to-spin mappings studied in~\cite{Verstraete-Cirac,Ball_fermions,Farrelly-Short,Troyer_fermions} preserve locality by adding additional auxiliary fermionic modes before mapping to qubits, at the expense of a polynomial system-size overhead.
The auxiliary fermions must be restricted to the appropriate subspace, which can be done by adding strong local terms to the Hamiltonian (see \cite{Ball_fermions,Verstraete-Cirac}).
(These strong local terms mutually commute, and when transformed to spin operators become products of Paulis.
So these terms in fact form a stabilizer Hamiltonian.)
Together with these strong local terms, this mapping gives a spin Hamiltonian that exactly reproduces the original fermionic Hamiltonian in its low-energy subspace.
The resulting spin Hamiltonian is local if the simulated fermionic system is a regular lattice Hamiltonian containing only even products of fermionic creation and annihilation operators \cite{Verstraete-Cirac}.
Simulating the resulting spin Hamiltonian using any universal model then gives an efficient simulation for this important class of fermionic Hamiltonians.

\subsubsection{Bosons}
In the case of bosons, the canonical commutation relation (CCR) algebra, generated by bosonic creation and annihilation operators $a_i$, $a_i^\dg$ satisfying $[a_i,a_j] = 0$ and $[a_i,a^\dg_j] = \delta_{ij}$, is infinite-dimensional.
To simulate bosons with spins, one must necessarily restrict to some finite-dimensional subspace of the full Hilbert space, and only simulate the system within that subspace.
The appropriate choice of subspace will depend on the particular bosonic system, and which physics one wishes to simulate, so one cannot give a completely general result here.

However, a natural choice will often be to limit the maximum number of bosons to some finite value $N$, i.e.\ to restrict to the finite-dimensional subspace spanned by eigenstates of the total number operator $\sum_i a_i^\dg a_i$ with eigenvalue $\leq N$.
For systems containing multiple bosonic modes, we can alternatively limit the maximum number of bosons in each mode separately, i.e.\ restrict to the subspace spanned by eigenvectors with eigenvalue $\leq N$ for each $a_i^\dg a_i$ individually.
(Since $[a_i^\dg a_i,a_j^\dg a_j] = 0$, this subspace also contains the subspace with maximum \emph{total} number of bosons $N$.)

In this way, each bosonic mode is restricted individually to a finite-dimensional subspace that can be represented by the Hilbert space of a qudit.
The original bosonic Hamiltonian restricted to this subspace is clearly equivalent to some Hamiltonian on these qudits.
Furthermore, since $[a_i,a_j] = [a_i,a_j^\dg] = [a_i^\dg,a_j^\dg] = 0$ for $i\neq j$, $k$-particle bosonic interactions become $k$-local interactions on the qudits.
The resulting $k$-local qudit Hamiltonian can then be simulated by the universal model, as shown in previous sections.

In fact, restricting the bosonic creation and annihilation operators to the finite-particle-number subspace in this way is a well-known procedure.
The equivalent qudit operators $S_i^\pm$ are given by the (exact) Holstein-Primakov transformation~\cite{Holstein-Primakoff}:
\begin{equation}
  S_i^+ =\sqrt{d-1}\sqrt{1-\frac{a_i^\dg a_i}{d-1}}\, a_i,\qquad
  S_i^- =\sqrt{d-1}a_i^\dg\sqrt{1-\frac{a_i^\dg a_i}{d-1}}.
\end{equation}


\subsection{Universal stoquastic simulators}
\label{sec:stoquastic}

It was previously shown by Bravyi and Hastings~\cite{Bravyi-Hastings} that the Ising model with transverse fields acts as a universal simulator for the class of stoquastic 2-local Hamiltonians.
The transverse Ising model (TIM) corresponds to Hamiltonians which can be written as a weighted sum of terms picked from the set $\mathcal{S} = \{XX,Z\}$.
A Hamiltonian is said to be stoquastic if its off-diagonal matrix entries are all nonpositive in the computational basis~\cite{bravyi06}.
Bravyi and Hastings used a slightly different notion of simulation to the one we define here; as discussed in \cref{sec:hamsim}, the most important difference is that in our notion of simulation, the encoding operation must be local.

In~\cite{Bravyi-Hastings}, a sequence of 5 encodings is used to map 2\nbd-local stoquastic Hamiltonians to the transverse Ising model.
We check each of the encodings in turn to see that the encodings are indeed local, so the overall result goes through with our definitions.
The encodings proceed through a succession of other physical models, which we avoid defining here; see~\cite{Bravyi-Hastings} for the details.

The encodings used are:
\begin{itemize}
\item TIM simulates HCD on a triangle-free graph: the encoding is the identity map.
\item HCD on a triangle-free graph simulates HCB$_2$: the encoding attaches one additional qubit $v'$ to each vertex $v$, and a qubit for each edge in the interaction graph.
  Each of the edge qubits is in the state $\ket{0}$, and for each vertex $v$, $\ket{0}_v$ is encoded as $\ket{0}_v \ket{0}_{v'}$, $\ket{1}_v$ is encoded as $\ket{1}_v \ket{1}_{v'}$.
  This is clearly a local encoding.
\item HCB$_2$ simulates HCB$_1$: the encoding attaches $\poly(n)$ additional qubits, each in the state $\ket{0}$.
\item HCB$_1$ simulates HCB$^*_1$: the encoding is the identity map.
\item HCB$^*_1$ simulates 2\nbd-local stoquastic Hamiltonians: the encoding maps each qubit to a subspace of two qubits in a ``dual rail'' encoding, and attaches some additional ``mediator'' qubits in a state which is a product of states of $O(1)$ qubits.
\end{itemize}
As these encodings are all local, we obtain that the transverse Ising model is a universal simulator for the class of 2-local stoquastic Hamiltonians.

To extend this simulation to $k$-local stoquastic Hamiltonians for $k > 2$, one can use a result from~\cite{bravyi06}.
This work gave (in our terminology) a simulation of $k$-local termwise-stoquastic Hamiltonians with 2-local stoquastic Hamiltonians.
The simulation is efficient for $k=O(1)$.
A termwise-stoquastic $k$-local Hamiltonian $H$ is one for which the matrices $H_S$ occurring in the decomposition $H = \sum_S H_S$, where each subset $S$ of subsystems on which $H_S$ acts is of size at most $k$, can be taken to be stoquastic.
Although all stoquastic Hamiltonians on $n$ qubits are clearly termwise-stoquastic when viewed as $n$-local Hamiltonians, not all stoquastic $k$-local Hamiltonians are termwise-stoquastic when viewed as $k$-local~\cite{bravyi06}.
Thus, using the simulation of~\cite{bravyi06}, we obtain that the transverse Ising model is a universal simulator for stoquastic Hamiltonians, but the simulation is only efficient for termwise-stoquastic Hamiltonians.

It is shown in~\cite{Cubitt-Montanaro}, using similar techniques to the proof of \cref{thm:reductions}, that any family of Hamiltonians built from interactions of the form $H = \alpha Z^{\ox 2} + A \ox \1 + \1 \ox B$, where $A$ or $B$ is not diagonal, can simulate TIM Hamiltonians. Thus any family of Hamiltonians of this form is also a universal stoquastic Hamiltonian simulator.


\subsection{Classification of two-qubit interactions}

We can complete the universality picture for two-qubit interactions by classifying the interactions into universality families.
Combining the result of the previous section with \cref{thm:reductions} and a previous classification of universal classical Hamiltonians~\cite{Science}, we obtain a full classification of universality classes:

\begin{theorem}
  Let $\mathcal{S}$ be any fixed set of two-qubit and one-qubit interactions such that $\mathcal{S}$ contains at least one interaction which is not 1-local. Then:
  \begin{itemize}
  \item If there exists $U \in SU(2)$ such that $U$ locally diagonalises $\mathcal{S}$, then $\mathcal{S}$\nbd-Hamiltonians are universal classical Hamiltonian simulators~\cite{Science};
  \item Otherwise, if there exists $U \in SU(2)$ such that, for each 2\nbd-qubit matrix $H_i \in \mathcal{S}$, $U^{\ox 2} H_i (U^\dg)^{\ox 2} = \alpha_i Z^{\ox 2} + A_i \ox \1 + \1 \ox B_i$, where $\alpha_i \in \R$ and $A_i$, $B_i$ are arbitrary single-qubit Hamiltonians, then $\mathcal{S}$\nbd-Hamiltonians are universal stoquastic Hamiltonian simulators~\cite{Bravyi-Hastings,Cubitt-Montanaro};
  \item Otherwise, $\mathcal{S}$\nbd-Hamiltonians are universal quantum Hamiltonian simulators.
  \end{itemize}
\end{theorem}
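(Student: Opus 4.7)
The plan is to prove the trichotomy by case analysis, chaining together the simulation results already established, and then invoking transitivity of simulation via \cref{lem:compsim} to collapse each chain into a single universality statement.

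First I would check that the three conditions are mutually exclusive and exhaustive: the third case is, by construction, the negation of the first two, while the first (simultaneous local diagonalisability) is strictly stronger than the second, since if a common $U$ diagonalises every $H_i$, then in particular each $U^{\ox 2} H_i (U^\dg)^{\ox 2}$ has the form $\alpha_i ZZ + A_i \ox \1 + \1 \ox B_i$ with diagonal $A_i, B_i$. Conversely, being in case~2 but not in case~1 forces at least one $H_i$ to have a non-diagonal single-qubit part $A_i$ or $B_i$ (otherwise the same $U$ would diagonalise everything).

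For case~1, I would apply $U^{\otimes n}$ to any $\mathcal{S}$-Hamiltonian to obtain a perfect simulation by a classical Hamiltonian built from $\{Z, ZZ\}$-terms, then invoke the classical result of \cite{Science} that such Ising-type Hamiltonians with local fields form a universal classical simulator. For case~2, I would apply $U^{\otimes n}$ to reduce to Hamiltonians of the form $\sum_i \alpha_i (ZZ)_i + \text{1-local}$, where at least one single-qubit term is non-diagonal, and then invoke the result cited at the end of \cref{sec:stoquastic} (from \cite{Cubitt-Montanaro}) that such families can simulate the transverse Ising model; composing via \cref{lem:compsim} with the stoquastic universality of TIM from \cite{Bravyi-Hastings} gives universality for stoquastic Hamiltonians. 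For case~3, the hypothesis of \cref{thm:reductions} holds by definition, so $\mathcal{S}$-Hamiltonians simulate either $\{XX+YY+ZZ\}$- or $\{XX+YY\}$-Hamiltonians; by \cref{thm:heisenberg} these in turn simulate any Hamiltonian of the form treated in \cref{thm:pauliterms}; then chaining through \cref{lem:YYfromXZ}, \cref{complex-to-real sim}, \cref{prop:qudits} and the bosonic/fermionic mappings (as summarised in \cref{fig:reductions}) yields universal quantum simulation.

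The main obstacle is the bookkeeping of the chaining step in each case: to conclude universality I must repeatedly invoke \cref{lem:compsim}, and each invocation requires the error parameters and energy cutoffs of the intermediate simulations to satisfy $\epsilon_A, \epsilon_B \le \|C\|$ and $\Delta_B \ge \|C\| + 2\epsilon_A + \epsilon_B$. In each case the intermediate simulators have $\|H'\| = \poly(n, 1/\epsilon, 1/\eta, \Delta)$ by \cref{dfn:sim}, so the cutoffs and precisions can be rescaled at each stage to meet these hypotheses; but one must verify this scaling propagates through a constant-depth chain without blowing up, which requires a careful choice of parameters at the bottom level. Once this is done, the three cases collectively cover every $\mathcal{S}$ satisfying the hypothesis of the theorem, completing the classification.
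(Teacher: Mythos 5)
Your proposal is correct and matches the paper's (implicit) argument: the paper does not spell out a proof of this theorem, instead stating that it ``follows by combining'' \cref{thm:reductions}, the stoquastic result of \cref{sec:stoquastic} (and its extension from~\cite{Cubitt-Montanaro}), and the classical classification from~\cite{Science}. Your case analysis, the mutual-exclusivity argument showing case~1 $\subsetneq$ case~2, and the chaining of simulations via \cref{lem:compsim} are exactly the intended structure, and your note about verifying the hypotheses of \cref{lem:compsim} at each step of a constant-depth chain is the right thing to worry about, though standard.

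One small point of care in case~2: the result quoted from~\cite{Cubitt-Montanaro} is stated for a \emph{single} interaction of the form $\alpha Z^{\ox 2} + A\ox\1 + \1\ox B$ with $A$ or $B$ non-diagonal, whereas here $\mathcal{S}$ is a set. Being in case~2 but not case~1 only guarantees that \emph{some} $H_i\in\mathcal{S}$ has a non-diagonal 1-local part, while the ``not purely 1-local'' hypothesis only guarantees that \emph{some} $H_j\in\mathcal{S}$ has $\alpha_j\neq 0$; these need not be the same interaction. You would want to note (as the full classification in~\cite{Cubitt-Montanaro} does) that one can take linear combinations of interactions from $\mathcal{S}$, or simply that the TIM-simulation result there is stated for sets. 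Similarly in case~1, the conclusion requires that after local diagonalisation the resulting diagonal interactions are rich enough to hit the universal classical class of~\cite{Science}; the ``not all 1-local'' hypothesis is what supplies a genuine $ZZ$-type term there. Both of these are minor and consistent with what the paper omits, but flagging them would tighten the argument.
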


We remark that the definition of universal classical simulation used in~\cite{Science} does not quite match up with our notion of universal quantum simulation.
Similarly to ours, that work associates a small number of physical qubits with each logical qubit in the simulation.
However, in~\cite{Science} the sets of physical qubits associated with distinct logical qubits are allowed to overlap.
Also note that, as discussed in \Cref{sec:stoquastic}, the second (stoquastic) class of universal simulators is only efficient for termwise-stoquastic Hamiltonians.

Since the two-qubit interactions that are \emph{not} universal must satisfy a non-trivial set of algebraic constraints, this classification immediately implies that generic two-qubit interactions are universal, an implication that can be formalised as follows:

\begin{corollary}
  Given any measure on the set of two-qubit Hamiltonians with full support, the subset of universal Hamiltonians has full measure.
\end{corollary}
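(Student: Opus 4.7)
The plan is to show that the complement of the universal set (the two non-universal classes from the preceding theorem) is contained in a proper semi-algebraic subset of the space of two-qubit Hamiltonians, and therefore has Lebesgue measure zero; for any reasonable (absolutely continuous) measure with full support, this will immediately yield the corollary. The mild subtlety is that ``full support'' alone does not force a lower-dimensional set to be measure zero, so I read the hypothesis as implicitly including absolute continuity, which is the natural regularity condition in this physics setting.

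First I would parametrise the two-qubit Hamiltonians by their sixteen Pauli coefficients $H = \sum_{\alpha,\beta \in \{I,X,Y,Z\}} c_{\alpha\beta}\,\sigma_\alpha\otimes\sigma_\beta$ with $c_{\alpha\beta}\in\R$, identifying the space with $\R^{16}$ and endowing it with Lebesgue measure. Next I would translate the two non-universal defining conditions into polynomial ones. For the classical class, the condition is the existence of $U\in SU(2)$ such that $U^{\ox 2}H(U^\dg)^{\ox 2}$ has vanishing Pauli coefficients on the off-diagonal Pauli terms; for the stoquastic class, the condition is the existence of $U\in SU(2)$ such that $U^{\ox 2}H(U^\dg)^{\ox 2} = \alpha\,Z\ox Z + A\ox\1 + \1\ox B$, which again amounts to a system of polynomial equations in the entries of $U$ and the $c_{\alpha\beta}$. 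Because $SU(2)$ is itself a compact real algebraic variety, applying the Tarski--Seidenberg projection theorem to eliminate the existential quantifier over $U$ shows that each non-universal class is a semi-algebraic subset of $\R^{16}$, and the union of the two is again semi-algebraic.

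The key step is to verify that this semi-algebraic set is \emph{proper}. This is immediate from \cref{thm:heisenberg}: the Heisenberg interaction $XX+YY+ZZ$ is universal, and it manifestly satisfies neither algebraic condition (no local change of basis makes it diagonal, and none brings it into the $\alpha\,ZZ + A\ox\1+\1\ox B$ form, since $XX$ and $YY$ would persist after conjugation by $U^{\ox 2}$). A standard result in real algebraic geometry then gives that a proper semi-algebraic subset of $\R^{n}$ is a finite union of smooth submanifolds of strictly lower dimension, hence has Lebesgue measure zero. Consequently, for any measure on $\R^{16}$ absolutely continuous with respect to Lebesgue and of full support, the non-universal Hamiltonians form a null set, and the universal Hamiltonians have full measure.

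The main obstacle — and the only delicate point — is making the algebraic elimination of $U\in SU(2)$ rigorous while simultaneously exhibiting a witness (Heisenberg) that the resulting set is proper; both are standard but must be spelled out carefully. Everything else is bookkeeping: once the non-universal set is identified as a proper semi-algebraic subset of a Euclidean space, the measure-theoretic conclusion follows essentially automatically.
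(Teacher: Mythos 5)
Your high-level strategy matches the paper's: characterise the non-universal two-qubit interactions via the classification theorem as a set defined by algebraic constraints, and conclude that it has measure zero. The observation about absolute continuity is correct and worth noting --- ``full support'' alone does not exclude a measure with an atom on a non-universal Hamiltonian, so the hypothesis must be read as including absolute continuity for the statement to hold as written.

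However, there is a gap in the concluding step. The claim that ``a proper semi-algebraic subset of $\R^n$ is a finite union of smooth submanifolds of strictly lower dimension'' is false: semi-algebraic sets cut out by inequalities (a half-space, a ball) can be proper yet full-dimensional and of positive Lebesgue measure. Showing via the Heisenberg witness that the non-universal set is not all of $\R^{16}$ therefore does not by itself force positive codimension. What is needed is a dimension bound, not a properness argument. Each non-universal class is the image of the polynomial map $\Phi: SU(2) \times \mathcal{D} \to \Herm_4 \simeq \R^{16}$, $\Phi(U, D) = (U^\dg)^{\ox 2} D\, U^{\ox 2}$, where $\mathcal{D}$ is either the $4$-dimensional space of diagonal two-qubit Hamiltonians (classical case) or the at most $9$-dimensional space of Hamiltonians of the form $\alpha Z^{\ox 2} + A \ox \1 + \1 \ox B$ (stoquastic case). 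Since $\dim(SU(2) \times \mathcal{D}) \le 3 + 9 = 12 < 16$ and polynomial maps do not increase semi-algebraic dimension, each class is a semi-algebraic set of dimension strictly less than $16$ and hence of Lebesgue measure zero. Once this dimension count is made, neither the Heisenberg witness nor the appeal to Tarski--Seidenberg is essential.
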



\subsection{Spatial sparsity and simulation on a square lattice}

Up to this point, we have not assumed anything about the spatial locality of the Hamiltonians we are simulating, nor the simulator Hamiltonians.
Indeed, even if the target Hamiltonian has a rather simple spatial structure -- for example, is a lattice Hamiltonian -- this structure need not be preserved in the simulator Hamiltonian.
We now show that in certain cases we can find universal simulators where all interactions take place on a square lattice.
The price paid for simulating general Hamiltonians in this way (for example, those with long-range interactions) is an exponential increase in the weights required in the simulator.
However, when the target Hamiltonian is spatially sparse (a class which encompasses all 2D lattice Hamiltonians), this exponential increase can be avoided.

\begin{definition}[Spatial sparsity~\cite{Oliveira-Terhal}]
  A spatially sparse interaction graph $G$ on $n$ vertices is defined as a graph in which \begin{inparaenum}
  \item every vertex participates in $O(1)$ edges,
  \item there is a straight-line drawing in the plane such that every edge overlaps with $O(1)$ other edges and the length of every edge is $O(1)$.
  \end{inparaenum}
\end{definition}

\begin{lemma}
\label{lem:squarelattice}
Let $\mathcal{S}$ be either $\{XX+YY+ZZ\}$, $\{XX+YY\}$ or $\{XX,Z\}$.
Then any $\mathcal{S}$-Hamiltonian $H$ on $n$ qubits can be simulated by a $\mathcal{S}$-Hamiltonian on a square lattice of $\poly(n)$ qubits using weights of $O(n\Lambda_0(1/\epsilon+1/\eta))^{\poly(n)}$ size, where $\Lambda_0$ is the size of the largest weight in $H$. Furthermore if the target Hamiltonian is spatially sparse, then the weights need only be of size $O(\poly(n\Lambda_0(1/\epsilon+1/\eta)))$.

\end{lemma}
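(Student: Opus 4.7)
The plan is to embed the interaction graph of an arbitrary $\mathcal{S}$-Hamiltonian into a 2D square lattice via perturbative gadgets that remain within the family $\mathcal{S}$. Two species of gadget are needed: a \emph{subdivision gadget} that replaces a 2-local interaction between distant lattice sites by a chain of short-range interactions through a newly introduced mediator qubit, and a \emph{crossing gadget} that resolves two spatially crossing interactions into a planar configuration. For generic 2-local interactions both are standard~\cite{Oliveira-Terhal}, and for the families $\{XX+YY+ZZ\}$, $\{XX+YY\}$, $\{XX,Z\}$ the corresponding lattice-preserving versions are provided by \cite{Piddock-Montanaro} (cf.\ the last clause of \cref{thm:reductions}) and by \cite{Bravyi-Hastings} in the stoquastic case. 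Each gadget is analysed via \cref{lem:firstorder} or \cref{lem:secondorder} and hence yields a simulation in the sense of \cref{dfn:sim}.

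First I would fix a straight-line drawing of the interaction graph of $H$ at vertices of a $\poly(n)\times\poly(n)$ lattice. I would then resolve crossings and subdivide long edges in rounds, applying within each round as many mutually disjoint gadgets in parallel as possible; by \cref{lem:interference}, parallel mediator gadgets acting on disjoint mediator qubits do not interfere, so the effective Hamiltonian is the sum of the effective interactions of each gadget. Composing the successive rounds by \cref{lem:compsim} gives a simulation of $H$ by an $\mathcal{S}$-Hamiltonian on the lattice. If the resulting simulation uses $k$ rounds, and each round requires energy scale $\Delta_i = \poly(\Lambda_{i-1},1/\eta_i,1/\epsilon_i)$ where $\Lambda_i$ is the norm of the simulator at round $i$, then telescoping through \cref{lem:compsim} gives an overall weight bound of the form $(n\Lambda_0(1/\eta+1/\epsilon))^{O(k)}$.

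For a spatially sparse $H$, every edge has $O(1)$ length and meets $O(1)$ other edges in its planar drawing, so only $k=O(1)$ rounds of gadgets suffice and the overall weights remain $\poly(n\Lambda_0(1/\eta+1/\epsilon))$. For a general $H$, edges may be long and there may be up to $\Theta(n^4)$ crossings in any straight-line drawing, requiring $k=\poly(n)$ rounds and yielding the stated weights of $(n\Lambda_0(1/\eta+1/\epsilon))^{\poly(n)}$. The main obstacle will be ensuring that each gadget applied stays rigorously within $\mathcal{S}$ and on the lattice, particularly for $\{XX+YY+ZZ\}$ and $\{XX+YY\}$ where the strong symmetry of the interactions makes the generic Oliveira--Terhal gadgets unsuitable. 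The constructions of \cite{Piddock-Montanaro} together with the subspace-encoding technology from the proof of \cref{thm:heisenberg} supply gadgets that preserve both $\mathcal{S}$-membership and the square-lattice constraint, reducing the remaining work to routine bookkeeping of the error and energy parameters through repeated applications of \cref{lem:compsim}.
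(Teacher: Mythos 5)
There is a genuine gap in your quantitative analysis and a missing gadget. First, the weight telescoping is wrong: each round of second-order perturbation theory (\cref{lem:secondorder}) needs $\Delta \gtrsim \Lambda^6/\epsilon^2$, so the weight scale is raised to a constant power at every round, not just multiplied by a factor. After $k$ rounds the weights are of order $\bigl(\poly(n)\Lambda_0(1/\epsilon+1/\eta)\bigr)^{6^k}$, not $\bigl(n\Lambda_0(1/\epsilon+1/\eta)\bigr)^{O(k)}$ as you claim. Consequently, if you genuinely needed $k=\poly(n)$ rounds as you assert for the general case, the resulting weights would be doubly exponential in $n$, far exceeding the $\bigl(n\Lambda_0(1/\epsilon+1/\eta)\bigr)^{\poly(n)}$ bound in the lemma. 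The paper's construction is specifically designed to need only $r = O(\log n)$ rounds of perturbation theory, which (via $6^{O(\log n)} = \poly(n)$) yields the stated bound.

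Second, you omit the \emph{fork gadget}, which is essential. Subdivision and crossing gadgets alone cannot embed the complete graph on $n$ vertices in a square lattice: subdividing all $n-1$ edges incident to a vertex leaves that vertex with degree $n-1$, and a 2D lattice only supports degree $4$. The fork gadget merges two edges incident on a vertex into one, and arranging $O(\log n)$ fork gadgets into a binary tree reduces an $(n-1)$-degree vertex to a tree of bounded-degree vertices — this is precisely what keeps the total number of perturbation rounds to $O(\log n)$. Without this ingredient you cannot simultaneously achieve planarity, bounded degree, and the stated weight scaling. Your discussion of the spatially sparse case ($k=O(1)$ rounds) and the choice of gadget families from \cite{Piddock-Montanaro,Bravyi-Hastings} is correct and matches the paper.
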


\begin{proof}
  The final part of the statement concerning spatially sparse Hamiltonians was originally shown in \cite{Oliveira-Terhal} for $\{XX,Z\}$-Hamiltonians.
  The proof used three gadgets called fork, crossing and subdivision gadgets pictured in \cref{fig:OTgadgets}, which we briefly describe here.

  The subdivision gadget simulates an $XX$ interaction between two non-interacting qubits $a,b$ using a mediator qubit $e$, as pictured in \cref{fig:subdivision}.
  This can be used $O(\log k)$ times in series to simulate an interaction between two qubits separated by $k$ qubits.
  The fork gadget simulates the interactions $X_aX_b+X_aX_c$ using only one interaction involving qubit $a$, as pictured in \cref{fig:fork}.
  This can be used multiple times in parallel to reduce the degree of the vertex $a$ in the interaction graph.
  The crossing gadget is used to simulate $X_aX_c+X_bX_d$, for 4 qubits $a,b,c,d$ arranged as shown in \cref{fig:crossing}, via an interaction graph that has no crossings.

  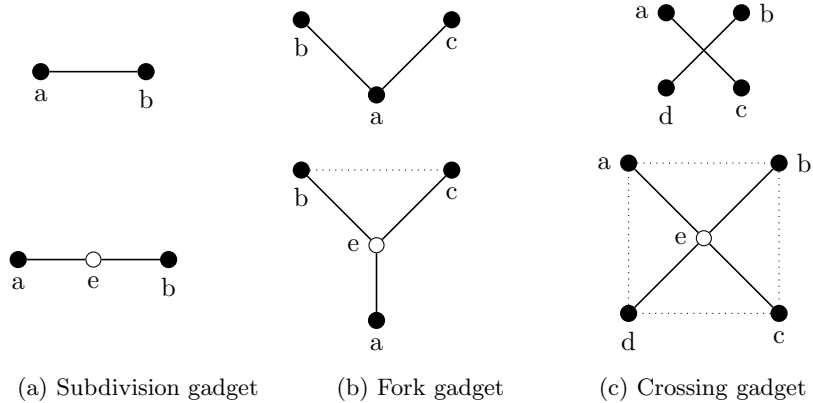
\begin{figure}
    \centering
    \begin{subfigure}[b]{0.3\textwidth}
      \begin{tikzpicture}
        \begin{scope}
          \node[qubit] (a) at (-1,1)[label=below:a]{};
          \node[qubit] (b) at (1,1) [label=below:b] {};
          \node[mqubit] (e) at (0,1)[label=below:e]{};
          \node at (0,-0.2){};
          \draw [black, semithick] (a) to (e);
          \draw [black, semithick] (b) to (e);
        \end{scope}

        \begin{scope}[shift={(0,2.5)}]
          \node[qubit] (a) at (-0.7,1)[label=below:a]{};
          \node[qubit] (b) at (0.7,1) [label=below:b] {};

          \draw [black, semithick] (a) to (b);
        \end{scope}
      \end{tikzpicture}
      \caption{Subdivision gadget}
      \label{fig:subdivision}
    \end{subfigure}
    \begin{subfigure}[b]{0.3\textwidth}
      \begin{tikzpicture}
        \begin{scope}
          \node[qubit] (b) at (-1,2) [label=below:b] {};
          \node[qubit] (c) at (1,2)[label=below:c]{};
          \node[mqubit] (e) at (0,1)[label=left:e]{};
          \node[qubit] (a) at (0,0)[label=below:a]{};

          \draw [black, semithick] (a) to (e);
          \draw [black, semithick] (b) to (e);
          \draw [black, semithick] (c) to (e);
          \draw [dotted] (b) to (c);
        \end{scope}

        \begin{scope}[shift={(0,3)}]
          \node[qubit] (b) at (-1,1) [label=below:b] {};
          \node[qubit] (c) at (1,1)[label=below:c]{};
          \node[qubit] (a) at (0,0)[label=below:a]{};

          \draw [black, semithick] (a) to (b);
          \draw [black, semithick] (a) to (c);

        \end{scope}
      \end{tikzpicture}
      \caption{Fork gadget}
      \label{fig:fork}
    \end{subfigure}
    \begin{subfigure}[b]{0.3\textwidth}
      \centering
      \begin{tikzpicture}
        \begin{scope}
          \node[qubit] (a) at (-1,2) [label=left:a] {};
          \node[qubit] (b) at (1,2)[label=right:b]{};
          \node[mqubit] (e) at (0,1)[label=left:e] {};
          \node[qubit] (c) at (1,0)[label=below:c]{};
          \node[qubit] (d) at (-1,0)[label=below:d]{};

          \draw [black, semithick] (a) to (e);
          \draw [black, semithick] (b) to (e);
          \draw [black, semithick] (c) to (e);
          \draw [black, semithick] (d) to (e);
          \draw [dotted] (a) to (b);
          \draw [dotted] (b) to (c);
          \draw [dotted] (c) to (d);
          \draw [dotted] (d) to (a);
        \end{scope}

        \begin{scope}[shift={(0,3)}]
          \node[qubit] (a) at (-0.5,1) [label=left:a] {};
          \node[qubit] (b) at (0.5,1)[label=right:b]{};
          \node[qubit] (c) at (0.5,0)[label=below:c]{};
          \node[qubit] (d) at (-0.5,0)[label=below:d]{};
          \draw [black, semithick] (a) to (c);
          \draw [black, semithick] (b) to (d);
        \end{scope}
      \end{tikzpicture}
      \caption{Crossing gadget}
      \label{fig:crossing}
    \end{subfigure}

    \caption[Subdivision, fork and crossing gadgets.]{%
      Subdivision, fork and crossing gadgets.
      In each case the top interaction pattern is simulated using the gadget underneath.
      White vertices denote mediator qubits with heavy 1-local terms applied.}
    \label{fig:OTgadgets}
  \end{figure}

  These gadgets can be used to simulate a spatially sparse Hamiltonian on a square lattice using only O(1) rounds of perturbation theory; we defer to~\cite{Oliveira-Terhal} for the technical details.
  The gadgets were generalised for the interactions $XX+YY+ZZ$ and $XX+YY$ in \cite{Piddock-Montanaro}, where the mediator qubit $e$ is replaced with a pair of mediator qubits, in order to prove the result for $\{XX+YY+ZZ\}$-Hamiltonians and $\{XX+YY\}$-Hamiltonians in the same way.

  \begin{figure}
    \centering
    \begin{subfigure}[t]{\textwidth}
      \begin{tikzpicture}[scale=0.3]
        \def\n{5};
        \foreach \i in {0,...,4}
        {\node[qubit] (\i) at (8*\i,0){};
          \node[mqubit] (\i 1) at (8*\i-3,4){};
          \node[mqubit] (\i 2) at (8*\i-1,4){};
          \node[mqubit] (\i 3) at (8*\i+1,4){};
          \node[mqubit] (\i 4) at (8*\i+3,4){};
          \foreach \j in {1,...,4}
          {\draw (\i) -- (\i\j);}}

        \foreach \i in {0,...,3}
        {\pgfmathsetmacro\k{\i+1}
          \foreach \j in {\k,...,4}
          {\draw  (8*\i+2*\j-5,4)node[mqubit]{} to[out=60, in=120]  (8*\j+2*\i-3,4)node[mqubit]{};}}

      \end{tikzpicture}
      \caption{First subdivide each edge to isolate each of the high degree vertices.}
      \label{fig:complete1}
      \vspace*{0.75cm}
    \end{subfigure}

    \begin{subfigure}{\textwidth}
      \begin{tikzpicture}[scale=0.25]
        \def\m{2}
        \pgfmathsetmacro\n{2^\m}
        \draw[step=1cm,gray,very thin] (-3,0) grid (35,15);
        \foreach \i in {0,...,\n}
        {\node[qubit] (\i 0) at (8*\i,0){};
          \draw[thick] (\i 0) to (8*\i,2);
          \node[qubit] (a) at (8*\i-2,2){};
          \node[qubit] (b) at (8*\i+2,2){};
          \draw[thick] (a) to (b);

          \draw[thick] (a) to ($(a)+(0,2)$);
          \node[qubit](\i 1) at ($(a)+(-1,2)$){};
          \node[qubit](\i 2) at ($(a)+(1,2)$){};
          \draw[thick] (\i 1) to (\i 2);

          \draw[thick] (b) to ($(b)+(0,2)$);
          \node[qubit](\i 3) at ($(b)+(-1,2)$){};
          \node[qubit](\i 4) at ($(b)+(1,2)$){};
          \draw[thick] (\i 3) to (\i 4);}

        \draw[thick](01) -- ($(01)+(0,1)$) -- ($(11)+(0,1)$) --(11);
        \draw[thick](02) -- ($(02)+(0,2)$) -- ($(21)+(0,2)$) --(21);

        \draw[thick](03) -- ($(03)+(0,3)$) -- ($(31)+(0,3)$) --(31);
        \draw[thick](04) -- ($(04)+(0,4)$) -- ($(41)+(0,4)$) --(41);
        \draw[thick](12) -- ($(12)+(0,5)$) -- ($(22)+(0,5)$) --(22);
        \draw[thick](13) -- ($(13)+(0,6)$) -- ($(32)+(0,6)$) --(32);
        \draw[thick](14) -- ($(14)+(0,7)$) -- ($(42)+(0,7)$) --(42);
        \draw[thick](23) -- ($(23)+(0,8)$) -- ($(33)+(0,8)$) --(33);
        \draw[thick](24) -- ($(24)+(0,9)$) -- ($(43)+(0,9)$) --(43);
        \draw[thick](34) -- ($(34)+(0,10)$) -- ($(44)+(0,10)$) --(44);

        \draw [decorate,decoration={brace,amplitude=8pt}] (-4,0) -- (-4,3.9) node [black,midway,xshift=-1.2cm] {$O(\log n)$};
        \draw [decorate,decoration={brace,amplitude=8pt}] (-4,4.1) -- (-4,15) node [black,midway,xshift=-1cm] {$O(n^2)$};
        \draw [decorate,decoration={brace,amplitude=8pt}] (35,-1) -- (-3,-1) node [black,midway,yshift=-0.7cm] {$O(n^2)$};
      \end{tikzpicture}
      \caption[Use the fork gadget $O(\log n)$ times at each of the high degree vertices.]{%
        Use the fork gadget $O(\log n)$ times at each of the high degree vertices, and lay out the resulting interaction pattern on a 2D lattice as shown above.
        Finally use the subdivision and crossing gadgets until the Hamiltonian is on the 2D square lattice.}
      \label{fig:complete2}
      \vspace*{0.25cm}
    \end{subfigure}

    \caption{How to simulate a Hamiltonian whose interaction pattern is the complete graph on $n=5$ qubits with a Hamiltonian on a 2D square lattice.}
  \end{figure}
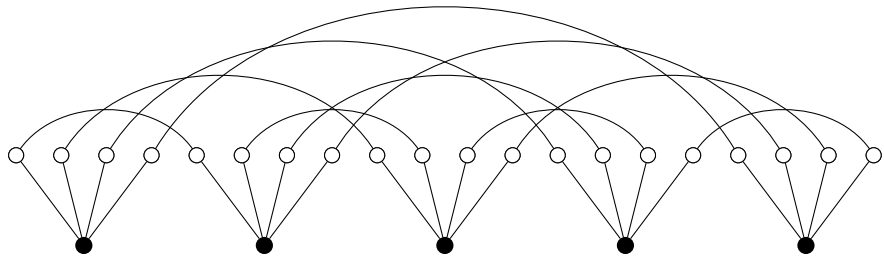
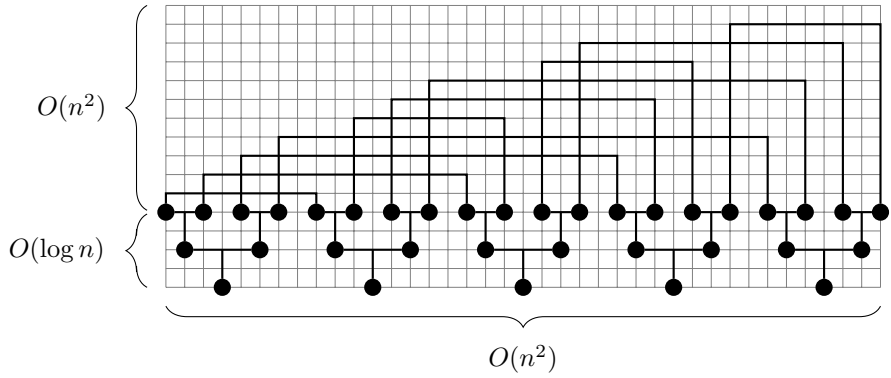

  Here we show how, if we allow more than O(1) rounds of perturbation theory, the same gadgets can be used to simulate a 2\nbd-local Hamiltonian whose interaction pattern is the complete graph on $n$ qubits, via a simulator Hamiltonian on a square lattice of size $O(n^2)\times O(n^2)$.
  Any interaction graph which is a subgraph of the complete graph can easily be simulated using the same construction, simply by setting some weights to zero.

  First, lay out the $n$ qubits in a line.
  Each vertex in the interaction graph has $n-1$ incoming edges.
  Subdivide each edge just once to isolate these high degree vertices to obtain an interaction graph as shown in \cref{fig:complete1}.
  Then using the fork gadget $O(\log n)$ times in series allows us to replace these with binary trees of depth $O(\log n)$, which can be placed directly onto a square lattice as shown in \cref{fig:complete2}.
  The long range interactions in this graph (which are of length at most $O(n^2)$), can be fitted to the edges of the square lattice using $O(\log n)$ applications of the subdivision gadget.

  At each crossing, we also need to use a crossing gadget -- note that the interactions $X_aX_b$, $X_bX_c$, $X_cX_d$, and $X_dX_a$ in \cref{fig:crossing} may be subdivided using a subdivision gadget so that the crossing gadget fits on the square lattice.
  If there is not enough space to put two crossing gadgets next to each other, then the lattice spacing can be made twice as narrow to make space.
  This only makes a constant factor difference to the number of qubits used and the number of rounds of perturbation theory required.

  The whole procedure therefore requires a total of $O(\log n)$ rounds of perturbation theory.
  By \cref{lem:secondorder}, second-order perturbation theory requires the weights of the simulator Hamiltonian to be of size $O(\Lambda^6/\epsilon^2 + \Lambda^2/\eta^2)$, where $\Lambda$ is the size of the terms $H_1$ and $H_2$.
  Given the simple nature of the gadgets used here, $\Lambda = O(\poly(n)\Lambda_0)$ where $\Lambda_0$ is the size of the largest weight in $H$.
  Therefore $r$ rounds of perturbation theory requires weights of size
  \begin{equation}
    \Lambda_{sim}=O\left(\poly(n)\Lambda_0\left(\frac{1}{\epsilon}+\frac{1}{\eta}\right)\right)^{6^r}
  \end{equation}
  Simulating the complete graph as described above requires $r=O(\log n)$, so the weights of the simulator system are $\Lambda_{sim} =(n\Lambda_0(1/\epsilon+1/\eta))^{\poly(n)}$.
  However, for a spatially sparse Hamiltonian simulated using only $r=O(1)$ rounds of perturbation theory as described in~\cite{Oliveira-Terhal} the weights scale as $\Lambda_{sim}=\poly(n\Lambda_0(1/\epsilon+1/\eta))$.
\end{proof}


\section{Consequences of universality}

We finally discuss some implications of our results for quantum computation.

\subsection{QMA-completeness}
Oliveira and Terhal showed in~\cite{Oliveira-Terhal} that the local Hamiltonian problem for spatially sparse qubit Hamiltonians is QMA-complete.
It is observed in~\cite{Cubitt-Montanaro} that this spatially sparse Hamiltonian may be assumed to not contain any $Y$ terms in its Pauli decomposition, by combining the work of~\cite{Oliveira-Terhal} with a result of \cite{Biamonte-Love}.
Notice that the simulations in \cref{thm:pauliterms} and \cref{thm:heisenberg} result in a spatially sparse simulator Hamiltonian if the target Hamiltonian is spatially sparse.
Combined with \cref{lem:squarelattice}, these results show that the Heisenberg interaction on a square lattice can efficiently simulate any spatially sparse qubit Hamiltonian with no $Y$ terms, and is therefore QMA-complete.
This was previously shown by Schuch and Verstraete~\cite{Schuch-Verstraete} in the case where arbitrary 1-local terms are allowed at every site; the novelty here is that QMA-completeness still holds even if these terms are not present.

This removes the caveat of Theorem~3 in~\cite{Piddock-Montanaro}, which can now be fully stated as:
\begin{theorem}
  Let $\mathcal{S}$ be a set of interactions on at most 2 qubits.
  Assume that there does not exist $U \in SU(2)$ such that, for each 2\nbd-qubit matrix $H_i \in \mathcal{S}$, $U^{\ox 2} H_i (U^\dg)^{\ox 2} = \alpha_i Z^{\ox 2} + A_i \ox \1 + \1 \ox B_i$, where $\alpha_i \in \R$ and $A_i$, $B_i$ are arbitrary single-qubit Hamiltonians.
  Then the local Hamiltonian problem for $\mathcal{S}$-Hamiltonians is QMA-complete even if the interactions are restricted to the edges of a 2D square lattice.
\end{theorem}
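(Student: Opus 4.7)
The plan is to prove QMA-membership and QMA-hardness separately, with the latter being essentially a matter of chaining together simulation results already established in the excerpt together with Lemma~\ref{lem:eigenvalues} relating eigenvalues of simulator and simulated Hamiltonians. QMA-membership is standard: a local Hamiltonian problem with polynomially-bounded interaction strengths and $1/\poly(n)$ precision is in QMA regardless of the specific form of the interactions, using Kitaev's verifier.

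For QMA-hardness, I would start from the hardness result noted just above the theorem: the local Hamiltonian problem for spatially sparse qubit Hamiltonians whose Pauli decomposition contains no $Y$ terms is QMA-complete (combining the Oliveira--Terhal spatial-sparsity result with Biamonte--Love). Given such a target Hamiltonian $H$, I would first apply Theorem~\ref{thm:heisenberg} to simulate $H$ by a Hamiltonian $H_1$ built from $\{XX+YY+ZZ\}$ or $\{XX+YY\}$ interactions, noting that the gadget in that theorem preserves spatial sparsity (each logical qubit is encoded in an $O(1)$-sized cluster of physical qubits and each logical interaction is implemented by $O(1)$ physical interactions between neighbouring clusters). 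I would then apply Lemma~\ref{lem:squarelattice}, whose second statement guarantees that spatially sparse $\{XX+YY+ZZ\}$- or $\{XX+YY\}$-Hamiltonians can be simulated on a 2D square lattice with only polynomial blow-up in weights, yielding a square-lattice simulator $H_2$. Finally, I would apply Theorem~\ref{thm:reductions}, invoking its final sentence, to simulate $H_2$ by an $\mathcal{S}$-Hamiltonian $H_3$ whose interaction graph is again a 2D square lattice.

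To conclude, I would invoke Lemma~\ref{lem:eigenvalues}: since $H_3$ is a $(\Delta,\eta,\epsilon)$-simulation of $H$ with $\epsilon$ made inverse-polynomially small (and the reductions efficient in the sense of Definition~\ref{dfn:sim}), the $k$-th smallest eigenvalue of $H_3$ equals that of $H$ up to additive error $\epsilon$, with each eigenvalue appearing with multiplicity $p+q$. Choosing $\epsilon$ smaller than the promise gap of the original QMA-hard instance, the YES/NO decision for the ground state energy of $H$ reduces in polynomial time to the same decision for $H_3$, establishing QMA-hardness of $\mathcal{S}$-Hamiltonians on a 2D square lattice.

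The main technical obstacle, which is really already handled by the cited lemmas, is ensuring that spatial sparsity is preserved through each simulation step and that all weights remain polynomially bounded, so that the overall reduction is a genuine polynomial-time Karp reduction with inverse-polynomial promise gap. The only subtlety worth double-checking is that the gadgets of Theorem~\ref{thm:heisenberg} really do preserve spatial sparsity (rather than just locality): because each logical qubit sits on an $O(1)$ cluster and each 2-local logical interaction uses $O(1)$ edges between adjacent clusters, the resulting interaction graph remains spatially sparse, so Lemma~\ref{lem:squarelattice} applies with polynomial weight overhead and the composition via Lemma~\ref{lem:compsim} yields the desired efficient square-lattice simulation.
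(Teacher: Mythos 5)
Your proposal is correct and follows essentially the same route as the paper: start from spatially sparse, $Y$-free qubit Hamiltonians (QMA-hard by Oliveira--Terhal combined with Biamonte--Love), push through \cref{thm:heisenberg} observing that the $O(1)$-cluster encoding preserves spatial sparsity, apply the spatially-sparse case of \cref{lem:squarelattice} to land on a 2D square lattice with polynomial weights, then use the final clause of \cref{thm:reductions} to translate to $\mathcal{S}$-interactions while staying on the lattice, and close the loop with \cref{lem:eigenvalues}. The only divergence is cosmetic: the paper also lists \cref{thm:pauliterms} in its chain, which you omit; this is harmless because the starting QMA-hard instance can already be taken $2$-local with no $Y$ terms, so it is already of the form \cref{thm:heisenberg} accepts, and \cref{thm:pauliterms} is needed in the paper only for the stronger side statement that \emph{any} spatially sparse $Y$-free qubit Hamiltonian (including $k>2$-local ones) can be simulated by the Heisenberg model on a square lattice.
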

Using further gadget constructions from~\cite{Piddock-Montanaro}, one can even show that the \emph{antiferromagnetic} Heisenberg interaction is QMA-complete on a triangular lattice.


\subsection{Quantum computation by simulation}

We can connect universal quantum Hamiltonians to universality for quantum computation.
Many constructions are now known (e.g.~\cite{janzing05a,lloyd08,Nagaj,childs13,Thompson,Seifnashri}) which show that Hamiltonian simulation is sufficient to perform universal quantum computation.
Indeed, this was already shown for universal classical computation by Feynman~\cite{feynman85}, and his construction can easily be extended to quantum computation.
See~\cite{nagaj08} for much more on this ``Hamiltonian quantum computer'' model, and many further references.

One representative example is a result of Nagaj~\cite{Nagaj}, who showed that for any polynomial-time quantum computation on $n$ qubits there is a 2\nbd-local Hamiltonian $H$ on $\poly(n)$ qubits with $\|H\| = O(\poly(n))$, a time $t = O(\poly(n))$, and an easily constructed product state $\ket{\phi_0}$, such that the output of the computation can be determined (with high probability) by applying $e^{-iHt}$ to $\ket{\phi_0}$ and measuring the resulting state $\ket{\phi_t}$ in the computational basis.
The description of $H$ can be constructed in polynomial time.

Because of the strong consequences of universality, we can use universal Hamiltonians to simulate an encoded version of $H$.
Let $\mathcal{F}$ be an efficiently universal family of qubit Hamiltonians (see \cref{dfn:sim}).
For simplicity, assume further that the efficiently constructable state in \cref{dfn:sim} is $\ket{00\dots0}$, i.e.\ $P\ket{00\dots 0}=\ket{00\dots 0}$.
This is the case for all the simulations constructed in this paper.
Our definition of efficient simulation then implies that, for any polynomial-time quantum computation on $n$ qubits, there is a protocol of the following form to obtain the output of the computation:
\begin{enumerate}
\item Prepare a pure state $U \ket{\phi_0} \ket{0}^{\ox m}$ of $\poly(n)$ qubits, for some encoding map $U$ such that $U$ is a product of unitaries, each of which acts on $O(1)$ qubits.
\item Apply $e^{-iH't}$ for some Hamiltonian $H' \in \mathcal{F}$ such that $\|H'\| = \poly(n)$, and some time $t = \poly(n)$.
\item Decode the output by applying $U^\dg$.
\item Measure the resulting state in the computational basis.
\end{enumerate}
Observe that the first and third steps can be implemented by quantum circuits of depth $O(1)$.
By universality of $\mathcal{F}$, there exists $H' \in \mathcal{F}$ such that $H'$ is a $(\Delta,\eta,\epsilon)$-simulation of $H$ for arbitrary $\epsilon > 0$.
By \cref{time-evolution}, if we take $\eta,\epsilon = 1/\poly(n)$ and evolve according to $H'$ for time $t = \poly(n)$, the resulting state $\ket{\psi}$ is distance $1/\poly(n)$ from an encoded version of $e^{-iHt} \ket{\phi_0}$; call that state $\cEs(\phi_t)$.
By \cref{prop:encodingswork}, the expectation of any encoded measurement operator $\mathcal{E}(A)$ applied to $\cEs(\phi_t)$ is the same as that of $A$ applied to $\phi_t$. Thus applying $U^\dg$ to $\mathcal{E}(\phi_t)$ in order to undo $\mathcal{E}$, and then measuring in the computational basis, would result in the same distribution on measurement outcomes as measuring $\phi_t$ in the computational basis. So the distribution obtained by measuring in step (iv) is close (i.e.\ at total variation distance $1/\poly(n)$) to the distribution that would have been obtained from the measurement at the end of the simulated computation.

Thus our results show that these steps, together with time-evolution according to apparently rather simple interactions are sufficient to perform arbitrary quantum computations.
For example, time-independent Heisenberg interactions with a carefully crafted pattern of coupling strengths, but no additional types of interaction, are sufficient for universal quantum computation; the same holds for XY interactions.
Note that a similar statement was already known for the case of time-dependent Heisenberg interactions~\cite{divincenzo00,kempe00}: the proof of universality there was also based on encoding, though made substantially simpler by the additional freedom afforded by time-dependence.
Also note that, though not stated explicitly there, universality of the Heisenberg interaction on arbitrary graphs for quantum computation should follow from the techniques in~\cite{childs13}.
Universality of the XY interaction for quantum computation, when augmented by some additional restricted types of interactions, was shown in~\cite{childs13,Thompson,Seifnashri}.

We also showed that any universal set of 2\nbd-qubit interactions can efficiently simulate any spatially sparse Hamiltonian, even if all interactions in the simulator Hamiltonian occur on a square lattice.
As there exist families of spatially sparse Hamiltonians which are universal for quantum computation (e.g.~\cite{nagaj08a,Oliveira-Terhal}) this implies that these interactions remain universal for quantum computation on a square lattice.
For example, Heisenberg interactions are universal for quantum computation even when restricted to a 2D square lattice; as are XY interactions.

The converse perspective on this is that these Hamiltonians are more complicated to simulate than one might have previously thought.
Following Lloyd's original quantum simulation algorithm~\cite{Lloyd}, a number of works have developed more efficient algorithms for quantum simulation, whether of general Hamiltonians~\cite{Berry07,Berry15} or Hamiltonians specific to particular physical systems, such as those important to quantum chemistry~\cite{Hastings15,Poulin15}.
However, although these algorithms use very different techniques, one property which they share is that they are highly sequential; to simulate a Hamiltonian on $n$ qubits for time $t$, each of the algorithms requires a quantum circuit of depth $\poly(n,t)$.
Quantum simulation is predicted to be one of the earliest applications of quantum computers, yet maintaining coherence for long times is technically challenging.
So it would be highly desirable for there to exist a Hamiltonian simulation algorithm with low depth; for example, an algorithm whose quantum part consisted of a quantum circuit of depth $\poly(\log(n))$.

Our results give some evidence that such a simulation algorithm is unlikely to exist, even for apparently very simple Hamiltonians such as the Heisenberg model.
If there existed a Hamiltonian simulation algorithm for simulating a Heisenberg Hamiltonian on $n$ qubits for time $t$, whose quantum part were depth $\poly(\log(n,t))$, then the quantum part of \emph{any} polynomial-time quantum computation on $n$ qubits could be compressed to depth $\poly(\log(n))$.
This can be seen as a complexity-theoretic analogue of a query complexity argument~\cite{Berry07} that lower-bounds the time to simulate an arbitrary sparse Hamiltonian.
Unlilke the query complexity approach, using computational complexity theory gives evidence for hardness of simulating explicitly given local Hamiltonians.
In complexity-theoretic terms, our results show that, roughly speaking\footnote{This statement is only approximately true, for several reasons.
  The Hamiltonian simulation problem as we have defined it is intrinsically quantum: the task is to produce the state $e^{-iHt} \ket{\psi}$, given an input state $\ket{\psi}$.
  To formalise this complexity-theoretic claim, one would have to define a suitable notion of quantum reductions which encompassed such ``state transformation'' problems.
  And technically, the hardness result we prove is that the Hamiltonian simulation problem is at least as hard as $\mathsf{PromiseBQP}$, the complexity class corresponding to determining whether measuring the first qubit of the output of a quantum computation is likely to return 0 or 1, given that one of these is the case.
  We choose to omit a discussion of these technical issues.}, simulating any universal class of Hamiltonians is $\mathsf{BQP}$-complete under $\mathsf{QNC}_0$ reductions, where $\mathsf{BQP}$ is the complexity class corresponding to polynomial-time quantum computation, and $\mathsf{QNC}_0$ is the class of depth-$O(1)$ quantum circuits.


\subsection{Adiabatic quantum computation}

The model of adiabatic quantum computation allows arbitrary polynomial-time quantum computations to be performed in the ground state of a family of Hamiltonians~\cite{Aharonov}.
A continuously varying family of Hamiltonians $H(t)$ is used, where $0 \le t \le 1$.
$H(0)$ and $H(1)$ are chosen such that the ground state of $H(0)$ is easily prepared, while the ground state of $H(1)$ encodes the solution to some computational problem.
For example, it could be the computational history state~\cite{Kitaev-Shen-Vyalyi} encoding the entirety of a polynomial-length quantum computation.
At time $t=0$, the system starts in the ground state of $H(0)$.
If the rate of change of $t$ is slow enough, the system remains in its ground state throughout, and at time $t=1$ the solution can be read out from the state by measuring in the computational basis.
In order to perform the adiabatic computation in time $\poly(n)$, it is sufficient that the spectral gap of $H(t)$ is at least $\delta$ for all $t$, for some $\delta \ge 1/\poly(n)$, and that $\|H(t)\|$ and $\| \frac{d}{dt} H(t)\|$ are upper-bounded by $\poly(n)$ for all $t$~\cite{Jansen-Ruskai-Seller}.

It was shown in~\cite{Kempe-Kitaev-Regev} that universal adiabatic quantum computation can be achieved using 2\nbd-local Hamiltonians.
Here we argue, following a similar argument for stoquastic Hamiltonians~\cite{Bravyi-Hastings}, that any of the classes of universal Hamiltonian we have considered here can perform adiabatic quantum computation, given the ability to perform local encoding and decoding unitary operations before and after the adiabatic evolution.

Let $H(t)$ be a family of Hamiltonians used to implement an adiabatic quantum computation.
For each $t$ we define $H'(t)$ to be a $(\Delta,\eta,\epsilon)$-simulation of $H(t)$ using one of the previously discussed classes of universal simulators, where $\eta,\epsilon \le n^{-c}$ for a sufficiently small constant $c$, and let $V(t)$ be the corresponding local isometry.
From the definition of universal simulation, and the fact that the simulations increase the norm of the simulated Hamiltonian by at most a $\poly(n)$ factor, $H'(t)$ has spectral gap at least $\delta - 1/\poly(n)$ and $\|H'(t)\| = O(\poly(n))$.
The ground state of $H'(0)$ can be prepared efficiently by applying $V(0)$ to the ground state of $H(0)$, and the ground state of $H'(1)$ can be read off efficiently by applying $V^\dg(1)$ and measuring in the computational basis.

It remains to show that $\| \frac{d}{dt} H'(t)\| = O(\poly(n))$.
The map $H(t) \mapsto H'(t)$ could in principle introduce singularities, as implementing an effective interaction of weight $\alpha$ using a second-order perturbative reduction requires weights whose scaling with $\alpha$ is $\alpha^{1/2}$, so for $\alpha \rightarrow 0$ the derivative becomes infinite; a similar issue applies to third-order reductions.
This can be avoided, for example, by choosing a cutoff $\alpha_{\min}$, and forming a Hamiltonian $\widetilde{H}$ by replacing each weight $\alpha$ in the original Hamiltonian $H$ with $\widetilde{\alpha} = \sgn(\alpha) \sqrt{\alpha^2 + \alpha_{\min}^2}$.
If $\alpha_{\min}$ is sufficiently small (yet still inverse-polynomial in $n$), $\|\widetilde{H}-H\| \le n^{-c'}$ for an arbitrarily small constant $c'$, and also $\| \frac{d}{dt} \widetilde{H}'(t)\| = O(\poly(n))$.



We would like to thank Gemma De las Cuevas and David Gosset for helpful conversations about the topic of this work. TSC was supported by the Royal Society, and by grant \#48322 from the John Templeton Foundation. AM was supported by an EPSRC Early Career Fellowship (EP/L021005/1). SP was supported by the EPSRC. The opinions expressed in this publication are those of the authors and do not necessarily reflect the views of the John Templeton Foundation. No new data were created during this study.


\clearpage

\bibliographystyle{alpha}
\bibliography{universalH}

\end{document}